\newtheorem{theorem}{Theorem}
\newtheorem{lemma}[theorem]{Lemma}
\newtheorem{claim}[theorem]{Claim}
\newtheorem{corollary}[theorem]{Corollary}
\newtheorem{definition}[theorem]{Definition}
\newcommand{\meas}{
\begin{tikzpicture}
\filldraw[fill=white] (0,.25) rectangle (.7,-.25);
\draw (.67,-.1) arc (50:130:.5);
\draw (.35,-.2)--(.525,.2);
\end{tikzpicture}
}
\newcommand{\beq}{\begin{eqnarray}}
\newcommand{\eeq}{\end{eqnarray}}
\newcommand{\ket}[1]{|#1\rangle}
\newcommand{\bra}[1]{\langle#1|}
\newcommand{\proj}[1]{\ket{#1}\!\bra{#1}}
\newcommand{\Tr}{\mbox{\rm Tr}}
\newcommand{\Id}{\ensuremath{\mathop{\rm Id}\nolimits}}
\newcommand{\Es}[1]{\ensuremath{\mathop{\textsc{E}}}_{#1}}
\newcommand{\setft}[1]{\mathrm{#1}}
\newcommand{\Proj}{\setft{Proj}}
\newcommand{\Obs}{\setft{Obs}}
\newcommand{\Unitary}{\setft{U}}
\newcommand{\Lin}{\setft{L}}
\DeclareMathOperator{\poly}{poly}
\newcommand{\reg}[1]{{\textsf{#1}}}
\newcommand{\norm}[1]{\left\|#1\right\|}
\newcommand{\C}{\ensuremath{\mathbb{C}}}
\newcommand{\Z}{\ensuremath{\mathbb{Z}}}
\newcommand{\mH}{\mathcal{H}}
\newcommand{\mT}{\mathcal{T}}
\newcommand{\eps}{\varepsilon}
\newcommand{\EPR}{{\rm EPR}}
\newcommand{\pbt}{\textsc{pbt}}
\newcommand{\aux}{\textsc {aux}}
\newcommand{\conj}{\textsc{conj}}
\newcommand{\prodt}{\textsc{prod}}
\newcommand{\comt}{\textsc{com}}
\newcommand{\act}{\textsc{ac}}
\newcommand{\idt}{\textsc{id}}
\newcommand{\bellt}{\textsc{Bell}}
\newcommand{\rigid}{\textsc{rigid}}
\newcommand{\conjc}{\textsc{conj-cliff}}
\newcommand{\cliff}{\textsc{cliff}}
\newcommand{\tom}{\textsc{tom}}
\newcommand{\SWAP}{\textsc{SW}}
\newcommand{\heisg}{{\mathcal{H}^{(1)}}}
\newcommand{\heisgn}{{\mathcal{H}^{(n)}}}
\newcommand{\cliffordn}{G_\mathcal{C}^{(n)}}
\newcommand{\conjr}{\mathcal{J}\!}
\newcommand{\paulin}{\mathcal{P}^{(m)}\!}
\newcommand{\ver}{\textsc{V}}
\newcommand{\pv}{\textsc{PV}}
\newcommand{\pp}{\textsc{PP}}
\newcommand{\phase}{\Lambda}
\newcommand{\highlight}[1]{\uline{#1}}
\newif\ifnotes\notestrue
\begin{document}

\title{Verifier-on-a-Leash: new schemes for verifiable delegated quantum computation, with quasilinear resources}

\author{Andrea Coladangelo\thanks{Department of Computing and Mathematical Sciences, Caltech, Pasadena, USA. acoladan@cms.caltech.edu}
  \and Alex B. Grilo\thanks{QuSoft and CWI, Amsterdam, the Netherlands. alexg@cwi.nl}
  \and Stacey Jeffery\thanks{QuSoft and CWI, Amsterdam, the Netherlands. jeffery@cwi.nl} %
  \and Thomas Vidick\thanks{Department of Computing and Mathematical Sciences, Caltech, Pasadena, USA. vidick@cms.caltech.edu}}%

\date{}
\maketitle

\begin{abstract}
The problem of reliably certifying the outcome of a computation performed by a quantum device is rapidly gaining relevance. We present two protocols for a classical verifier to verifiably delegate a quantum computation to two non-communicating but entangled quantum provers. Our protocols have near-optimal complexity in terms of the total resources employed by the verifier and the honest provers, with the total number of operations of each party, including the number of entangled pairs of qubits required of the honest provers, scaling as $O(g\log g)$ for delegating a circuit of size $g$. This is in contrast to previous protocols, whose overhead in terms of resources employed, while polynomial, is
far beyond what is feasible in practice. 
Our first protocol requires a number of rounds that is linear in the depth of the circuit being delegated, and is blind, meaning neither prover can learn the circuit or its input. The second protocol is not blind, but requires only a constant number of rounds of interaction. 

Our main technical innovation is an efficient rigidity theorem that allows a verifier to test that two entangled provers perform measurements specified by an arbitrary $m$-qubit tensor product of single-qubit Clifford observables on their respective halves of $m$ shared EPR pairs, with a robustness that is independent of $m$. Our two-prover classical-verifier delegation protocols are obtained by combining this rigidity theorem with a single-prover quantum-verifier protocol for the verifiable delegation of a quantum computation, introduced by Broadbent (Theory of Computing, 2018).
\end{abstract}

\renewcommand{\baselinestretch}{0.85}\normalsize
{
  \hypersetup{linkcolor=black}
  \tableofcontents
}
\renewcommand{\baselinestretch}{1.0}\normalsize

\section{Introduction}

Quantum computers hold the potential to speed up a wide range of computational tasks (see, for example, \cite{montanaro2016survey}). Recent progress towards implementing limited quantum devices has added urgency to the already important question of how a classical verifier can test a quantum device. This verifier could be an experimentalist running a new experimental setup; a consumer who has purchased a purported quantum device; or a client who wishes to delegate some task to a quantum server. In all cases, the user would like to exert some form of control over the quantum device. For example, the experimentalist may think that she is testing that a particular experiment prepares a certain quantum state by performing a series of measurements, i.e.\ by state tomography, but this assumes some level of trust in the measurement apparatus being used.  For a classical party to truly test a quantum system, that system should be modeled in a device-independent way, having classical inputs (e.g.\ measurement settings) and classical outputs (e.g.\ measurement results). 

Tests of quantum mechanical properties of a system first appeared in the form of Bell tests \cite{Bell:64a,Clauser:69a}. In a Bell test, a verifier asks classical questions to a quantum-device and receives classical answers. These tests make one crucial assumption on the system to be tested: that it consists of two spatially isolated components that %
are unable to communicate throughout the experiment. One can then upper bound the value of some statistical quantity of interest subject to the constraint that the two devices do not share any entanglement. Such a bound is referred to as a Bell inequality. While the violation of a Bell inequality can be seen as a certificate of entanglement, the area of self-testing, first introduced in \cite{mayers2004selftesting}, allows for the certification of much stronger statements, including  which measurements are being performed, and on which state.  Informally, a \emph{robust rigidity theorem} is a statement about which kind of apparatus, quantum state and measurements, must be used by a pair of isolated devices in order to succeed in a given statistical test. Following a well-established tradition, we will refer to such tests as \emph{games}, call the devices \emph{players} (or \emph{provers}), and the quantum state and measurements that they implement the \emph{strategy} of the players. A rigidity theorem is a statement about the necessary structure of near-optimal strategies for a game.  

In 2012, Reichardt, Unger and Vazirani proved a robust rigidity theorem for
playing a sequence of $n$ CHSH games \cite{reichardt2012classical}. Aside from
its intrinsic interest, this rigidity theorem had two important consequences.
One was the first device-independent protocol for quantum key distribution. The
second was a protocol whereby a completely classical verifier can test a
universal quantum computer consisting of two non-communicating devices.  The resulting
 protocol for delegating quantum
computations has received a lot of attention as the first classical-verifier delegation protocol. 
The task is well-motivated: for the foreseeable future, making use of a quantum computer will likely require delegating the computation to a potentially untrusted cloud service, such as that provided by IBM~\cite{ibmcloud}.  

Unfortunately, the complexity overhead of the delegation protocol from~\cite{reichardt2012classical}, in terms of both the number of EPR pairs needed for the provers and the overall time complexity of the provers as well as the (classical) verifier, while polynomial, is prohibitively large. Although the authors of~\cite{reichardt2012classical} do not provide an explicit value for the exponent, in~\cite{hajdusek2015} it is estimated that their protocol requires resources that scale like $\Omega(g^{8192})$, where $g$ is the number of gates in the delegated circuit (notwithstanding the implicit constant, this already makes the approach thoroughly impractical for even a $2$-gate circuit!).
The large overhead is in part due to a very small (although still inverse polynomial) gap between the completeness and soundness parameters of the rigidity theorem; this requires the verifier to perform many more Bell tests than the actual number of EPR pairs needed to implement the computation, which would scale linearly with the circuit size. 

Subsequent work has presented significantly more efficient protocols for achieving the same, or similar,  functionality~\cite{McKague16,Gheorghiu15,hajdusek2015}. We refer to Table \ref{tab:comparison}
for a summary of our estimated lower bounds on the complexity of each of these
results (not all papers provide explicit bounds, in which case our estimates,
although generally conservative, should be taken with caution). Prior to our
work, the best two-prover delegation protocol required resources scaling like
$g^{2048}$ for delegating a $g$-gate circuit. Things improve significantly if we
allow for more than two provers, however, the most efficient multi-prover
delegation protocols still required resources that scale
as at least $\Omega(g^4\log{g})$ for delegating a $g$-gate circuit on $n$ qubits.
Since we expect that in the foreseeable future most quantum computations will be delegated to a third-party server, even such small polynomial overhead is unacceptable, as it already negates the quantum advantage for a number of problems, such as quantum search.

The most efficient classical-verifier delegation protocols known~\cite{hajdusek2015posthoc,natarajan2016robust}, with $\mathrm{poly}(n)$ and 7 provers, respectively,
require resources that scale as $O(g^3)$, but this efficiency comes at the cost of a technique of ``post-hoc''
verification. In this technique, the provers must learn the
verifier's input even before they are separated, so that they can prepare the
history state for the computation.\footnote{Using results of Ji~\cite{Ji16},
this allows the protocol to be single-round. Alternatively, the state can be created by a single prover and teleported to the others with the help of the verifier, resulting in a two-round protocol.} As a result, these protocols are not blind\footnote{
\emph{Blindness} is a property of delegation protocols, which informally states that the prover learns nothing about the verifier's private circuit.}. 
Moreover, while the method does provide a means for verifying the outcome
of an arbitrary quantum computation, in contrast
to~\cite{reichardt2012classical} it does not provide a means for the verifier to test the provers' implementation of the
required circuit on a gate-by-gate basis. 
Other works, such as ~\cite{HayashiH16},
achieve two-prover verifiable delegation with complexity that scales like $O(g^4\log g)$,  but in much weaker models; for example, in~\cite{HayashiH16} the provers' private system is assumed a priori to be in tensor product form, with well-defined registers.  General techniques are available to remove the strong assumption, but they would lead to similar large overhead as previous results.

In contrast, in the setting where the verifier is allowed to have some limited quantum power, such as the ability to generate single-qubit states and measure them with observables from a small finite set, efficient schemes for blind verifiable delegation do exist \cite{aharonov10qpip,fitzsimons12vubqc,Morimae14,broadbent15howtoverify,HayashiM15,MF16,FujiiH17,MorimaeTH17} (see also~\cite{fitzsimons2016survey} for a recent survey). In this case, only a single prover is needed, and the most efficient \emph{single-prover quantum-verifier} protocols can evaluate a quantum circuit with $g$ gates in time $O(g)$. The main reason these protocols are much more efficient than the classical-verifier multi-prover protocols is that they avoid the need for directly testing any of the qubits used by the prover, instead requiring the trusted verifier to directly either prepare or measure the qubits used for the computation.

Recently, another model has been considered where the classical verifier delegates her quantum computation to a single quantum prover~\cite{mahadev2018,GheorghiuV19}. The protocols proposed in this setting are {\em computationally secure}, i.e.\ the security of the protocol rests on the assumption that the prover cannot solve an (expected to be) hard problem for quantum computers (specifically, the Learning with Errors problem).

\begin{table}[t]
\centering
\begin{tabular}{l|llll}
& Provers & Rounds & Total Resources & Blind\\
\hline\\[-8pt]
RUV 2012 \cite{reichardt2012classical}  &2 & poly$(n)$ & $\geq g^{8192}$ & yes\\[3pt]
McKague 2013 \cite{McKague16} &  $\mathrm{poly}(n)$ & poly$(n)$ & $\geq 2^{153}g^{22}$ & yes \\[3pt]
GKW 2015 \cite{Gheorghiu15} &  2 & poly$(n)$ & $\geq g^{2048}$ & yes \\[3pt]
HDF 2015 \cite{hajdusek2015} &  poly$(n)$& poly$(n)$ & $\Theta(g^4\log g)$ & yes \\[3pt]
\hline\\[-8pt]
Verifier-on-a-Leash Protocol (Section \ref{sec:leash})   & 
2 & $O(\mbox{depth})$  & $\Theta(g\log g)$ & yes \\[3pt]
Dog-Walker Protocol (Section \ref{sec:dog-walker})  & 2 & $O(1)$ & $\Theta(g\log g)$ & no 
\end{tabular}
\caption{Resource requirements of various delegation protocols in the multi-prover model. 
We use $n$ to denote the number of qubits and $g$ the number of gates in the
  delegated circuit. ``depth'' refers to the depth of the delegated circuit. ``Total Resources'' refers to the gate complexity of the
  provers, the number of EPR pairs of entanglement needed, and the number of
  bits of communication in the protocol. To ensure fair comparison, 
  each protocol is required to produce the correct answer with probability $99\%$.
  For all protocols except %
our two new protocols, this requires a
  polynomial number of sequential repetitions, which is taken into account when
  computing the total resources. %
}
\label{tab:comparison}
\end{table}

\paragraph{New rigidity results.} We overcome the efficiency limitations of
multi-prover delegation protocols by introducing a new robust rigidity theorem. Our theorem allows a classical verifier to certify that two non-communicating provers apply a measurement associated with an arbitrary $m$-qubit tensor product of single-qubit Clifford observables on their respective halves of $m$ shared EPR pairs.
This is the first result to achieve self-testing for such a large class of
measurements. The majority of previous works in self-testing have been primarily
concerned with certifying the state and were limited to simple single-qubit
measurements in the $X$-$Z$ plane. Prior self-testing results for multi-qubit
measurements only allow one to test for tensor products of $\sigma_X$ and $\sigma_Z$
observables. While this is sufficient for verification in the post-hoc model
of~\cite{hajdusek2015posthoc}, testing for $\sigma_X$ and $\sigma_Z$ observables
does not directly allow for the verification of a general computation (unless
one relies on techniques such as process
tomography~\cite{reichardt2012classical}, which introduce substantial additional
overhead).  

Our first contribution is to extend the ``Pauli braiding test'' of~\cite{natarajan2016robust}, which allows one to test tensor products of $\sigma_X$ and $\sigma_Z$ observables with constant robustness, to allow for $\sigma_Y$ observables as well. This is somewhat subtle due to an ambiguity in the complex phase that cannot be detected by any classical two-player test; we formalize the ambiguity and show how it can be effectively accounted for. Our second contribution is to substantially increase the set of elementary gates that can be tested, to include arbitrary $m$-qubit tensor products of single-qubit Clifford observables. This is achieved by introducing a new ``conjugation test'', which tests how an observable applied by the provers acts on the Pauli group. The test is inspired by general results of Slofstra~\cite{slofstra2016tsirelson}, but is substantially more direct. 

 A key feature of our rigidity results is that their robustness scales independently of the number of EPR pairs tested, as in \cite{natarajan2016robust}. This is crucial for the efficiency of our delegation protocols. The robustness for previous results in parallel self-testing typically had a polynomial dependence on the number of EPR pairs tested. We give an informal statement of our robust rigidity theorem.
 
\begin{theorem}[Informal]\label{thm:rigid-informal}
Let $m\in\mathbb{Z}_{>0}$. Let $\cal G$ be a fixed, finite set of single-qubit Clifford observables. Then there exists an efficient two-prover test $\textsc{rigid}({\cal G},m)$ with $O(m)$-bit questions (a constant fraction of which are of the form $W\in{\cal G}^m$) and answers such that the following properties hold:
\begin{itemize}[nolistsep]
\item (Completeness) There is a strategy for the provers that uses $m+1$ EPR pairs and succeeds with probability at least $1 - e^{-\Omega(m)}$ in the test.
\item (Soundness) For any $\eps>0$, any strategy for the provers that succeeds with probability $1-\eps$ in the test must be $\poly(\eps)$-close, up to local isometries, to a strategy in which the provers begin with $(m+1)$ EPR pairs and is such that upon receipt of a question of the form $W\in {\cal G}^m$ the prover measures the ``correct'' observable $W$. 
\end{itemize}
\end{theorem}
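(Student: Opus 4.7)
The plan is to build the test $\textsc{rigid}(\mathcal{G},m)$ as a two-layer construction. The outer layer is a parallel-repetition-style test that, with some probability, asks both provers Pauli questions $P,Q\in\{I,X,Y,Z\}^m$, and with the remaining probability asks one prover a question $W\in\mathcal{G}^m$ and the other prover an auxiliary Pauli question. The inner two sub-tests are (i) an extension of the Pauli braiding test of \cite{natarajan2016robust} that includes $\sigma_Y$ observables, and (ii) a new \emph{conjugation test} that certifies that a prover's response to the question $W\in\mathcal{G}^m$ implements the correct conjugation action $P\mapsto WPW^\dagger$ on the Pauli group.

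Completeness is immediate: with $m+1$ shared EPR pairs, the honest strategy measures the requested Pauli (or Clifford-conjugated Pauli) on its half, and each sub-test accepts with probability $1-e^{-\Omega(m)}$ by standard concentration for parallel CHSH-type games. For soundness I would proceed in three steps. First, I would run the $X/Z$-only Pauli braiding test of \cite{natarajan2016robust} as a black box to extract an isometry $\Phi$ under which the shared state is $\poly(\eps)$-close to $m+1$ EPR pairs and the provers' $\sigma_X,\sigma_Z$ observables on each qubit are close to the canonical ones. Second, I would handle $\sigma_Y$ by adding consistency checks with the products $i\sigma_X\sigma_Z$: because no classical test can distinguish a strategy from its complex conjugate, one fixes the global phase by designating the $(m{+}1)$-th EPR pair as a reference and enforcing, via additional anticommutation checks, that each candidate $\sigma_Y$ operator anticommutes with both the corresponding $\sigma_X$ and $\sigma_Z$ — this pins down each $\sigma_Y$ up to a sign, which is then fixed coherently across all qubits by one further pairwise test. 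The output of this step is an isometry under which the full Pauli group $\mathcal{P}^m$ acts correctly on the extracted EPR pairs, up to the unavoidable $\mathcal{P}^m\leftrightarrow\overline{\mathcal{P}^m}$ ambiguity.

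Third, for each $W\in\mathcal{G}^m$, I would exploit the defining property of Cliffords: for every Pauli $P$, the operator $WPW^\dagger$ is again (a signed) Pauli $P'$ that the verifier can compute classically. The conjugation test asks one prover $W$ and a Pauli $P$, and the other prover the Pauli $P'$ (or vice versa); the check is that the correlation statistics match those obtained by replacing $W$-then-$P$ with a direct measurement of the Pauli $WPW^\dagger$. Averaging over all $P\in\mathcal{P}^m$ and combining with the Pauli rigidity obtained in step two gives, after another application of an approximate-representation argument (in the spirit of Gowers--Hatami and the techniques of \cite{slofstra2016tsirelson}, but in the much more restricted Clifford setting so that no heavy machinery is needed), that under the same isometry the prover's measurement on question $W$ is $\poly(\eps)$-close to the canonical one.

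The main obstacle I anticipate is step two: correctly handling the complex-phase ambiguity of $\sigma_Y$ while keeping robustness independent of $m$. Unlike the $X$/$Z$ case where approximate anticommutation plus Gowers--Hatami rounding suffices, with $Y$ present the provers' strategy is only pinned down up to an overall choice of complex conjugation on each qubit, and it is not a priori clear that this choice can be made consistently across all $m$ qubits without losing a factor polynomial in $m$ in the robustness. Absorbing this ambiguity into the statement of the theorem (by allowing the extracted honest strategy to use a possibly conjugated Pauli group) and showing via a global cross-consistency test that one can coherently fix the conjugation on all qubits simultaneously, at the cost of only a constant-factor loss, is what I expect to be the technical heart of the proof.
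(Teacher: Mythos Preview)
Your high-level plan—Pauli braiding for $X,Z$, extend to $Y$ with a conjugation ambiguity, then certify Cliffords through their action on the Pauli group—matches the paper. But the gap is in step three, not step two. As you describe it, the conjugation test sends one prover ``$W$ and a Pauli $P$'' and checks that ``$W$-then-$P$'' is consistent with the other prover's measurement of $P'=WPW^\dagger$. There is no sound way to make this work: $W$ is an observable the prover is supposed to \emph{measure}, not a unitary to apply, and there is no classical-outcome check that forces a prover to ``first apply the unitary $W$, then measure $P$''. If you instead send $(W,P)$ as a single question, the prover can simply compute $P'$ classically and measure it, passing the test without ever implementing $W$; nothing then links this behaviour to what the prover does on the bare question $W$. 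The paper's key idea is to encode the unitary $R$ into a genuine observable on one extra control qubit—this is what the $(m{+}1)$-st EPR pair is actually used for, not the $Y$-phase—by setting
\[
X_R=\begin{pmatrix}0&R^\dagger\\ R&0\end{pmatrix},\qquad C=\begin{pmatrix}A&0\\0&B\end{pmatrix},
\]
so that $RAR^\dagger=B$ becomes the \emph{commutation} relation $[X_R,C]=0$, testable by standard commutation/anticommutation sub-tests together with a consistency check tying $X_R$ back to the bare observable $W$.

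There is also a second layer you do not address. Even after a correct conjugation test, one only obtains $W\simeq \hat\tau_W\,(\Id\otimes\Lambda_W)$ for an ancilla observable $\Lambda_W$ that a~priori depends on the whole string $W\in\Sigma^m$. Collapsing this to a single phase $\Lambda_F$ (so that $\Lambda_W\approx\Lambda_F^{|\{i:W_i\in\{F,G\}\}|}$) requires three further sub-tests in the paper: a second conjugation test to show $\Lambda_W$ depends only on the positions where $W_i\in\{F,G\}$, a $\pbt$-style linearity test to factor $\Lambda_W\approx\prod_i\Lambda_i$, and a Bell/SWAP test across pairs of positions to show all the $\Lambda_i$ coincide. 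A final CHSH-type tomography step then fixes the sign of $\Lambda_F$. None of this is absorbed by ``another application of an approximate-representation argument'', and it is precisely where the $m$-independence of the robustness has to be re-established for the Clifford layer.
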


Although we do not strive to obtain the best dependence on $\eps$, we believe it
should be possible to obtain a scaling of the form $C\sqrt{\eps}$ for a
reasonable constant~$C$. We discuss the test in Section~\ref{sec:intro-rigidity}. %

\paragraph{New delegation protocols.}
We employ the new rigidity theorem to obtain two new efficient
two-prover classical-verifier protocols in which the complexity of verifiably
delegating a $g$-gate quantum circuit solving a BQP problem scales as $O(g\log g)$.\footnote{The $\log
g$ overhead is due to the complexity of sampling from the right distribution in
rigidity tests. We leave the possibility of removing this by derandomization for
future work. Another source of overhead is in achieving blindness: in order to
hide the circuit, we encode it as part of the input to a universal circuit,
introducing a factor of $O(\log g)$ overhead.} 

We achieve our protocols by adapting the efficient single-prover quantum-verifier delegation
protocol introduced by Broadbent~\cite{broadbent15howtoverify} (we refer to this as the ``EPR protocol''), which has the advantage of offering a direct implementation of the delegated circuit, in the circuit model of computation and with very little modification needed to ensure verifiability, as well as an elegantly simple and intuitive analysis. 
 
Our first protocol is blind, and requires a number of rounds of interaction that
scales linearly with the depth of the circuit being delegated. The second
protocol is not blind, but only requires a constant number of rounds of
interaction with the provers. Our work is the first to propose verifiable two-prover delegation protocols that overcome the prohibitively large resource requirements of all previous multi-prover protocols, requiring only a quasilinear amount of resources, in terms of number of EPR pairs and time. However, notwithstanding our improvements, a physical implementation  of verifiable delegation protocols remains a challenging task for the available technology.

We introduce the protocols in more detail. The protocols provide different methods to delegate the quantum computation performed by the quantum verifier from~\cite{broadbent15howtoverify} to a second prover (call him $\pv$ for Prover $V$). The rigidity test is used to verify that the second prover indeed performs the same actions as the honest verifier, which are sequences of single-qubit measurements of Clifford observables from the set $\Sigma = \{X,Y,Z,F,G\}$ (where $F$ and $G$ are defined in~\eqref{eq:pauli-matrix-2}).

In the first protocol, one of the provers plays the role of Broadbent's prover
(call him $\pp$ for Prover $P$), and the other plays the role of Broadbent's
verifier ($\pv$). As $\pv$  just performs
single-qubit and Bell-basis measurements, universal quantum
computational power is not needed for this prover.
The protocol is divided into two sub-games; which game is played is chosen by the verifier by flipping a biased coin with appropriately chosen probabilities.
\begin{itemize}[nolistsep]
\item The first game is a sequential version of the rigidity game $\rigid(\Sigma,m)$ (from Theorem~\ref{thm:rigid-informal}) described in Figure~\ref{fig:consistency-game}. This aims to enforce that $\pv$ performs precisely the right measurements;
\item The second game is the delegation game, described in Figures \ref{fig:leash-protocol-V}, \ref{fig:leash-protocol-PV}, and \ref{fig:leash-protocol-PP}, and whose structure is summarized in Figure~\ref{fig:full-picture}. Here the verifier guides $\pp$ through the computation in a similar way as in the EPR Protocol.
\end{itemize}

We remark that in both sub-games, the questions received by $\pv$ are of the form $W\in \Sigma^m$, where $\Sigma = \{X,Y,Z,F,G\}$ is the set of measurements performed by the verifier in Broadbent's EPR protocol. 
The questions for $\pv$ in the two sub-games are sampled from the same distribution. This ensures that the $\pv$ is not able to tell which kind of game is being played. Hence, we can use our rigidity result of Theorem~\ref{thm:rigid-informal} to guarantee honest behavior of $\pv$ in the delegation sub-game. 
We call this protocol \emph{Verifier-on-a-Leash
Protocol}, or ``leash protocol'' for short.

The protocol requires $(2d+1)$ rounds of interaction, where $d$ is the depth of the circuit being delegated (see Section \ref{sec:EPR-protocol} for a precise definition of how this is computed). %
The protocol requires $O(n+g)$ EPR pairs to delegate a $g$-gate circuit on $n$ qubits, and the overall time complexity of the protocol is $O(g\log g)$.
The input to the circuit is hidden from the provers, meaning that the protocol can be made blind by encoding the circuit in the input, and delegating a universal circuit. We note that using universal circuits incurs a $\log{n}$ factor increase in the depth of the circuit~\cite{BeraFGH10}.

The completeness of the protocol follows directly from the completeness of \cite{broadbent15howtoverify}. Once we ensure the correct behavior of $\pv$ using our rigidity test, soundness follows from \cite{broadbent15howtoverify} as well, since the combined behavior of our verifier and an honest $\pv$ is nearly identical to that of Broadbent's verifier. 

The second protocol also starts from Broadbent's protocol, but modifies it in a
different way to achieve a protocol that only requires a constant number of rounds of
interaction. The proof of security is slightly more involved, but the key ideas are the same: we use a combination of our new self-testing results and the techniques of Broadbent's protocol to control the two provers, one of which plays the role of Broadbent's verifier, and the other the role of the prover. Because of the more complicated ``leash'' structure in this protocol, we call it the \emph{Dog-Walker Protocol}. 
 Like the leash protocol, the Dog-Walker Protocol has overall time complexity $O(g\log g)$. Unlike the leash protocol, the Dog-Walker protocol is not blind. In particular, while $\pv$ and $\pp$ would have to collude after the protocol is terminated to learn the input in the leash protocol, in the Dog-Walker protocol, $\pv$ simply receives the input in the clear.

Based on the Dog-Walker Protocol, it is possible to design a classical-verifier  two-prover protocol for all languages in QMA. This is achieved along the same lines as the proof that QMIP = MIP$^*$ from~\cite{reichardt2012classical}. The first prover, given the input, creates the QMA witness and teleports it to the second prover with the help of the verifier. The verifier then delegates the verification circuit to the second prover, as in the Dog-Walker Protocol; the first prover can be re-used to verify the operations of the second one.

\paragraph{Subsequent work.}
Bowles et al. \cite{BowlesSCA18}  have independently re-derived a
variant of our rigidity test for multi-qubit $\sigma_X$, $\sigma_Y$ and $\sigma_Z$ observables
in the context of entanglement certification protocols in
quantum networks.
Their self-test result has a slightly smaller set of questions but  significantly weaker robustness
bounds.

Grilo \cite{Grilo17} presented a protocol  for verifiable
two-prover delegation of quantum computation by classical clients with a single round of communication, in which case space-like
separation can replace the non-communication assumption.

\paragraph{Open questions and directions for future work.}
We have introduced a new rigidity theorem and shown how it can be used to transform a specific quantum-verifier delegation protocol, due to Broadbent, into a classical-verifier protocol with an additional prover, while suffering very little overhead in terms of the efficiency of the protocol. We believe that a similar transformation could be performed starting from delegation protocols based on other models of computation, such as the protocol in the measurement-based model of~\cite{fitzsimons12vubqc} or the protocol based on computation by teleportation considered in~\cite{reichardt2012classical}, and would lead to similar efficiency improvements. 

Recently,~\cite{experiment_ruv} provided an experimental demonstration of a
two-prover delegation protocol based on~\cite{reichardt2012classical} for a
$3$-qubit quantum circuit based on Shor's algorithm to factor the number $15$;
in order to obtain an actual implementation, necessitating ``only'' on the order
of $6000$ CHSH tests, the authors had to make the strong assumption that the
devices behave in an i.i.d.\ manner at each use, and could not use the most
general testing results from~\cite{reichardt2012classical}. We believe that our
improved rigidity theorem could lead to an implementation that does not require
any additional assumption. We also leave as an open problem investigating whether (a
variant of) our protocol can be made fault-tolerant, making it more suitable for
future implementation.

We note that our protocols require the verifier to communicate with one
prover after at least one round of communication with the other has been
completed. Therefore, the requirement that the provers do not
communicate throughout the protocol cannot be enforced through space-like
separation, and must be taken as an a priori assumption.
Since the protocol of \cite{Grilo17} is not blind, it is an open question
whether there exists a two-prover
delegation protocol that consists of a single round of simultaneous
communication with each prover, and is blind and verifiable. 
We also wonder if the fact that blindness is compromised after the provers
collude is unavoidable in this model.
A different
avenue to achieve this is to rely on computational assumptions on the power of
the provers to achieve protocols with more properties (non-interactive, blind,
verifiable)~\cite{dulek16,alagic2017quantum,mahadev2017,mahadev2018}, albeit not
necessarily in a truly efficient manner.

Finally, due to its efficiency and robustness, our ridigity theorem is a
potentially useful tool in many other cryptographic protocols. For instance, an
interesting direction to explore is the possibility of exploiting our theorem to
achieve more efficient protocols for device-independent quantum key
distribution, entanglement certification or other cryptographic protocols
involving more complex untrusted computation of the users.

\paragraph{Organization.}
In Section \ref{sec:prelim}, we give the necessary preliminaries, including
outlining Broadbent's EPR Protocol (Section \ref{sec:EPR-protocol}). In Section
\ref{sec:intro-rigidity}, we introduce our new rigidity theorems. In Section
\ref{sec:leash}, we present our first protocol, the leash protocol, and in
Section~\ref{sec:dog-walker}, we discuss our second protocol, the
Dog-Walker Protocol. In Section~\ref{sec:sequential}, we discuss the sequential repetition of our protocols.

\paragraph{Acknowledgments.} We thank Anne Broadbent for useful discussions in the early stages of this work. All authors acknowledge the IQIM, an NSF Physics Frontiers Center at the California Institute of Technology, where this research was initiated.
AC is supported by AFOSR YIP award number FA9550-16-1-0495.
AG was partially supported by ERC Consolidator Grant 615307-QPROGRESS and ERC QCC.
SJ is supported by an NWO Veni Innovational Research Grant under project number 639.021.752 and an NWO WISE Grant.
TV is supported by NSF CAREER Grant CCF-1553477, MURI Grant FA9550-18-1-0161, AFOSR YIP award number FA9550-16-1-0495, and the IQIM, an NSF Physics Frontiers Center (NSF Grant PHY-1125565) with support of the Gordon and Betty Moore Foundation (GBMF-12500028).

\section{Preliminaries}\label{sec:prelim}

\subsection{Notation}
\label{sec:prelim-notation}

We often write $\vec{x} =(x_1,\ldots,x_n)\in \{0,1\}^n$ for a string of bits, and $W=W_1\cdots W_m\in\Sigma^m$ for a string, where $\Sigma$ is a finite alphabet. If $S\subseteq \{1,\ldots,m\}$ we write $W_S$ for the sub-string of $W$ indexed by $S$. For an event $E$, we use $1_{E}$ to denote the indicator variable for that event, so $1_E=1$ if $E$ is true, and otherwise $1_E=0$. We write $\poly(\eps)$ for $O(\eps^c)$, where $c$ is a universal constant that may change each time the notation is used. 

$\mH$ is a finite-dimensional Hilbert space.  We denote by $\Unitary(\mH)$ the set of unitary operators, $\Obs(\mH)$ the set of binary observables (we omit the term ``binary'' from here on; in this paper all observables are binary) and $\Proj(\mH)$ the set of projective measurements on $\mH$ respectively.  
We let $\ket{\EPR}$ denote an EPR pair: 
$$\ket{\EPR}\,=\,\frac{1}{\sqrt{2}}\left(\ket{00}+\ket{11}\right).$$

\paragraph{Observables.}
We use capital letters $X,Z,W,\ldots$ to denote observables. We use greek letters $\sigma$, $\tau$ with a subscript $\sigma_W$, $\tau_W$, to emphasize that the observable $W$ specified as subscript acts in a particular basis. For example, $X$ is an arbitrary observable but $\sigma_X$ is specifically the Pauli $X$ matrix defined in~\eqref{eq:pauli-matrix}.

For $a\in\{0,1\}^n$ and commuting observables $\sigma_{W_1},\ldots,\sigma_{W_n}$, we write $\sigma_W(a) = \prod_{i=1}^n (\sigma_{W_i})^{a_i}$. The associated projective measurements are $\sigma_{W_i} = \sigma_{W_i}^0 - \sigma_{W_i}^1$ and $\sigma_W^u = \Es{a} (-1)^{u\cdot a} \sigma_W(a)$.  Often the $\sigma_{W_i}$ will be single-qubit observables acting on distinct qubits, in which case each is implicitly tensored with identity outside of the qubit on which it acts.

\paragraph{Pauli and Clifford groups.}
Let 
\begin{equation}\label{eq:pauli-matrix}
\sigma_I = \begin{pmatrix} 1 & 0 \\ 0 & 1 \end{pmatrix},\quad\; \sigma_X = \begin{pmatrix} 0 & 1 \\ 1 & 0 \end{pmatrix},\quad\; \sigma_Y = \begin{pmatrix} 0 & -i \\ i & 0 \end{pmatrix}\quad\;\text{and}\quad\; \sigma_Z = \begin{pmatrix} 1 & 0 \\ 0 & -1\end{pmatrix}
\end{equation}
denote the standard Pauli matrices acting on a qubit.  The single-qubit Weyl-Heisenberg group
$$\heisg = H(\Z_2)=\Big\{(-1)^c\sigma_X(a)\sigma_Z(b),\;a,b,c\in\{0,1\}\Big\} $$
is the matrix group generated by the Pauli $\sigma_X$ and $\sigma_Z$. We let $\heisgn = H(\Z_2^n)$ be the direct product of $n$ copies of $\heisg$.  
The $n$-qubit Clifford group is the normalizer of $\heisgn$ in the unitary group, up to phase: 
$$\cliffordn = \big\{G\in\Unitary((\C^2)^{\otimes n}):\, G \sigma G^\dagger \in \heisgn \quad\forall \sigma \in \heisgn\big\}.$$
Some Clifford observables we will use include 
\begin{equation}\label{eq:pauli-matrix-2}
 \sigma_H = \frac{\sigma_X+\sigma_Z}{\sqrt{2}},\quad\; \sigma_{H'} = \frac{\sigma_X-\sigma_Z}{\sqrt{2}},\quad\; \sigma_F = \frac{-\sigma_X+\sigma_Y}{\sqrt{2}},\quad\; \sigma_{G} = \frac{\sigma_X+\sigma_Y}{\sqrt{2}}.
\end{equation}
Note that  $\sigma_H$ and $\sigma_{H'}$ are characterized by $\sigma_X \sigma_H \sigma_X = \sigma_{H'}$ and $\sigma_Z \sigma_H \sigma_Z = -\sigma_{H'}$. Similarly, $\sigma_F$ and $\sigma_G$ are characterized by $\sigma_X \sigma_F \sigma_X = -\sigma_G$ and $\sigma_Y \sigma_F \sigma_Y = \sigma_G$.

\subsection{Quantum circuits} 

We use capital letters in sans-serif font to denote gates. We work with the universal quantum gate set $\{{\sf CNOT}, {\sf H}, {\sf T}\}$, where the controlled-not gate is the two-qubit gate with the unitary action 
$${\sf CNOT}\ket{b_1,b_2}=\ket{b_1,b_1\oplus b_2},$$ 
and the Hadamard and $\sf T$ gates are single-qubit gates with actions 
$${\sf H}\ket{b}=\frac{1}{\sqrt{2}}\left(\ket{0}+(-1)^b\ket{1}\right)\;\;\mbox{and}\;\;{\sf T}\ket{b}=e^{ib\pi/4}\ket{b},$$ respectively. We will also use the following gates:
$${\sf X}\ket{b}=\ket{b\oplus 1},\;\; {\sf Z}\ket{b}=(-1)^b\ket{b},\;\;\mbox{and}\;\;{\sf P}\ket{b}=i^b\ket{b}.$$
Measurements in the $Z$ basis (or computational basis) will be denoted by the standard measurement symbol:
\begin{center}
\begin{tikzpicture}
\draw (0,0)--(2,0);
\node at (1,0) {\meas};
\end{tikzpicture}
\end{center}
To measure another observable, $W$, we can perform a unitary change of basis
$\mathsf{U}_{W}$ before the measurement in the computational basis.

We assume that every circuit has a specified output wire, which is measured at the end of the computation to obtain the output bit. Without loss of generality, we can assume this is always the first wire. For an $n$-qubit system, we let $\Pi_b$, for $b \in \{0,1\}$, denote the orthogonal projector onto states with $\ket{b}$ in the output wire: $\ket{b}\bra{b}\otimes \Id$. For example, the probability that a circuit $Q$ outputs 0 on input $\ket{\vec{x}}$ is $\norm{\Pi_0 Q\ket{\vec{x}}}^2$. 

We can always decompose a quantum circuit into layers such that each layer contains at most one $\sf T$ gate applied to each wire. The minimum number of layers for which this is possible is called the \emph{$\sf T$ depth} of the circuit. 
We note that throughout this work, we will assume circuits are compiled in a specific form that introduces extra $\sf T$ gates (see the paragraph on the $\sf H$ gadget in Section~\ref{sec:EPR-protocol}). The $\sf T$ depth of the resulting circuit is proportional to the depth of the original circuit.

\subsection{Broadbent's EPR Protocol}
\label{sec:EPR-protocol}

In this section we summarize the main features of a delegation protocol introduced in~\cite{broadbent15howtoverify}, highlighting the aspects that will be relevant to understanding our subsequent adaptation into two-prover protocols. The ``EPR Protocol'' from \cite{broadbent15howtoverify} involves the interaction between a verifier $V_{EPR}$ and a prover $P$. We write $P_{EPR}$ for the ``honest'' behavior of the prover. The verifier $V_{EPR}$ has limited quantum powers. Her goal is to delegate a BQP computation to the prover $P$ in a verifiable way. Specifically, the verifier has as input a quantum circuit $Q$ on $n$ qubits and an input string $\vec{x}\in\{0,1\}^n$, and the prover gets as input $Q$. The verifier and prover interact. At the end of the protocol, the verifier outputs either $\sf accept$ or $\sf reject$. The protocol is such that there exist values $p_{\mathrm{sound}}$ and $p_{\mathrm{compl}}$ with $p_{\mathrm{sound}}< p_{\mathrm{compl}}$ such that $p_{\mathrm{compl}}-p_{\mathrm{sound}}$, called the \emph{soundness-completeness gap}, is a constant independent of input size, and moreover:

\begin{description}
\item[Completeness:] If the prover is honest and $\norm{\Pi_0Q\ket{\vec{x}}}^2 \geq 2/3$, then the verifier outputs $\sf accept$ with probability at least $p_{\mathrm{compl}}$;  
\item[Soundness:] If $\norm{\Pi_0Q\ket{\vec{x}}}^2 \leq 1/3$, then the probability the verifier outputs $\sf accept$ is at most $p_{\mathrm{sound}}$.
\end{description}

In the EPR protocol, $V_{EPR}$ and $P_{EPR}$ are assumed to share $(n+t)$ EPR pairs at the start of the protocol, where $t$ is the number of $\sf T$ gates in $Q$ and $n$ the number of input bits. 
 The first $n$ EPR pairs correspond to the input to the computation; they are indexed by $N=\{1,\dots,n\}$. The remaining pairs are indexed by $T=\{n+1,\dots,n+t\}$; they will be used as ancilla qubits to  implement each of the $\sf T$ gates in the delegated circuit. 
 In~\cite{broadbent15howtoverify} the EPR protocol is only considered in the analysis, and it is assumed that the EPR pairs are prepared by the verifier.

The behavior of $V_{EPR}$ depends on a \emph{round type} randomly chosen by $V_{EPR}$ {\em after} her interaction with $P_{EPR}$. There are three possible round types:
\begin{itemize}[nolistsep]
\item Computation round ($r=0$): the verifier delegates the computation to $P_{EPR}$, and at the end of the round can recover its output if $P_{EPR}$ behaves honestly;
\item $X$-test round ($r=1$) and $Z$-test round ($r=2$): the verifier tests that  $P_{EPR}$  behaves honestly, and rejects if malicious behavior is detected.
\end{itemize}
For some constant $p$, $\ver$ chooses $r=0$ with probability $p$, and otherwise chooses $r\in\{1,2\}$ with equal probability. Since the choice of round type is made after interaction with $P_{EPR}$, $P_{EPR}$'s behavior cannot depend on the round type. In particular, any deviating behavior in a computation round is reproduced in both types of test rounds. The analysis amounts to showing that any deviating behavior that affects the outcome of the computation will be detected in at least one of the test rounds. 

In slightly more detail, the high-level structure of the protocol is the following. $V_{EPR}$ measures her halves of the $n$ qubits in $N$ in order to prepare the input state on $P_{EPR}$'s system. As a result the input is quantum one-time padded with keys that depend on $V_{EPR}$'s measurement results. For example, in a computation round, $V_{EPR}$ measures each input qubit in the $Z$ basis, and gets some result $\vec{d}\in\{0,1\}^n$, meaning the input on $P_{EPR}$'s side has been prepared as ${\sf X}^{\vec{d}}\ket{0}^{\otimes n}$. In \cite{broadbent15howtoverify}, the input is always considered to be $\vec{0}$, but we can also prepare an arbitrary classical input $\vec{x}\in\{0,1\}^n$ by reinterpreting the one-time pad key as $\vec{a}=\vec{d}\oplus \vec{x}$ so that the input state on $P_{EPR}$'s side is ${\sf X}^{\vec{a}}\ket{\vec{x}}$. In a test round, on the other hand, the input is prepared as the one-time pad of either $\ket{0}^{\otimes n}$ or $\ket{+}^{\otimes n}$. Note that as indicated in Figure~\ref{fig:EPR-high-level} this choice of measurements will be made after the interaction with $P_{EPR}$ has taken place.

The honest prover $P_{EPR}$ applies the circuit $Q$, which we assume is compiled in the universal gate set $\{{\sf H},{\sf T},{\sf CNOT}\}$, to his one-time padded input. We will shortly describe gadgets that $P_{EPR}$ can apply in order to implement each of the three gate types. The gadgets are designed in a way that in a test round each gadget amounts to an application of an identity gate; this is what enables $V_{EPR}$ to perform certain tests in those rounds that are meant to identify deviating behavior of a dishonest prover. After each gadget, the one-time padded keys can be updated by $V_{EPR}$, who is able to keep track of the keys at any point in the circuit using the \emph{update rules} in Table \ref{tab:EPR-key-updates}. 

\begin{table}[H]
\resizebox{1.0\textwidth}{!}{%
\begin{tikzpicture}
\draw (8,3)--(16,3);
\draw (0,2.5)--(16,2.5);
\draw (1.5,2)--(16,2);
\draw (1.5,1.5)--(16,1.5);
\draw (0,1)--(16,1);
\draw (0,.5)--(16,.5);
\draw (0,0)--(16,0);

\node at (.75,1.75) {$\sf T$};
\node at (.75,.75) {$\sf H$};
\node at (.75,.25) {$\sf CNOT$};

\node at (12,2.75) {Key Update Rule};
\node at (12,2.25) {$(a_j,b_j)\leftarrow(a_j+ c_i,b_j+e_i+a_j+c_i+(a_j+c_i)z_i)$};
\node at (12,1.75) {$(a_j,b_j)\leftarrow(e_i,0)$};
\node at (12,1.25) {$(a_j,b_j)\leftarrow(0,b_j+e_i+z_i)$};
\node at (12,.75) {$(a_j,b_j)\leftarrow(b_j,a_j)$};
\node at (12,.25) {$(a_j,b_j,a_{j'},b_{j'})\leftarrow(a_j,b_j+b_{j'},a_{j}+a_{j'},b_{j'})$};

\node at (4.75,2.25) {Computation Round};
\node at (4.75,1.75) {$X$-Test, even parity; or $Z$-test, odd parity};
\node at (4.75,1.25) {$Z$-Test, even parity; or $X$-test, odd parity};

\draw (0,2.5)--(0,0);
\draw (1.5,2.5)--(1.5,1);
\draw (8,3)--(8,0);
\draw (16,3)--(16,0);
\end{tikzpicture}
  }

\caption{Rules for updating the one-time-pad keys after applying each type of gate in the EPR Protocol, in particular: after applying the $i$-th $\sf T$ gate to the $j$-th wire; applying an $\sf H$ gate to the $j$-th wire; or applying a $\sf CNOT$ gate controlled on the $j$-th wire and targeting the $j'$-th wire. 
}\label{tab:EPR-key-updates}
\end{table}

We now describe the three gadgets, before giving a complete description of the protocol. 

\paragraph{CNOT Gadget} To implement a $\sf CNOT$ gate on wires $j$ and $j'$, $P_{EPR}$ simply performs the $\sf CNOT$ gate on those wires of his input qubits. The one-time pad keys are changed by the update rule in Table \ref{tab:EPR-key-updates}, because ${\sf CNOT}\cdot {\sf X}^{a_j}{\sf Z}^{b_j}\otimes {\sf X}^{a_{j'}}{\sf Z}^{b_{j'}}={\sf X}^{a_j}{\sf Z}^{b_j+b_{j'}}\otimes {\sf X}^{a_j+a_{j'}}{\sf Z}^{b_{j'}}\cdot {\sf CNOT}$. Note that ${\sf CNOT}\ket{0}\ket{0}=\ket{0}\ket{0}$ and ${\sf CNOT}\ket{+}\ket{+}=\ket{+}\ket{+}$, so in the test runs, $P_{EPR}$ is applying the identity.

\paragraph{H Gadget} To implement an $\sf H$ gate on wire $j$, $P_{EPR}$ simply performs the $\sf H$ on wire $j$, and the one-time-pad keys are changed as in Table \ref{tab:EPR-key-updates}. Unlike $\sf CNOT$, $\sf H$ does not act as the identity on $\ket{0}$ and $\ket{+}$, so it is not the identity in a test round. To remedy this, assume that $Q$ is compiled so that every $\sf H$ gate appears in a pattern ${\sf H}({\sf TTH})^k$, where the maximal such $k$ is odd. This can be accomplished by replacing each $\sf H$ by $\sf HTTHTTHTTH$, which implements the same unitary. In test rounds, the $\sf T$ gadget, described shortly, implements the identity, and since ${\sf H}(\Id {\sf H})^k$ for odd $k$ implements the identity, ${\sf H}({\sf TTH})^k$ will also have no effect in test rounds. 

\paragraph{Parity of a T Gate} Within a pattern ${\sf H}({\sf TTH})^k$, the $\sf H$ has the effect of switching between an $X$-test round scenario (the state $\ket{0}$) and a $Z$-test round scenario (the state $\ket{+})$. In order to consistently talk about the type of a round while evaluating the circuit, we can associate a parity with each $\sf T$ gate in the circuit. The parity of the $\sf T$ gates that are not part of the pattern ${\sf H}({\sf TTH})^k$ will be defined to be even. A ${\sf H}$ will always flip the parity, so that within such a pattern, the first two ${\sf T}$ gates will be odd, the next two will be even, etc., until the last two $\sf T$ gates will be odd again. 

\paragraph{T Gadget} The gadget for implementing the $i$-th $\sf T$ gate on the $j$-th wire is performed on $P_{EPR}$'s $j$-th input qubit, and his $i$-th auxiliary qubit (indexed by $n+i$), which we can think of as being prepared in a particular auxiliary state by $V_{EPR}$ measuring her half of the corresponding EPR pair, as shown in Figure~\ref{fig:tgadget-EPR}. The gadget depends on a random bit $z_i$ that is chosen by $V_{EPR}$ and sent to the prover.

\begin{figure}[H]
\centering
\resizebox{0.9\textwidth}{!}{%
\begin{tikzpicture}

\node at (1,4.25) {$j$};
\node at (1,3.25) {$n+i$};

\draw (.75,4) -- (4,4) -- (5,3) -- (5.5,3);
\filldraw[white] (4.5,3.5) circle (.1);
\draw (0,2.25) -- (.75,3) -- (4,3) -- (5,4) -- (5.5,4);
\draw (0,2.25) -- (.75,1.5) -- (10,1.5);

\draw (2,4) circle (.15);
\filldraw (2,3) circle (.075);
\draw (2,4.15) -- (2,3);

\filldraw[fill=white] (3,3.25) rectangle (3.5,2.75);
\node at (3.25,3) {${\sf P}^{z_i}$};

\draw (6,3.025) -- (6.525,3.025) -- (6.525,2.025) -- (7,2.025);
\draw (6,2.975) -- (6.475,2.975) -- (6.475,1.975) -- (7,1.975);
\node at (5.75,3) {\meas};
\filldraw (6.5,3) circle (.075);
\filldraw (6.5,2) circle (.075);
\node at (7.25,2) {$c_i$};

\draw[dashed] (-1,2.5) -- (10.5,2.5);

\filldraw[fill=white] (8.5,1.25) rectangle (9.25,1.75);
\node at (8.875,1.5) {$\mathsf{U}_{W_i}$};

\draw (10,1.525) -- (10.5,1.525);
\draw (10,1.475) -- (10.5,1.475);
\node at (9.75,1.5) {\meas};
\node at (10.75,1.5) {$e_i$};

\node at (-2.3,3.45) {Prover ($P_{EPR}$)};
\node[yscale=4,xscale=2] at (-.95,3.4) {$\{$};
\node at (-2.4,1.65) {Verifier ($V_{EPR}$)};
\node[yscale=4,xscale=2] at (-.95,1.6) {$\{$};

\end{tikzpicture}
  }
\caption{The gadget for implementing the $i$-th $\sf T$ gate on the $j$-th wire. The gate $\mathsf{U}_{W_i}$ implementing the change of basis associated with observable $W_i$ is applied as part of the procedure $V_{EPR}^r$ (see Figure \ref{fig:original-protocol-VEPRr}) and is determined by the round type $r$, the parity of the $i$-th $\sf T$ gate, $z_i$, $c_i$, and $a_i'$ (the $\sf X$-key going into the $i$-th $\sf T$ gate), as in Table~\ref{tab:Oy}. }\label{fig:tgadget-EPR}
\end{figure}
\begin{table}[H]
\resizebox{0.7\textwidth}{!}{%
\begin{tikzpicture}
\node at (-2.5,3.25) {Computation Round};
\node at (-2.5,2) {$X$-Test Round};
\node at (-2.5,.5) {$Z$-Test Round};

\node at (1.25,3.5) {$a_i'\oplus c_i\oplus z_i=0$};
\node at (1.25,3) {$a_i'\oplus c_i\oplus z_i=1$};
\node at (.5,2.5) {even $\sf T$ gate};
\node at (.5,1.75) {odd $\sf T$ gate}; 	\node at (2.5,2) {$z_i=0$};
						\node at (2.5,1.5) {$z_i=1$};
\node at (.5,1) {odd $\sf T$ gate};
\node at (.5,.25) {even $\sf T$ gate}; 	\node at (2.5,.5) {$z_i=0$};
						\node at (2.5,0) {$z_i=1$};

\node at (5,4) {$\mathsf{U}_{W_i}$ (observable $W_i$)};
\node at (5,3.5) {${\sf HT}$ (observable $G$)};
\node at (5,3) {${\sf HPT}$ (observable $F$)};
\node at (5,2.5) {$\Id$ (observable $Z$)};
\node at (5,2) {${\sf H}$ (observable $X$)};
\node at (5,1.5) {${\sf HP}$ (observable $Y$)};
\node at (5,1) {$\Id$ (observable $Z$)};
\node at (5,.5) {${\sf H}$ (observable $X$)};
\node at (5,0) {${\sf HP}$ (observable $Y$)};

\draw (3.25,4.25)--(6.75,4.25);
\draw (-4.25,3.75)--(6.75,3.75);
\draw (-.75,3.25)--(6.75,3.25);
\draw (-4.25,2.75)--(6.75,2.75);
\draw (-.75,2.25)--(6.75,2.25);
\draw (1.75,1.75)--(6.75,1.75);
\draw (-4.25,1.25)--(6.75,1.25);
\draw (-.75,.75)--(6.75,.75);
\draw (1.75,.25)--(6.75,.25);
\draw (-4.25,-.25)--(6.75,-.25);

\draw (-4.25,3.75)--(-4.25,-.25);
\draw (-.75,3.75)--(-.75,-.25);
\draw (1.75,2.25)--(1.75,1.25); \draw (1.75,.75)--(1.75,-.25);
\draw (3.25,4.25)--(3.25,-.25);
\draw (6.75,4.25)--(6.75,-.25);
\end{tikzpicture}
  }
\caption{The choice of $\mathsf{U}_{W_i}$ in the $\sf T$ gadget. We also indicate the observable $W_i$ associated with the final measurement $W_i=\mathsf{U}_{W_i}^\dagger Z \mathsf{U}_{W_i}$.}\label{tab:Oy}
\end{table}

\begin{figure}[t]
\floatbox[{\capbeside\thisfloatsetup{capbesideposition={right,top},capbesidewidth=0.5\textwidth}}]{figure}[\FBwidth]
{
  
\resizebox{0.35\textwidth}{!}{%
\begin{tikzpicture}

\draw (2.5,2)--(2.5,5.25)-- (4.5,6) --(6.5,5.25)--(6.5,5);
\draw (3,2)--(3,5.25)-- (5,6) --(7,5.25)--(7,5);

\draw (2,5) rectangle (4,-1.25);
\node at (2.45,-1) {$V_{EPR}$};

\filldraw[fill=white] (6,5) rectangle (7.5,2);
\node at (6.75,3.5) {$P_{EPR}$};

\node at (5,4.5) {$\vec{z}\in\{0,1\}^t$};
\draw[->] (4,4.25)--(6,4.25);

\node at (5,3.5) {$\vec{c}\in\{0,1\}^t$};
\node at (5,3) {$c_f\in\{0,1\}$};
\draw[<-] (4,2.75)--(6,2.75);

\filldraw[fill=white] (2.25,2) rectangle (3.75,.5);
\node at (3,1.25) {$V_{EPR}^{r}$};

\draw[->] (3.5,2.5)--(3.5,2);
\node at (3.5,2.75) {$\vec{x},\vec{c},\vec{z}$};

\draw[->] (3,.5)--(3,0);
\node at (3,-.25) {$\vec{a},\vec{b},\vec{e}$};

\draw[white] (8,0) circle (.1);

\end{tikzpicture}}}
{\caption{This figure describes how different pieces of the protocol fit together. $V_{EPR}$ and $P_{EPR}$ share $n+t$ EPR pairs. The honest prover $P_{EPR}$ can be seen as a procedure that acts on $n+t$ qubits --- the EPR pair halves --- depending on a $t$-bit string $\vec{z}$.  We have separated the quantum part of $V_{EPR}$ into its own procedure, called $V_{EPR}^{r}$, where $r\in\{0,1,2\}$ indicates the \emph{round type}, which $V_{EPR}$ runs on her $n+t$ EPR halves, and the $2t$ bits $\vec{c}$ and $\vec{z}$. Aside from running $V_{EPR}^r$, $V_{EPR}$ is classical. }\label{fig:EPR-high-level}}
\end{figure}

\paragraph{The EPR Protocol.} We show how the gadgets just described are used in the complete protocol. We first describe the protocol for $V_{EPR}$ below.  For later convenience we have divided the action of $V_{EPR}$ into classical actions and a single quantum subroutine $V_{EPR}^r$ depending on the round type. 

The procedure $V_{EPR}^r$ measures each of the $n+t$ EPR halves according to
some observable that depends on $r$, $\vec{c}$, and $\vec{z}$. In the case of a
computation round, $V_{EPR}^0$ measures the qubits in $T$ adaptively.  We
describe the steps of $V_{EPR}$, $V_{EPR}^r$ and the honest behaviour of
$P_{EPR}$ in Fig.~\ref{fig:original-protocol}.

\begin{figure}[H]
  \begin{subfigure}{\textwidth}
\rule[1ex]{\textwidth}{0.5pt}
\begin{enumerate}
\item $V_{EPR}$ sends $\vec{z}\in_R\{0,1\}^t$ to $P_{EPR}$,
and receives back $\vec{c}\in\{0,1\}^t$ and $c_f\in\{0,1\}$.
\item $V_{EPR}$ chooses a random round type $r\in\{0,1,2\}$ and runs $V_{EPR}^r$ (see Figure \ref{fig:original-protocol-VEPRr}) on her EPR halves, $\vec{x}$, $\vec{c}$ and $\vec{z}$, to obtain bits $\vec{a},\vec{b}\in\{0,1\}^n$ and $\vec{e}\in\{0,1\}^t$. 
\item $V_{EPR}$ applies the update rules from Table \ref{tab:EPR-key-updates} on the initial keys $(\vec{a},\vec{b})$, gate-by-gate, to obtain, for every $i\in [t]$, the $\sf X$-key before the $i$-th $\sf T$ gate is applied, $a'_i$, and the final $\sf X$ key for the output wire, $a_f'$. If $r=1$ ($X$-test round) and there exists an $i$ such that the $i$-th $\sf T$ gate is even and $c_i\neq a'_i\oplus e_i$, output $\sf reject$. If $r=2$ ($Z$-test round) and there exists an $i$ such that the $i$-th $\sf T$ gate is odd and $c_i\neq a'_i\oplus e_i$, output $\sf reject$. If $r\in\{0,1\}$ (computation or $X$-test round) and $c_f\oplus a_f'\neq 0$, output $\sf reject$. Otherwise, output $\sf accept$. 
\end{enumerate}
\rule[2ex]{\textwidth}{0.5pt}\vspace{-.5cm}
    \caption{ $V_{EPR}$'s point of view.}
  \label{fig:original-protocol-VEPR}
    \end{subfigure}

\begin{subfigure}{\textwidth}
\rule[1ex]{\textwidth}{0.5pt}
\raggedright
Input: A circuit $Q$ with $t$ $\sf T$ gates, $\vec{x}\in\{0,1\}^n$, $\vec{c},\vec{z}\in\{0,1\}^t$, an $n$-qubit system indexed by $N$, and a $t$-qubit system indexed by $T$.
\begin{enumerate}
\item If $r\in\{0,1\}$, measure each qubit in $N$ in the $Z$ basis, and otherwise measure in the $X$ basis, to get results $\vec{d}\in\{0,1\}^n$. If $r=0$, set $(\vec{a},\vec{b})=(\vec{d}\oplus \vec{x},0^n)$; if $r=1$, set $(\vec{a},\vec{b})=(\vec{d},0^n)$; and if $r=2$ set $(\vec{a},\vec{b})=(0^n,\vec{d})$. 
\item Going through $Q$ gate-by-gate, use the update rules in Table
  \ref{tab:EPR-key-updates} to update the one-time-pad keys. For every
    $i\in[t]$, when the $i$-th $\sf T$ gate is reached, let $a_i'$ be the $\sf
    X$ key before the $i$-th $\sf T$ gate is applied. Choose an observable $W_i$ according to Table \ref{tab:Oy} in which to measure the $i$-th qubit in $T$, corresponding to the $i$-th $\sf T$ gate, obtaining result $e_i$.  
\end{enumerate}
\rule[2ex]{\textwidth}{0.5pt}\vspace{-.5cm}
\caption{ The procedure $V_{EPR}^r$, employed by $V_{EPR}$.  }
  \label{fig:original-protocol-VEPRr}
\end{subfigure}
\begin{subfigure}{\textwidth}
\rule[1ex]{\textwidth}{0.5pt}
\begin{enumerate}
\item Receive $\vec{z}\in\{0,1\}^t$ from $V_{EPR}$. 
\item Evaluate $Q$ gate-by-gate using the appropriate gadget for each gate.
In particular, use $z_i$ to implement the $i$-th $\sf T$ gadget, and obtain measurement result $c_i$. 
\item Measure the output qubit to obtain $c_f$, and return $\vec{c}$ and $c_f$ to $V_{EPR}$.
\end{enumerate}
\rule[2ex]{\textwidth}{0.5pt}\vspace{-.5cm}
\caption{Honest prover strategy $P_{EPR}$}
  \label{fig:original-protocol-PEPR}
\end{subfigure}
  
  \caption{The EPR Protocol.  }\label{fig:original-protocol}
\end{figure}

\paragraph{Completeness and Soundness.} 
We summarize the relevant part of the analysis of the EPR protocol from~\cite{broadbent15howtoverify}. First suppose $P_{EPR}$ behaves honestly. If $\norm{\Pi_0 Q\ket{0^n}}^2=p$, then in a computation round, $V_{EPR}$ outputs $\sf accept$ with probability $p$, whereas in a test round, $V_{EPR}$ outputs $\sf accept$ with probability $1$. This establishes completeness of the protocol:

\begin{theorem}[Completeness]\label{thm:EPR-correctness} 
Suppose the verifier executes the EPR Protocol, choosing $r=0$ with probability $p$, on an input $(Q,\ket{\vec{x}})$ such that $\norm{\Pi_0 Q\ket{\vec{x}}}^2\geq 1-\delta$. Then the probability that $V_{EPR}$ accepts when interacting with the honest prover $P_{EPR}$ is at least $(1-p)+p(1-\delta)$. 
\end{theorem}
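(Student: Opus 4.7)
The plan is to prove completeness by conditioning on the round type $r$ chosen by $V_{EPR}$. With probability $1-p$, $V_{EPR}$ chooses a test round ($r\in\{1,2\}$), and I will argue acceptance occurs with probability exactly $1$. With probability $p$, $V_{EPR}$ chooses a computation round ($r=0$), and I will argue acceptance occurs with probability at least $1-\delta$. Summing gives the claimed bound $(1-p) + p(1-\delta)$.

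For the computation round, the EPR halves in $N$ measured in the $Z$ basis with outcomes $\vec{d}$ collapse $P_{EPR}$'s halves to $\ket{\vec{d}}$; reinterpreting the key as $\vec{a}=\vec{d}\oplus\vec{x}$ yields the one-time-padded input ${\sf X}^{\vec{a}}\ket{\vec{x}}$. The honest prover applies $Q$ using the three gadgets; the correctness of the CNOT and $\sf H$ gadgets follows directly from the Pauli commutation identities used to derive Table~\ref{tab:EPR-key-updates}, and the correctness of the $\sf T$ gadget follows from the standard analysis of $\sf T$-gate teleportation (with the $\sf P$-correction determined by $z_i$ and the ancilla measurement outcome $c_i$). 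Inductively, after each gate, $V_{EPR}$'s updated keys $(\vec{a},\vec{b})$ correctly track the one-time pad masking the true state of the computation. In particular, the output bit satisfies $c_f = a_f' \oplus b$ where $b$ is the logical measurement outcome of $Q\ket{\vec{x}}$. The verifier accepts iff $c_f\oplus a_f'=0$, i.e.\ iff $b=0$, which by assumption happens with probability $\|\Pi_0 Q\ket{\vec{x}}\|^2\geq 1-\delta$.

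For test rounds, the key observation is that the state on $P_{EPR}$'s side at every stage of the computation is a one-time pad of $\ket{0}^{\otimes n}$ (if $r=1$) or $\ket{+}^{\otimes n}$ (if $r=2$), and that both ${\sf CNOT}$ and the compiled pattern ${\sf H}({\sf TTH})^k$ act as the identity on these stabilizer states (modulo Pauli updates already tracked by $V_{EPR}$). It remains to analyze the $\sf T$ gadget. For each $i$, the parity of the $i$-th $\sf T$ gate, combined with the round type, determines via Table~\ref{tab:Oy} an observable $W_i$ such that measuring the $i$-th auxiliary qubit of $V_{EPR}$ in the $W_i$ basis and coupling it to $P_{EPR}$'s wire via the gadget is equivalent to $P_{EPR}$ measuring his wire in the stabilizer basis ($Z$ in an $X$-test scenario, $X$ in a $Z$-test scenario). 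A direct calculation of the gadget then shows that the two classical outcomes satisfy the deterministic relation $c_i = a_i' \oplus e_i$ precisely in the cases that $V_{EPR}$ checks (even $\sf T$ gates in $X$-test rounds, odd $\sf T$ gates in $Z$-test rounds). The analogous check for $c_f\oplus a_f'$ in an $X$-test round follows from the same argument applied to the final output wire.

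The main obstacle, and the only nonroutine step, is carefully verifying the $\sf T$ gadget calculation in each of the six cases of Table~\ref{tab:Oy}: one must check that the choice of $W_i$ correctly accounts for the ${\sf P}^{z_i}$ correction, the parity of the gate, and the current value of the $\sf X$-key $a_i'$, so that the resulting $(c_i,e_i)$ satisfy the claimed deterministic relation on test rounds while preserving the correct update rule on computation rounds. Everything else — including the completeness calculation $(1-p)\cdot 1 + p\cdot(1-\delta)$ and the propagation of one-time-pad keys through ${\sf CNOT}$ and ${\sf H}$ — is essentially mechanical bookkeeping from the data already provided in Tables~\ref{tab:EPR-key-updates} and~\ref{tab:Oy}.
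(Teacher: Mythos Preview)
Your proposal is correct and follows exactly the same approach as the paper: condition on the round type, argue that test rounds accept with probability $1$ and computation rounds with probability $\|\Pi_0 Q\ket{\vec{x}}\|^2\geq 1-\delta$, then combine. The paper in fact states only this two-line summary and defers the gadget-by-gadget verification (your ``main obstacle'') to Broadbent's original paper~\cite{broadbent15howtoverify}, so your write-up is more detailed than what appears here but not different in substance.
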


The following theorem is implicit in~\cite[Section 7.6]{broadbent15howtoverify}, but we include a brief proof sketch:

\begin{theorem}[Soundness]\label{thm:EPR-soundness} 
Suppose the verifier executes the EPR Protocol, choosing $r=0$ with probability~$p$, on an input $(Q,\ket{\vec{x}})$ such that $\norm{\Pi_0 Q\ket{\vec{x}}}^2\leq \delta$. Let $P_{EPR}^*$ be an arbitrary prover such that $P_{EPR}^*$ is accepted by  $V_{EPR}$ with probability $q_t$ conditioned on $r\neq 0$, and $q_c$ conditioned on $r=0$. Then the prover's overall acceptance probability is $pq_c+(1-p)q_t$, and
$$q_c \,\leq\, 2\left(q_t\,\delta+(1-q_t)\right)-\delta.$$ 
\end{theorem}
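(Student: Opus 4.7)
The plan is to exploit the fact that the prover's interaction with $V_{EPR}$ is completed \emph{before} the round type $r\in\{0,1,2\}$ is sampled (Figure~\ref{fig:original-protocol-VEPR}), so that a cheating $P_{EPR}^*$ is described by a single quantum channel $\mathcal{E}$ on his halves of the $n+t$ EPR pairs (together with a private ancilla), followed by a fixed classical protocol producing $\vec{c},c_f$. By a standard Pauli twirl, averaging over the effective one-time pad applied to the initial state by $V_{EPR}$'s $Z$- or $X$-basis measurements of her EPR halves (together with randomness in $\vec{z}$), we may replace $\mathcal{E}$ with a Pauli channel $\mathcal{E}(\rho)=\sum_P \lambda_P\, P\rho P^\dagger$, summed over $(n+t)$-qubit Paulis, with $\lambda_P\geq 0$ and $\sum_P\lambda_P=1$. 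Write $\pi:=\lambda_I$ for the weight of the identity term; conditioned on each Pauli branch $P$, the rest of $P_{EPR}^*$'s strategy is equivalent to the honest $P_{EPR}$ acting on the perturbed state.

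The next step is to relate $\pi$ to $q_t$. The $X$-test and the $Z$-test are designed so that, after propagation through the circuit via the Clifford update rules of Table~\ref{tab:EPR-key-updates} and the matching measurement choices of Table~\ref{tab:Oy}, every non-identity Pauli branch leads to rejection in at least one of the two test types. Averaging uniformly over $r\in\{1,2\}$ then yields
\[
q_t \;\leq\; \pi \,+\, (1-\pi)\cdot \tfrac{1}{2} \;=\; \tfrac{1+\pi}{2},
\]
equivalently $1-\pi \leq 2(1-q_t)$.

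For the computation round, conditioned on the identity branch (probability $\pi$) the prover's residual state and classical response are identical to those of honest $P_{EPR}$ on $(Q,\ket{\vec{x}})$, so the acceptance probability is exactly $\|\Pi_0 Q\ket{\vec{x}}\|^2 \leq \delta$ by hypothesis; on any other branch it is at most $1$. Hence
\[
q_c \;\leq\; \pi\,\delta + (1-\pi) \;=\; \delta + (1-\pi)(1-\delta) \;\leq\; \delta + 2(1-q_t)(1-\delta),
\]
which rearranges algebraically to $q_c \leq 2\bigl(q_t\delta + (1-q_t)\bigr)-\delta$. The identity for the overall acceptance probability, $pq_c+(1-p)q_t$, is then immediate from the law of total probability over $r$.

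The main obstacle is the propagation claim used in the second step: because $\sf T$ gates do not normalize the Pauli group, tracking how an arbitrary Pauli error on the initial EPR halves transforms as the prover executes the circuit requires carefully accounting for the non-Clifford $\sf T$-gadget, which only implements the identity on average over the random bit $z_i$. The role of the adaptive basis $\mathsf{U}_{W_i}$ and of the parity bookkeeping in Table~\ref{tab:Oy} is precisely to ensure that every initial non-identity Pauli manifests as a detectable flip in at least one of the checks $c_i=a_i'\oplus e_i$ or $c_f=a_f'$ of the corresponding test round. Rather than re-deriving this propagation analysis from scratch, we invoke the detailed argument of~\cite[Section~7.6]{broadbent15howtoverify}, which supplies exactly the required statement that the fraction of non-identity Pauli weight surviving both test types is at most $2(1-q_t)$.
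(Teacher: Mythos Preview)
Your overall architecture is right --- isolate a single ``attack weight'' parameter, bound $q_c$ above by it, bound it in terms of $1-q_t$, and combine --- and this is exactly what the paper (following Broadbent) does. But two specific steps do not go through as written.

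\textbf{The identity weight is the wrong parameter.} You set $\pi=\lambda_I$ and claim that every non-identity Pauli branch is rejected in at least one test type, giving $q_t\le \pi+(1-\pi)/2$. This is false. Take the trivial circuit with no $\sf T$ gates and a single input qubit, and let the attack be $P=Z_1$ on that qubit. In an $X$-test round the input is a $Z$-eigenstate, so $Z_1$ acts as a global phase and $c_f$ is unchanged; in a $Z$-test round nothing is checked at all (there are no $c_i$, and $c_f$ is not tested when $r=2$). Thus $q_t=1$ while $\pi=0$, contradicting your bound. The point is that there are \emph{benign} non-identity Paulis --- those acting as $I$ or $Z$ on every position that ultimately gets measured in the computational basis --- which are invisible to both tests \emph{and} do not affect the computation. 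The correct parameter is the weight $A$ on \emph{non-benign} Paulis, and what Broadbent actually proves is $1-q_t\ge A/2$ and $1-q_c\ge(1-\delta)(1-A)$, which combine to the stated inequality exactly as in your final algebra. Your citation to \cite[Section~7.6]{broadbent15howtoverify} is for this statement, not for the claim about non-identity weight.

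\textbf{The attack should sit after the honest run, not before.} Placing $\mathcal{E}$ on the initial EPR halves and then running the honest $P_{EPR}$ makes the propagation analysis harder, since a Pauli on the input side is conjugated through the circuit (which contains $\sf T$ gates and hence does not normalize the Pauli group). Broadbent's analysis, and the paper's sketch, instead model the deviation as a channel applied to the state \emph{after} the coherent honest run and \emph{before} the final computational-basis measurements; the benign set is then simply ``$I$ or $Z$ on measured wires.'' Relatedly, your Pauli-twirl justification is shaky: the one-time pad supplied by $V_{EPR}$'s EPR measurements is only an $X$-pad or only a $Z$-pad, depending on the (post-hoc) round type, so it does not by itself diagonalize $\mathcal{E}$ in the Pauli basis. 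The vanishing of cross terms is a consequence of the full analysis in \cite{broadbent15howtoverify}, not of a one-line twirl; the cleaner route is to decompose each Kraus operator $K_k=\sum_Q\alpha_{k,Q}Q$ and work directly with the weights $|\alpha_{k,Q}|^2$, as the paper does.
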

\begin{proof}[Proof sketch]
Using the notation of \cite{broadbent15howtoverify}, let $E(\rho) = \sum_{k} K_k \rho K_k^\dagger$ be the Kraus decomposition of an arbitrary attack  performed by a malicious prover on the $m$-qubit state resulting from an honest run of the protocol.\footnote{Note that we can assume such behaviour by the malicious prover without loss of generality since  all measurements can be performed coherently, with the first step of the attack undoing all honest operations.}  We write the $k$-th Kraus operator of $E$ as a sum of Paulis $K_k = \sum_{Q\in
  \paulin} \alpha_{k,Q} Q$. Finally, we define the set of benign attacks $B_{t,m} \subseteq \paulin$ as the subset of Paulis containing $I$ or $Z$ in the positions that are measured (in the computational basis) during the protocol.
  
  Notice that the benign attacks do not affect the the acceptance of the protocol, and therefore the value 
$A=\sum_k\sum_{Q 
\not\in
  B_{t,n}}|\alpha_{k,Q}|^2$ can be interpreted then as the total weight on attacks that could change the outcome of the computation. By \cite{broadbent15howtoverify}, the probability of rejecting in a computation round is $1-q_c\geq (1-\delta)(1-A)$, whereas the probability of rejecting in a test round is $1-q_t\geq \frac{1}{2}A$. Combining these gives $q_c\leq 2(q_t\delta+(1-q_t))-\delta$.
\end{proof}

\section{Rigidity}
\label{sec:intro-rigidity}

Each of our delegation protocols includes a \emph{rigidity test} that is meant to verify that one of the provers measures his half of shared EPR pairs in a basis specified by the verifier, thereby preparing one of a specific family of post-measurement states on the other prover's space; the post-measurement states will form the basis for the delegated computation. This will be used to certify that one of the provers in our two-prover schemes essentially behaves as the quantum part of $V_{EPR}$ would in the EPR protocol. 

In this section we outline the structure of the test, giving the important elements for its use in our delegation protocols. The test is parametrized by the number $m$ of EPR pairs to be used. The test  consists of a single round of classical interaction between the verifier and the two provers. With constant probability the verifier sends one of the provers a string $W$ chosen uniformly at random from $\Sigma^m$ where the set $\Sigma = \{X,Y,Z,F,G\}$ contains a label for each single-qubit observable to be tested. With the remaining probability, other queries, requiring the measurement of observables not in $\Sigma^m$ (such as the measurement of pairs of qubits in the Bell basis), are sent. 

In general, an arbitrary strategy for the provers consists of an arbitrary entangled state $\ket{\psi} \in \mH_\reg{A} \otimes \mH_\reg{B}$ (which we take to be pure), and measurements (which we take to be projective) for each possible question.\footnote{We make the assumption that the players employ a pure-state strategy for convenience, but it is easy to check that all proofs extend to the case of a mixed strategy. Moreover, it is always possible to consider (as we do)  projective strategies only by applying Naimark's dilation theorem, and adding an auxiliary local system to each player as necessary, since no bound is assumed on the dimension of their systems.} This includes an $m$-bit outcome projective measurement $\{W^u\}_{u\in\{0,1\}^{m}}$ for each of the queries $W\in\Sigma^m$. Our rigidity result states that any strategy that succeeds with probability $1-\eps$ in the test is within $\poly(\eps)$ of the honest strategy, up to local isometries (see Theorem~\ref{thm:clifford-rigid} for a precise statement). This is almost true, but for an irreconcilable ambiguity in the definition of the complex phase $\sqrt{-1}$. The fact that complex conjugation of observables 
leaves correlations invariant implies that no classical test can distinguish between the two nontrivial inequivalent irreducible representations of the Pauli group, which are given by the Pauli matrices $\sigma_X,\sigma_Y,\sigma_Z$ and their complex conjugates $\overline{\sigma_X}=\sigma_X$, $\overline{\sigma_Z}=\sigma_Z$, $\overline{\sigma_Y}=-\sigma_Y$ respectively. In particular, the provers may use a strategy that uses a combination of both representations; as long as they do so consistently, no test will be able to detect this behavior.\footnote{See~\cite[Appendix A]{reichardt2012classicalarxiv} for an extended discussion of this issue, with a similar resolution to ours.}.  The formulation of our result accommodates this irreducible degree of freedom by forcing the provers to use a single qubit, the $(m+1)$-st, to make their choice of representation (so honest provers require the use of $(m+1)$ EPR pairs to test the operation of $m$-fold tensor products of observables from $\Sigma$s). 

Theorem \ref{thm:clifford-rigid} below summarizes the guarantees of our main
test, which is denoted as $\rigid(\Sigma,m)$. Informally, Theorem \ref{thm:clifford-rigid} establishes that a strategy that succeeds in $\rigid(\Sigma,m)$ with probability at least  $1-\epsilon$ must be such that (up to local isometries):
\begin{itemize}
    \item The players' joint state is close to a tensor product of $m$ EPR pairs, together with an arbitrary ancilla register;
    \item The projective measurements performed by either player upon receipt of a query of the form $W\in\Sigma^m$ are, on average over the uniformly random choice of $W\in\Sigma^m$, close to a measurement that consists of first, measuring the ancilla register to extract a single bit that specifies whether to perform the ideal measurements or their conjugated counterparts, and second, measuring the player's $m$ half-EPR pairs in either the bases indicated by $W$, or their complex conjugate, depending on the bit obtained from the ancilla register. 
\end{itemize}

For an observable $W\in\Sigma$, let $\sigma_W = \sigma_W^{+1} - \sigma_W^{-1}$ be its eigendecomposition, where $\sigma_W$ are the ``honest'' Pauli matrices defined in~\eqref{eq:pauli-matrix} and~\eqref{eq:pauli-matrix-2}. For $u\in\{\pm 1\}$ let $\sigma_{W,+}^u = \sigma_W^u$ for $W\in \Sigma$, and 
$$ \sigma_{X,-}^u = \sigma_X^u,\quad\sigma_{Z,-}^u = \sigma_Z^u,\quad\sigma_{Y,-}^u = \sigma_Y^{-u},\quad\sigma_{F,-}^u = \sigma_G^{-u},\quad\sigma_{G,-}^u = \sigma_F^{-u}\;.$$
(In words, $\sigma_{W,-}^u$ is just the complex conjugate of $\sigma_W^u$.) We note that for the purpose of our delegation protocols, we made a particular choice of the set $\Sigma$. The result generalizes to any constant-sized set of single-qubit Clifford observables,  yielding a test for $m$-fold tensor products of single-qubit Clifford observables from $\Sigma$.

\begin{theorem}\label{thm:clifford-rigid}
Let $\eps>0$ and $m$ an integer. Suppose a strategy for the players succeeds with probability $1-\eps$ in test $\rigid(\Sigma,m)$. For $W\in\Sigma^m$ and $D\in\{A,B\}$ let $\{W^u_\reg{D}\}_u$ be the measurement performed by prover $D$ on question $W$. Let also $\ket{\psi}$ be the state shared by the players.
Then for $D\in\{A,B\}$ there exists an isometry 
$$V_D: \mathcal{H}_\reg{D} \to (\C^2)^{\otimes m}_{\reg{D}'} \otimes {\mH}_{\widehat{\reg{D}}}$$
such that
\begin{equation}
 \big\| (V_A \otimes V_B) \ket{\psi}_{\reg{AB}}  - \ket{\EPR}^{\otimes m} \otimes \ket{\aux}_{\widehat{\reg{A}}\widehat{\reg{B}}} \big\|^2 = O(\sqrt{\eps}),
\end{equation}
and positive semidefinite matrices $\tau_\lambda$ on $\widehat{\reg{A}}$ with orthogonal support, for $\lambda\in\{+,-\}$, such that $\Tr(\tau_+)+\Tr(\tau_-)=1$ and
\begin{align*}
  &\mathop{\textsc{E}}_{W\in\Sigma^m} \sum_{u\in\{\pm 1\}^{m}} \Big\|V_A
  \Tr_{\reg{B}}\big((\Id_A \otimes W_{\reg{B}}^u) \proj{\psi}_{\reg{AB}} (\Id_A
  \otimes W_{\reg{B}}^u)^\dagger\big) V_A^\dagger \\
  &\;\;\;\;\;\;\;\;\;\;\;\;\;\;\;\;\;\;\;\;\;\;\;\;- \sum_{\lambda\in\{\pm\}} \Big( \bigotimes_{i=1}^m \frac{\sigma_{W_{i},\lambda}^{u_{i}}}{2}\Big)\otimes \tau_\lambda  \Big\|_1\\
  &= O(\poly(\eps)).&\nonumber
\end{align*}
Moreover, players employing the honest strategy succeed with probability $1-e^{-\Omega(m)}$ in the test.  
\end{theorem}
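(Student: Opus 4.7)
The plan is to structure the proof in three phases, building up from Pauli-only rigidity to the full Clifford statement. I will assume that the test $\rigid(\Sigma,m)$ consists of a constant number of sub-tests that (i) perform a Pauli braiding test on the $\{X,Z\}^m$ questions, (ii) incorporate $Y$-questions together with the extra $(m{+}1)$-st shared EPR pair to handle the complex-conjugation ambiguity, and (iii) perform a \emph{conjugation test} linking $F,G$ to the already-certified Paulis. Completeness follows by observing that each sub-test is passed exactly, or with probability at least $1-e^{-\Omega(m)}$, by players executing the ideal strategy; a union bound gives the claimed $1-e^{-\Omega(m)}$ overall.

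\textbf{Phase 1: Pauli braiding for $X$ and $Z$.} First I would adapt the Pauli braiding test of Natarajan--Vidick~\cite{natarajan2016robust}. Passing this sub-test with probability $1-\eps$ forces, by the standard round-by-round analysis, the existence of local isometries $V_\reg{A},V_\reg{B}:\mH_\reg{D}\to(\C^2)^{\otimes m}_{\reg{D}'}\otimes\mH_{\widehat{\reg{D}}}$ under which $(V_\reg{A}\otimes V_\reg{B})\ket{\psi}$ is $O(\sqrt{\eps})$-close to $\ket{\EPR}^{\otimes m}\otimes\ket{\aux}_{\widehat{\reg{A}}\widehat{\reg{B}}}$ and the measurements on questions $W\in\{X,Z\}^m$ are state-dependently close to the ideal tensor-product Pauli measurements on the EPR-pair half held by each player.

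\textbf{Phase 2: Incorporating $Y$ and the representation bit.} Because $\overline{\sigma_Y}=-\sigma_Y$ while $\sigma_X,\sigma_Z$ are invariant under conjugation, no classical test can distinguish the two inequivalent representations, so a valid strategy may use either choice. I would exploit a sub-test asking the provers correlated $X$, $Y$, $Z$ questions to certify the anti-commutation relations $\{\sigma_Y,\sigma_X\}\approx 0$ and $\{\sigma_Y,\sigma_Z\}\approx 0$ on each position; combined with Phase 1, this forces the $i$-th factor of the $Y$-measurement to act as $\pm\sigma_Y$ on the $i$-th EPR-pair qubit, up to the isometry. The critical next step is to use the $(m{+}1)$-st shared EPR pair, together with a sub-test requesting the question $Y^{m+1}$, to force this local sign to be \emph{globally coherent}: either all $m$ factors act as $\sigma_Y$, or all act as $-\sigma_Y$, with the choice recorded in $\widehat{\reg{A}}$. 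This yields the orthogonal decomposition into $\tau_+$ and $\tau_-$ with $\Tr(\tau_+)+\Tr(\tau_-)=1$.

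\textbf{Phase 3: The conjugation test for $F,G$.} For each non-Pauli Clifford $C\in\{F,G\}$, I would introduce a conjugation sub-test, inspired by Slofstra~\cite{slofstra2016tsirelson} but more direct: the verifier sends a mixed question that requests the measurement of $C$ on one position and a Pauli $P\in\{X,Y,Z\}$ on the others, and checks the outcome distribution against the prediction given by the defining relations $\sigma_X\sigma_F\sigma_X=-\sigma_G$, $\sigma_Y\sigma_F\sigma_Y=\sigma_G$ and their $G$-analogues. Because Phases 1 and 2 have pinned down the $\sigma_X,\sigma_Y,\sigma_Z$ operators up to the single global representation bit $\lambda\in\{+,-\}$, passing the conjugation tests forces $C$ to act as $\sigma_{C,\lambda}$ uniformly. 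Averaging the resulting per-position closeness over uniform $W\in\Sigma^m$ and summing over the product outcome $u\in\{\pm 1\}^m$ then gives the claimed trace-norm bound, with the two components of the mixture weighted precisely by $\tau_+$ and $\tau_-$.

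\textbf{Main obstacle.} The principal difficulty lies in Phase 2: propagating pairwise anti-commutation bounds into a \emph{globally consistent} sign choice across all $m{+}1$ positions, while keeping the robustness $\poly(\eps)$ independent of $m$. The same issue reappears in Phase 3, where one must argue that a single global $\lambda$ governs all positions simultaneously; the $Y^{m+1}$-sub-test together with cross-checks between $F,G$ and $Y$ is what links the Clifford sign conventions to the already-globalized Pauli conjugation bit.
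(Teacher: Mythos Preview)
Your three–phase outline matches the paper's high-level architecture, but several of the concrete mechanisms you invoke are not the ones that actually do the work, and one key technical layer is missing entirely.

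\textbf{The role of the $(m{+}1)$-st qubit.} You use the extra EPR pair to ``globalize'' the sign of $Y$ via a $Y^{m+1}$ question. In the paper the extra qubit has a completely different function: it is the \emph{control qubit} that supplies the $2\times 2$ block structure for the conjugation test. The observables $X_R=\begin{psmallmatrix}0 & R^\dagger\\ R & 0\end{psmallmatrix}$ and $C_{A,B}=\begin{psmallmatrix}A & 0\\ 0 & B\end{psmallmatrix}$ live on $\mH\otimes\C^2$, and it is the (anti-)commutation of $X_R,C$ with the control-qubit $X,Z$ that certifies the block form and hence the relation $RAR^\dagger=B$. This is the heart of the paper's conjugation test (Lemma~\ref{lem:conj}); your Phase~3 description, which speaks of ``checking the outcome distribution against the defining relations $\sigma_X\sigma_F\sigma_X=-\sigma_G$,'' does not supply a mechanism that accomplishes this.

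\textbf{Globalizing the $Y$-sign.} The paper does not rely on a $Y^{m+1}$ question. Instead, $\pbt(X,Y,Z)$ contains a SWAP/Bell sub-test (part~(c) in Figure~\ref{fig:e-pbt}): a random permutation $\sigma$ pairs positions $(i,m/2+\sigma(i))$, one prover does a Bell measurement on each pair, and consistency with the other prover's $Y$ outcomes forces the per-site observables $\Delta_i$ to collapse to a single $\Delta_Y$, with error independent of $m$. The same Bell/SWAP trick reappears later for $F,G$.

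\textbf{A second phase ambiguity you have not accounted for.} After the conjugation test, one obtains $W\simeq\hat{\tau}_W(\Id\otimes\phase_W)$ where $\phase_W$ is an \emph{a priori $W$-dependent} observable on the ancilla. Three further sub-tests (parts~(c),(d),(e) of $\cliff(\Sigma,m)$ in Figure~\ref{fig:clifford-test-3}) are needed to show first that $\phase_W$ depends only on $\{i:W_i\in\{F,G\}\}$, then that it factors as $\prod_i\phase_i$, and finally (via the Bell/SWAP trick again) that $\phase_i\approx\phase_F$ for a single observable $\phase_F$. At this point there are \emph{two} commuting sign bits, $\Delta_Y$ and $\phase_F$, hence four sectors $\tau_{\pm\pm}$. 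The CHSH-style check in part~(b) of $\rigid(\Sigma,m)$ is what kills the $\phase_F=-1$ sectors, leaving exactly the single $\lambda\in\{+,-\}$ in the theorem. Your proposal jumps directly from ``anti-commutation pins down $Y$ up to a sign'' to ``a single global $\lambda$ governs $F,G$ as well,'' skipping this entire layer; without it the argument does not close.
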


The proof of the theorem is based on standard techniques developed in the literature on ``rigidity theorems'' for nonlocal games. We highlight two components. The first is a ``conjugation test'' that allows us to extend the guarantees of elementary tests based on the CHSH game or the Magic Square game, which test for Pauli $\sigma_X$ and $\sigma_Z$ observables, to a test for single-qubit Clifford observables --- since the latter are characterized by their action on the Pauli group. The second is an extension of the ``Pauli braiding test'' from~\cite{natarajan2016robust} to handle tensor products of not only $\sigma_X$ and $\sigma_Z$, but also $\sigma_Y$ Pauli observables. As already emphasized in the introduction, the improvements in efficiency of our scheme are partly enabled by the strong guarantees of Theorem~\ref{thm:clifford-rigid}, and specifically the independence of the final error dependence from the parameter $m$. 

In the remainder of this section, we prove Theorem~\ref{thm:clifford-rigid}.
We start by introducing the language required to formulate our testing results in Section~\ref{sec:general-rigidity}. 
We follow by giving a test for the conjugation of one observable to another by a unitary, the Conjugation Test, in Section~\ref{sec:conj-test}.
In Section~\ref{sec:n-clifford}, we apply the Conjugation Test  to test the
relations that dictate how an arbitrary $m$-qubit Clifford unitary acts by
conjugation on the Pauli matrices. In Section~\ref{sec:n-2-clifford} we
specialize the test to the case of unitaries that can be expressed as the
$m$-fold tensor product of Clifford observables taken from the set $\Sigma$. In
Sections \ref{sec: RIGID test} and \ref{sec: TOM test}, we decribe variants of
the test from Section \ref{sec:n-2-clifford}, which are later employed in the
Leash and Dog-Walker protocols.

\subsection{Testing}
\label{sec:general-rigidity}

In this section we recall the standard formalisms from self-testing, including state-dependent distance measure, local isometries, etc. We also introduce a framework of ``tests for relations'' that will be convenient to formulate our results.

\paragraph{Distance measures.}
Ultimately our goal is to test that a player implements a certain tensor product of single-qubit or two-qubit measurements defined by observables such as $\sigma_X$, $\sigma_Y$, or $\sigma_G$. Since it is impossible to detect whether a player applies a certain operation $X$ on state $\ket{\psi}$, or $VXV^\dagger$ on state $V\ket{\psi}$, for any isometry $V:\Lin(\mH)\to\Lin(\mH')$ such that $V^\dagger V = \Id$, we will (as is standard in testing) focus on testing identity up to \emph{local isometries}. Towards this, we introduce the following important piece of notation: 

\begin{definition}
For finite-dimensional Hilbert spaces $\mH_{\reg{A}}$ and $\mH_{\reg{A}'}$, $\delta>0$, and operators $R \in\Lin(\mH_{\reg{A}})$ and $S\in\Lin(\mH_{\reg{A}'})$ we say that $R$ and $S$ are $\delta$-isometric with respect to $\ket{\psi} \in \mH_{\reg{A}} \otimes \mH_{\reg{B}}$, and write $R\simeq_\delta S$, if there exists an isometry $V:\mH_{\reg{A}}\to\mH_{\reg{A}'}$ such that 
$$\big\|( R-V^\dagger SV)\otimes \Id_{\reg{B}} \ket{\psi}\big\|^2=O(\delta).$$
If $V$ is the identity, then we further say that $R$ and $S$ are $\delta$-equivalent, and write $R\approx_\delta S$ for $\| ( R- S) \otimes \Id_{\reg{B}} \ket{\psi}\|^2=O(\delta)$.
\end{definition}

The notation $R\simeq_\delta S$ carries some ambiguity, as it does not specify the state $\ket{\psi}$. The latter should always be clear from context: we will often simply write that $R$ and $S$ are $\delta$-isometric, without explicitly specifying $\ket{\psi}$ or the isometry. The relation is transitive, but not reflexive: the operator on the right will always act on a space of dimension at least as large as that on which the operator on the left acts. The notion of $\delta$-equivalence is both transitive (its square root obeys the triangle inequality) and reflexive, and we will use it as our main notion of distance. 

\paragraph{Tests.}
We formulate our tests as two-player games in which both players are treated symmetrically.  We often use the same symbol, a capital letter $X,Z,W,\ldots,$ to denote a question in the game and the associated projective measurement $\{W^a\}$ applied by the player upon receipt of that question. To a projective measurement with outcomes in $\{0,1\}^n$ we  associate a family of observables $W(u)$ parametrized by $n$-bit strings $u\in\{0,1\}^n$, defined by $W(u) = \sum_a (-1)^{u\cdot a} W^a$. If $n=1$ we simply write $W=W(1)=W^0-W^1$; note that $W(0)=\Id$.

With the exception of the Tomography Test $\tom$ presented in Section~\ref{subsec:tomography}, 
all the games, or tests, we consider implicitly include a ``consistency test'' which is meant to enforce that whenever both players are sent identical questions, they produce matching answers. More precisely, let $T$ be any of the two-player tests described in the paper. Let $\Pr_T(W,W')$ be the distribution on questions $(W,W')$ to the players that is specified by $T$. Since the players are always treated symmetrically, $\Pr_T(\cdot,\cdot)$ is permutation-invariant. Let $\Pr_T(\cdot)$ denote the marginal on either player. Then, instead of executing the test $T$ as described, the verifier performs the following: 
\begin{enumerate}
\item[(i)] With probability $1/2$, execute $T$.
\item[(ii)] With probability $1/2$, select a random question $W$ according to $\Pr_T(W)$. Send $W$ to both players. Accept if and only if the players' answers are equal. 
\end{enumerate}
Then, success with probability at least $1-\eps$ in the modified test implies success with probability at least $1-2\eps$ in the original test, as well as in the consistency test. If $\{W_{\reg{A}}^a\}$ and $\{W_{\reg{B}}^b\}$ are the players' corresponding projective measurements, the latter condition implies 
\begin{align}
\sum_a \|(W_{\reg{A}}^a \otimes \Id - \Id \otimes W_{\reg{B}}^a)\ket{\psi}_{\reg{AB}}\|^2 &= 2-2 \sum_a \bra{\psi} W_{\reg{A}}^a \otimes W_{\reg{B}}^a \ket{\psi} \notag\\ 
&\leq 4 \eps,\label{eq:consistency}
\end{align}
so that $W_{\reg{A}}^a \otimes \Id \approx_{\eps} \Id \otimes
W_{\reg{B}}^a$ (where the condition should be interpreted on average over the
choice of a question $W$ distributed as in the test). Similarly, if
$W_{\reg{A}}$, $W_{\reg{B}}$ are observables for the players that succeed in the
consistency test with probability $1-2\eps$ we obtain $W_{\reg{A}}\otimes \Id
\approx_{\eps} \Id \otimes W_{\reg{B}}$. We will often use both relations to ``switch'' operators from one player's space to the other's; as a result we will also often omit an explicit specification of which player's space an observable is applied to. 

\paragraph{Strategies.} Given a two-player game, or test, a strategy for the players consists of a bipartite entangled state $\ket{\psi} \in \mH_\reg{A} \otimes \mH_\reg{B}$ together with families of projective  measurements $\{W^a_\reg{A}\}$ for Alice and $\{W_\reg{B}^a\}$ for Bob, one for each question $W$ that can be sent to either player in the test. As already mentioned, for convenience we restrict our attention to pure-state strategies employing projective measurements. 
We will loosely refer to a strategy for the players as $(W,\ket{\psi})$, with the symbol $W$ referring to the complete set of projective measurements used by the players in the game; taking advantage of  symmetry we often omit the subscript $\reg{A}$ or $\reg{B}$, as all statements involving observables for one player hold verbatim with the other player's observables as well. 

\paragraph{Relations.}
We use $\mathcal{R}$ to denote a set of relations over variables $X,Z,W,\ldots,$ such as
$$\mathcal{R}=\big\{XZXZ=-\Id,\, HX=ZH,\,X,Z,H\in\Obs\big\}.$$
We only consider relations that can be brought in the form of  one of the following
equations
\begin{itemize}
  \item  $f(W) = (-1)^a W_1\cdots W_k = \Id$, where the $W_i$ are (not necessarily distinct) unitary variables and $a\in\Z_2$, or
  \item $f(W) = W_1 \cdot ( \sum_a \omega_a W_2^a)=\Id$, where $W_1$ is a unitary variable, $\{W_2^a\}$ a projective measurement with $s$ possible outcomes, and $\omega_a$ are (arbitrary) $s$-th roots of unity.
\end{itemize}

\begin{definition}[Rigid self-test]
We say that a set of relations $\mathcal{R}$ is $(c,\delta(\eps))$-testable, on average under the distribution $\mathcal{D}:\mathcal{R}\to[0,1]$, if
  there exists a game (or test) $G$ with question set $\mathcal{Q}$ that
  includes (at least) a symbol for each variable in $\mathcal{R}$ that is either
  an observable or a POVM and such that:
\begin{itemize}
\item (\emph{Completeness}) There exists a set of operators which exactly satisfy all relations in $\mathcal{R}$ and a strategy for the players which uses these operators (together possibly with others for the additional questions) that has success probability at least $c$;
\item (\emph{Soundness}) For any $\eps>0$ and any strategy $(W,\ket{\psi}_{AB})$ that succeeds in the game with probability at least $c-\eps$, the associated measurement operators satisfy the relations in $\mathcal{R}$ up to $\delta(\eps)$, in the state-dependent norm. More precisely, on average
 over the choice of a relation $f(W)=\Id$ from $\mathcal{R}$ chosen according to $\mathcal{D}$, it holds that $\| \Id\otimes (f(W)-\Id) \ket{\psi}_{\reg{AB}}\|^2 \leq \delta(\eps)$.
\end{itemize}
If both conditions hold, we also say that the game $G$ is a robust $(c,\delta(\eps))$ self-test for the relations $\mathcal{R}$. 
\end{definition}

Most of the games we consider have perfect completeness, $c=1$, in which case we  omit explicitly mentioning the parameter. 
 The distribution $\mathcal{D}$ will often be implicit from context, and we do not always specify it explicitly (e.g. in case we only measure $\delta(\eps)$ up to multiplicative factors of order $|\mathcal{R}|$ the exact distribution $\mathcal{D}$ does not matter as long as it has complete support). 

\begin{definition}[Stable relations]
We say that a set of relations $\mathcal{R}$ is $\delta(\eps)$-stable, on average under the distribution $\mathcal{D}:\mathcal{R}\to[0,1]$, if for any two families of operators $W_A\in\Lin(\mH_\reg{A})$ and  $W_B\in\Lin(\mH_\reg{B})$ that are consistent on average, i.e. 
$$\Es{f \sim\mathcal{D}} \Es{W \in_U f} \big\| (\Id\otimes W_B - W_A \otimes \Id)\ket{\psi}\big\|^2\leq \eps,$$
where $W \in_U f$ is shorthand for $W$ being a uniformly random operator among those appearing in the relation specified by $f$,
and satisfy the relations on average, i.e. 
$$\Es{\substack{f\sim\mathcal{D}:\\f(W)=\Id \in\mathcal{R}}} \big\|  (f(W_A)- \Id) \otimes \Id \ket{\psi}\big\|^2 \leq \eps,$$
  there exists operators $\hat{W}$ which satisfy the same relations exactly and are $\delta(\eps)$-isometric to the $W$ with respect to $\ket{\psi}$, on average over the choice of a random relation in $\mathcal{R}$ and a uniformly random $W$ appearing in the relation, i.e. there exists an isometry $V_A$ such that 
  \begin{equation}
    \Es{f \sim\mathcal{D}} \Es{W \in_U f} \big\|( \hat{W}_A-V_A^\dagger W_AV_A)\otimes \Id \ket{\psi}\big\|^2=O(\delta(\eps)).\nonumber
  \end{equation}
	\end{definition}

\subsection{The conjugation test}
\label{sec:conj-test}

We give a test which certifies that a unitary (not necessarily an observable) conjugates one observable to another. More precisely, let $A,B$ be observables, and $R$ a unitary, acting on the same space $\mH$. The test $\conj(A,B,R)$, given in Figure~\ref{fig:conjugation-test-1}, certifies that the players implement observables of the form
\begin{equation}\label{eq:def-xr}
X_R = \begin{pmatrix} 0 & R^\dagger\\ R & 0 \end{pmatrix}\qquad \text{and}\qquad C = C_{A,B} = \begin{pmatrix} A & 0\\ 0 & B \end{pmatrix}
\end{equation}
such that $X_R$ and $C$ commute. The fact that $X_R$ is an observable implies that $R$ is unitary,\footnote{Note that $R$ will not be directly accessed in the test, since by itself it does not necessarily correspond to a measurement.} while the commutation condition is equivalent to the relation $RAR^\dagger = B$. The test thus tests for the relations
\begin{align*}
 \mathcal{C}\{R,C\} &= \big\{ X_R,C,X,Z\in \Obs\big\} \cup \big\{XZ=-ZX\big\}
\cup \big\{ X_R C = C X_R,\, X_RZ=-Z X_R,\, C Z=ZC\big\}.
\end{align*}
Here the anti-commuting observables $X$ and $Z$ are used to specify a basis in which $X_R$ and $C$ can be block-diagonalized. The anti-commutation and commutation relations with $Z$ enforce that $X_R$ and $C$ respectively have the form described in~\eqref{eq:def-xr}.
These relations are enforced using simple commutation and anti-commutation tests that are standard in the literature on self-testing. For convenience, we state those tests, $\comt$ and $\act$, in Appendix~\ref{sec:clifford-test}. The conjugation test, which uses them as sub-tests, is given in Figure~\ref{fig:conjugation-test-1}. Here, ``Inputs'' refers to a subset of designated questions in the test; ``Relation'' indicates a relation that the test aims to certify; ``Test'' describes the certification protocol. (Recall that all our protocols implicitly include a ``consistency'' test in which a question is chosen uniformly at random from the marginal distribution and sent to both players, whose answers are accepted if and only if they are equal.)

\begin{figure}[H]
\rule[1ex]{\textwidth}{0.5pt}\\
Test~\conj(A,B,R) 
\begin{itemize}
    \item Inputs: $A$ and $B$ observables on the same space $\mH$, and $X$ and $Z$ observables on $\mH'$. $X_R$ and $C$ observables on $\mH\otimes \mH'$.
    \item Relations:  $\mathcal{C}\{R,C\} $, with $R$ defined from $X_R$, and $C$ related to $A$ and $B$, as in~\eqref{eq:def-xr}. 
    \item Test: execute each of the following with equal probability
		\begin{enumerate}
\item[(a)] With probability $1/8$ each, execute tests $\act(X,Z)$,  $\comt(C,Z)$, $\comt(X_R,C)$,   $\act(X_R,Z)$ and $\comt(A,X)$, $\comt(B,X)$, $\comt(A,Z)$, $\comt(B,Z)$. 
\item[(b)] Ask one player to measure $A$, $B$, $C$ or $Z$ (with probability $1/4$ each), and the other to jointly measure $A$ or $B$ (with probability $1/2$ each) and $Z$. The first player returns one bit, and the second two bits. Reject if either:
\begin{itemize}
\item The first player was asked $C$, the second player was asked $(A,Z)$, his second answer bit is~$0$, and his first answer bit does not match the first player's;
\item The first player was asked $C$, the second player was asked $(B,Z)$, his second answer bit is~$1$, and his first answer bit does not match the first player's.
\item The first player was asked $A$, $B$, or $Z$ and his answer bit does not match the corresponding answer from the second player.
\end{itemize}
\end{enumerate}
\end{itemize}
\rule[2ex]{\textwidth}{0.5pt}\vspace{-0.5cm}
\caption{The conjugation test, $\conj(A,B,R)$.}
\label{fig:conjugation-test-1}
\end{figure}

\begin{lemma}\label{lem:conj}
The test $\conj(A,B,R)$ is a $(1,\delta)$ self-test for the set of relations
  $\mathcal{C}\{R,C\}$, for some $\delta = O(\sqrt{\eps})$. Moreover, for any
  strategy that succeeds with probability at least $1-\eps$ in the test it holds
  that $C \approx_{\delta} A  (\Id+Z)/2 + B
 (\Id-Z)/2$, where $A,B,C$ and $Z$ are the observables applied by the prover on receipt of a question with the same label. 
\end{lemma}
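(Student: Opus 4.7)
The proof proceeds in three stages: first, use the elementary subtests in part~(a) to establish approximate (anti-)commutation and observable relations among $A,B,C,X,Z,X_R$; second, analyze the data-consistency checks in part~(b) to obtain the ``moreover'' conclusion $C \approx_{\delta} A(\Id+Z)/2 + B(\Id-Z)/2$; third, combine these two ingredients to deduce the remaining relations of $\mathcal{C}\{R,C\}$, thereby certifying the block form of $X_R$ and the conjugation identity $RAR^\dagger=B$. Completeness is immediate from the honest assignment $A,B$ acting on $\mH$, $X,Z$ acting on $\mH'$, $C=A\otimes (\Id+Z)/2+B\otimes(\Id-Z)/2$, and $X_R$ defined by~\eqref{eq:def-xr}, all of which satisfy $\mathcal{C}\{R,C\}$ exactly and pass every rejection rule of~(b) with probability~$1$.

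\textbf{Stage one.} A strategy succeeding with probability $1-\eps$ overall passes each subtest in~(a) with probability at least $1-O(\eps)$. Invoking the analyses of $\act$ and $\comt$ from Appendix~\ref{sec:clifford-test}, I obtain, in the state-dependent norm on $\ket\psi$, the approximate identities
\begin{equation*}
\{X,Z\}\approx 0,\quad \{X_R,Z\}\approx 0,\quad [C,Z]\approx 0,\quad [X_R,C]\approx 0,\quad [A,X]\approx 0,\quad [B,X]\approx 0,\quad [A,Z]\approx 0,\quad [B,Z]\approx 0,
\end{equation*}
each with error $O(\sqrt\eps)$, together with the fact that $A,B,C,X,Z,X_R$ are approximately observables. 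By the standard consistency step~\eqref{eq:consistency} these may be applied indifferently on either player's side at a cost of another $O(\sqrt\eps)$.

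\textbf{Stage two.} The rejection rules of part~(b) give, after translating probabilities into the state-dependent norm and using consistency to realign operators to a single player's space,
\begin{equation*}
\bigl\| \bigl(C\cdot\tfrac{\Id+Z}{2} - A\cdot\tfrac{\Id+Z}{2}\bigr)\otimes\Id\,\ket\psi \bigr\|^2 = O(\eps), \qquad \bigl\| \bigl(C\cdot\tfrac{\Id-Z}{2} - B\cdot\tfrac{\Id-Z}{2}\bigr)\otimes\Id\,\ket\psi \bigr\|^2 = O(\eps).
\end{equation*}
Here I use $[A,Z]\approx 0$ and $[B,Z]\approx 0$ from Stage one to rewrite the joint $(A,Z)$ and $(B,Z)$ POVMs as products of commuting projectors, at the cost of an $O(\sqrt\eps)$ term. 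Summing the two displays and using $(\Id+Z)/2+(\Id-Z)/2=\Id$ yields $C \approx_{O(\sqrt\eps)} A(\Id+Z)/2 + B(\Id-Z)/2$, which is the ``moreover'' conclusion.

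\textbf{Stage three and main obstacle.} The anti-commutation $\{X_R,Z\}\approx 0$ together with $X_R^2\approx\Id$ forces, after an isometry that diagonalizes $Z$ into its $\pm 1$ eigenspaces (of equal dimension once padded with an ancilla, as is standard for the Pauli-braiding-type arguments of~\cite{natarajan2016robust}), the operator $X_R$ to take the form $\bigl(\begin{smallmatrix} 0 & R^\dagger\\ R & 0\end{smallmatrix}\bigr)$ up to error $O(\sqrt\eps)$, with $R$ approximately a unitary owing to $X_R^2\approx\Id$. Similarly, Stage two puts $C$ in the block form $\mathrm{diag}(A,B)$. The remaining relation $[X_R,C]\approx 0$ from Stage one then reads, block by block, $RA\approx BR$, i.e.\ $RAR^\dagger\approx B$, completing the verification of the full set of relations $\mathcal{C}\{R,C\}$ with total error $\delta=O(\sqrt\eps)$. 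The main technical obstacle is keeping the error budget under control across the many approximate manipulations: the conversion of anti-commutation into a block decomposition, and of commutation into a conjugation identity, each involves state-dependent approximations that must be combined via the triangle inequality without compounding the error beyond $O(\sqrt\eps)$; this is handled by always reducing to the single-player side via the consistency relation~\eqref{eq:consistency} before multiplying approximately equal operators.
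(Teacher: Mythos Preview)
Your proposal is correct and follows essentially the same approach as the paper's proof: both use part~(a) to certify the (anti-)commutation relations, part~(b) together with the marginal consistencies to pin down $C$ as $A(\Id+Z)/2+B(\Id-Z)/2$, and then read off $RAR^\dagger\approx B$ from $[X_R,C]\approx 0$ once the block form of $X_R$ is known. The only cosmetic difference is that the paper obtains the block decompositions by first invoking the Magic Square rigidity underlying $\act(X,Z)$ to get $X\simeq\sigma_X$, $Z\simeq\sigma_Z$ on an explicit qubit, and then applies Lemma~\ref{lem:pauli-c} to expand $A,B,C,X_R$ in the Pauli basis of that qubit, whereas you phrase the same step as ``diagonalizing $Z$'' directly; these are equivalent, though your parenthetical pointer to~\cite{natarajan2016robust} is slightly misplaced (the relevant rigidity here is the single-qubit Magic Square, not the multi-qubit Pauli braiding).
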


\begin{proof}
Completeness is clear, as players making measurements on a maximally entangled state on $\mH_{\reg{A}}\otimes \mH_{\reg{B}}$, tensored with an EPR pair on $\C^2 \otimes \C^2$ for the $X$ and $Z$ observables, and using $X_R$ and $C$ defined in~\eqref{eq:def-xr} (with the blocks specified by the space associated with each player's half-EPR pair) succeed in each test with probability $1$. 

We now consider soundness. Success in $\act(X,Z)$ in part (a) of the test implies the existence of local isometries $V_A,V_B$ such that $V_A:\mH_\reg{A}\to \mH_{\hat{\reg{A}}}\otimes \C^2_{\reg{A}'}$, with $X\simeq_{\sqrt{\eps}} \Id_{\hat{\reg{A}}}\otimes \sigma_X$ and $Z\simeq_{\sqrt{\eps}} \Id_{\hat{\reg{A}}}\otimes\sigma_Z$. By Lemma~\ref{lem:pauli-c}, approximate commutation with both $X$ and $Z$ implies that under the same isometry, $A\simeq_{\sqrt{\eps}} A_I \otimes \Id$ and $B\simeq_{\sqrt{\eps}} B_I \otimes \Id$, for observables $A_I, B_I$ on $\mH_{\hat{\reg{A}}}$. Similarly, the parts of the test involving $C$ and $X_R$ imply that they each have the block decomposition specified in~\eqref{eq:def-xr}. 
In particular,  anti-commutation of $X_R$ with $Z$ certifies that $X_R$ has  a decomposition of the form  $X_R \simeq R_X \otimes \sigma_X + R_Y \otimes \sigma_Y$. Using that $X_R$ is an observable, we deduce that there exists a unitary $R$ on $\mH_{\hat{\reg{A}}}$ such that $R \approx R_X + i R_Y$. Similarly, commutation of $C$ with $Z$ implies that $C \simeq C_I \otimes I + C_Z \otimes \sigma_Z$, for Hermitian $C_I$, $C_Z$ such that $C_I \pm C_Z$ are observables.

Next we analyze part (b) of the test. Let $\{W_{AZ}^{a,z}\}$ be the projective measurement applied by the second player upon query $(A,Z)$. Success with probability $1-O(\eps)$ in the first item ensures that 
$$\big| \bra{\psi} C \otimes (W_{AZ}^{00} - W_{AZ}^{10})\ket{\psi} \big| \,=\,O(\eps),$$
and a similar condition holds from the second item, with $W_{BZ}$ instead of $W_{AZ}$. Success with probability $1-O(\eps)$ in the third item ensures consistency of $\{W_{AZ}^{a,z}\}$ (resp.\ $\{W_{BZ}^{a,z}\}$) with the observable $A$ (resp.\ $B$) when marginalizing over the second outcome, and $Z$ when marginalizing over the first outcome. Using the decompositions for $A,B$ and $C$ derived earlier, we obtain $C_I \approx (A+B)/2$ and $C_Z \approx (A-B)/2$, giving the ``Moreover'' part of the lemma. 

Finally, success in test $\comt(X_R,C)$ certifies the approximate commutation relation $[X_R,C]\approx_{\sqrt{\eps}} 0$, which, given the decomposition of $X_R$ and $C$ obtained so far, implies $RA \approx B R$, as desired. 
\end{proof}

\subsection{Testing Clifford unitaries}
\label{sec:n-clifford}

Let $m\geq 1$ be an integer, and $R$ an $m$-qubit Clifford unitary. $R$ is characterized, up to phase, by its action by conjugation on the $m$-qubit Weyl-Heisenberg group. This action is described  by linear functions $h_S:\{0,1\}^m\times\{0,1\}^m \to \Z_4$ and $h_X,h_Z:\{0,1\}^m\times\{0,1\}^m \to \{0,1\}^m$ such that
\begin{equation}\label{eq:conj-cliff}
R \sigma_X(a)\sigma_Z(b) R^\dagger = (-1)^{h_S(a,b)}\sigma_X(h_X(a,b))\sigma_Z(h_Z(a,b)),\qquad\forall a,b\in\{0,1\}^m.
\end{equation}
Using that $(\sigma_X(a)\sigma_Z(b))^\dagger = (-1)^{a\cdot b} \sigma_X(a)\sigma_Z(b)$, the same condition must hold of the right-hand side of~\eqref{eq:conj-cliff}, thus $h_X(a,b)\cdot h_Z(a,b) = a\cdot b\mod 2$. 
 To any family of observables $\{X(a),Z(b),\,a,b\in\{0,1\}^m\}$ we associate,  for $a,b\in\{0,1\}^m$,
\begin{equation}\label{eq:def-control-c}
A(a,b) = i^{a\cdot b}X(a)Z(b), \qquad B(a,b) = i^{a\cdot b}X(h_X(a,b))Z(h_Z(a,b)),
\end{equation}
where the phase $i^{a\cdot b}$ is introduced to ensure that $A(a,b)$ and $B(a,b)$ are observables. Define $C(a,b)$ in terms of $A(a,b)$ and $B(a,b)$ as in~\eqref{eq:def-xr}. 
The Clifford conjugation test aims to test for the conjugation relation $RA(a,b)R^\dagger = B(a,b)$, for all (in fact, on average over a randomly chosen) $(a,b)$. For this, we first need a test that ensures $A(a,b)$ and $B(a,b)$ themselves have the correct form, in terms of a tensor product of Pauli observables. Such a test was introduced in~\cite{natarajan2016robust}, where it is called ``Pauli braiding test''. The test certifies the Pauli relations 
\begin{align*}
& {\paulin}\{X,Y,Z\} = \Big\{ W(a)\in\Obs,\;W \in \{X,Y,Z\}^m,\,a\in\{0,1\}^m\Big\} \\
&\quad\cup \Big\{W(a)W'(a')=(-1)^{|\{i:\,W_i\neq W'_i \wedge a_ia'_i=1\}|} W'(a')W(a),\;\forall W,W' \in \{X,Y,Z\}^n,\,\forall a,a'\in\{0,1\}^m\Big\}\\
&\quad \cup\Big\{ W(a)W(a')=W(a+a'),\;\forall a,a'\in\{0,1\}^m\Big\}.
\end{align*}
The Pauli braiding test is recalled in Appendix~\ref{sec:pauli-group}, and we refer to the test as $\pbt(X,Y,Z)$. The original test from~\cite{natarajan2016robust} only allows to test for tensor products of $\sigma_X$ and $\sigma_Z$ Pauli observables, and we extend the test to include Pauli $\sigma_Y$. This requires us to provide a means to accommodate the phase ambiguity discussed earlier. The result is described in the following lemma; we refer to Appendix~\ref{sec:e-pbt} for the proof. (In some cases a simpler variant of the test, which does not attempt to test for the $Y$ observable, will suffice. This is essentially the original test from~\cite{natarajan2016robust}, which we call $\pbt(X,Z)$ and is introduced in Appendix~\ref{sec:pbt}.)

\begin{lemma}\label{lem:xyz-rigid}
Suppose $\ket{\psi}\in\mH_\reg{A}\otimes \mH_\reg{B}$ and $W(a) \in \Obs(\mH_\reg{A})$, for $W\in \{X,Y,Z\}^m$ and $a\in\{0,1\}^m$, specify a strategy for the players that has success probability at least $1-\eps$ in the extended Pauli braiding test $\pbt(X,Y,Z)$ described in Figure~\ref{fig:e-pbt}. 
Then there exist a state $\ket{\aux}_{\hat{\reg{A}}\hat{\reg{B}}}$  and  isometries $V_D:\mH_\reg{D} \to ((\C^2)^{\otimes m})_{\reg{D}'}  \otimes \hat{\mH}_{\hat{\reg{D}}}$, for $D\in\{A,B\}$, such that
$$\big\| (V_A \otimes V_B) \ket{\psi}_{\reg{AB}} - \ket{\EPR}_{\reg{A}'\reg{B}'}^{\otimes m} \ket{\aux}_{\hat{\reg{A}}\hat{\reg{B}}} \big\|^2 = O(\sqrt{\eps}),$$
and on expectation over $W\in \{X,Y,Z\}^m$,
\begin{align}\label{eq:test-sigmay}
 \Es{a\in\{0,1\}^m} \big\| \big(W(a) -V_A^\dagger (\sigma_W(a) \otimes \Delta_W(a)) V_A\big) \otimes \Id_B \ket{\psi} \big\|^2 &= O(\sqrt{\eps}),
\end{align}
where $\Delta_W(a) = \prod_i \Delta_{W_i}^{a_i} \in \Obs({\mH}_{\hat{\reg{A}}})$ are observables with $\Delta_{X}=\Delta_{Z}=\Id$ and $\Delta_{Y}$ an arbitrary observable on $\hat{\mH}$ such that 
	$$ \big\|\Delta_Y \otimes \Delta_Y \ket{\aux} - \ket{\aux} \big\|^2 = O(\sqrt{\eps}).$$
\end{lemma}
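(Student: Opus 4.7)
The plan is to bootstrap from the rigidity analysis of the original Pauli braiding test $\pbt(X,Z)$ of \cite{natarajan2016robust}, which already handles the tensor-product structure for $X$ and $Z$ observables, and then use a combination of linearity, anti-commutation, and consistency relations to pin down each $Y$ observable up to the unavoidable complex-conjugation ambiguity, which will be absorbed into the auxiliary observable $\Delta_Y$.

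First I would observe that $\pbt(X,Y,Z)$ contains the original test $\pbt(X,Z)$ as a sub-test sampled with constant probability, so a strategy succeeding in $\pbt(X,Y,Z)$ with probability $1-\eps$ also succeeds in $\pbt(X,Z)$ with probability $1-O(\eps)$. Applying the rigidity statement from Appendix~\ref{sec:pbt} then produces isometries $V_D:\mH_\reg{D}\to((\C^2)^{\otimes m})_{\reg{D}'}\otimes\hat\mH_{\hat{\reg{D}}}$ for $D\in\{A,B\}$ and an auxiliary state $\ket{\aux}_{\hat{\reg{A}}\hat{\reg{B}}}$ satisfying $\|(V_A\otimes V_B)\ket{\psi}-\ket{\EPR}^{\otimes m}\otimes\ket{\aux}\|^2=O(\sqrt\eps)$, together with the bound $V_A W(a) V_A^\dagger\approx_{\sqrt\eps}\sigma_W(a)\otimes\Id$ on expectation over $W\in\{X,Z\}^m$ and $a\in\{0,1\}^m$. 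This settles~\eqref{eq:test-sigmay} for $W\in\{X,Z\}^m$ with $\Delta_W=\Id$.

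Next I would extend the analysis to $Y$-type questions. The test $\pbt(X,Y,Z)$ must enforce, with error $O(\eps)$ in the state-dependent norm, the observable relation $Y(a)^2=\Id$, linearity $Y(a)Y(a')\approx Y(a+a')$, single-qubit anti-commutation $Y(e_i)X(e_i)\approx -X(e_i)Y(e_i)$ and $Y(e_i)Z(e_i)\approx -Z(e_i)Y(e_i)$, and commutation $Y(e_i)W(e_j)\approx W(e_j)Y(e_i)$ for $i\neq j$ and $W\in\{X,Y,Z\}$. Conjugating $Y(e_i)$ by $V_A$ and invoking the already-established form of $X(e_j)$ and $Z(e_j)$, a structural argument of the sort used to prove the conjugation-test lemma forces the qubit component of $V_A Y(e_i) V_A^\dagger$ on the $i$-th EPR-pair wire to be a binary observable anti-commuting with both $\sigma_X$ and $\sigma_Z$, hence equal to $\pm\sigma_Y$. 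Since that sign is not classically observable, it is absorbed into a binary observable $\Delta_{Y,i}\in\Obs(\hat\mH_{\hat{\reg{A}}})$, giving $V_A Y(e_i) V_A^\dagger\approx_{\sqrt\eps}\sigma_Y^{(i)}\otimes\Delta_{Y,i}$. The linearity relation promotes this, on expectation over $a$, to $V_A Y(a) V_A^\dagger\approx_{\sqrt\eps}\sigma_Y(a)\otimes\Delta_Y(a)$ with $\Delta_Y(a)=\prod_i\Delta_{Y,i}^{a_i}$.

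To finish I would verify the symmetry claim on $\ket{\aux}$. Consistency between the two players' answers on $Y$-type questions, together with the identity $(\sigma_Y(a)\otimes\Id)\ket{\EPR}^{\otimes m}=(-1)^{|a|}(\Id\otimes\sigma_Y(a))\ket{\EPR}^{\otimes m}$ that follows from $\sigma_Y^T=-\sigma_Y$, yields $(-1)^{|a|}\Delta_Y^A(a)\otimes\Delta_Y^B(a)\ket{\aux}\approx_{\sqrt\eps}\ket{\aux}$. Absorbing the sign $(-1)^{|a|}$ into a qubit-wise sign redefinition of the $\Delta_{Y,i}^B$, which is freely allowed because the sign of each $\sigma_Y$ factor is undetermined, produces the clean relation $\|\Delta_Y\otimes\Delta_Y\ket{\aux}-\ket{\aux}\|^2=O(\sqrt\eps)$ stated in the lemma. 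The main obstacle is exactly this second step: the intrinsic complex-conjugation symmetry of the Pauli group means that no classical correlation test can distinguish $\sigma_Y$ from $-\sigma_Y$, so rather than trying to rule out the ambiguity we must track it through the analysis; the operators $\Delta_{Y,i}$ are designed precisely to record each player's implicit choice of sign, and the content of the final step is that any strategy succeeding in the test must make these choices in a pairwise-consistent fashion.
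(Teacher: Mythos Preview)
Your outline correctly identifies the first two steps: the isometry and the $X,Z$ part come from the sub-test $\pbt(X,Z)$, and the anti-commutation relations from part (b) force $Y(c)\simeq\sigma_Y(c)\otimes\Delta(c)$ with $\Delta(c)=\prod_i\Delta_{Y,i}^{c_i}$ for some commuting observables $\Delta_{Y,i}$ on $\hat\mH_{\hat{\reg{A}}}$. This matches the paper's argument.

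The gap is in your final step. The lemma requires a \emph{single} observable $\Delta_Y$, independent of the qubit index $i$, so that $\Delta_W(a)=\Delta_Y^{|\{i:W_i=Y,\,a_i=1\}|}$. Your argument only produces the a priori distinct observables $\Delta_{Y,i}$, and nothing you invoke---linearity, anti-commutation, or player consistency on $Y$-questions---can force $\Delta_{Y,i}\approx\Delta_{Y,j}$ for $i\neq j$. Indeed, a strategy that uses $+\sigma_Y$ on some coordinates and $-\sigma_Y$ on others passes all of parts (a) and (b) perfectly and satisfies the consistency test, yet has $\Delta_{Y,i}\neq\Delta_{Y,j}$. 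Your last paragraph silently slides from $\Delta_Y(a)=\prod_i\Delta_{Y,i}^{a_i}$ to a single $\Delta_Y$ without justification.

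This is exactly what part (c) of the test is for, and your proposal never mentions it. Part (c) asks one prover to measure $Y$ on a pair of positions $(i,m/2+\sigma(i))$ and the other to perform a Bell measurement on the same pair; conditioning on the outcome $\Phi^{00}$ effectively identifies the two qubits and forces the relation $\Delta_{Y,i}\Delta_{Y,m/2+\sigma(i)}\approx\Id$ on $\ket{\aux}$. Averaging over the random permutation $\sigma$ then shows that all the $\Delta_{Y,i}$ agree with a single $\Delta_Y$ (essentially the normalized average $\Es{i}\Delta_{Y,i}$), which is the missing ingredient. Once this is in place, the last claim $\Delta_Y\otimes\Delta_Y\ket{\aux}\approx\ket{\aux}$ follows from consistency as you say, since $X(a)Z(a)Y(a)\otimes X(a)Z(a)Y(a)$ approximately stabilizes $\ket{\psi}$.
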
 

Building on the Pauli braiding test and the conjugation test from the previous section, the Clifford conjugation test $\conjc(R)$ described in Figure~\ref{fig:conjugation-test-2} 
 provides a test for the set of relations 
\begin{align}
\conjr_{h_S,h_X,h_Z}\{R\} &= \paulin\{X,Y,Z\}  \cup \{R\in \Unitary\} \cup \{\Delta_Y\in\Obs\}\notag\\
&\qquad \cup \big\{ R X(a)Z(b)R^\dagger = \Delta_Y^{h_S(a,b)}X(h_X(a,b))Z(h_Z(a,b)),\,\forall a,b\in\{0,1\}^m\big\} \notag\\
&\qquad \cup \big\{ \Delta_Y X(a) = X(a)\Delta_Y,\,\Delta_Y Z(b)=Z(b)\Delta_Y,\,\forall a,b\in\{0,1\}^m\big\}.\label{eq:def-hr}
\end{align}
Note the presence of the observable $\Delta_Y$, which arises from the conjugation ambiguity in the definition of $Y$ (see Lemma~\ref{lem:xyz-rigid}). 

\begin{figure}[H]
\rule[1ex]{\textwidth}{0.5pt}\\
Test~\conjc(R): 
\begin{itemize}
    \item Input: $R$ an $m$-qubit Clifford unitary. 	Let $h_S,h_X,h_Z$ be such that~\eqref{eq:conj-cliff} holds, and $A(a,b),B(a,b)$ the observables defined in~\eqref{eq:def-control-c}. 
    \item Relations: $\conjr_{h_S,h_X,h_Z}\{R\}$ defined in~\eqref{eq:def-hr}. 
    \item Test: execute each of the following with equal probability
\begin{enumerate}
\item[(a)] Execute test $\pbt(X,Y,Z)$ on $(m+1)$ qubits, where the last qubit is called the ``control'' qubit;
\item[(b)] Select $a,b\in\{0,1\}^m$ uniformly at random. Let $C(a,b)$ be the observable defined from $A(a,b)$ and $B(a,b)$ in~\eqref{eq:def-xr}, with the block structure specified by the control qubit. Execute test $\conj\{A(a,b),B(a,b),R\}$. In the test, to specify query $A(a,b)$ or $B(a,b)$, represent each as a string in $\{I,X,Y,Z\}^m$ and use the same label as for the same query when it is used in part~(a).
\end{enumerate}
\end{itemize}
\rule[2ex]{\textwidth}{0.5pt}\vspace{-0.5cm}
\caption{The Clifford conjugation test, $\conjc(R)$.}
\label{fig:conjugation-test-2}
\end{figure}

\begin{lemma}\label{lem:cliff-conj}
Let $R$ be an $m$-qubit Clifford unitary and $h_S,h_X,h_Z$ such that~\eqref{eq:conj-cliff} holds. Suppose a strategy for the players succeeds with probability at least $1-\eps$ in test $\conjc(R)$. Let $V_A:\mH_\reg{A} \to ((\C^2)^{\otimes (m+1)})_{\reg{A}'}  \otimes {\mH}_{\hat{\reg{A}}}$ be the isometry whose existence follows from part (a) of the test, and $\Delta_Y$ the observable on $\mH_{\hat{\reg{A}'}}$ that represents the phase ambiguity (see Lemma~\ref{lem:xyz-rigid}). 
Then there exists a unitary $\phase_R$ on ${\mH}_{\hat{\reg{A}}}$, commuting with $\Delta_Y$, such that 
\begin{equation}\label{eq:r-stab}
 \big\|\phase_R \otimes \phase_R \ket{\aux} - \ket{\aux} \big\|^2 = O(\poly(\eps)).
\end{equation}
Moreover, let $\hat{\tau}_R$ be any $m$-qubit Clifford unitary, acting on the space $(\C^2)^{\otimes m}$ into which the isometry $V_A$ maps, which satisfies the relations specified in~\eqref{eq:def-hr}, where for any location $i\in\{1,\ldots,m\}$ such that $a_i=b_i=1$ we replace $\sigma_X\sigma_Z$ by $\tau_Y = \sigma_Y \otimes (i\Delta_Y)$. Then, letting  $ \tau_R = \hat{\tau}_{R}(\Id_{\reg{A}'} \otimes \phase_R)$ we have that  under the same isometry,
$$R \,\simeq_{\poly(\eps)}\, \tau_R.$$
\end{lemma}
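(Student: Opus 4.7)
The plan is to combine Lemma~\ref{lem:xyz-rigid} for part (a) with Lemma~\ref{lem:conj} for part (b): the former fixes the standard form of the Pauli observables on the prover's space, while the latter constrains how $R$ acts by conjugation on those Paulis. The final factorization $R \simeq \hat\tau_R(\Id\otimes\phase_R)$ will be extracted by a Pauli decomposition argument, which uniquely identifies $R$ on the $m$-qubit content register up to an action $\phase_R$ on the ancilla register~$\mH_{\hat{\reg{A}}}$.

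First, I would apply Lemma~\ref{lem:xyz-rigid} to the $m+1$ qubit instance of the extended Pauli braiding test run in part (a). This yields local isometries $V_A,V_B$, a joint auxiliary state $\ket{\aux}_{\hat{\reg{A}}\hat{\reg{B}}}$, and the phase-ambiguity observable $\Delta_Y$ on $\mH_{\hat{\reg{A}}}$ satisfying $\Delta_Y\otimes\Delta_Y\ket{\aux}\approx\ket{\aux}$, together with the Pauli identifications $W(a) \simeq \sigma_W(a)\otimes \Delta_W(a)$ with $\Delta_X=\Delta_Z=\Id$. Since the observables $A(a,b)$ and $B(a,b)$ in part (b) are labeled as strings in $\{I,X,Y,Z\}^m$ with $Y$'s exactly at positions where the corresponding bits of $(a,b)$ (resp.~$(h_X(a,b),h_Z(a,b))$) are both $1$, the implicit consistency sub-test forces them, through $V_A$, to coincide with the ideal sign-normalised Paulis $\pi(a,b)=i^{a\cdot b}\sigma_X(a)\sigma_Z(b)$ and $\pi(h_X(a,b),h_Z(a,b))$, tensored with the appropriate products of $\Delta_Y$ factors on the ancilla at those positions.

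Next, I would apply Lemma~\ref{lem:conj} to part (b) on average over a uniformly random $(a,b)\in\{0,1\}^{2m}$. Letting $\tilde R$ denote the operator extracted from the off-diagonal block of $X_R$ (the $(m+1)$-th qubit playing the role of the control), this gives the approximate conjugation relation $\tilde R\,(\pi(a,b)\otimes\Delta)\,\tilde R^\dagger \approx \pi(h_X(a,b),h_Z(a,b))\otimes \Delta'$ on $\ket{\EPR}^{\otimes m}\otimes\ket{\aux}$, where $\Delta,\Delta'$ are the prescribed products of $\Delta_Y$ factors. By construction, $\hat\tau_R$ as described in the lemma statement is precisely a Clifford on $(\C^2)^{\otimes m}\otimes\mH_{\hat{\reg{A}}}$ satisfying these same relations exactly, with the overall phase fixed through the block structure induced by the control qubit. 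Consequently, $T := (\hat\tau_R^\dagger\otimes\Id)\tilde R$ approximately commutes with every $\pi(a,b)\otimes\Id$ on $\ket{\EPR}^{\otimes m}\otimes\ket{\aux}$. A standard Pauli decomposition argument — writing $T = \sum_\sigma \sigma\otimes T_\sigma$ in the Pauli basis on $(\C^2)^{\otimes m}$ and using that the maximally mixed marginal of $\ket{\EPR}^{\otimes m}$ on the content register translates the state-dependent norm into a Hilbert--Schmidt norm on the ancilla — then shows that $T_\sigma\approx 0$ for all $\sigma\neq\Id$, whence $T \approx \Id\otimes\phase_R$ for some operator $\phase_R$ on $\mH_{\hat{\reg{A}}}$ and thus $\tilde R \simeq_{\poly(\eps)} \hat\tau_R(\Id\otimes\phase_R)=\tau_R$.

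It remains to verify the three listed properties of $\phase_R$. Unitarity is inherited from unitarity of $R$, which follows from the fact that $X_R$ is an observable. Commutation $[\phase_R,\Delta_Y]\approx 0$ is obtained by comparing two ways of producing a $\sigma_Y$ insertion at a given position: via a product $X(e_i)Z(e_i)$ (which, through $V_A$, carries no $\Delta_Y$) and via a direct query $Y(e_i)$ (which carries $\Delta_Y$); consistency of $R$'s conjugation action under both routes forces $\phase_R$ to lie in the commutant of $\Delta_Y$. The stabilization $\phase_R\otimes\phase_R\ket{\aux}\approx\ket{\aux}$ follows from the implicit consistency sub-test applied to $X_R$: the two players' implementations of $R$ must agree on $\ket{\psi}$, and after using the EPR transfer identity $\hat\tau_R\otimes\Id\ket{\EPR}^{\otimes m} = \Id\otimes\hat\tau_R^T\ket{\EPR}^{\otimes m}$ to match the Clifford parts on either side, the residual condition on the ancilla reduces exactly to the desired stabilization. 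The main obstacle I anticipate is executing the Pauli decomposition in state-dependent rather than operator norm with only a polynomial loss in $\eps$; this requires exploiting both the maximal mixing of $\ket{\EPR}^{\otimes m}$ and the uniform sampling of $(a,b)$ in part (b). A secondary source of technicality is the careful bookkeeping of $\Delta_Y$ insertions in the conjugation relations at positions where $R$ mixes $\sigma_X,\sigma_Z$ into $\sigma_Y$, which is precisely what motivates the replacement rule $\sigma_X\sigma_Z \mapsto \tau_Y = \sigma_Y\otimes(i\Delta_Y)$ used to define $\hat\tau_R$.
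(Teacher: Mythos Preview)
Your proposal is essentially correct and follows the same strategy as the paper: apply Lemma~\ref{lem:xyz-rigid} to fix the Pauli structure, apply Lemma~\ref{lem:conj} to obtain the approximate conjugation relation, and then use a Pauli decomposition of $\hat\tau_R^{-1}R$ (the paper writes this as $R\simeq\hat\tau_R\cdot(\sum_{a,b}\sigma_X(a)\sigma_Z(b)\otimes\phase_R(a,b))$, which is the same thing) together with the EPR identity to kill all off-identity components. The stabilization~\eqref{eq:r-stab} via self-consistency of $X_R$ is also exactly what the paper does.

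One point where your sketch deviates: your proposed argument for $[\phase_R,\Delta_Y]\approx 0$ via ``two routes to $\sigma_Y$'' is not available from the test as written, since $A(a,b)$ is always queried as a single $\{I,X,Y,Z\}^m$ string (with $Y$ at positions $a_i=b_i=1$), never as separate $X(e_i),Z(e_i)$ queries to be multiplied. The paper instead extracts commutation directly from the decomposition argument: restricting to pairs $(a,b)$ with $a\cdot b=1$, the $\Delta_Y^{a\cdot b}$ factor survives on both sides of the conjugation identity, and matching the $\sigma_X(a)\sigma_Z(b)$-component forces $\Delta_Y\phase_R(0,0)\approx\phase_R(0,0)\Delta_Y$.
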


Note that $\hat{\tau}_R$ is only defined up to phase in the lemma. Any representative will do, as  the phase ambiguity can be absorbed in $\phase_R$. As an example, in this notation we have 
\begin{equation}\label{eq:hat-tau-def}
\hat{\tau}_F = \frac{1}{\sqrt{2}}\big(-\sigma_X + \sigma_Y \otimes \Delta_Y\big),\qquad \hat{\tau}_G = \frac{1}{\sqrt{2}}\big(\sigma_X + \sigma_Y \otimes \Delta_Y\big),
\end{equation}
where the ``honest'' single-qubit Clifford observables $\sigma_F$ and $\sigma_G$ are defined in~\eqref{eq:pauli-matrix-2}.

 Completeness of the test is clear, as players making measurements on $(m+1)$ shared EPR pairs using standard Pauli observables, $R$, and $C(a,b)$ defined in~\eqref{eq:def-xr} with $A(a,b)$ and $B(a,b)$ as in~\eqref{eq:def-control-c} will pass all tests with probability $1$.  %

\begin{proof}[Proof sketch.]
For $D\in\{A,B\}$ let $V_D$ be the isometries that follow from part (a) of the test and Lemma~\ref{lem:xyz-rigid}.
According to~\eqref{eq:def-control-c}, $A(a,b)$ and $B(a,b)$ can each be expressed (up to phase) as a tensor product of $X,Y,Z$ operators, where the number of occurrences of $Y$ modulo $2$ is $a\cdot b$ for $A(a,b)$ and  $h_X(a,b)\cdot h_Z(a,b) = a\cdot b\mod 2$ for $B(a,b)$. Thus the labels used to specify the observables in $A(a,b)$ and $B(a,b)$ in part (b), together with the analysis of part (a) and Lemma~\ref{lem:xyz-rigid}, imply that, under the same isometry, we  have
$$A(a,b) \simeq_{\sqrt{\eps}} \sigma_X(a)\sigma_Z(b) \otimes (i\Delta_Y)^{a\cdot b}
\;\text{and}\;
 B(a,b) \simeq_{\sqrt{\eps}} \sigma_X(h_X(a,b)) \sigma_Z(h_Z(a,b)) \otimes (i\Delta_Y)^{a\cdot b + h_S(a,b)},$$
where the imaginary phase comes from \eqref{eq:def-control-c}. 
Applying the analysis of the conjugation test given in Lemma~\ref{lem:conj} shows that $X_R$ must have the form in~\eqref{eq:def-xr}, for some $R$ that approximately conjugates $A(a,b)$ to $B(a,b)$, on average over uniformly random $a,b\in\{0,1\}^m$.

Let $\hat{\tau}_R$ be as defined in the paragraph preceding the lemma. Note that $\hat{\tau}_R$ acts on $\mH_{\reg{A}'}$ and $\mH_{\hat{\reg{A}}}$. After application of the isometry, $R$ has an expansion
\begin{equation}\label{eq:r-1}
R \simeq \hat{\tau}_R \cdot \Big(\sum_{a,b} \,\sigma_X(a)\sigma_Z(b) \otimes \phase_R(a,b)\Big),
\end{equation}
 for arbitrary $\phase_R(a,b)$ on $\mH_{\hat{\reg{A}}}$; since $\hat{\tau}_R $ is invertible such an expansion exists for any operator. 
Using the approximate version of~\eqref{eq:conj-cliff} certified by the conjugation test (Lemma~\ref{lem:conj}), 
$$ R V_A^\dagger \big(\sigma_X(a)\sigma_Z(b)\otimes \Delta_Y^{a\cdot b}\big) V_A \approx V_A^\dagger \big(\sigma_X(h_X(a,b))\sigma_Z(h_Z(a,b)) \otimes \Delta_Y^{a\cdot b + h_S(a,b)}\big)V_A R,$$
where the approximation holds on average over a uniformly random choice of $(a,b)$ and up to error that is polynomial in $\eps$ but independent of $m$. 
Expanding out $R$ and using the consistency relations between the two provers, 
\begin{align}
\sum_{c,d} \hat{\tau}_R \Big(\sigma_X(c)\sigma_Z(d) &\otimes \phase_R(c,d)\Big) \otimes \Big((-1)^{a\cdot b}\sigma_X(a)\sigma_Z(b)\otimes \Delta_Y^{a\cdot b}\Big)\notag\\
& \!\!\!\!\!\!\!\!\!\!\!\!\!\!\!\!\approx \sum_{c,d} \Big(\sigma_X(h_X(a,b))\sigma_Z(h_Z(a,b)) \otimes \Delta_Y^{a\cdot b + h_S(a,b)}\Big) \,\hat{\tau}_R \, \Big(\sigma_X(c)\sigma_Z(d) \otimes \phase_R(c,d)\Big) \otimes \Id\;,\label{eq:bb-1}
\end{align}
where the factor $(-1)^{a\cdot b}$ comes from using 
$$\big(\sigma_X(a)\sigma_Z(b)\otimes \Id\big) \ket{\EPR}^{\otimes m} \,=\, \big(\Id \otimes \big(\sigma_X(a)\sigma_Z(b)\big)^T \big)\ket{\EPR}^{\otimes m}\;.$$

Using the conjugation relations satisfied, by definition, by $\hat{\tau}_R$,  the right-hand side of~\eqref{eq:bb-1} simplifies~to 
\begin{equation}\label{eq:bb-2}
\sum_{c,d} \hat{\tau}_R \Big(\sigma_X(a)\sigma_Z(b) \sigma_X(c)\sigma_Z(d) \otimes \Delta_Y^{a\cdot b }\phase_R(c,d)\Big)  \otimes \Id.
\end{equation}
Next using the fact that the state on which the approximations are measured is maximally entangled across registers $\reg{A}$ and $\reg{B}$, together with the Pauli (anti-)commutation relations, to simplify the left-hand side of~\eqref{eq:bb-1}, together with~\eqref{eq:bb-2} we arrive at the approximation
\begin{align*}
\sum_{c,d} \Big((-1)^{a\cdot d  +b\cdot c}\sigma_X(a+c)\sigma_Z(b+d) &\otimes \phase_R(c,d)\Big) \otimes \Big(\Id \otimes \Delta_Y^{a\cdot b}\Big)\\
& \approx \sum_{c,d} \Big( \sigma_X(a+c)\sigma_Z(b+d) \otimes \Delta_Y^{a\cdot b } \phase_R(c,d)\Big) \otimes \Id .
\end{align*}
If $(c,d)\neq (0,0)$ a fraction about half of all $(a,b)$ such that $a\cdot b = 0$ satisfy $a\cdot d + b\cdot c = 1$. Using that $\{\sigma_X(a)\sigma_Z(b) \otimes \Id \ket{\EPR}\}$ are orthogonal for different $(a,b)$, the above then implies that $\phase_R(c,d)\approx -\phase_R(c,d)$, on average over $(c,d)\neq (0,0)$. Hence $\phase_R(c,d)\approx 0$, on average over $(c,d)\neq (0,0)$. 
Considering $(a,b)$ such that $a\cdot b=1$ implies that $\phase_R(0,0)$ approximately commutes with $\Delta_Y$. Finally, the relation~\eqref{eq:r-stab} follows from self-consistency of $X_R$ implicitly enforced in the test.
\end{proof}

\subsection{Tensor products of single-qubit Clifford observables}
\label{sec:n-2-clifford}

We turn to testing observables in the $m$-fold direct product of the Clifford group. Although the test can be formulated more generally, for our purposes it will be sufficient to specialize it to the case where each element in the direct product is an observable taken from the set  $\Sigma = \{X,Y,Z,F,G\}$ associated with the single-qubit Pauli observables defined in Section~\ref{sec:prelim-notation}. Recall that the associated operators satisfy the conjugation relation $\sigma_Y \sigma_F \sigma_Y = \sigma_G$, which will be tested as part of our procedures (specifically, item (c) in Figure~\ref{fig:clifford-test-3}).

\begin{figure}[H]
\rule[1ex]{\textwidth}{0.5pt}\\
Test~$\cliff(\Sigma,m)$:
\begin{itemize}
    \item Input: An integer $m$ and a subset $\Sigma = \{X,Y,Z,F,G\}$ of the single-qubit Clifford group. 
    \item Test: Select $W \in \Sigma^m$ uniformly at random. Execute each of the following with equal probability:
\begin{enumerate}
\item[(a)] Execute the test $\conjc(W)$;
\item[(b)] Send one player either the query $W$, or $X_W$ and the other $(W,X(e_{m+1}))$, where $e_{m+1}$ indicates the control qubit used for part (a). Receive one bit from the first player, and two from the second. If the query to the first player was $W$, check that the first player's answer is consistent with the second player's first answer bit. If the query to the first player was $X_W$, then: If the second player's second bit is $0$, check that his first bit is consistent with the first player's; If the second player's second bit is $1$, check that his first bit is different than the first player's.
\item[(c)] Let $S$ and $T$ be subsets of the positions in which $W_i=F$ and $W_i=G$ respectively, chosen uniformly at random. Let $W'$ equal $W$ except $W'_i=G$ for $i\in S$, and $W'_i=F$ for $i\in T$. Let  $R = Y(\sum_{i\in S\cup T} e_i)$. Execute test $\conj(W,W', R)$.
\item[(d)] Set $W'_i = X$ (resp. $Y$) whenever $W_i = Y$ (resp. $X$), $W'_i = F$ (resp. $G_i$) whenever $W_i = G$ (resp. $F$), and $W'_i = X$ whenever $W_i=Z$. Execute test $\pbt(W,W')$ on $m$ qubits.

\item[(e)] Let $S$ and $T$ be subsets of (non-overlapping) pairs of positions in which $W_i=F$ and $W_i=G$ respectively, chosen uniformly at random. Send one player the query $W$, with entries $(i,j) \in S\cup T$ removed and replaced by $\Phi_{i,j}$ (indicating a measurement in the Bell basis). 
\begin{itemize}
\item With probability $1/2$, send the other player the query $W$. Check consistency of outcomes associated with positions not in $S\cup T$. For outcomes in $S\cup T$, check the natural consistency as well: e.g. if the Bell measurement indicated the outcome $\Phi_{00}$, then the two outcomes reported by the other player at those locations should be identical. 
\item With probability $1/2$, execute an independent copy of the Bell measurement test $\bellt$ (Figure~\ref{fig:bell}) between the first and second players in each of the pair of qubits in $S\cup T$. 
\end{itemize}
\end{enumerate}
\end{itemize}
\rule[2ex]{\textwidth}{0.5pt}\vspace{-.5cm}
\caption{The $m$-qubit Clifford test, $\cliff(\Sigma,m)$.}
\label{fig:clifford-test-3}
\end{figure}

The test is described in Figure~\ref{fig:clifford-test-3}. It is divided in five
parts. Part (a) of the test executes  $\conjc(W)$ to verify that an observable
$W\in\Sigma^m$ satisfies the appropriate Pauli conjugation
relations~\eqref{eq:conj-cliff}. Note that a priori test $\conjc(W)$ only tests
for the observable $X_W$ obtained from $W$ in blocks as $X_R$ from $R$
in~\eqref{eq:def-xr} (indeed, in that test $W$ need not be an observable). Thus
part (b) of the test is introduced to verify that $X_W \approx W X(e_{m+1})$, where the $(m+1)$-st qubit is the one used to specify the block decomposition relating $X_W$ to $W$.  The result of parts (a) and (b) is that, under the same isometry as used to specify the Pauli $X$ and $Z$, $W\simeq \hat{\tau}_W \cdot (\Id\otimes \phase_W)$, according to the same decomposition as shown in Lemma~\ref{lem:cliff-conj}. The goal of the remaining three parts of the test is to verify that $\phase_W = \phase_F^{|\{i: W_i \in \{F,G\}\}|}$, for a single observable $\phase_F$. For this, part (c) of the test verifies that $\phase_W$ only depends on the locations at which $W_i\in\{F,G\}$, but not on the specific observables at those locations. Part (d) verifies that $\phase_W \approx \prod_{i: W_i\in \{F,G\}} \phase_i $ for commuting observables $\phase_i$. Finally, part (e) checks that $\phase_i$ is (approximately) independent of $i$. 

\begin{theorem}\label{thm:clifford-ntest}
Suppose a strategy for the players succeeds in test $\cliff(\Sigma,m)$ (Figure~\ref{fig:clifford-test-3}) with probability at least $1-\eps$. Then  for $D\in\{A,B\}$ there exists an isometry 
$$V_D: \mathcal{H}_\reg{D} \to (\C^2)^{\otimes m}_{\reg{D}'} \otimes {\mH}_{\hat{\reg{D}}}$$
such that 
\begin{equation}\label{eq:psi-epr}
\big\| (V_A \otimes V_B) \ket{\psi}_{\reg{AB}} - \ket{\EPR}_{\reg{A}'\reg{B}'}^{\otimes m} \ket{\aux}_{\hat{\reg{A}}\hat{\reg{B}}} \big\|^2 = O(\sqrt{\eps}),
\end{equation}
and %
\begin{equation}\label{eq:clifford-ntest-close}
\Es{W\in\Sigma^m,\,c\in\{0,1\}^m} \big\| \Id_A \otimes \big( V_B W(c) - \tau_{W}(c) V_B\big)   \ket{\psi}_{\reg{AB}} \big\|^2 = O(\poly(\eps)).
\end{equation}
Here $\tau_W$ is defined from $W$ as in Lemma~\ref{lem:cliff-conj}, with $\phase_{W_i} = \Id$ if $W_i\in \{X,Y,Z\}$ and $\phase_{W_i} = \phase_F$ if $W_i\in\{F,G\}$, where $\phase_F$ is an  observable on ${\mH}_{\hat{\reg{B}}}$ that commutes with $\Delta_Y$. 
\end{theorem}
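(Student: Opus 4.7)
The plan is to combine the five parts of the test $\cliff(\Sigma,m)$ in sequence, using the guarantees of each part to build the structural characterization of the observables. From part (a), for each $W \in \Sigma^m$ (sampled uniformly), the inner test $\conjc(W)$ is executed with constant conditional probability. The $\pbt(X,Y,Z)$ subroutine inside part (a) (which is the same across all choices of $W$) yields, via Lemma~\ref{lem:xyz-rigid}, isometries $V_D : \mathcal{H}_\reg{D} \to (\C^2)^{\otimes (m+1)}_{\reg{D}'} \otimes \mathcal{H}_{\hat{\reg{D}}}$ such that the joint state maps close to $\ket{\EPR}^{\otimes (m+1)} \otimes \ket{\aux}$. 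Applying Lemma~\ref{lem:cliff-conj} to the outcome of $\conjc(W)$ under the same isometry yields, for each $W$, a decomposition $X_W \simeq \hat{\tau}_{X_W}\cdot (\Id_{\reg{A}'} \otimes \phase_W)$ with $\phase_W$ a unitary on $\mathcal{H}_{\hat{\reg{A}}}$ that commutes with $\Delta_Y$. Part (b) verifies the relation $X_W \approx W \otimes X(e_{m+1})$ via a bit-flip consistency check on the control qubit; tracing the latter out provides the state bound~\eqref{eq:psi-epr} (on $m$ EPR pairs) and lifts the decomposition to $W \simeq \hat{\tau}_W \cdot (\Id_{\reg{A}'} \otimes \phase_W)$.

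The remaining work is to show that $\phase_W$ has the structure $\phase_W = \phase_F^{|J_W|}$ for a single observable $\phase_F$ (commuting with $\Delta_Y$), where $J_W = \{ i : W_i \in \{F,G\} \}$. Part (c) runs $\conj(W,W',R)$ with $R = Y(\sum_{i\in S \cup T} e_i)$, where swapping $F$'s and $G$'s at positions $S$ and $T$ is implemented by the identity $\sigma_Y \sigma_F \sigma_Y = \sigma_G$. By Lemma~\ref{lem:conj} we obtain $R W R^\dagger \approx W'$. Under the isometry, $R$ maps to a pure Pauli acting trivially on $\mathcal{H}_{\hat{\reg{A}}}$; comparing the two decompositions $W \simeq \hat{\tau}_W (\Id \otimes \phase_W)$ and $W' \simeq \hat{\tau}_{W'}(\Id \otimes \phase_{W'})$, and noting that the conjugation on $\hat{\tau}_W$ produces exactly $\hat{\tau}_{W'}$, we conclude $\phase_W \approx \phase_{W'}$ on average over random $S,T$. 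Hence $\phase_W$ depends only on $J_W$.

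Part (d) executes $\pbt(W,W')$ with $W'_i$ chosen position-wise to anti-commute with $W_i$ (so $X\leftrightarrow Y$, $F\leftrightarrow G$, $Z\to X$). The Pauli braiding analysis (cf.\ Lemma~\ref{lem:xyz-rigid}) enforces, on average over $c$, the group law $W(c)W(c') \approx W(c+c')$ together with the appropriate (anti-)commutation relations across positions. Since $\hat{\tau}_W$ already realizes a faithful representation of this algebra on the $m$-qubit test register, the group law applied to the decomposition forces $\phase_W$ to factorize as a product $\phase_W \approx \prod_{i\in J_W} \phi_i$ of mutually commuting local factors, one per position in $J_W$, with the phases at positions outside $J_W$ trivial (since at such positions $\hat{\tau}_W$ already matches the honest $\sigma_{W_i}$ up to the $\Delta_Y$ factor absorbed into $\hat{\tau}$). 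Combined with part (c), the factor $\phi_i$ depends only on $i$.

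Finally, part (e) certifies that the $\phi_i$ are all equal. For a random pair $(i,j)$ of positions with $W_i = W_j = F$ (or both $G$), the verifier elicits a Bell-basis measurement from one player on the pair and a $W$-measurement from the other; the independent $\bellt$ sub-test ensures the Bell measurement is close to the honest one. Standard Bell-measurement analysis (the two halves of an $\EPR$ pair, rotated by the same change of basis, yield maximally correlated outcomes) implies that the consistency relation can hold only if $\phi_i \approx \phi_j$ on the state. Averaging over $(i,j)$ produces a single operator $\phase_F$ with $\phi_i \approx \phase_F$ for every $i$, giving the claimed structure of $\phase_W$, and~\eqref{eq:clifford-ntest-close} follows by expanding $W(c) = \prod_i W_i^{c_i}$ and substituting. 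The main obstacle is bookkeeping of error propagation across the five parts so that the final error is $\poly(\eps)$ with constants independent of $m$: this is possible because each sub-claim is only required on average over uniformly random $W$ (and random $S,T$ or pair $(i,j)$), and the composite isometry is the single $V_D$ fixed once and for all in part (a), so the approximations compose additively without any $m$-dependent factor.
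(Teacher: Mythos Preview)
Your proposal is correct and follows essentially the same route as the paper's proof sketch: extract the isometry and Pauli structure from the $\pbt(X,Y,Z)$ inside part~(a), lift the $X_W$ decomposition of Lemma~\ref{lem:cliff-conj} to $W$ itself via the consistency check in part~(b), then use parts~(c), (d), (e) in turn to reduce the residual phase $\phase_W$ to a power of a single observable $\phase_F$.

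One small correction in your part~(c) analysis: $R = Y(\sum_{i\in S\cup T} e_i)$ does \emph{not} act trivially on $\mathcal{H}_{\hat{\reg{A}}}$ under the isometry; rather $Y(c) \simeq \sigma_Y(c)\otimes \Delta_Y^{|c|}$, so $R$ carries a factor $\Delta_Y^{|S\cup T|}$ on the ancilla. The conclusion $\phase_W \approx \phase_{W'}$ is unaffected precisely because $\Delta_Y$ commutes with $\phase_W$ (this is part of the output of Lemma~\ref{lem:cliff-conj}), which is the point the paper makes explicit. Similarly, in part~(e) the paper extracts the pairwise relation $\phi_i\phi_j \approx \Id$ via the SWAP/EPR identity and then averages to define $\phase_F$, rather than obtaining $\phi_i \approx \phi_j$ directly, but this is equivalent since the $\phi_i$ are observables.
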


\begin{proof}[Proof sketch]
The existence of the isometry, as well as~\eqref{eq:psi-epr} and~\eqref{eq:clifford-ntest-close} for $W \in \{I,X,Y,Z\}^{m}$, follows from the test $\pbt(X,Y,Z)$, executed as part of the Clifford conjugation test from part (a), and Lemma~\ref{lem:xyz-rigid}. 
Using part (a) of the test and Lemma~\ref{lem:cliff-conj} it follows that every $W \in \Sigma^m$ is mapped under the same isometry to 
\begin{equation}\label{eq:w-form}
W \,\simeq_{\sqrt{\eps}}\, \tau_W = \hat{\tau}_W(\Id \otimes \phase_W),
\end{equation}
 where $\hat{\tau}_W$ is as defined in the lemma  and  $\phase_W$ is an observable on $\mH_{\hat{\reg{A}}}$ which may depend on the whole string $W$;  here we also use the consistency check in part (b) to relate  the observable $X_W$ used in the Clifford conjugation test with the observable $W$ used in part (c). Note that from the definition we can write $\hat{\tau}_W = \otimes_i \hat{\tau}_{W_i}$, where in particular $\hat{\tau}_X = \sigma_X$, $\hat{\tau}_Z = \sigma_Z$ and $\hat{\tau}_Y = \sigma_Y \otimes \Delta_Y$.

The analysis of the conjugation test given in Lemma~\ref{lem:conj} shows that success with probability $1-O(\eps)$ in part (c) of the test implies the relations 
\begin{align*}
 \hat{\tau}_W \tau_R(\Id \otimes \phase_W)  &= \tau_R \hat{\tau}_W  (\Id\otimes\phase_W) \\
&  \approx_{\sqrt{\eps}} \hat{\tau}_{W'}  \tau_R (\Id\otimes \phase_{W'}),
\end{align*}
where the first equality is by definition of $R$, and uses that $\tau_Y = \sigma_Y \otimes \Delta_Y$ and $\Delta_Y$ commutes with $\phase_W$; the approximation holds as a consequence of the conjugation test and should be understood on average over a uniformly random choice of $W\in \Sigma^m$. Thus $\phase_W$ depends only on the locations at which $W_i \in \{F,G\}$, but not on the particular values of the observables at those locations.  

Part (d) of the test and Lemma~\ref{lem:xyz-rigid} imply that the observables $W(a)$ satisfy approximate linearity conditions $W(a)W(a')\approx W(a+a')$, on average over a uniformly random choice of $W\in\Sigma^n$ and $a,a'\in\{0,1\}^n$. Using the form~\eqref{eq:w-form} for $W$ and the fact that the $\hat{\tau}_W(a)$ satisfy the linearity relations by definition, we deduce that $\phase_{W(a)}\phase_{{W(a')}} \approx \phase_{W(a+a')}$ as well. Using the analysis of the Pauli braiding test (Lemma~\ref{lem:xyz-rigid}), this implies that for each $i$ and $W_i$ there is an observable $\phase_{i,W_i}$ such that the $\phase_{i,W_i}$ pairwise commute and $\phase_W \approx \prod_i \phase_{i,W_i}$. Using the preceding observations, $\phase_{i,W_i} \approx \phase_i$ if $W_i \in\{F,G\}$, and $\phase_{i,W_i} \approx \Id$ if $W_i \in \{X,Y,Z\}$. 

Success in part (e) of the test implies the condition $ \Es{W} \bra{\psi} W
  \otimes W_\Phi \ket{\psi} \geq 1-O(\eps)$, where $W$ is distributed as in the
  test, and $W_{\Phi}$ is the observable applied by the second player upon a
  query $W$, with some locations, indexed by pairs in $S$ and $T$, have been
  replaced by the $\Phi$ symbol (as described in the test). Let $U$ be the set of $i$ such that $W_i\in \{F,G\}$. Since $\Delta_Y$ commutes with all observables in play, for clarity let us assume in the following that $\Delta_Y = \Id$. 
From the decomposition of the observables $W$ obtained so far and the analysis of the test $\bellt$ given in Lemma~\ref{lem:bell-rigid-test} it follows that 
$$W \simeq \Big(\otimes_i {\hat{\tau}_{W_i}} \Big)\otimes \Big(\prod_{i\in U} \phase_i\Big),\quad\text{and}\quad W_\Phi \simeq  \Big(\otimes_{i\notin S\cup T} {\hat{\tau}_{W_i}}\Big)\otimes\Big( \otimes_{(i,j)\in S\cup T} \SWAP_{i,j} \Big)\otimes \Big(\prod_{i\in U\backslash S\cup T} \phase_i\Big),$$
where the ordering of tensor products does not respect the ordering of qubits, but it should be clear which registers each operator acts on. Using that for any operators $A$, $B$ and $\Delta$, 
$$\bra{\EPR}^{\otimes 2}  \big(A \otimes B \otimes \proj{\Phi_{00}}\big)\ket{\EPR}^{\otimes 2}  \,=\, \frac{1}{8}\,\Tr\big(AB^T\big) ,$$
the above conditions imply 
$$\Es{S=\{(s_i,s'_i)\}}\, \Es{T=\{(t_i,t'_i)\}} \, \phase_{s_i}\phase_{s'_i} \phase_{t_i}\phase_{t'_i} \,\approx\, \Id,$$
where the expectation is taken over sets $S$ and $T$ specified as in part (e), for a given $W$, and on average over the choice of $W$. Let $\phase = \Es{i} \phase_i$. By an averaging argument it follows that for $U$ the set of locations such that $W_i \in \{F,G\}$, $\prod_{i\in U} \phase_i \approx \phase^{|S|}$, again on average over the choice of $W$. To conclude we let $\phase_F = \phase/|\phase|$, which is an observable and satisfies the required conditions. 
\end{proof}

\subsection{Post-measurement states}
\label{sec: RIGID test}

We give a first corollary of Theorem~\ref{thm:clifford-ntest} which expresses its conclusion~\eqref{eq:clifford-ntest-close} in terms of the post-measurement state of the first player. This corollary will be used in the analysis of the leash protocol from Section~\ref{sec:leash}. To obtain a useful result we would like to ``lift'' the phase ambiguity $\phase_W$ which remains in the statement of Theorem~\ref{thm:clifford-ntest} (in contrast to the ambiguity $\Delta_Y$, which itself cannot be lifted solely by examining correlations). This ambiguity means that the provers have the liberty of choosing to report opposite outcomes whenever they apply an $F$ or $G$ observable, but they have to be consistent between themselves and across all of their qubits in doing so. To verify that the provers use the ``right'' labeling for their outcomes we incorporate a small tomography test in the test, described in Figure~\ref{fig:rigid}. Note that an inconvenience of the tomography is that the test no longer achieves perfect completeness (although completeness remains exponentially close to~$1$). 

\begin{figure}[H]
\rule[1ex]{\textwidth}{0.5pt}\\
Test~$\rigid(\Sigma,m)$:
\begin{itemize}
    \item Input: An integer $m$ and a subset $\Sigma = \{X,Y,Z,F,G\}$ of the single-qubit Clifford group. 
    \item Test: execute each of the following with equal probability:
\begin{enumerate}
\item[(a)] Execute the test $\cliff(\Sigma,m)$;
\item[(b)] Send each player a uniformly random query $W,W'\in \Sigma^m$. Let $T \subseteq \{1,\ldots,m\}$ be the subset of positions $i$ such that $W_i \in \{X,Y\}$ and $W'_i\in\{F,G\}$. Reject if the fraction of answers $(a_i,b_i)$, for $i\in T$, from the provers that satisfy the CHSH correlations (i.e. $a_i\neq b_i$ if and only if $(W_i,W'_i)=(X,F)$) is not at least $\cos^2 \frac{\pi}{8} - 0.1$.
\end{enumerate}
\end{itemize}
\rule[2ex]{\textwidth}{0.5pt}\vspace{-.5cm}
\caption{The $n$-qubit rigidity test, $\rigid(\Sigma,m)$.}
\label{fig:rigid}
\end{figure}

For an observable $W\in\Sigma$, let $\sigma_W = \sigma_W^{+1} - \sigma_W^{-1}$ be its eigendecomposition, where $\sigma_W$ are the ``honest'' Pauli matrices defined in~\eqref{eq:pauli-matrix} and~\eqref{eq:pauli-matrix-2}. For $u\in\{\pm 1\}$ let $\sigma_{W,+} = \sigma_W^u$ for $W\in \Sigma$, and 
$$ \sigma_{X,-}^u = \sigma_X^u,\quad\sigma_{Z,-}^u = \sigma_Z^u,\quad\sigma_{Y,-}^u = \sigma_Y^{-u},\quad\sigma_{F,-}^u = \sigma_G^{-u},\quad\sigma_{G,-}^u = \sigma_F^{-u}.$$

\begin{corollary}\label{cor:clifford-rigid}
Let $\eps>0$ and $m$ an integer. Suppose a strategy for the players succeeds with probability $1-\eps$ in test $\rigid(\Sigma,m)$. Then for $D\in\{A,B\}$ there exists an isometry 
$$V_D: \mathcal{H}_\reg{D} \to (\C^2)^{\otimes m}_{\reg{D}'} \otimes {\mH}_{\hat{\reg{D}}}$$
such that
\begin{equation}\label{eq:me-cor11}
 \big\| (V_A \otimes V_B) \ket{\psi}_{\reg{AB}}  - \ket{\EPR}^{\otimes m} \otimes \ket{\aux}_{\hat{\reg{A}}\hat{\reg{B}}} \big\|^2 = O(\sqrt{\eps}),
\end{equation}
and positive semidefinite matrices $\tau_\lambda$ on $\hat{\reg{A}}$ with orthogonal support, for $\lambda\in\{+,-\}$, such that $\Tr(\tau_+)+\Tr(\tau_-)=1$ and
\begin{align}\label{eq:states-cor11}
 \mathop{\textsc{E}}_{W\in\Sigma^m} \sum_{u\in\{\pm 1\}^{m}} \Big\| V_A \Tr_{\reg{B}}\big((\Id_A \otimes W_{\reg{B}}^u) \proj{\psi}_{\reg{AB}} (\Id_A \otimes W_{\reg{B}}^u)^\dagger\big) V_A^\dagger - \sum_{\lambda\in\{\pm\}} \Big( \bigotimes_{i=1}^m \frac{\sigma_{W_{i},\lambda}^{u_{i}}}{2}\Big)\otimes \tau_\lambda  \Big\|_1\\
& \!\!\!\!\!\!\!\!\!\!\!\!\!\!\!\!\!\!\!\!\!\!\!\!= O(\poly(\eps)).\nonumber
\end{align}
Moreover, players employing the honest strategy succeed with probability $1-e^{-\Omega(m)}$ in the test.  
\end{corollary}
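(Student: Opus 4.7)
The plan is to derive the corollary from Theorem~\ref{thm:clifford-ntest} applied to part~(a) of $\rigid(\Sigma,m)$, using the CHSH-style check in part~(b) to pin down the phase ambiguity represented by the auxiliary observable $\phase_F$. The state-closeness bound \eqref{eq:me-cor11} is directly inherited from \eqref{eq:psi-epr}; the substantive work is to convert the observable-level approximation \eqref{eq:clifford-ntest-close} into the post-measurement state identity \eqref{eq:states-cor11}, and to identify the two operators $\tau_\pm$ on $\hat{\reg{A}}$.

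Starting from the approximation $V_B W V_B^\dagger \approx \hat\tau_W(\Id_{\reg{B}'}\otimes \phase_W)$ supplied by Theorem~\ref{thm:clifford-ntest}, I would simultaneously diagonalize the two commuting aux-side observables $\Delta_Y$ and $\phase_F$ on $\hat{\reg{B}}$. On each of their four joint eigenspaces, every $\hat\tau_{W_i}$ restricts to $\pm\sigma_{W_i}$ or to its complex conjugate (with the sign and, in the $F,G$ case, an $F\leftrightarrow G$ swap arising from \eqref{eq:hat-tau-def}), and $\phase_W$ contributes a global sign that depends only on $|\{i:W_i\in\{F,G\}\}|$. Applying the partial-trace-over-EPR identity qubit by qubit together with the transposition rules $\sigma_Y^T=-\sigma_Y$, $\sigma_F^T=-\sigma_G$, $\sigma_G^T=-\sigma_F$, I would obtain that the post-measurement state on $\reg{A}'\otimes\hat{\reg{A}}$ conditioned on Bob's outcome $u$ equals, to $\poly(\eps)$ trace-norm error, a sum over the four blocks of $\bigotimes_i \sigma_{W_i,\pm}^{u_i}/2^m$ tensored with the corresponding block of $\Tr_{\hat{\reg{B}}}\proj{\aux}$, with an additional sign from $\phase_W$. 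Two of the four blocks (those on which $\phase_F=+\Id$) give exactly the two summands appearing in \eqref{eq:states-cor11}; the other two introduce an unwanted $\pm 1$ sign flip at every position with $W_i\in\{F,G\}$.

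The role of part~(b) is then to rule out the ``bad'' $\phase_F=-\Id$ blocks. For any position $i\in T$ with $W_i\in\{X,Y\}$, $W'_i\in\{F,G\}$, the joint distribution of $(a_i,b_i)$ reduces, block by block on $\ket{\aux}$, to the product of the EPR-pair correlation $\bra{\EPR}\sigma_{W_i}\otimes\sigma_{W'_i}\ket{\EPR}=\pm 1/\sqrt{2}$ with the $\pm 1$ contributed by $\phase_F$ on that block. The good blocks realize the honest CHSH-style correlations, while the bad blocks flip the sign of every such correlation and miss the threshold $\cos^2(\pi/8)-0.1$ by a constant. A Hoeffding argument over $T$ (which has size $\Omega(m)$ with overwhelming probability) forces the total weight of $\phase_F=-\Id$ on $\ket{\aux}$ to be $O(\poly(\eps))$. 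With this in hand, setting
\begin{equation*}
\tau_\pm \,=\, \Tr_{\hat{\reg{B}}}\Bigl(\tfrac{\Id\pm\Delta_Y^{\hat{\reg{A}}}}{2}\,\Pi^{+}_{\phase_F,\hat{\reg{A}}}\, \proj{\aux}\,\Pi^{+}_{\phase_F,\hat{\reg{A}}}\,\tfrac{\Id\pm\Delta_Y^{\hat{\reg{A}}}}{2}\Bigr),
\end{equation*}
where $\Pi^{+}_{\phase_F,\hat{\reg{A}}}$ is the projector onto the $+1$ eigenspace of $\phase_F$ on $\hat{\reg{A}}$, yields positive operators with orthogonal support satisfying $\Tr(\tau_+)+\Tr(\tau_-)=1-O(\poly(\eps))$, and \eqref{eq:states-cor11} follows by combining this with the block-decomposition argument above.

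The main technical obstacle is passing from the pointwise observable bound \eqref{eq:clifford-ntest-close} to the outcome-summed trace-norm statement \eqref{eq:states-cor11}: the latter involves a sum of $2^m$ terms, so a naive union bound would spoil the desired $m$-independence of the error. The resolution is that $\{W^u\}_u$ is a complete projective measurement, so a gentle-measurement style calculation lets one substitute $W$ by the approximant $\hat\tau_W(\Id\otimes\phase_W)$ coherently, with total error controlled by the single state-dependent distance furnished by Theorem~\ref{thm:clifford-ntest}. Completeness with probability $1-e^{-\Omega(m)}$ follows from the honest strategy on $m+1$ EPR pairs (taking $\phase_F=\Delta_Y=\Id$ on the extra qubit), combined with Hoeffding's inequality applied to the CHSH fraction in part~(b).
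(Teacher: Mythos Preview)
Your proposal is correct and follows essentially the same approach as the paper: apply Theorem~\ref{thm:clifford-ntest} to part~(a), decompose the auxiliary state according to the joint eigenspaces of $\Delta_Y$ and $\phase_F$, and use the CHSH-type check in part~(b) together with a Hoeffding bound to show that the two $\phase_F=-1$ blocks carry weight $\poly(\eps)$; the surviving blocks give the $\tau_\pm$. The one place where the paper is crisper than your description is the ``$2^m$ obstacle'': rather than invoking a gentle-measurement argument, the paper observes directly that for any $W$,
\[
\Es{c\in\{0,1\}^m}\big\|\big(W(c)-V_B^\dagger\tau_W(c)V_B\big)\ket{\psi}\big\|^2
\;=\;\sum_{u\in\{0,1\}^m}\big\|\big(W^u-V_B^\dagger\tau_W^u V_B\big)\ket{\psi}\big\|^2,
\]
a Parseval-type identity obtained by expanding $W(c)=\sum_u(-1)^{u\cdot c}W^u$ and averaging over $c$; this gives the outcome-summed bound immediately from~\eqref{eq:clifford-ntest-close}, after which Fuchs--van de Graaf and monotonicity under partial trace yield~\eqref{eq:states-cor11}.
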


\begin{proof}
From Theorem~\ref{thm:clifford-ntest} we get isometries $V_A$, $V_B$ and commuting observables $\Delta_Y$, $\phase_F$ on ${\mH}_{\hat{\reg{A}}}$ such that the conclusions of the theorem hold. Write the eigendecomposition $\Delta_Y = \Delta_Y^{+}-\Delta_Y^{-}$ and $\phase_F = \phase_F^{+}-\phase_F^{-}$. For $\lambda \in \{+,-\}^2$ let
$$\tau_{\lambda} = \Tr_{\hat{\reg{B}}}\big( \big(\Id_{\hat{\reg{A}}}\otimes \Delta_Y^{\lambda_1}\phase_F^{\lambda_2} \big)\proj{\aux}\big(\Id_{\hat{\reg{A}}}\otimes \Delta_Y^{\lambda_1}\phase_F^{\lambda_2}\big)\big).$$ 
Using that $\Delta_Y$ and $\phase_F$ commute and satisfy 
$$\Delta_Y \otimes \Delta_Y \ket{\aux} \approx \phase_F\otimes \phase_F \ket{\aux} \approx \ket{\aux}$$
 it follows that the (sub-normalized) densities $\tau_{\lambda}$ have (approximately) orthogonal support. In particular the provers' strategy in part (b) of the test is well-approximated by a mixture of four strategies, labeled by $(\lambda_Y,\lambda_F)\in\{\pm 1\}^2$, such that the strategy with label $(\lambda_Y,\lambda_F)$ uses the observables 
$$(X,Z,Y,F,G)\,=\,\Big(\sigma_X,\,\sigma_Z,\,\lambda_Y\sigma_Y,\,\frac{1}{\sqrt{2}}\lambda_F\big(-\sigma_X + \lambda_Y\sigma_Y\big),\,\frac{1}{\sqrt{2}}\lambda_F\big(\sigma_X + \lambda_Y\sigma_Y\big)\Big).$$ 
Among these four strategies, the two with $\lambda_F=-1$ fail part (b) of the test with probability exponentially close to $1$. Success in both parts of the test with probability at least $1-2\eps$ each thus implies
\begin{equation}\label{eq:tau-bound}
\Tr\big(\tau_{+-}\big)+\Tr\big(\tau_{--}\big) \,=\, \poly(\eps).
\end{equation} 
For $W\in \Sigma^{m}$ and $c\in \{0,1\}^m$ the observable $W(c) = \otimes_i W_{i}^{c_i}$ can be expanded in terms of a $2^m$-outcome projective measurement $\{W^{u}\}$ as 
$$W(c) = \sum_{u\in \{0,1\}^m}  (-1)^{u\cdot c} \,W^{u}.$$
Similarly, by definition the projective measurement associated with the commuting Pauli observables $\tau_W(c) = \otimes_i \tau_{W_{i}}^{c_i}$, $c\in\{0,1\}^m$, is 
$$\tau_W^{u} = \bigotimes_i \,\Big(\Es{c\in \{0,1\}^m} \,(-1)^{u \cdot c} \tau_W(c)\Big).$$ 
Thus,
\begin{align}
\Es{c\in\{0,1\}^m} & \big\| \Id_A \otimes \big(  W(c) - V_B^\dagger \tau_{W}(c) V_B\big)   \ket{\psi}_{\reg{AB}} \big\|^2\notag\\
&= \Es{c\in\{0,1\}^m}\Big\| \sum_{u} (-1)^{u\cdot c} \Id_A \otimes \big(  W^{u} - V_B^\dagger\tau_{W}^{u} V_B\big)   \ket{\psi}_{\reg{AB}} \Big\|^2    \notag\\
&= \sum_{u\in\{0,1\}^m}\big\|  \Id_A \otimes \big(  W^{u} - V_B^\dagger\tau_{W}^{u} V_B\big)   \ket{\psi}_{\reg{AB}} \big\|^2, \label{eq:ntest-close-b}  
\end{align}
where the third line is obtained by expanding the square and using
  $\Es{c\in\{0,1\}^m} (-1)^{v \cdot c} = 1$ if $v=0^m$, and $0$ otherwise. Using~\eqref{eq:clifford-ntest-close}, the expression in~\eqref{eq:ntest-close-b}, when averaged over all $W\in\Sigma^{ m}$, is bounded by $O(\poly(\eps))$. Using the Fuchs-van de Graaf inequality and the fact that trace distance cannot increase under tracing out, we get that the following is $O(\mathrm{poly}(\eps))$:
\begin{equation}\label{eq:ntest-close-c}
\Es{W\in\Sigma^{m}}\sum_{u} \Big\| V_A\Tr_\reg{B}\big( (\Id_\reg{A} \otimes W^{u})\proj{\psi}(\Id_\reg{A} \otimes W^{u})^\dagger \big) V_A^\dagger- \Tr_\reg{B}\big( (\Id_\reg{A} \otimes \tau_W^{u})\proj{\psi}(\Id_\reg{A} \otimes \tau_W^{u})^\dagger \big) \Big\|_1. %
\end{equation}
Using that $\tau_X = \sigma_X$, $\tau_Z = \sigma_Z$, and $\tau_Y = \sigma_Y \Delta_Y$, we deduce the post-measurement states for $u\in\{\pm 1\}$
$$ \tau_X^u = \sigma_X^u, \qquad \tau_Z^u = \sigma_Z^u,\qquad \tau_Y^u = \sigma_{Y}^u \otimes (\tau_{++}+\tau_{+-}) + \sigma_{Y}^{-u} \otimes (\tau_{-+}+\tau_{--}).$$ 
Similarly, from $\tau_F = (-\tau_X + \tau_Y)\phase_F$ and $\tau_G = (\tau_X + \tau_Y)\phase_F$ we get that e.g. the $+1$ eigenspace of $\tau_F$ is the combination of:
\begin{itemize}[nolistsep]
\item The simultaneous $+1$ eigenspace of $\sigma_F = (-\sigma_X+\sigma_Y)/\sqrt{2}$, $+1$ eigenspace of $\Delta_Y$, and $+1$ eigenspace of $\phase_F$;
\item The simultaneous $-1$ eigenspace of $\sigma_F$, $+1$ eigenspace of $\Delta_Y$, and $-1$ eigenspace of $\phase_F$;
\item The simultaneous $-1$ eigenspace of $\sigma_G = -(-\sigma_X-\sigma_Y)/\sqrt{2}$, $-1$ eigenspace of $\Delta_Y$, and $+1$ eigenspace of $\phase_F$;
\item The simultaneous $+1$ eigenspace of $\sigma_G$, $-1$ eigenspace of $\Delta_Y$, and $-1$ eigenspace of $\phase_F$.
\end{itemize}

Proceeding similarly with $\tau_G$, we obtain 
\begin{align*}
 \tau_F^u &= \sigma_F^u \otimes \tau_{++} + \sigma_F^{-u} \otimes \tau_{+-} + \sigma_G^{-u} \otimes \tau_{-+} + \sigma_G^u \otimes \tau_{--},\\
 \tau_G^u &= \sigma_G^u \otimes \tau_{++} + \sigma_G^{-u} \otimes \tau_{+-} + \sigma_F^{-u} \otimes \tau_{-+} + \sigma_F^u \otimes \tau_{--}.
\end{align*}

Starting from~\eqref{eq:ntest-close-c} and using~\eqref{eq:psi-epr} we obtain 
\begin{align*}\Es{W\in\Sigma^{m}}\sum_{u} \Big\|& V_A\Tr_\reg{B}\big( (\Id_\reg{A} \otimes W^{u})\proj{\psi}(\Id_\reg{A} \otimes W^{u})^\dagger \big) V_A^\dagger \\
&\qquad- \Tr_\reg{B}\big( (\Id_\reg{A} \otimes \tau_W^{u})\ket{\EPR}\bra{\EPR}^{\otimes m} \otimes \ket{\aux}\bra{\aux}_{\hat{\reg{A}}\hat{\reg{B}}}(\Id_\reg{A} \otimes \tau_W^{u})^\dagger \big) \Big\|_1 = O(\poly(\eps)).
\end{align*}
Since $\Tr_{\reg{B}}( \Id \otimes B \proj{\EPR}_{\reg{AB}} \Id\otimes B^\dagger) = (B^\dagger B)^T/2$ for any single-qubit operator $B$, to conclude the bound claimed in the theorem it only remains to apply the calculations above and use~\eqref{eq:tau-bound} to eliminate the contribution of $\tau_{+-}$ and $\tau_{--}$; the factor $\frac{1}{2}$ comes from the reduced density matrix of an EPR pair.
\end{proof}

\subsection{Tomography}
\label{subsec:tomography}
\label{sec: TOM test}

Theorem~\ref{thm:clifford-ntest} and Corollary~\ref{cor:clifford-rigid} show that success in test $\rigid(\Sigma,m)$ gives us control over the players' observables and post-measurement states in the test. This allows us to use one of the players to perform some kind of limited tomography (limited to post-measurement states obtained from measurements in $\Sigma$), that will be useful for our analysis of the Dog-Walker Protocol from Section~\ref{sec:dog-walker}.

Let $1\leq m'\leq m$ and consider the test $\tom(\Sigma,m',m)$ described in Figure~\ref{fig:tomography-test}. In this test, one player is sent a question $W\in\Sigma^{m}$ chosen uniformly at random. Assuming the players are also successful in the test $\rigid(\Sigma,m)$ (which can be checked independently, with some probability), using that the input distribution $\mu$ in $\rigid(\Sigma,m)$ assigns weight at least $|\Sigma|^{-m}/2$ to any $W'\in \Sigma^{m}$, from Corollary~\ref{cor:clifford-rigid} it follows that the second player's post-measurement state is close to a state consistent with the first player's reported outcomes. Now suppose the second player is sent a random subset $S\subseteq [m]$ of size $|S|=m'$, and is allowed to report an arbitrary string $W'\in \Sigma^{m'}$, together with outcomes $u$. Suppose also that for each $i\in S$, we require that $u_i=a_i$ whenever $W'_i=W_i$. Since the latter condition is satisfied by a constant fraction of $i\in\{1,\ldots,m'\}$, irrespective of $W'$, with very high probability, it follows that the only possibility for the second player to satisfy the condition is to actually measure his qubits precisely in the basis that he indicates. This allows us to check that a player performs the measurement that he claims, even if the player has the choice of which measurement to report. 

\begin{figure}[H]
\rule[1ex]{\textwidth}{0.5pt}\\
Tomography Test $\tom(\Sigma,m',m)$: 
\begin{itemize}
    \item Input: Integer $1\leq m'\leq m$ and a subset $\Sigma = \{X,Y,Z,F,G\}$ of the single-qubit Clifford group. 
    \item Test: Let $S\subseteq [m]$ be chosen uniformly at random among all sets of size $|S|=m'$. Select $W\in\Sigma^{m}$ uniformly at random. Send $W$ to the first player, and the set $S$ to the second. Receive $a$ from the first player, and $W'\in\Sigma^{m'}$ and $u$ from the second. Accept only if $a_i=u_i$ whenever $i\in S$ and $W_i=W'_i$. 
\end{itemize}
\rule[2ex]{\textwidth}{0.5pt}\vspace{-.5cm}
\caption{The $m$-qubit tomography test $\tom(\Sigma,m',m)$.}
\label{fig:tomography-test}
\end{figure} 

\begin{corollary}\label{cor:clifford-rigid-adaptive}
Let $\eps>0$ and $1\leq m'\leq m$ integer. Suppose a strategy for the players
  succeeds with probability $1-\eps$ in both tests $\rigid(\Sigma,m)$
  (Figure~\ref{fig:rigid}) and $\tom(\Sigma,m',m)$
  (Figure~\ref{fig:tomography-test}). Let $V_A,V_B$ be the isometries specified
  in Corollary~\ref{cor:clifford-rigid}. Let $\{Q^{W',u}\}$ be the projective
  measurement applied by the second player in $\tom(\Sigma,m',m)$. Then there exists a distribution $q$ on $\Sigma^{m'} \times \{\pm \}$ such that 
\begin{align*}
 \sum_{W'\in\Sigma^{m'}} \sum_{u\in\{\pm1\}^{m'}} &\Big\| \Tr_{\reg{A}\hat{\reg{B}}} \big((\Id_A \otimes V_B Q^{W',u} )\proj{\psi}_{\reg{AB}} (\Id_A \otimes V_B Q^{W',u})^\dagger\big)\\
&\hskip3cm - \sum_{\lambda\in\{\pm\}}  q(W',\lambda)  \Big( \bigotimes_{i=1}^{m'} \frac{1}{2}\sigma_{W'_i,\lambda}^{u_i}\Big) \Big\|_1 = O(\poly(\eps)),
\end{align*}
where the notation is the same as in Corollary~\ref{cor:clifford-rigid}. 

Moreover, players employing the honest strategy succeed with probability $1$ in
  tomography part of the test. 
\end{corollary}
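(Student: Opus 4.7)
The plan is to use the success of $\rigid(\Sigma,m)$ to pin down the global state and the structure of the first player's measurements via Corollary~\ref{cor:clifford-rigid}, and then to use the tomography constraint to force the second player's measurement $\{Q^{W',u}\}$ to be essentially the projective measurement in the basis the player chooses to report.

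First I would invoke Corollary~\ref{cor:clifford-rigid} applied to the $\rigid(\Sigma,m)$ success to obtain isometries $V_A,V_B$, a state $\ket{\aux}$ on $\hat{\reg{A}}\hat{\reg{B}}$, and commuting observables $\Delta_Y,\phase_F$ (with their mirror copies on $\hat{\reg{B}}$) whose joint eigenspaces decompose $\ket{\aux}$ into branches indexed by $\lambda\in\{\pm\}$ of weight $\Tr(\tau_\lambda)$. Because the $\rigid$ test is symmetric in the two players, the same corollary applies with the roles of $A$ and $B$ swapped: for uniformly random $W\in\Sigma^m$ sent to the \emph{first} player, the unnormalized post-measurement state on the second player's side (after $V_B$) satisfies, on average over $W$ and for each outcome $a$,
\begin{equation*}
V_B\,\Tr_{\reg{A}}\!\big((W_{\reg{A}}^a\otimes\Id)\proj{\psi}(W_{\reg{A}}^a\otimes\Id)^\dagger\big)V_B^\dagger \;\approx\; \sum_{\lambda\in\{\pm\}}\Big(\bigotimes_{i=1}^{m}\tfrac{1}{2}\sigma_{W_i,\lambda}^{a_i}\Big)\otimes\tau'_\lambda
\end{equation*}
in trace norm, up to $\poly(\eps)$ error, where $\tau'_\lambda$ is supported in the $\lambda$ branch on $\hat{\reg{B}}$.

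Next I would define the candidate distribution $q$. Letting $\bar Q^{W'}=\sum_u Q^{W',u}$ be the POVM element governing the choice of reported basis, I set $q(W',\lambda)$ to be the joint probability, under $\ket{\psi}$, that the second player outputs basis $W'$ and the $\hat{\reg{B}}$ ancilla lies in the $\lambda$ branch; concretely $q(W',\lambda)=\bra{\psi}(\Id_{\reg{A}}\otimes V_B^\dagger(\Id_{\reg{B}'}\otimes P_\lambda)V_B\,\bar Q^{W'})\ket{\psi}$, where $P_\lambda$ is the projector onto the $\lambda$ branch on $\hat{\reg{B}}$. Because $\Delta_Y$ and $\phase_F$ approximately stabilize $\ket{\aux}$ across the cut (by Lemma~\ref{lem:cliff-conj}), this branch decomposition is well-defined up to $\poly(\eps)$ error, and it partitions the players' joint strategy into two well-separated pieces.

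The heart of the argument is to show that, conditioned on $(W',u,\lambda)$, the reduced state on the $S$-indexed qubits of $\reg{B}'$ is close to $\bigotimes_{i=1}^{m'}\tfrac{1}{2}\sigma_{W'_i,\lambda}^{u_i}$. For this I would use success in $\tom(\Sigma,m',m)$: for each $i\in S$ we have $W_i=W'_i$ with probability $1/|\Sigma|$ over the first player's random question $W$, and on such positions the test demands $a_i=u_i$. Combined with the characterization of $a_i$ obtained above — that the first player's outcome distribution on position $i$ equals that of measuring $\sigma_{W_i,\lambda}$ on a maximally entangled qubit, with the reduced second-player state being $\sigma_{W_i,\lambda}^{a_i}/2$ — the consistency constraint forces, on branch $\lambda$, the $i$-th post-measurement projection of $Q^{W',u}$ to agree with the projector onto the $u_i$ eigenstate of $\sigma_{W'_i,\lambda}$. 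Averaging over $W$ (so that each $i\in S$ is ``active'' with probability $1/|\Sigma|$ independently) and using linearity and the tensor-product structure certified by the Pauli braiding piece of $\rigid$ (Lemma~\ref{lem:xyz-rigid}) promotes the per-position equality to the full tensor product on $S$, giving the claimed bound.

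The main obstacle will be carefully composing error bounds across the $m'$ positions in $S$ and formalizing the ``branch conditioning'' step: although $\Delta_Y$ and $\phase_F$ need not exactly commute with the measurement operators $Q^{W',u}$, the $\poly(\eps)$ approximate commutation from Lemma~\ref{lem:cliff-conj} allows one to treat them as if they did, at the cost of another $\poly(\eps)$ error. A secondary technical point is that the tensor identity on the RHS is supported on $m'$ positions, so the LHS should be interpreted as also traced over positions outside $S$; since $Q^{W',u}$ depends on $S$ and acts non-trivially only on those positions up to $\poly(\eps)$ error by the rigidity guarantee, this interpretation is consistent. Completeness of the honest strategy follows immediately: the second player measures the $S$-qubits in the $Z$ basis and reports $W'=Z^{m'}$, so when $W_i=Z$ (probability $1/|\Sigma|$) the EPR correlations ensure $a_i=u_i$, and the test accepts with probability $1$.
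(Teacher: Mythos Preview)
Your approach is essentially the same as the paper's: use Corollary~\ref{cor:clifford-rigid} to characterize the honest player's measurements and the resulting post-measurement states, then use the $\tom$ acceptance condition (each $i\in S$ is checked with probability $1/|\Sigma|$) to force the second player's reported basis to match what he actually measured. The paper condenses the ``promotion to tensor product'' step into a single inequality of the form
\[
\Es{S}\sum_{W',\lambda,u}\Tr(\tau_\lambda)\,\Tr\Big(\prod_{i\in S}\Big(\tfrac{|\Sigma|-1}{|\Sigma|}\Id_i+\tfrac{1}{|\Sigma|}\sigma_{W'_i,\lambda}^{u_i}\Big)\,\rho_{\reg{A'},\lambda}^{W',u}\Big)=1-O(\poly(\eps)),
\]
whose single top eigenvector is $\otimes_{i\in S}\sigma_{W'_i,\lambda}^{u_i}$; no separate appeal to Lemma~\ref{lem:xyz-rigid} is needed for this step.

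There is one concrete issue in your plan. You define the branch conditioning using projectors $P_\lambda$ on $\hat{\reg{B}}$ and then propose to handle the fact that $P_\lambda$ need not commute with $V_BQ^{W',u}V_B^\dagger$ by invoking ``approximate commutation from Lemma~\ref{lem:cliff-conj}''. That lemma gives commutation of $\Delta_Y$ and $\phase_R$ with the tested Pauli observables, not with an arbitrary measurement $\{Q^{W',u}\}$ chosen freely by the second player; there is no reason for $Q^{W',u}$ to respect the $\hat{\reg{B}}$ branch structure at all. The paper sidesteps this entirely by conditioning on the branch via $\tau_\lambda$ on $\hat{\reg{A}}$, which lives on the \emph{first} player's ancilla and is therefore in tensor product with $\mH_\reg{B}$, hence commutes with $Q^{W',u}$ exactly. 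The correlation $\Delta_Y\otimes\Delta_Y\ket{\aux}\approx\ket{\aux}$ (and similarly for $\phase_F$) then lets you transport the branch label across the cut at the end, at $\poly(\eps)$ cost. Making this one change fixes the obstacle you identified and aligns your argument with the paper's.
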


\begin{proof}
Success in $\rigid(\Sigma,m)$ allows us to apply Corollary~\ref{cor:clifford-rigid}. For any $(W',u)$ let $\rho_{\reg{A'},\lambda}^{W',u}$ be the post-measurement state on the first player's space, conditioned on the second player's answer  in test $\tom(\Sigma,m',m)$ being $(W',u)$, after application of the isometry $V_A$, and conditioned on $\mH_{\hat{\reg{A}}}$ being in a state that lies in the support of $\tau_\lambda$ (note this makes sense since $\tau_+$, $\tau_-$ have orthogonal support). 
Using that for any $i\in S$, $W_i=W'_i$ with constant probability $|\Sigma|^{-1}$, 
it follows from~\eqref{eq:me-cor11} and~\eqref{eq:states-cor11} in Corollary~\ref{cor:clifford-rigid} that success in $\tom(\Sigma,m)$ implies the condition
\begin{equation}\label{eq:tom-2}
\Es{\substack{S\subseteq \{1,\ldots,m\}\\|S|=m'}}\, \sum_{W',\lambda,u}\,\Tr(\tau_\lambda)  \,\Tr\Big(\Big( \frac{|\Sigma|-1}{|\Sigma|}\Id + \frac{1}{|\Sigma|}\otimes_{i\in S}\sigma_{W'_i,\lambda}^{u_i} \Big) \rho_{\reg{A'},\lambda}^{W',u}\Big)  = 1- O(\poly(\eps)). 
\end{equation}
Eq~\eqref{eq:tom-2} concludes the proof, for some distribution $q(W',\lambda) \approx \sum_u \Tr(\rho_{\reg{A'},\lambda}^{W',u})\Tr(\tau_\lambda)$ (the approximation is due to the fact that the latter expression only specifies a distribution up to error $O(\poly(\eps))$.
\end{proof}

\section{The Verifier-on-a-Leash Protocol}
\label{sec:leash}

\subsection{Protocol and statement of results}

The Verifier-on-a-Leash Protocol (or ``Leash Protocol'' for short) involves a classical verifier and two quantum provers.
The idea behind the Leash Protocol is to have a first prover, nicknamed $\pv$ for Prover $V$, carry out the quantum part of $V_{EPR}$ from Broadbent's EPR Protocol by implementing the procedure $V_{EPR}^r$. (See Section~\ref{sec:EPR-protocol} for a summary of the protocol and a description of $V_{EPR}$. Throughout this section we assume that the circuit $Q$ provided as input is compiled in the format described in Section~\ref{sec:EPR-protocol}.). A second prover, nicknamed $\pp$ for Prover $P$, will play the part of the prover $P_{EPR}$. Unlike in the EPR Protocol, the interaction with $\pv$ (i.e. running $V_{EPR}^r$) will take place {first}, and $\pv$ will be asked to perform {random} measurements from the set $\Sigma = \{X,Y,Z,F,G\}$. The values $\vec{z}$, rather than being chosen at random, will be chosen based on the corresponding choice of observable. We let $n$ be the number of input bits and $t$ number of $\sf T$ gates in $Q$. 

The protocol is divided into two sub-games; which game is played is chosen by the verifier by flipping a biased coin with probability $(p_r,p_d=1-p_r)$.
\begin{itemize}[nolistsep]
\item The first game is a sequential version of the rigidity game $\rigid(\Sigma,m)$ described in Figure~\ref{fig:consistency-game}. This aims to enforce that $\pv$ performs precisely the right measurements;

\item The second game is the delegation game, described in Figures \ref{fig:leash-protocol-V}, \ref{fig:leash-protocol-PV}, and \ref{fig:leash-protocol-PP}, and whose structure is summarized in Figure~\ref{fig:full-picture}. Here the verifier guides $\pp$ through the computation in a similar way as in the EPR Protocol.
\end{itemize}

We call the resulting protocol the Leash Protocol with parameters $(p_r,p_d)$. In both sub-games the parameter $m=\Theta(n+t)$ is chosen large enough so that with probability close to $1$ each symbol in $\Sigma$ appears in a random $W\in \Sigma^m$ at least $n+t$ times. It is important that $\pv$ is not able to tell which kind of game is being played. Notice also that in order to ensure blindness, we will require that the interaction with $\pv$ in the delegation game is sequential (more details on this are found in Section \ref{sec:leash-blind}). In order for the two sub-games to be indistinguishable, we also require that the rigidity game $\rigid(\Sigma,m)$ be played sequentially (i.e. certain subsets of questions and answers are exchanged sequentially, but the acceptance condition in the test is the same). Note, importantly, that the rigidity guarantees of $\rigid(\Sigma,m)$ hold verbatim when the game is played sequentially, since this only reduces the number of ways that the provers can cheat. The following theorem states the guarantees of the Leash Protocol.

\begin{theorem}\label{thm:leash}
There are constants $p_r,p_d=1-p_r$, and $\Delta>0$ such that the following hold of the Verifier-on-a-Leash Protocol with parameters $(p_r,p_d)$, when executed on an input $(Q,\ket{\vec{x}})$.
\begin{itemize}
\item \emph{(Completeness:)} Suppose that $\|\Pi_0 Q\ket{\vec{x}}\|^2 \geq 2/3$. Then there is a strategy for $\pv$ and $\pp$ that is accepted with probability at least $p_{\mathrm{compl}} = p_r(1-e^{-\Omega(n+t)})+8p_d/9$. 
\item \emph{(Soundness:)} Suppose that $\|\Pi_0 Q\ket{\vec{x}}\|^2 \leq 1/3$. Then any strategy for $\pv$ and $\pp$ is accepted with probability at most $p_{\mathrm{sound}} = p_{\mathrm{compl}} - \Delta$. 
\end{itemize}
Further, the protocol leaks no information about $\vec{x}$ to either prover individually, aside from an upper bound on the length of $\vec{x}$. 
\end{theorem}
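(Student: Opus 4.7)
The plan is to establish completeness, soundness, and blindness in turn by combining the rigidity guarantee of Corollary~\ref{cor:clifford-rigid} with the analysis of Broadbent's EPR protocol (Theorems~\ref{thm:EPR-correctness} and \ref{thm:EPR-soundness}).

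Completeness is by direct construction. The honest provers share $m+1$ EPR pairs. In the rigidity sub-game, they execute the honest $\rigid(\Sigma,m)$ strategy, which passes with probability $1-e^{-\Omega(m)}=1-e^{-\Omega(n+t)}$ by the final clause of Corollary~\ref{cor:clifford-rigid}. In the delegation sub-game, the verifier samples $W\in\Sigma^m$ and interprets it so that its labels simultaneously fix the round type $r$, the bits $\vec z$ that $V_{EPR}$ would have sent to $P_{EPR}$, and the observables that $V_{EPR}^r$ would have applied at each location (this is possible because $\Sigma$ is precisely the set of observables appearing in Table~\ref{tab:Oy}). With $\pv$ measuring these observables on his halves of the first $n+t$ EPR pairs and $\pp$ running $P_{EPR}$, Theorem~\ref{thm:EPR-correctness} with $\delta=1/3$ and $p=1/3$ yields delegation acceptance $8/9$, giving $p_{\mathrm{compl}} = p_r(1-e^{-\Omega(n+t)}) + 8p_d/9$.

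For soundness, the key observation is that $\pv$'s marginal question distribution is identical in both sub-games---uniform over $\Sigma^m$, augmented with the Bell-basis queries of $\rigid$---so $\pv$'s strategy cannot depend on which sub-game is being played. If a cheating strategy is accepted with probability at least $p_{\mathrm{compl}}-\Delta$, then it passes the rigidity sub-game with probability at least $1-\eps$ for some $\eps=O(\Delta/p_r)$. Corollary~\ref{cor:clifford-rigid} then supplies isometries $V_A$, $V_B$ and densities $\tau_\lambda$ such that $\pv$'s post-measurement states on $\pp$'s side, averaged over $W\in\Sigma^m$, are $\poly(\eps)$-close in trace distance to those that the honest $V_{EPR}^r$ would prepare, up to the residual $\Delta_Y$ conjugation. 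Since the latter amounts to running the protocol on the complex conjugate circuit $\overline{Q}$, which has identical acceptance statistics on classical inputs, this ambiguity does not affect the soundness analysis.

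A hybrid argument then reduces soundness to Broadbent's. Replacing $\pv$'s actual strategy by the isometric honest version shifts the acceptance probability in the delegation sub-game by at most $\poly(\eps)$, by monotonicity of trace distance under the verifier's subsequent classical post-processing (key updates from Table~\ref{tab:EPR-key-updates} and the final accept/reject check). In the resulting hybrid, $\pp$ faces a genuine instance of Broadbent's EPR protocol with $\delta=1/3$, so Theorem~\ref{thm:EPR-soundness} with $p=1/3$ bounds the delegation-sub-game acceptance by $7/9+\poly(\eps)$. Combined with $p_d \cdot 1/9 > \poly(\eps)$ for a suitable choice of $\Delta$, this yields a strictly positive gap $\Delta$ independent of the input size. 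The main technical hurdle is verifying that a single application of Corollary~\ref{cor:clifford-rigid} with its average-over-$W$ guarantee suffices to control $\pv$'s behavior across the entire delegation transcript; this works because $\pv$'s only interaction in the delegation sub-game is the single $W$-query together with its outcome string, and the verifier's subsequent processing is purely classical, so the trace-distance replacement propagates without amplification.

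Blindness follows by inspection. $\pv$'s view consists entirely of the query $W$ (uniform over $\Sigma^m$) and the occasional Bell-basis queries, whose joint distribution is independent of both $\vec x$ and the sub-game label by design of the indistinguishability between the two sub-games. $\pp$'s view consists of the classical messages derived from $W$ (including $\vec z$) together with a quantum one-time-padded input whose keys come from $\pv$'s measurement outcomes on the first $n$ EPR halves; since the pad is uniform, $\pp$'s marginal view is independent of $\vec x$ apart from the upper bound on its length implicit in $m$.
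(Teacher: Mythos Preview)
Your completeness and soundness sketches are broadly aligned with the paper's argument (the paper also proceeds by first reducing to an honest $\pv$ via the rigidity theorem, then invoking Theorem~\ref{thm:EPR-soundness}), though your description of $\pv$'s interaction in the delegation game as a ``single $W$-query'' is inaccurate: it is sequential over $d+1$ rounds, which is precisely why the rigidity game must also be run sequentially. This does not break your hybrid, but it does mean you need to observe, as the paper does, that the honest replacement strategy can itself be implemented round by round.

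The genuine gap is your blindness argument for $\pp$. You assert that $\pp$'s classical messages $\vec z$ are ``derived from $W$'' and that the quantum one-time pad makes his view independent of~$\vec x$. This is wrong as stated: in a computation round the verifier computes $z_i = a_j + 1_{W_i=F} + c_i$, where $a_j$ is the running ${\sf X}$-key obtained from the update rules of Table~\ref{tab:EPR-key-updates} starting from $\vec a = \vec e_N + \vec x$. So $z_i$ depends on $\vec x$ through the key, and the one-time-pad argument on the quantum state does not by itself hide this dependence in the classical transcript. Worse, blindness must hold for \emph{arbitrary} $\pv^*$, and a malicious $\pv^*$ could correlate $\pp^*$'s post-measurement state with $W$, potentially allowing $\pp^*$ to recover $1_{W_i=F}$ and hence $a_j$ from the combination of his state and $z_i$.

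The paper's proof (Lemma~\ref{lem:blindness}) handles this with a non-obvious symmetrization: it constructs an \emph{alternative} protocol in which the verifier first interacts with $\pp$ using uniformly random $\vec z$, and only afterwards interacts with $\pv$, choosing each $W_{B_\ell}$ adaptively so that the constraint $z_i = a_j + 1_{W_i=F} + c_i$ is satisfied. The crux is that because $T_\ell$ is an unknown random subset of $B_\ell$, this adaptive choice leaves the marginal distribution of $W_{B_\ell}$ unchanged. One then argues, via an EPR-pair/post-selection trick, that the original and alternative protocols produce identical joint final states; since in the alternative $\pp$'s inputs are manifestly independent of $\vec x$, so are they in the original. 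Your ``by inspection'' claim misses this argument entirely, and without it the blindness clause of the theorem is unproven.
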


The proof of the completeness property is given in Lemma~\ref{lem:leash-completeness}. The soundness property is shown in Lemma~\ref{lem:leash-soundness}. Blindness is established in Section~\ref{sec:leash-blind}. 
We first give a detailed description of the protocol. We start by describing the delegation game, specified in Figures \ref{fig:leash-protocol-V}, \ref{fig:leash-protocol-PV} and \ref{fig:leash-protocol-PP}, which describe the protocol from the verifier's view, an honest $\pv$'s view, and an honest $\pp$'s view respectively. This will motivate the need for a sequential version of the game $\rigid(\Sigma,m)$, described in Figure \ref{fig:consistency-game}. As we will show, the rigidity game forces $\pv$ to behave honestly. Thus, for the purpose of exposition, we assume for now that $\pv$ behaves honestly, which results in the joint behavior of $\pv$ and $\ver$ being similar to that of the verifier $V_{EPR}$ in the EPR Protocol. 

\begin{wrapfigure}{L}{.4\textwidth}
\centering
\resizebox{1.0\textwidth}{!}{%
\begin{tikzpicture}
\node at (1,4.25) {Verifer};
\draw (0,-.75) rectangle (2,9.25); 

\node at (6,6.875) {Prover $V$};
\draw (5,9.25) rectangle (7,4.5);

\node at (6,1.625) {Prover $P$};
\draw (5,4) rectangle (7,-.75);

\draw[->] (2,9)--(5,9);
\node at (3.5,9.25) {$A, W_{A}\in\Sigma^{|A|}$};
\draw[->] (5,8.25)--(2,8.25);
\node at (3.5,8.5) {$\vec{e}_{A}\in\{0,1\}^{|A|}$};
\draw[->] (2,7.5)--(5,7.5);
\node at (3.5,7.75) {$B_1,W_{B_1}\in\Sigma^{|B_1|}$};
\draw[->] (5,6.75)--(2,6.75);
\node at (3.5,7) {$\vec{e}_{B_1}\in\{0,1\}^{|B_1|}$};
\node at (3.5,6.5) {$\vdots$};
\draw[->] (2,5.5)--(5,5.5);
\node at (3.5,5.75) {$B_d,W_{B_d}\in \Sigma^{|B_d|}$};
\draw[->] (5,4.75)--(2,4.75);
\node at (3.5,5) {$\vec{e}_{B_d}\in\{0,1\}^{|B_d|}$};

\draw[->] (2,3.75)--(5,3.75);
\node at (3.5,4) {$T,N\subset [m]$};
\draw[->] (5,3)--(2,3);
\node at (3.5,3.25) {$\vec{c}_{T_1}\in \{0,1\}^{T_1}$};
\draw[->] (2,2.25)--(5,2.25);
\node at (3.5,2.5) {$\vec{z}_{T_1}\in\{0,1\}^{T_1}$};
\node at (3.5,2) {$\vdots$};
\draw[->] (5,1)--(2,1);
\node at (3.5,1.25) {$\vec{c}_{T_\ell}\in \{0,1\}^{T_\ell}$};
\draw[->] (2,0.25)--(5,0.25);
\node at (3.5,0.5) {$\vec{z}_{T_\ell}\in\{0,1\}^{T_\ell}$};
\draw[->] (5,-.5)--(2,-.5);
\node at (3.5,-.25) {$c_f\in\{0,1\}$};

\end{tikzpicture}
  }
\caption{Structure of the delegation game.}\label{fig:full-picture}
\end{wrapfigure}

From the rigidity game we may also assume that $\pv$ and $\pp$ share $m$ EPR pairs, labeled $\{1,\ldots,m\}$, for $m=\Theta(n+t)$. We will assume that the circuit $Q$ is broken into $d$ layers, $Q=Q_1\dots Q_d$, such that in every $Q_\ell$, each wire has at most one $\sf T$ gate applied to it, after which no other gates are applied to that wire. We will suppose the $\sf T$ gates are indexed from $1$ to $t$, in order of layer.
 
The protocol begins with an interaction between the verifier and $\pv$. The verifier selects a uniformly random partition $A,B_1,\dots,B_d$ of $\{1,\dots,m\}$, with $|A|=\Theta(n)$, and for every $\ell\in\{1,\dots,d\}$, $|B_{\ell}|=\Theta(t_\ell)$, where $t_{\ell}$ is the number of $\sf T$ gates in $Q_\ell$. The verifier also selects a uniformly random $W\in\Sigma^m$, and partitions it into substrings $W_A$ and $W_{B_1},\ldots,W_{B_d}$, meant to contain observables to initialize the computation qubits and auxiliary qubits for each layer of ${\sf T}$ gates respectively. The verifier instructs $\pv$ to measure his halves of the EPR pairs using the observables $W_A$ first, and then $W_{B_1},\ldots,W_{B_d}$, sequentially. Upon being instructed to measure a set of observables, $\pv$ measures the corresponding half-EPR pairs and returns the results $\vec{e}$ to the verifier. Breaking this interaction into multiple rounds is meant to enforce that, for example, the results output by $\pv$ upon receiving $W_{B_{\ell}}$, which we call $\vec{e}_{B_{\ell}}$, cannot depend on the choice of observables $W_{B_{\ell+1}}$. This is required for blindness. 

Once the interaction with $\pv$ has been completed, as in the EPR Protocol, $\ver$ selects one of three round types: computation $(r=0)$, $X$-test ($r=1$), and $Z$-test $(r=2)$. 
The verifier selects a subset $N\subset A$ of size $n$ of qubits to play the role of inputs to the computation. These are chosen from the subset of $A$ corresponding to wires that $\pv$ has measured in the appropriate observable for the round type (see Table~\ref{tab:index-choices}). For example, in an $X$-test round, $\pv$'s EPR halves corresponding to input wires should be measured in the $Z$ basis so that $\pp$ is left with a one-time pad of the state $\ket{0}^{\otimes n}$, so in an $X$-test round, the computation wires are chosen from the set $\{i\in A:W_i=Z\}$. The input wires $N$ are labeled by ${\cal X}_1,\dots,{\cal X}_n$.

The verifier also chooses subsets $T_\ell = T_\ell^0 \cup T_\ell^1 \subset B_\ell$ of sizes $t_{\ell,0}$ and $t_{\ell,1} = t_\ell-t_{\ell,0}$ respectively, where $t_{\ell,0}$ is the number of odd $\sf T$ gates in the $\ell$-th layer of $Q$ (recall the definition of even and odd $\sf T$ gates from Section~\ref{sec:EPR-protocol}). The wires $T^0_\ell$ and $T^1_\ell$ will play the role of auxiliary states used to perform $\sf T$ gates from the $\ell$-th layer. They are chosen from those wires from $B_\ell$ whose corresponding EPR halves have been measured in a correct basis, depending on the round type.  For example, in an $X$-test round, the auxiliaries corresponding to odd $\sf T$ gates should be prepared by measuring the corresponding EPR half in either the $X$ or $Y$ basis (see Table \ref{tab:Oy}), so in an $X$-test round, $T_\ell^1$ is chosen from $\{i\in B_\ell:\,W_i\in \{X,Y\}\}$ (see Table \ref{tab:index-choices}). We will let ${\cal T}_1,\dots,{\cal T}_t$ label those EPR pairs that will be used as auxiliary states. In particular, the system ${\cal T}_i$ will be used for the $i$-th $\sf T$ gate in the circuit, so if the $i$-th $\sf T$ gate is even, ${\cal T}_i$ should be chosen from $T^0=\cup_\ell T_\ell^0$, and otherwise it should be chosen from $T_1=\cup_\ell T_\ell^1$. The verifier sends labels ${\cal T}_1,\dots,{\cal T}_t$ and ${\cal X}_1,\dots,{\cal X}_n$ to $\pp$, who will act as $P_{EPR}$ on the $n+t$ qubits specified by these labels.

Just as in the EPR Protocol, the input on $\pp$'s system specified by ${\cal X}_1,\dots,{\cal X}_n$ is a quantum one-time pad of either $\ket{\vec{x}}$, $\ket{0}^{\otimes n}$, or $\ket{+}^{\otimes n}$, depending on the round type, with $\ver$ holding the keys (determined by~$\vec{e}$). Throughout the interaction, $\pp$ always maintains a one-time pad of the current state of the computation, with the verifier in possession of the one-time-pad keys. The verifier updates her keys as the computation is carried out, using the rules in Table \ref{tab:EPR-key-updates}.

From $\pp$'s perspective, the protocol works just as the EPR Protocol, except that he does not receive the bit $z_i$ needed to implement the $\sf T$ gadget until \emph{during} the $\sf T$ gadget, after he has sent $\ver$ his measurement result $c_i$ (see Figure \ref{fig:leash-T-gadget}).

To perform the $i$-th $\sf T$ gate on the $j$-th wire, $\pp$ performs the circuit shown in Figure \ref{fig:leash-T-gadget}. As Figure~\ref{fig:leash-T-gadget} shows, $\pv$ has already applied the observable specified by $\ver$ to his half of the EPR pair. The $\sf T$ gadget requires interaction with the verifier, to compute the bit $z_i$, which depends on the measured $c_i$, the value $W_i$, and one-time-pad key $a_j$, however, this interaction can be done in parallel for $\sf T$ gates in the same layer.

\begin{figure}[H]
  \resizebox{0.8\textwidth}{!}{
  \begin{tikzpicture}

\draw (0,4) -- (2,4) -- (3,3) -- (4,3);
\filldraw[white] (2.5,3.5) circle (.1);
\draw (-.75,2.25)--(0,3)--(2,3)--(3,4)--(8,4);
\draw (-.75,2.25)--(0,1.5)--(1.5,1.5);

\node at (.5,4.25) {${\cal X}_{j}$};
\node at (7.5,4.25) {${\cal X}_{j}$};
\node at (.25,3.25) {${\cal T}_{i}$};

\draw (1.5,4) circle (.15);
\filldraw(1.5,3) circle (.075);
\draw (1.5,4.15)--(1.5,3);

\draw (4,3.025)--(4.525,3.025)--(4.525,1.775)--(5,1.775);
\draw (4,2.975)--(4.475,2.975)--(4.475,1.725)--(5,1.725);
\filldraw (4.5,3) circle (.075);
\filldraw (4.5,1.75) circle (.075);
\node at (3.75,3) {\meas};
\node at (5.25,1.75) {$c_i$};

\filldraw[fill=white] (6.25,3.75) rectangle (6.75,4.25);
\node at (6.5,4) {${\sf P}^{z_i}$};

\draw (6.25,1.775)--(6.525,1.775)--(6.525,3.75);
\draw (6.25,1.725)--(6.475,1.725)--(6.475,3.75);
\filldraw (6.5,1.75) circle (.075);
\node at (6,1.75) {$z_i$};

\node at (8,.25) {$z_i=\left\{\begin{array}{ll}
a_{j}+ c_i & \mbox{if }W_i=G\\
a_{j}+ c_i+1 & \mbox{if }W_i=F\\
z\in_R\{0,1\} & \mbox{if }W_i=Z\\
0 & \mbox{if }W_i=X\\
1 & \mbox{if }W_i=Y
\end{array}\right.$};

\draw (0,0.025)--(.475,0.025)--(.475,1.25);
\draw (0,-0.025)--(.525,-0.025)--(.525,1.25);
\filldraw[fill=white] (.15,1.75) rectangle (.9,1.25);
\node at (.525,1.5) {$\mathsf{U}_{W_i}$};
\node at (-1.85,0) {$W_i\in_R\{X,Y,Z,G,F\}$};
\filldraw (.5,0) circle (.075);

\draw (1.5,1.525)--(2.025,1.525)--(2.025,0.025)--(2.5,0.025);
\draw (1.5,1.475)--(1.975,1.475)--(1.975,-0.025)--(2.5,-0.025);
\node at (1.4,1.5) {\meas};
\filldraw (2,1.5) circle (.075);
\filldraw (2,0) circle (.075);
\node at (2.75,0) {$e_i$};

\draw[dashed] (-2.5,5) rectangle (9,2.5);
\node at (-2.2,2.75) {$\pp$};

\draw[dashed] (-.5,2) rectangle (2.5,.75);
\node at (-.15,1) {$\pv$};

\draw[dotted] (-3.75,.5) rectangle (3.25,-1);
\node at (-3.5,-.75) {$\ver$};

\draw[dotted] (3.75,2) rectangle (11,-1);
\node at (4,-.75) {$\ver$};
\end{tikzpicture}
}
\caption{The gadget for implementing the $i$-th $\sf T$ gate, on the $j$-th wire.}\label{fig:leash-T-gadget}
\end{figure}

It is simple to check that the $\sf T$ gadget in Figure \ref{fig:leash-T-gadget} is the same as the $\sf T$ gadget for the EPR Protocol shown in Figure \ref{fig:tgadget-EPR}. In the case of the leash protocol, $W$ is chosen at random, and then $\vec{z}$ is chosen accordingly, whereas in the case of the EPR Protocol, $\vec{z}$ is chosen at random and then $W$ is chosen accordingly.

\begin{table}[H]
\centering
\setlength\tabcolsep{1.5pt}
\begin{tabular}{|l|lll|}
\hline
& Computation Round & $X$-test Round & $Z$-test Round\\
\hline
$N$ %
& $\{i\in A:W_i=Z\}$ & $\{i\in A:W_i=Z\}$ & $\{i\in A:W_i=X\}$\\
$T^0_{\ell}$ %
& $\{i\in B_{\ell}:W_i\in \{G,F\}\}$ & $\{i\in B_{\ell}:W_i=Z\}$ & $\{i\in B_{\ell}:W_i\in \{X,Y\}\}$ \\
$T^1_{\ell}$ %
& $\{i\in B_{\ell}:W_i\in \{G,F\}\}$ & $\{i\in B_{\ell}:W_i\in \{X,Y\}\}$ & $\{i\in B_{\ell}:W_i=Z\}$ \\
\hline
\end{tabular}
\caption{How the verifier chooses index sets $T=T^0\cup T^1$ and $N$ for each type of round. These sets determine which systems are labeled by $\{{\cal T}_{i}\}_{i=1}^t$ and $\{{\cal X}_{j}\}_{j=1}^n$, respectively.}\label{tab:index-choices}
\end{table}

\noindent We now give the precise protocols for $\ver$ (Figure \ref{fig:leash-protocol-V})
and honest provers $\pv$ (Figure \ref{fig:leash-protocol-PV}) and $\pp$ (Figure~
\ref{fig:leash-protocol-PP}). 

\begin{figure}[H]
\rule{\textwidth}{0.5pt}
\justify 
Let $(Q,\vec{x})$ be the input to the verifier, where $Q$ is compiled in the form described in Section~\ref{sec:EPR-protocol}. Let $n$ be the size of the input to $Q$. Let $d$ be the $\sf T$-depth, and for $\ell\in\{1,\ldots,d\}$ let $t_\ell$ be the number of {\sf T} gates in the $\ell$-th layer. 
\begin{enumerate}
\item The verifier selects $W \in_R\Sigma^{m}$. She partitions $\{1,\ldots,m\}$ arbitrarily into non-overlapping subsets $A$ and $B_1,\ldots,B_d$ such that $W_A$ contains at least $n$ copies of each symbol in $\Sigma$, and for each $\ell\in\{1,\ldots,d\}$, $W_{B_\ell}$ contains at least $t_\ell$ copies of each symbol in $\Sigma$.
\item The verifier sends $A,W_A$ to $\pv$, who returns $\vec{e}_A \in \{0,1\}^A$. The verifier sequentially sends  $(B_1,W_{B_1}),\ldots,(B_d,W_{B_d})$ to $\pv$, each time receiving $\vec{e}_{B_\ell}\in \{0,1\}^{B_\ell}$ as answer. 
\item The verifier selects a round type uniformly at random. She selects sets $N \subseteq A$ and  $T_\ell \subseteq B_\ell$, for $\ell\in\{1,\ldots,d\}$, of sizes $|N|=n$ and $|T_\ell|=t_\ell$, as follows:
\begin{description}
\item[Computation Round:] $N$ is chosen at random from 
$\{i\in A: W_i=Z\}$. $T_\ell$ is chosen at random from $\{i\in B_\ell:\,W_i\in \{G,F\}\}$. 
She sets $\vec{a} = \vec{e}_N + \vec{x}$ and $\vec{b} = 0^n$.
\item[$X$-test Round:] $N$ is chosen at random from 
$\{i\in A:W_i=Z\}$. $T_\ell = T_\ell^0 \cup T_\ell^1$, where $T_\ell^0$ is of size $t_{\ell,0}$ chosen at random from $\{i\in B_\ell:W_i=Z\}$ and $T_\ell^1$ is of size $t_{\ell,1}$ chosen at random from $\{i\in B_\ell:W_i\in \{X,Y\}\}$.
She sets $\vec{a} = \vec{e}_N$ and $\vec{b} = 0^n$.
\item[$Z$-test Round:] $N$ is chosen at random from 
$\{i\in A:W_i=X\}$. $T_\ell = T_\ell^0 \cup T_\ell^1$, where $T_\ell^0$ is of size $t_{\ell,0}$ chosen at random from $\{i\in B_\ell:W_i\in\{X,Y\}\}$ and $T_\ell^1$ is of size $t_{\ell,1}$ chosen at random from $\{i\in B_\ell:W_i=Z\}$.
She sets $\vec{a} = 0^n$ and $\vec{b} = \vec{e}_N$.
\end{description}
The verifier sends the sets $N$ and $T_\ell^0$, $T_\ell^1$, for all $\ell\in\{1,\ldots,d\}$, to $\pp$.
\item For $\ell=1,\dots,d$, 
\begin{enumerate}
  \item For each Clifford gate in the $\ell$-th layer, perform the appropriate
    key update (Table \ref{tab:EPR-key-updates}).
 \item The verifier receives $\vec{c} = \{c_i\}_{i\in T_{\ell}}$ from $\pp$. If it's an $X$-test round and $i\in T_\ell^0$, or it's a $Z$-test round and $i\in T_\ell^1$, $\sf reject$ if $c_i\neq a_{j}+ e_i$, where $j$ is the wire to which the $i$-th $\sf T$ gate is applied.
\item For each $i\in T_\ell$, the verifier computes $\vec{z}=\{z_i\}_{i\in T_\ell}$ as follows: 
\begin{description}
\item[Computation Round] $z_i=a_{j}+ 1_{W_i=F} + c_i$ ;
\item[$X$-test Round] if $i\in T_\ell^0$, $z_i\in_R\{0,1\}$; else if $i\in T_\ell^1$, $z_i=1_{W_i=Y}$;
\item[$Z$-test Round] if $i\in T_\ell^0$, $z_i=1_{W_i=Y}$; else if $i\in T_\ell^1$, $z_i\in_R\{0,1\}$.
\end{description}
\item The verifier sends $\vec{z}$ to $\pp$ and updates keys for each $\sf T$
  gate applied (Table \ref{tab:EPR-key-updates}).
\end{enumerate} 
\item The verifier receives a bit $c_f$ from $\pp$. She outputs $\sf reject$ if it's a computation or $X$-test round and $c_f+ a_f\neq 0$, where $a_f$ is the final $\sf X$-key on the output wire; and $\sf accept$ otherwise.
\end{enumerate}
\rule[2ex]{\textwidth}{0.5pt}\vspace{-.5cm}
\caption{The Delegation Game: Verifier's point of view.}\label{fig:leash-protocol-V}
\end{figure}

  \begin{figure}[H]
\rule{\textwidth}{0.5pt}
\begin{enumerate}
\item For $\ell=0,1,\ldots,d$,
\begin{enumerate}
\item $\pv$ receives a string $W_{S} \in\Sigma^{S}$, for some subset $S$ of $\{1,\ldots,m\}$, from $\ver$. 
\item For $i\in S$, $\pv$ measures his half of the $i$-th EPR pair using the observable indicated by $W_i$, obtaining an outcome $e_i\in\{0,1\}$. 
\item $\pv$ returns $\vec{e}_S$ to $\ver$. 
\end{enumerate}
\end{enumerate}
\rule[2ex]{\textwidth}{0.5pt}\vspace{-.5cm}
\caption{Honest strategy for $\pv$}\label{fig:leash-protocol-PV}
  \end{figure}
  \begin{figure}
\rule[1ex]{\textwidth}{0.5pt}
\begin{enumerate}
\item $\pp$ receives subsets $N$ and $T_\ell^0,T_\ell^1$ of $\{1,\ldots,m\}$, for $\ell\in\{1,\ldots,d\}$, from the verifier. 
\item For $\ell=1,\dots,d$, 
\begin{enumerate}
\item $\pp$ does the Clifford computations in the $\ell$-th layer.
 \item For each $i\in T_\ell = T_\ell^0\cup T_\ell^1$, $\pp$ applies a $\sf CNOT$ from ${\cal T}_i$ into the input register corresponding to the wire on which this $\sf T$ gate should be performed, ${\cal X}_{j}$, and measures this wire to get a value $c_i$. The register ${\cal T}_i$ is relabeled ${\cal X}_{j}$. He sends $\vec{c}_{T_\ell} = \{c_i\}_{i\in T_{\ell}}$ to $\ver$.
\item $\pp$ receives $\vec{z}_{T_{\ell}}=\{z_i\}_{i\in T_\ell}$ from $\ver$. For each $i\in T_\ell$, he applies ${\sf P}^{z_i}$ to the corresponding~${\cal X}_{j}$. 
\end{enumerate} 
\item $\pp$ performs the final computations that occur after the $d$-th layer of $\sf T$ gates, measures the output qubit, ${\cal X}_1$, and sends the resulting bit, $c_f$, to $\ver$. 
\end{enumerate}
\rule[2ex]{\textwidth}{0.5pt}\vspace{-.5cm}
\caption{Honest strategy for $\pp$}\label{fig:leash-protocol-PP}
\end{figure}

Finally, we describe the sequential version of the game $\rigid(\Sigma,m)$ in
Figure~\ref{fig:consistency-game}. It is no different than $\rigid(\Sigma,m)$,
except for the fact that certain subsets of questions and answers are exchanged
sequentially, but the acceptance condition is the same. As mentioned earlier,
running the game sequentially only reduces the provers' ability to cheat. Hence the guarantees from $\rigid(\Sigma,m)$ 
hold verbatim for the sequential version. 

\begin{figure}[H]
\rule[1ex]{\textwidth}{0.5pt}
\vspace{-25pt}
\justify 
Let $m$, $n$, and $t_1,\ldots,t_d$ be parameters provided as input, such that $m = \Theta(n+t_1+\cdots+t_d)$. 
\begin{enumerate}
\item The verifier selects questions $W,W' \in \Sigma^{m}$, for the first and second player respectively, according to the distribution of questions in the game $\rigid(\Sigma,m)$. She partitions $\{1,\ldots,m\}$ at random into subsets $A$ and $B_\ell$, for $\ell\in\{1,\ldots,d\}$, of size $|A|=\Theta(n)$ and $|B_\ell|=\Theta(t_\ell)$, exactly as in Step 1 of the Delegation Game. 
\item The verifier sends $(A,W_A), (B_1,W_{B_1}),.., (B_d,W_{B_d})$ and $(A,W'_A), (B_1,W'_{B_1}), .., (B_d,W'_{B_d})$ in sequence to the first and second prover respectively. They sequentially return respectively $\vec{e}_A \in \{0,1\}^{|A|}$, $\vec{e}_{B_1} \in \{0,1\}^{|B_1|},.., \vec{e}_{B_d} \in \{0,1\}^{|B_d|}$ and $\vec{e}'_A \in \{0,1\}^{|A|}$, $\vec{e}'_{B_1} \in \{0,1\}^{|B_1|},.., \vec{e}'_{B_d} \in \{0,1\}^{|B_d|}$.
\item The verifier accepts if and only if $\vec{e},\vec{e}'$ and $W,W'$ satisfy the winning condition of $\rigid(\Sigma,m)$.
\end{enumerate}
\rule[2ex]{\textwidth}{0.5pt}\vspace{-.5cm}
\caption{Sequential version of $\rigid(\Sigma,m)$.}
\label{fig:consistency-game}
\end{figure}

\subsection{Completeness}

\begin{lemma}\label{lem:leash-completeness}
Suppose the verifier executes the rigidity game with probability $p_r$ and the delegation game with probability $p_d=1-p_r$, on an input $(Q,\ket{\vec{x}})$ such that $\|\Pi_0 Q \ket{\vec{x}}\|^2 \geq 2/3$. Then there is a strategy for the provers which is accepted with probability at least $p_{\mathrm{compl}} = p_r(1-e^{-\Omega(n+t)}) + \frac{8}{9}p_d$. 
\end{lemma}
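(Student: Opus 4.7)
The plan is to analyze the two sub-games separately and combine their acceptance probabilities. For the rigidity sub-game, I would directly invoke the completeness clause of Corollary~\ref{cor:clifford-rigid}: the honest strategy using $m+1$ shared EPR pairs and the canonical Pauli/Clifford measurements succeeds in $\rigid(\Sigma,m)$ with probability at least $1-e^{-\Omega(m)}$. The sequentialization of the question exchange imposed in Figure~\ref{fig:consistency-game} does not reduce the honest success probability because the honest measurements on disjoint halves of EPR pairs commute and can be performed in any order. Since $m=\Theta(n+t)$, this yields the contribution $p_r(1-e^{-\Omega(n+t)})$ to the completeness bound.

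For the delegation sub-game, I would argue that the joint behavior of $\ver$ together with an honest $\pv$ exactly simulates the verifier $V_{EPR}$ of Broadbent's EPR Protocol acting on the provers' shared EPR pairs, while the honest $\pp$ plays the role of $P_{EPR}$. First, a Chernoff bound over the uniformly random choice of $W\in\Sigma^m$ ensures that, with probability exponentially close to $1$, every symbol in $\Sigma$ appears in $W_A$ at least $n$ times and in each $W_{B_\ell}$ at least $t_\ell$ times, so the partition in Step~1 of Figure~\ref{fig:leash-protocol-V} exists (and the verifier can resample $W$ otherwise). Conditioned on a valid partition, $\pv$'s honest measurements prepare on $\pp$'s qubits exactly the one-time-padded states, with keys known to $\ver$, that $V_{EPR}$ would prepare via her own measurements in Figure~\ref{fig:original-protocol-VEPRr}: in a computation round the input register holds a one-time pad of $\ket{\vec{x}}$ with key $\vec{a}=\vec{e}_N+\vec{x}$, while in an $X$-test (resp.\ $Z$-test) round it holds a one-time pad of $\ket{0}^{\otimes n}$ (resp.\ $\ket{+}^{\otimes n}$), and analogously for the $\sf T$-gadget auxiliary registers.

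Next, I would verify that the $\sf T$ gadgets in the two protocols induce the same joint distribution over measurement outcomes and post-measurement states. The only apparent difference is one of sampling order: in the Leash Protocol, $W_i\in\Sigma$ is drawn uniformly first and then $z_i$ is computed via Step~4(c) of Figure~\ref{fig:leash-protocol-V}, whereas in the EPR Protocol $z_i$ is drawn uniformly first and the observable to measure is then determined by Table~\ref{tab:Oy}. A case-by-case check (for each of the three round types) shows that the induced joint distribution of $(W_i,z_i,c_i,e_i)$ and the resulting state on $\pp$'s wire are identical in both cases, and that the verifier's key updates and accept/reject rules in Step~5 match those of $V_{EPR}$. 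Consequently, the honest acceptance probability in the delegation game equals that of the EPR Protocol with computation-round probability $p=1/3$ (the verifier picks one of the three round types uniformly in Step~3) and correctness parameter $\delta=1/3$; Theorem~\ref{thm:EPR-correctness} then gives acceptance at least $(1-1/3)+(1/3)(1-1/3)=8/9$, supplying the $\tfrac{8}{9}p_d$ contribution.

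The hard part will be the bookkeeping in the delegation-game step: one must carefully check that, across all three round types and for every placement of the auxiliary $\sf T$-gate wires in the subsets $T_\ell^0,T_\ell^1$, the post-measurement states induced by $\pv$, the bit $z_i$ computed by $\ver$, and the running one-time-pad keys all align with the corresponding quantities produced by $V_{EPR}$ in Figures~\ref{fig:original-protocol-VEPR} and~\ref{fig:original-protocol-VEPRr}. This is a routine but tedious verification by inspection, requiring no new idea beyond the design already built into Figure~\ref{fig:leash-protocol-V}; once the correspondence is established the completeness bound follows directly from Theorem~\ref{thm:EPR-correctness}.
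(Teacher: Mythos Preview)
Your proposal is correct and follows essentially the same approach as the paper: invoke the completeness of $\rigid(\Sigma,m)$ for the rigidity sub-game, and for the delegation sub-game reduce to Theorem~\ref{thm:EPR-correctness} with $p=1/3$ by arguing that an honest $\pv$ together with $\ver$ simulates $V_{EPR}$. The paper's proof is a two-sentence sketch that leaves the gadget-by-gadget verification and the sampling-order equivalence you describe entirely implicit, so your write-up is a strictly more detailed version of the same argument.
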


\begin{proof}
The provers $\pv$ and $\pp$ play the rigidity game according to the honest strategy, and the delegation game as described in Figures~\ref{fig:leash-protocol-PV} and~\ref{fig:leash-protocol-PP} respectively. Their success probability in the delegation game is the same as the honest strategy in the EPR Protocol, which is at least $\frac{2}{3}+\frac{2}{3}\frac{1}{3}=\frac{8}{9}$, by Theorem \ref{thm:EPR-correctness} and since in our protocol the verifier chooses each of the three types of rounds uniformly.
\end{proof}

\subsection{Soundness}

We divide the soundness analysis into three parts. First we analyze the case of an honest $\pv$, and a cheating $\pp$ (Lemma \ref{lem:soundness-leash-pp}). Then we show that if $\pv$ and $\pp$ pass the rigidity game with almost optimal probability, then one can construct new provers $\pv'$ and $\pp'$, with $\pv'$ honest, such that the probability that they are accepted in the delegation game is not changed by much (Lemma \ref{soundlemma}). In Lemma \ref{lem:leash-soundness}, we combine the previous to derive the desired constant soundness-completeness gap, where we exclude that the acceptance probability of the provers in the rigidity game is too low by picking a $p_r$ large enough.

\begin{lemma}[Soundness against $\pp$]\label{lem:soundness-leash-pp}
Suppose the verifier executes the delegation  game on input $(Q,\ket{\vec{x}})$ such that $\|\Pi_0 Q\ket{\vec{x}}\|^2 \leq 1/3$ with provers $(\pv,\pp^*)$ such that $\pv$ plays the honest strategy. Then the verifier accepts with probability at most $7/9$. 
\end{lemma}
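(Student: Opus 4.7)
The plan is to reduce to the soundness analysis of the EPR Protocol (Theorem~\ref{thm:EPR-soundness}) by arguing that, when $\pv$ plays honestly, the delegation game is equivalent from $\pp^*$'s side and from the verifier's acceptance predicate to Broadbent's EPR Protocol with some cheating prover $P_{EPR}^*$. This reduction, combined with the fact that each of the three round types is chosen with probability exactly $1/3$, will immediately yield the $7/9$ bound.

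First I would observe that in the EPR Protocol, the verifier's measurements of her own EPR halves commute with all operations of $P_{EPR}^*$, so without loss of generality they may be deferred until after $P_{EPR}^*$ has returned $\vec{c}$ and $c_f$. Moreover, within each $\sf T$ gadget the bit $z_i$ is only consumed by the operation ${\sf P}^{z_i}$ that occurs \emph{after} the measurement producing $c_i$, so $P_{EPR}^*$ may be assumed to delay its use of $z_i$ until after outputting $c_i$. Under these reformulations, any cheating strategy $\pp^*$ for the leash delegation game can be reinterpreted as a cheating $P_{EPR}^*$ for the EPR Protocol: the initial state prepared on $\pp^*$'s side is exactly the state $V_{EPR}^r$ would prepare (a keyed $\ket{\vec{x}}$, $\ket{0}^{\otimes n}$, or $\ket{+}^{\otimes n}$ on the input wires, plus the appropriate ancillas), and the classical transcripts $\vec{c}, c_f$ are forwarded unchanged.

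The key distributional check is that the joint distribution of $(W_i, z_i)$, which governs the $i$-th $\sf T$ gadget, agrees in the two protocols. In a computation round, the leash verifier samples $W_i$ uniformly in $\{F,G\}$ (via the conditional choice of $T_\ell^0 \cup T_\ell^1$) and then sets $z_i = a_j \oplus 1_{W_i=F} \oplus c_i$, while the EPR verifier samples $z_i$ uniformly and sets $W_i = G$ if $a_i' \oplus c_i \oplus z_i = 0$ and $W_i = F$ otherwise. For fixed $a_j = a_i'$ and $c_i$ these rules are related by a bijection on $\{F,G\} \times \{0,1\}$ and hence induce the same distribution on $(W_i, z_i)$. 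Analogous bijections handle the $X$- and $Z$-test rounds by inspection of the leash verifier's rules against Table~\ref{tab:Oy}, and the acceptance predicates (the checks $c_i = a_i' \oplus e_i$ at designated positions and $c_f = a_f'$ on the output wire) coincide. The sequential delivery of the layers $B_\ell$ is what allows this simulation to respect the causal order in which the $\sf T$-gadget outcomes become available.

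Granting the reduction, Theorem~\ref{thm:EPR-soundness} with $\delta = 1/3$ yields $q_c \leq 2(q_t/3 + (1-q_t)) - 1/3 = 5/3 - 4q_t/3$, where $q_c$ and $q_t$ denote the acceptance probabilities conditioned on $r=0$ and on $r\in\{1,2\}$ respectively. Since each round type is chosen with probability $1/3$, the overall acceptance probability is at most
\begin{equation*}
\tfrac{1}{3} q_c + \tfrac{2}{3} q_t \;\leq\; \tfrac{1}{3}\bigl(\tfrac{5}{3} - \tfrac{4}{3} q_t\bigr) + \tfrac{2}{3} q_t \;=\; \tfrac{5}{9} + \tfrac{2}{9} q_t \;\leq\; \tfrac{7}{9}.
\end{equation*}
The hard part will be executing the distributional equivalence cleanly across all three round types (in particular tracking the conditioning that $T_\ell^0, T_\ell^1, N$ are sampled from the positions where $W_i$ has the right value); once that bookkeeping is in place, the bound follows directly from Broadbent's soundness analysis.
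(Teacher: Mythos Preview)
Your proposal is correct and follows essentially the same route as the paper: reduce to Theorem~\ref{thm:EPR-soundness} by showing that when $\pv$ is honest the leash delegation game is equivalent to the EPR Protocol, the key step being that sampling $W_i$ uniformly and then setting $z_i$ as a function of it yields the same joint law as sampling $z_i$ uniformly and deriving $W_i$ (your bijection argument, the paper's ``main idea''), after which the same arithmetic $\tfrac{1}{3}q_c+\tfrac{2}{3}q_t\le 7/9$ concludes. The paper packages the reduction as an explicit construction of a prover $P^*$ (padding with $m-(n+t)$ maximally mixed qubits and a random permutation so as to simulate $\pp^*$ inside the EPR Protocol), but the content is identical to what you outline.
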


\begin{proof}
Let $\pp^*$ be any prover. Assume that $\pv$ behaves honestly and applies the measurements specified by his query $W$ on halves of EPR pairs shared with $\pp^*$. As a result the corresponding half-EPR pair at $\pp^*$ is projected onto the post-measurement state associated with the outcome reported by $\pv$ to $\ver$.

From $\pp^*$, we define another prover, $P^*$, such that if $P^*$ interacts with $V_{EPR}$,  the honest verifer for the EPR Protocol (Figure \ref{fig:original-protocol-VEPR}), then $V_{EPR}$ rejects with the same probability that $\ver$ would reject on interaction with $\pp^*$. The main idea of the proof can be seen by looking at Figure \ref{fig:leash-T-gadget}, and noticing that: (1) the combined action of $\ver$ and $\pv$ is unchanged if instead of choosing the $W_i$-values at random and then choosing $z_i$ as a function of these, the $z_i$ are chosen uniformly at random, and then the $W_i$ are chosen as a function of these; and (2) with this transformation, the combined action of $\ver$ and $\pv$ is now the same as the action of $V_{EPR}$ in the EPR Protocol. 

We now define $P^*$. $P^*$ acts on a system that includes $n+t$ qubits that, in an honest run of the EPR Protocol, are halves of EPR pairs shared with $V_{EPR}$. $P^*$ receives $\{{z}_i\}_{i=1}^t$ from $V_{EPR}$. $P^*$ creates $m-(n+t)$ half EPR pairs (i.e. single-qubit maximally mixed states) and randomly permutes these with his $n+t$ unmeasured qubits, $n$ of which correspond to computation qubits on systems ${\cal X}_1,\dots,{\cal X}_n$ --- he sets $N$ to be the indices of these qubits --- and $t$ of which correspond to $\sf T$-auxiliary states --- he sets $T^0$ and $T^1$ to be the indices of these qubits. $P^*$ simulates $\pp^*$ on these $m$ qubits in the following way. First, $P^*$ gives $\pp^*$ the index sets $N$, $T^0$, and $T^1$. In the $\ell$-th iteration of the loop (Step 2.\ in Figure~\ref{fig:leash-protocol-PP}), $\pp^*$ returns some bits $\{c_i\}_{i\in T_\ell}$, and then expects inputs $\{z_i\}_{i\in T_\ell}$, which $P^*$ provides, using the bits he received from $V_{EPR}$. Finally, at the end of the computation, $\pp^*$ returns a bit $c_f$, and $P^*$ outputs $\{c_i\}_{i\in T}$ and ${c_f}$. 

This completes the description of $P^*$. To show the lemma we argue that for any input $(Q,\ket{\vec{x}})$ the probability that $V$ outputs $\sf accept$ on interaction with $\pv$ and $\pp^*$ is the same as the probability that $V_{EPR}$ outputs $\sf accept$ on interaction with $P^*$, which is at most $\frac{2}{3}q_t+\frac{1}{3}q_c$ whenever $\|\Pi_0 Q \ket{\vec{x}}\|^2 \leq 1/3$, by Theorem \ref{thm:EPR-soundness}. Using $\delta=\frac{1}{3}$, Theorem \ref{thm:EPR-soundness} gives $q_c\leq \frac{5}{3}-\frac{4}{3}q_t$, which yields
$$\frac{2}{3}q_t+\frac{1}{3}q_c\leq \frac{5}{9}+\frac{2}{9}q_t\leq \frac{7}{9}.$$

There are two reasons that $V_{EPR}$ might reject: (1) in a computation or $X$-test round, the output qubit decodes to $1$; or (2) in an evaluation of the gadget in Figure \ref{fig:leash-T-gadget} (either an $X$-test round for an even $\sf T$ gate, or a $Z$-test round for an odd $\sf T$ gate) the condition ${c}_i=a_j\oplus e_i$ fails. 

We first consider case (1). This occurs exactly when ${c_f}\oplus a_f=1$, where $a_f$ is the final $\sf X$ key of the output wire, held by $V_{EPR}$. We note that $a_f$ is exactly the final $\sf X$ key that $\ver$ would hold in the Verifier-on-a-Leash Protocol, which follows from the fact that the update rules in both the EPR Protocol and the leash protocol are the same. Thus, the probability that $V_{EPR}$ finds ${v_f}\oplus a_f=1$ on interaction with $P^*$ is exactly the probability that $\ver$ finds $c_f\oplus a_f=1$ in Step 5 of Figure \ref{fig:leash-protocol-V}. 

Next, consider case (2). The condition ${c}_i\neq a_{j}\oplus e_i$ is exactly
 the condition in which a verifier interacting with $P^*$ as in Figure \ref{fig:leash-protocol-V} would reject (see Step 4.(b)).

Thus, the probability that $V_{EPR}$ outputs $\sf reject$ upon interaction with $P^*$ is exactly the probability that $\ver$ outputs $\sf reject$ on interaction with $\pp^*$, which, as discussed above, is at most $7/9$.
\end{proof}

\noindent The following lemma shows soundness against cheating $\pv^*$.

\begin{lemma}\label{soundlemma}
Suppose the verifier executes the leash protocol  on input $(Q,\ket{\vec{x}})$ such that $\|\Pi_0 Q\ket{\vec{x}}\|^2 \leq 1/3$ with provers $(\pv^*,\pp^*)$, such that the provers are accepted with probability $1-\eps$, for some $\eps>0$, in the rigidity game, and with probability at least $q$ in the delegation game. Then there exist provers $\pp'$ and $\pv'$ such that $\pv'$ applies the honest strategy and $\pp'$ and $\pv'$ are accepted with probability at least $q-\poly(\eps)$ in the delegation game.
\end{lemma}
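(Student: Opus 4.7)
The plan is to apply Corollary~\ref{cor:clifford-rigid} to the provers' behavior in the rigidity game, and to use the resulting structural information to construct $(\pv',\pp')$ with $\pv'$ honest by absorbing $\pv^*$'s deviation into $\pp'$.

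First, I observe that the sequential version of $\rigid(\Sigma,m)$ used in the leash protocol only restricts the provers relative to its one-round version, and the marginal distribution of queries to each prover (uniform on $\Sigma^m$) is identical in both games; the same marginal also governs the queries to $\pv^*$ in the delegation game. Hence $(\pv^*,\pp^*)$'s success probability $1-\eps$ in the rigidity game yields, via Corollary~\ref{cor:clifford-rigid}, isometries $V_\reg{A}:\mH_\reg{A}\to(\C^2)^{\otimes m}_{\reg{A}'}\otimes\mH_{\hat{\reg{A}}}$ and $V_\reg{B}$, an ancilla state $\ket{\aux}_{\hat{\reg{A}}\hat{\reg{B}}}$, and sub-normalized positive operators $\tau_{\pm}$ on $\hat{\reg{A}}$ with orthogonal support, such that the state on $\pp^*$'s side (after $V_\reg{A}$) conditioned on $\pv^*$'s outcome $u$ for query $W$ is $\poly(\eps)$-close, on average over $W\in\Sigma^m$, to $\sum_\lambda(\bigotimes_i\sigma_{W_i,\lambda}^{u_i}/2)\otimes\tau_\lambda$.

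Next I define $\pv'$ to be the honest strategy of Figure~\ref{fig:leash-protocol-PV} and construct $\pp'$ to absorb $\pv^*$'s cheating: $\pp'$ shares $m$ fresh EPR pairs on $\reg{A}'\reg{B}'$ with $\pv'$ (with $\reg{B}'$ on $\pv'$'s side), prepares internally a copy of the ancilla $\ket{\aux}_{\hat{\reg{A}}\hat{\reg{B}}}$, and applies the inverse isometry $V_\reg{A}^\dagger$ on its register $\reg{A}'\hat{\reg{A}}$ to simulate $\pp^*$'s original Hilbert space $\mH_\reg{A}$, on which it invokes $\pp^*$'s strategy using $\ver$'s messages as input. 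The analysis is a round-by-round transcript-simulation argument: by Corollary~\ref{cor:clifford-rigid}, the state that the simulated $\pp^*$ sees after $\pv'$'s honest measurement is $\poly(\eps)$-close in trace distance (averaged over $W,u$) to the state $\pp^*$ sees in the original $(\pv^*,\pp^*)$ execution; pushing this through $\pp^*$'s subsequent operations and invoking data processing gives total-variation distance $\poly(\eps)$ on the full transcript and hence on $\ver$'s acceptance probability.

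The principal technical obstacle is handling the $\lambda\in\{+,-\}$ branch ambiguity intrinsic to Corollary~\ref{cor:clifford-rigid}: a literally honest measurement by $\pv'$ on an EPR pair produces only a single Pauli eigenstate on $\reg{A}'$, namely $\sigma_{W,-}^u$, rather than the mixture over $\lambda$ appearing in the corollary. To resolve this, $\pp'$ exploits the purification $\ket{\aux}_{\hat{\reg{A}}\hat{\reg{B}}}$ held entirely on its own side together with the complex-conjugation invariance of the protocol's classical transcript distribution: since $\ver$'s acceptance predicate involves only XOR relations among classical bits, entrywise complex conjugation of the joint quantum strategy preserves all outcome probabilities, so the $\lambda=+$ and $\lambda=-$ branches of the cheating strategy contribute equally to $q$. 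It therefore suffices for $\pp'$ to reproduce one branch faithfully (possibly invoking a complex-conjugated copy of $\pp^*$'s strategy when its ancilla measurement indicates the other branch), which the construction above achieves up to $\poly(\eps)$ error.
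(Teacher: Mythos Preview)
Your proposal is correct and follows essentially the same approach as the paper: apply the rigidity corollary to extract the isometries and the $\tau_\lambda$ decomposition, replace $\pv^*$ by an honest prover on fresh EPR pairs while absorbing the isometry and ancilla into $\pp'$, and handle the $\lambda\in\{\pm\}$ ambiguity via the complex-conjugation invariance of the classical transcript. The paper organizes this through an explicit intermediate pair $(\pv'',\pp'')$ in which $\pv''$ first measures $\lambda$ and conditionally conjugates, before passing to the final $(\pv',\pp')$, but this is a presentational difference only; one small slip in your write-up is that an honest measurement by $\pv'$ leaves $\sigma_{W,+}^u=\sigma_W^u$ on $\reg{A}'$, not $\sigma_{W,-}^u$.
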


\begin{proof}
By assumption, $\pp^*$ and $\pv^*$ are accepted in the rigidity game with
  probability at least $1-\eps$. Let $V_A$, $V_B$ be the local isometries
  guaranteed to exist by Theorem~\ref{thm:clifford-rigid}, and
  $\{\tau_\lambda\}$ the sub-normalized densities associated with $\pp^*$'s
  Hilbert space (recall that playing the rigidity game sequentially leaves the
  guarantees from Theorem~\ref{thm:clifford-rigid} unchanged, since it only reduces the provers' ability to cheat).

First define provers $\pv''$ and $\pp''$ as follows. $\pp''$ and $\pv''$ initially share the state 
$$\ket{\psi'}_{\reg{AB}} = \otimes_{i=1}^{m} \proj{\EPR}_{\reg{AB}} \otimes \sum_{\lambda\in\{\pm\}}  \proj{\lambda}_{\reg{A}'}\otimes \proj{\lambda}_{\reg{B}'}\otimes (\tau_\lambda)_{\reg{A}''}\;,$$
with registers $\reg{A}\reg{A}'\reg{A}''$ in the possession of $\pp''$ and $\reg{BB}'$ in the possession of $\pv''$. 
Upon receiving a query $W\in \Sigma^m$, $\pv''$ measures $\reg{B}'$ to obtain a $\lambda\in\{\pm\}$. If $\lambda=+$ he proceeds honestly, measuring his half-EPR pairs exactly as instructed. If $\lambda=-$ he proceeds honestly except that for every honest single-qubit observable specified by $W$, he instead measures the complex conjugate observable. Note that this strategy can be implemented irrespective of whether $W$ is given at once, as in the game $\rigid$, or sequentially, as in the Delegation Game. $\pp''$ simply acts like $\pp^*$, just with the isometry $V_A$ applied. 

First note that by Theorem~\ref{thm:clifford-rigid}, the distribution of answers of $\pv''$ to the verifier, as well as the subsequent interaction between the verifier and $\pp$, generate (classical) transcripts that are within statistical distance $\poly(\eps)$ from those generated by $\pv^*$ and $\pp^*$ with the same verifier. 

Next we observe that taking the complex conjugate of both provers' actions does not change their acceptance probability in the delegation game, since the interaction with the verifier is completely classical. Define $\pp'$ as follows: $\pp'$ measures $\reg{A}'$ to obtain the same $\lambda$ as $\pv''$, and then executes $\pp''$ or its complex conjugate depending on the value of $\lambda$. Define $\pv'$ to execute the honest behavior (he measures to obtain $\lambda$, but then discards it and does not take any complex conjugates). 

Then $\pv'$ applies the honest strategy, and $(\pv',\pp')$ applies either the same strategy as $(\pv'',\pp'')$ (if $\lambda=+$) or its complex conjugate (if $\lambda=-$). Therefore they are accepted in the delegation game with exactly the same probability. 
\end{proof}

\noindent Combining Lemma~\ref{lem:soundness-leash-pp} and Lemma~\ref{soundlemma} gives us the final soundness guarantee.

\begin{lemma}\label{lem:leash-soundness} (Constant soundness-completeness gap)
There exist constants $p_r,p_d=1-p_r$ and $\Delta>0$ such that if the verifier executes the leash protocol with parameters $(p_r,p_d)$ on input $(Q,\ket{\vec{x}})$ such that $\|\Pi_0 Q\ket{\vec{x}}\|^2 \leq 1/3$, any provers $(\pv^*,\pp^*)$ are accepted with probability at most \mbox{$p_{\mathrm{sound}}=p_{\mathrm{compl}}-\Delta$}.  
\end{lemma}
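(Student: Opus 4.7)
The plan is to combine the two previous soundness lemmas with a careful choice of the sampling probabilities $(p_r,p_d)$ so that the gap is bounded below by a positive constant no matter how the cheating provers distribute their effort between the two sub-games. Fix an arbitrary strategy $(\pv^*,\pp^*)$ and let $1-\eps$ be its acceptance probability in the rigidity game and $q$ its acceptance probability in the delegation game, so that the overall acceptance probability is
\begin{equation*}
p_{\mathrm{acc}}\,=\,p_r(1-\eps)+p_d\, q.
\end{equation*}
By Lemma~\ref{soundlemma}, there exist provers $(\pv',\pp')$, with $\pv'$ playing the honest strategy, that are accepted in the delegation game with probability at least $q-C\eps^c$ for some universal constants $C,c>0$ (this is what $\poly(\eps)$ unfolds to). Since $\|\Pi_0 Q\ket{\vec x}\|^2\le 1/3$, Lemma~\ref{lem:soundness-leash-pp} then forces $q-C\eps^c\le 7/9$, so $q\le 7/9+C\eps^c$. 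Combining this with the trivial bound $q\le 1$ gives $q\le \min(7/9+C\eps^c,1)$.

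Now subtract from the completeness bound $p_{\mathrm{compl}}=p_r(1-e^{-\Omega(n+t)})+\tfrac{8}{9}p_d$ given by Lemma~\ref{lem:leash-completeness}. One obtains
\begin{equation*}
p_{\mathrm{compl}}-p_{\mathrm{acc}}\;\geq\; p_r\bigl(\eps-e^{-\Omega(n+t)}\bigr)\;+\;p_d\,\max\!\Bigl(\tfrac{1}{9}-C\eps^c,\;-\tfrac{1}{9}\Bigr).
\end{equation*}
I would then split on the size of $\eps$. Fix a threshold $\eps_0>0$ chosen so that $C\eps_0^c\le 1/18$. In the regime $\eps\le\eps_0$, the second term is at least $p_d/18$, so the gap is at least $p_d/18-p_r e^{-\Omega(n+t)}$, which exceeds $p_d/36$ once $n+t$ is larger than some absolute constant (and for smaller instances one simply pads the circuit with dummy qubits/gates so that $m$ satisfies the standing assumption). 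In the regime $\eps>\eps_0$, the best one can say from the delegation sub-game alone is $q\le 1$, giving gap at least $p_r(\eps_0-e^{-\Omega(n+t)})-p_d/9$, which exceeds $p_r\eps_0/2-p_d/9$ for $n+t$ large enough.

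The final step is to choose $(p_r,p_d)$ so that both bounds are strictly positive simultaneously. The small-$\eps$ bound is automatic provided $p_d>0$; the large-$\eps$ bound requires $p_r\eps_0/2>p_d/9$, i.e.\ $p_r/p_d>2/(9\eps_0)$, which together with $p_r+p_d=1$ is satisfied by any $p_r>2/(9\eps_0+2)$. Picking, e.g., $p_r=\tfrac{1}{2}\bigl(1+\tfrac{2}{9\eps_0+2}\bigr)$ and $p_d=1-p_r$, one can take
\begin{equation*}
\Delta\;=\;\tfrac{1}{2}\min\bigl(\,p_d/36,\;p_r\eps_0/2-p_d/9\,\bigr)\;>\;0,
\end{equation*}
independent of the circuit. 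The only genuine subtlety is the bookkeeping around the $e^{-\Omega(n+t)}$ completeness loss in the rigidity sub-game; this is absorbed into $\Delta$ by choosing $n+t$ above a constant (or padding). Routine calculations aside, the conceptual content of the proof is just this two-regime case analysis, which is why $(p_r,p_d)$ must be chosen to weight the rigidity test heavily enough to punish cheating strategies with $\eps$ above the threshold, but not so heavily that the honest-$\pv$ soundness margin of $1/9$ from Lemma~\ref{lem:soundness-leash-pp} is washed out.
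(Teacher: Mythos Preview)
Your proposal is correct and follows essentially the same approach as the paper: combine Lemma~\ref{soundlemma} with Lemma~\ref{lem:soundness-leash-pp} to obtain the bound $q\le 7/9+\poly(\eps)$, then perform a two-regime case analysis (small $\eps$ versus large $\eps$) and choose $p_r$ large enough that the rigidity penalty dominates in the large-$\eps$ regime. The paper parameterizes the threshold dually---splitting on whether the delegation-game excess $w=q-7/9$ exceeds $1/18$ rather than on whether $\eps$ exceeds some $\eps_0$---and is less explicit about absorbing the $e^{-\Omega(n+t)}$ completeness loss, but the structure and the choice of $p_r$ are the same.
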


\begin{proof}
Suppose provers $\pp^*$ and $\pv^*$ succeed in the delegation game with probability $\frac79+w$ for some $w>0$, and the testing game with probability $1-\eps_*(w)$, where $\eps_*(w)$ will be specified below. By Lemma~\ref{soundlemma}, this implies that there exist provers $\pp'$ and $\pv'$ such that $\pv'$ is honest and the provers succeed in the delegation game with probability at least $\frac79+w-g(\eps_*(w))$, where $g(\eps) = \poly(\eps)$ is the function from the guarantee of Lemma~\ref{soundlemma}. Let $\eps_*(w)$ be such that $g(\eps_*(w)) \leq \frac{w}{2}$. In particular, $\frac79+w-g(\eps_*(w)) \geq \frac79+\frac{w}{2}>\frac79$. This contradicts Lemma~\ref{lem:soundness-leash-pp}. 

Thus if provers $\pp$ and $\pv$ succeed in the delegation game with probability $\frac79+w$ they must succeed in the rigidity game with probability less than $1-\eps_*(w)$. 
This implies that for any strategy of the provers, on any \textit{no} instance, the probability that they are accepted is at most
\begin{equation}
\max\Big\{p_r+(1-p_r)\Big(\frac79+\frac{1}{18}\Big),\,\, p_r\Big(1-\eps_*\Big(\frac{1}{18}\Big)\Big)+(1-p_r)\cdot 1\Big\}.
\end{equation}
Since $\eps_*(\frac{1}{18})$ is a positive constant, it is clear that one can pick $p_r$ large enough so that 
\begin{equation}
p_r\Big(1-\eps_*\Big(\frac{1}{18}\Big)\Big)+(1-p_r)\cdot 1 < p_r+(1-p_r)\Big(\frac79+\frac{1}{18}\Big).
\end{equation}
Select the smallest such $p_r$. Then the probability that the two provers are accepted is at most 
\begin{align*}
p_{\mathrm{sound}} &:= p_r+(1-p_r)\Big(\frac79+\frac{1}{18}\Big)
< p_r\big(1-e^{-\Omega(n+t)}\big)+(1-p_r)\frac89 
= p_{\mathrm{compl}} \,,
\end{align*}
which gives the desired constant completeness-soundness gap $\Delta$.
\end{proof}

\subsection{Blindness}
\label{sec:leash-blind}

We now establish blindness of the Leash Protocol. In Lemma \ref{lem:blindness}, we will prove that the protocol has the property that neither prover can learn anything about the input to the circuit, $\vec{x}$, aside from its length. Thus, the protocol can be turned into a blind protocol, where $Q$ is also hidden, by modifying any input $(Q,\vec{x})$ where $Q$ has $g$ gates and acts on $n$ qubits, to an input $(U_{g,n},(Q,\vec{x}))$, where $U_{g,n}$ is a universal circuit that takes as input a description of a $g$-gate circuit $Q$ on $n$ qubits, and a string $\vec{x}$, and outputs $Q\ket{\vec{x}}$. The universal circuit $U_{g,n}$ can be implemented in $O(g\log n)$ gates. By Lemma \ref{lem:blindness}, running the Leash Protocol on $(U_{g,n},(Q,\vec{x}))$ reveals nothing about $Q$ or $\vec{x}$ aside from $g$ and $n$.

In the form presented in Figure~\ref{fig:leash-protocol-V}, the verifier $\ver$ interacts first with $\pv$, sending him random questions that are independent from the input $\vec{x}$, aside from the input length $n$. It is thus clear that the protocol is blind with respect to $\pv$. 

In contrast, the questions to $\pp$ depend on $\pv$'s answers and on the input, so it may a priori seem like the questions can leak information to $\pp$. To show that the protocol is also blind with respect to $\pp$, we show that there is an alternative formulation, in which the verifier first interacts with $\pp$, sending him random messages, and then only with $\pv$, with whom the interaction is now adaptive. We argue that, for an arbitrary strategy of the provers, the reduced state of all registers available to either prover, $\pp$ or $\pv$, is exactly the same in both formulations of the protocol --- the \emph{original} and the \emph{alternative} one. This establishes blindness for both provers. This technique for proving blindness is already used in~\cite{reichardt2012classical} to establish blindness of a two-prover protocol based on computation by teleportation.

\begin{lemma}[Blindness of the Leash Protocol]\label{lem:blindness}
For any strategy of $\pv^*$ and $\pp^*$, the reduced state of $\pv^*$ (resp. $\pp^*$) at the end of the leash protocol
is independent of the input $\vec{x}$, aside from its length.
\end{lemma}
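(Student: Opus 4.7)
The proof treats the two provers separately, following the outline given in the paragraph immediately preceding the lemma.

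\emph{Blindness for $\pv^*$.} This is immediate from inspection. All messages $\pv^*$ receives, whether in the rigidity game (Figure~\ref{fig:consistency-game}) or in the delegation game (Figure~\ref{fig:leash-protocol-V}), fall into two types: (i) the partition $(A, B_1, \ldots, B_d)$ of $\{1,\ldots,m\}$, whose distribution depends only on the circuit's structural parameters $n, t_1, \ldots, t_d$; and (ii) Clifford labels $W \in \Sigma^m$ drawn from a distribution that again depends only on the protocol parameters. Consequently the entire classical transcript $\pv^*$ observes has a distribution independent of $\vec{x}$, and since $\pv^*$'s Kraus operators do not depend on $\vec{x}$, his reduced state at the end of the protocol is $\vec{x}$-independent.

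\emph{Blindness for $\pp^*$.} The plan is to exhibit an alternative formulation of the delegation game in which the verifier first interacts with $\pp^*$ using only messages whose distribution is manifestly $\vec{x}$-independent, and only subsequently interacts with $\pv^*$ adaptively. Concretely, the alternative verifier samples uniformly at random a round type $r$, the subsets $(N, T_\ell^0, T_\ell^1)$ of prescribed sizes, and the string $\vec{z} \in \{0,1\}^{|T|}$; she sends $(N, T_\ell^0, T_\ell^1)$ to $\pp^*$ and then, layer by layer, the corresponding slice of $\vec{z}$, collecting $\pp^*$'s responses $\vec{c}$ and $c_f$. Only then she samples observables $W \in \Sigma^m$ consistent with the already-chosen data and interacts with $\pv^*$ to obtain $\vec{e}$, finally computing acceptance from $(\vec{x}, \vec{e}, \vec{c}, \vec{z})$ using the same rules as in Step~5 of the original protocol. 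Since every verifier-to-$\pp^*$ message in this reformulation is drawn from a distribution independent of $\vec{x}$, $\pp^*$'s final reduced state in the alternative protocol is $\vec{x}$-independent.

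To conclude it remains to show that the joint classical-quantum state of $(\mathrm{transcript}, \pp^*\mathrm{'s\ system}, \pv^*\mathrm{'s\ system})$ at the end of the two formulations coincides for every strategy of the provers; this in particular forces the two reduced states of $\pp^*$ to match. The key algebraic observation is that in each computation-round $\sf T$-gadget, the relation $z_i = a_j^{(i)} + 1_{W_i = F} + c_i$ is a bijection between $z_i$ and $a_j^{(i)}$ for any fixed $(W_i, c_i)$, and through the deterministic key-update rules of Table~\ref{tab:EPR-key-updates} this extends to a bijection between $\vec{z}$ and the underlying one-time-pad keys $\vec{a}$ (equivalently, for fixed $\vec{x}$, between $\vec{z}$ and $\vec{e}$). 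Analogous bijections hold in the test rounds, where the relation between $\vec{z}$ and $\vec{e}$ is either a uniformly random coin flip or a deterministic function of $W_i$. Hence sampling $\vec{e}$ from $\pv^*$'s marginal and then deriving $\vec{z}$, versus sampling $\vec{z}$ uniformly and then requesting from $\pv^*$ the observables $W$ inducing a consistent $\vec{e}$, yield identical joint distributions on the classical data.

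The main obstacle is justifying this reordering in the quantum setting, where the provers can hold arbitrary entanglement and perform adaptive operations. This is handled by observing that the verifier has no quantum memory, so all her messages to either prover are classical functions of the previously collected classical data; combining the bijection above with the $\vec{x}$-independence of $\pv^*$'s Kraus operators then lifts the equality of classical distributions to equality of the full classical-quantum states of both provers. The argument follows the blueprint of the blindness proof for the two-prover protocol of~\cite{reichardt2012classical}.
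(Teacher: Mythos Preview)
Your high-level strategy matches the paper's---blindness for $\pv^*$ is immediate, and for $\pp^*$ you introduce an alternative protocol with the interaction order reversed---but the argument for equivalence has real gaps. First, the bijection is misidentified: you claim that for fixed $(W,\vec{c},\vec{x})$ the relation $z_i = a_j^{(i)} + 1_{W_i=F} + c_i$ extends through the key-update rules to a bijection between $\vec{z}\in\{0,1\}^t$ and $\vec{e}$, but these live in spaces of different dimension and the dependence of $a_j^{(i)}$ on the initial keys is merely affine-linear over $\mathbb{F}_2$. The bijection that is actually in play is between $z_i$ and $1_{W_i=F}$ for fixed $(a_j^{(i)},c_i)$; this is what lets the verifier in the alternative protocol \emph{back-solve for $W_i\in\{F,G\}$} on $T_\ell$ once $z_i,c_i,a_j^{(i)}$ are known. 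The paper then makes the essential observation---absent from your argument---that because $T_\ell$ is a uniformly random subset of $B_\ell$ unknown to $\pv^*$, this constraint on $W$ does not alter the marginal of $W_{B_\ell}$, so $\pv^*$ cannot distinguish the two protocols.

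Second, and more seriously, you flag the quantum reordering as the ``main obstacle'' but do not overcome it. The provers share entanglement, so once $\pv^*$ has measured according to $W$, $\pp^*$'s residual state is correlated with $W$; asserting that the verifier has no quantum memory and that $\pv^*$'s Kraus operators are $\vec{x}$-independent does not by itself imply that $\pp^*$'s reduced state is the same in both orderings. The paper supplies the missing device: replace every outgoing verifier message by half of an EPR pair (she keeps the other half), let the provers' superoperators act in tensor product (they commute, acting on disjoint registers), and then have the verifier project her retained halves onto the ``valid transcript'' condition. Since this projection encodes the same functional relation whether one resolves $\pv^*$'s transcript first or $\pp^*$'s, the post-selected state is order-independent, which is exactly~\eqref{eq:super-states}. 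You cite~\cite{reichardt2012classical} for this technique, but the content of the proof lies in carrying it out for the present protocol, not in the citation.
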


\begin{proof}
Let $\pv^*$ and $\pp^*$ denote two arbitrary strategies for the provers in the leash protocol. Each of these strategies can be modeled as a super-operator 
$$\mathcal{T}_\pv:\, \Lin(\mH_{T_\pv} \otimes \mH_\pv) \to \Lin(\mH_{T'_\pv} \otimes \mH_\pv),$$
$$\mathcal{T}_{\pp,ad}: \,\Lin(\mH_{T_\pp} \otimes \mH_\pp) \to \Lin(\mH_{T'_\pp} \otimes \mH_\pp).$$
Here $\mH_{T_\pv}$ and $\mH_{T'_\pv}$ (resp.\ $\mH_{T_\pp}$ and $\mH_{T'_\pp}$) are classical registers containing the inputs and outputs to and from $\pv^*$ (resp.\ $\pp^*$), and $\mH_\pv$ (resp.\ $\mH_\pp$) is the private space of $\pv^*$ (resp.\ $\pp^*$). Note that the interaction of each prover with the verifier is sequential, and we use $\mathcal{T}_{\pv}$ and $\mathcal{T}_{\pp,ad}$ to denote the combined action of the prover and the verifier across all rounds of interaction (formally these are sequences of superoperators).

Consider an alternative protocol, which proceeds as follows. The verifier first interacts with $\pp$. From Figure~\ref{fig:leash-protocol-PP} we see that the inputs required for $\pp$ are subsets $N$ and $T_1,\ldots,T_d$, and values $\{z_i\}_{i\in T_\ell}$ for each $\ell\in\{1,\ldots,d\}$. To select the former, the verifier proceeds as in the first step of the Delegation Game. She selects the latter uniformly at random. The verifier collects values $\{c_i\}_{i\in T_\ell}$ from $\pp$ exactly as in the original Delegation Game. 

Once the interaction with $\pp$ has been completed, the verifier interacts with $\pv$. First, she selects a random string $W_N\in \Sigma^N$, conditioned on the event that $W_N$ contains at least $n$ copies of each symbol in $\Sigma$, and sends it to $\pv$, collecting answers $\vec{e}_N$. The verifier then follows the same update rules as in the delegation game. We describe this explicitly for computation rounds. First, the verifier sets $\vec{a} = \vec{e}_N$. Depending on the values $\{c_i\}_{i\in T_1}$ and $\{z_i\}_{i\in T_1}$ obtained in the interaction with $\pp$, using the equation $z_i = a_j + 1_{W_i=F}+c_i$ she deduces a value for $1_{W_i=F}$ for each $i\in T_1 \subseteq B_1$. She then selects a uniformly random $W_{B_1} \in \Sigma^{B_1}$, conditioned on the event that $W_{B_1}$ contains at least $t_1$ copies of each symbol from $\Sigma$, and for $i\in T_1$ it holds that $W_i=F$ if and only if $z_i = a_j + 1+c_i$. The important observation is that, if $T_1$ is a uniformly random, unknown subset, the marginal distribution on $W_{B_1}$ induced by the distribution described above is independent of whether $z_i = a_j + 1+c_i$ or $z_i = a_j + 0 +c_i$: precisely, it is uniform conditioned on the event that $W_{B_1}$ contains at least $t_1$ copies of each symbol from $\Sigma$. 
The verifier receives outcomes $\vec{e}_{B_1}\in \{0,1\}^{B_1}$ from $\pv$, and using these outcomes performs the appropriate key update rules; she then proceeds to the second layer of the circuit, until the end of the computation. Finally, the verifier accepts using the same rule as in the last step of the original delegation game. 

We claim that both the original and alternative protocols generate the same joint final state:
\begin{equation}\label{eq:super-states}
\mT_{\pp,ad}\circ\mT_\pv(\rho_{orig}) \,=\, \mT_{\pv,ad}\circ \mT_\pp(\rho_{alt}) \,\in\,  \mH_{\pp}\otimes \mH_{T'_{\pp}}\!\! \otimes \mH_\ver \otimes \mH_{T'_\pv}\! \otimes \mH_\pv,
\end{equation}
where we use $\rho_{orig}$ and $\rho_{alt}$ to denote the joint initial state
  of the provers, as well as the verifier's initialization of her workspace, in
  the original and alternative protocols respectively, and $\mT_{\pv,ad}$ and $\mT_\pp$ are the equivalent of $\mT_\pv$ and $\mT_{\pp,ad}$ for the reversed protocol (in particular they correspond to the same strategies $\pv^*$ and $\pp^*$ used to define $\mT_\pv$ and  $\mT_{\pp,ad}$). Notice that $\mT_{\pv,ad}$ and $\mT_\pp$ are well-defined since neither prover can distinguish %
an execution of the original 
from 
the alternative protocol.\footnote{One must ensure that a prover does not realize if the  alternative protocol is executed instead of the original; this is easily enforced by only interacting with any of the provers at specific, publicly decided times.}
To see that 
equality holds in \eqref{eq:super-states},
it is possible to re-write the final state of the protocol as the
  result of the following sequence of operations. First, the verifier
  initializes the message registers with $\pp^*$ and $\pv^*$ using half-EPR
  pairs, keeping the other halves in her private workspace. This simulates the
  generation of 
uniform
random messages to both provers. Then, the
  superoperator $\mT_\pv \otimes \mT_\pp$ is executed. Finally, the verifier
  post-selects by applying a projection operator on $\mH_{T_\pv} \otimes \mH_{T'_\pv} \otimes \mH_{T_\pp} \otimes \mH_{T'_\pp}$ 
that 
projects onto valid transcripts for the
  original protocol (i.e.\ transcripts in which the adaptive questions are chosen
  correctly). This projection can be implemented in two equivalent ways: either
  the verifier first measures $ \mH_{T_\pv} \otimes \mH_{T'_\pv}$, and then
  $\mH_{T_\pp} \otimes \mH_{T'_\pp}$; based on the outcomes she accepts a valid transcript for the
  original protocol or she rejects. Or, she first measures $ \mH_{T_\pp} \otimes \mH_{T'_\pp}$, and then
  $\mH_{T_\pv} \otimes \mH_{T'_\pv}$; based on the outcomes she accepts a
  valid transcript for the alternative protocol or she rejects. Using the
  commutation of the provers' actions, conditioned on the transcript being
  accepted, the first gives rise to the first final state
  in~\eqref{eq:super-states}, and the second to the second final state. The two are equivalent because the acceptance condition for a valid transcript is identical in the two versions of the protocol.

Since in the first case the reduced state on $\mH_{T'_{\pv}} \otimes \mH_\pv$ is independent of the input to the computation, $\vec{x}$, and in the second  the reduced state on $ \mH_\pp\otimes \mH_{T'_{\pp}} $ is independent of $\vec{x}$, we deduce that the protocol hides the input from each of $\pv^*$ and $\pp^*$. 
\end{proof}

\section{Dog-Walker protocol}
\label{sec:dog-walker}
The Dog-Walker Protocol again involves a classical verifier $\ver$ and two provers $\pv$ and $\pp$. As in the leash protocol presented in Section \ref{sec:leash}, $\pp$ and $\pv$ take the roles of $P_{EPR}$ and $V_{EPR}$ from \cite{broadbent15howtoverify} respectively. 
The main difference is that the Dog-Walker Protocol gives up blindness in order to reduce the number of rounds to two (one round of interaction with each prover, played sequentially). After 
one
round of communication with $\pp$, who returns a sequence of measurement outcomes, %
$\ver$ communicates all of $\pp$'s outcomes, except for the one corresponding to the output bit of the computation, as well as the input 
$\vec{x}$,
to $\pv$.  
With 
these, $\pv$ can perform the required adaptive measurements without the need to interact with 
$\ver$.
It may seem %
risky to communicate bits sent by $\pp$ directly to $\pv$ --- this seems to allow for communication between the two provers! Indeed, blindness is lost. However, if $\pp$ is honest, his outcomes $\{c_i\}_i$ in the computation round are the result of measurements he performs on half-EPR pairs, and are uniform random bits. If he is dishonest, and does not return the outcomes  obtained by performing the right measurements, he will be caught in the test rounds. It is only in computation rounds that $\ver$ sends the measurement results $\{c_i\}_i$ to $\pv$. 

We notice that $\pv$ has a much more important role in this protocol: he
decides himself the measurements to perform according to previous measurements'
outcomes as well as the input $x$. For this reason, we must augment the test
discussed in Section~\ref{sec:intro-rigidity} in order to test if $\pv$ remains
honest with respect to these new tasks.  For this reason, we introduce the Tomography test and prove a
rigidity theorem that will allow us to prove the soundness of the Dog-walker
protocol
(see Figure~\ref{fig:full-dog-walker} for a glimpse of the proof
structure).

Finally, the Dog-Walker Protocol can be easily extended to a classical-verifier  two-prover protocol for all languages in QMA. 
Along the same lines of the proof that QMIP = MIP$^*$ from~\cite{reichardt2012classical}, one of the provers plays the role of $\pp$, running the  QMA verification circuit, while the second prover creates and teleports the corresponding QMA witness. In our case, it is not hard to see that the second prover can be  re-used as $\pv$ in the Dog-Walker Protocol, creating the necessary gadgets for the computation and allowing the Verifier to check the operations performed by the first prover.  We describe the protocol in Section~\ref{sec:qma}.

\subsection{Protocol and statement of results}

Throughout this section we 
let $\Sigma=\{X,Y,Z,F,G\}$, and let $m=\Theta(n+t)$ be chosen 
large enough so that each symbol in $\Sigma$ appears at least $n+t$ times in a uniform random $W\in\Sigma^m$, with probability close to $1$.
Let $\mu({W})$ denote the probability that a player receives input ${W}$ while playing $\rigid(\Sigma,m)$ (recall that both players have the same marginals in $\rigid$). Let $\mu({W}'|{W})$ denote the probability that one player receives ${W}'$ given that the other player receives~${W}$. 

The full protocols are presented in Figure \ref{fig:dogwalker-protocol-V} (verifier's point of view), Figure \ref{fig:dogwalker-protocol-PV} ($\pv$'s point of view) and Figure \ref{fig:dogwalker-protocol-PP} ($\pp$'s point of view). The protocol has two 
types of rounds: 
EPR and Rigidity. Within an EPR round are three 
types of sub-rounds: 
Computation sub-round, $X$-test sub-round, and $Z$-test sub-round. We will generally think of $X$- and $Z$-test sub-rounds as one sub-round type (Test sub-round). Within a Rigidity round are two types of sub-rounds: Tomography sub-round, which should be thought of as the Rigidity version of the EPR-Computation round; and Clifford sub-round, which should be thought of as the Rigidity version of the EPR-Test round. With some probability $p_1$, $\ver$ runs a Rigidity round, Clifford sub-round; with some probability $p_2$, $\ver$ runs an EPR round, Test sub-round; with some probability $p_3$, $\ver$ runs an EPR round, Computation sub-round; and with probability $p_4=1-p_1-p_2-p_3$, $\ver$ runs a Rigidity round, Tomography sub-round. We call this the Dog-Walker Protocol with parameters $(p_1,p_2,p_3,p_4)$.

The following theorem states the guarantees of the Dog-Walker Protocol.

\begin{theorem}\label{thm:dog-walker}
There exist constants $p_1$, $p_2$, $p_3$, $p_4=1-p_1-p_2-p_3$, and $\Delta>0$ such that the following hold of the Dog-Walker Protocol with parameters $(p_1,p_2,p_3,p_4)$, when executed on input $(Q,\ket{\vec{x}})$.
\begin{itemize}
\item (Completeness: ) Suppose that $\norm{\Pi_0Q\ket{\vec{x}}}^2\geq 2/3$. Then
  there is a strategy for $\pv$ and $\pp$ that is accepted with probability at
    least $p_{\mathrm{compl}}=p_1(1-e^{-\Omega(n+t)})+p_2+\frac{2}{3}p_3 +
    p_4$. 
\item (Soundness: ) Suppose that $\norm{\Pi_0Q\ket{\vec{x}}}^2\leq 1/3$. Then any strategy for $\pv$ and $\pp$ is accepted with probability at most $p_{\mathrm{sound}}=p_{\mathrm{compl}}-\Delta$. 
\end{itemize}
\end{theorem}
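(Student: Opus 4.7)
\textbf{Completeness} can be verified round-by-round. In a Rigidity/Clifford sub-round, honest players pass with probability $1-e^{-\Omega(n+t)}$ by the completeness clause of Corollary~\ref{cor:clifford-rigid}; in a Rigidity/Tomography sub-round they always pass, since the consistency requirement is trivially met when $\pv$ genuinely measures the observable he reports. In the EPR-Test and EPR-Computation sub-rounds the joint behaviour of $\ver$ and an honest $\pv$ reproduces $V_{EPR}^r$ exactly, so by Theorem~\ref{thm:EPR-correctness} acceptance occurs with probability $1$ and $\geq 2/3$ respectively. Summing with weights $p_1,p_2,p_3,p_4$ gives the claimed $p_{\mathrm{compl}}$.

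\textbf{Soundness} will follow the same two-step template used for the Leash Protocol. Step one: given any $(\pv^*,\pp^*)$ that passes the two rigidity sub-rounds with probability at least $1-\eps$, I would construct near-equivalent provers $(\pv',\pp')$ with $\pv'$ fully honest, losing only $\poly(\eps)$ in acceptance. The Clifford sub-round gives, via Corollary~\ref{cor:clifford-rigid}, approximate $m$-EPR structure and control over $\pv^*$'s measurements \emph{when his question is uniform in $\Sigma^m$}, while the Tomography sub-round, via Corollary~\ref{cor:clifford-rigid-adaptive}, extends this control to the case where $\pv^*$ \emph{chooses} which $\Sigma$-observable to report, which is exactly the situation in the EPR-Computation sub-round. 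The complex-conjugation ambiguity $\lambda\in\{+,-\}$ from the rigidity conclusion is absorbed by having the provers jointly sample $\lambda$ from the orthogonal $\tau_\lambda$ components of their auxiliary registers and applying the conjugated honest strategy when $\lambda=-$; classicality of $\ver$ ensures acceptance is unaffected, exactly as in the proof of Lemma~\ref{soundlemma}.

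Step two: once $\pv'$ is honest, the combined behaviour of $(\ver,\pv')$ precisely matches that of $V_{EPR}$ in the EPR protocol. In a computation sub-round, forwarding $\vec{c}$ and $\vec{x}$ to $\pv'$ allows $\pv'$ to compute the adaptive observable $W_i$ via Broadbent's rule (Table~\ref{tab:Oy}), so I would build a single EPR-protocol prover $P^*$ that simulates $\pp^*$ using freshly prepared maximally mixed ancillas in place of $\pv'$'s half-EPR pairs (with the indices randomly permuted), as in Lemma~\ref{lem:soundness-leash-pp}, reproducing the Dog-Walker transcript distribution. Applying Theorem~\ref{thm:EPR-soundness} with $\delta=1/3$ then yields $q_c \leq \tfrac{5}{3}-\tfrac{4}{3}q_t$, where $q_c,q_t$ are the conditional acceptance probabilities in the EPR-Computation and EPR-Test sub-rounds. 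A packaging argument identical to the conclusion of Lemma~\ref{lem:leash-soundness} finishes: any $w>0$ excess of $q_c$ over $2/3$ forces $\eps \leq \eps_*(w) = \poly(w)$ by step one, and choosing $p_1$ and $p_4$ large enough relative to $p_2,p_3$ makes the loss on the $p_1+p_4$ mass outweigh the gain $p_3 w$, yielding the desired constant gap $\Delta$.

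\textbf{Main obstacle.} The essential new difficulty compared to the Leash Protocol is controlling $\pv^*$ in the EPR-Computation sub-round, where his measurement basis is chosen adaptively from $(\vec{c},\vec{x})$ rather than sampled from the fixed distribution used in $\rigid(\Sigma,m)$. A direct invocation of Corollary~\ref{cor:clifford-rigid} does not cover this case, since the verifier's questions to $\pv^*$ there are not uniform on $\Sigma^m$; closing this loophole is precisely the role of the Tomography sub-round, and the proof must carefully check that the marginal distribution of questions seen by $\pv^*$ in the tomography sub-round is indistinguishable, from $\pv^*$'s perspective, from his marginal in the computation sub-round, so that Corollary~\ref{cor:clifford-rigid-adaptive} applies and the replacement of $\pv^*$ by an honest $\pv'$ costs only $\poly(\eps)$.
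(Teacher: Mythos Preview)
Your completeness argument and the overall two-step template are correct, and you have correctly identified the central difficulty: in the EPR-Computation sub-round, $\pv^*$'s input $(\vec{c},\vec{z})$ comes from $\pp^*$'s actual outputs, whereas in the Rigidity-Tomography sub-round the verifier sends \emph{uniformly random} $(\vec{c},\vec{z})$. However, your proposed resolution of this obstacle has a genuine gap.

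You write that ``closing this loophole is precisely the role of the Tomography sub-round'' and that one must ``check that the marginal distribution of questions seen by $\pv^*$ in the tomography sub-round is indistinguishable \ldots\ from his marginal in the computation sub-round.'' But the Tomography sub-round, by itself, cannot establish this: Corollary~\ref{cor:clifford-rigid-adaptive} only controls $\pv^*$'s behaviour on \emph{uniform} $(\vec{c},\vec{z})$, and nothing in your Step one forces $\pp^*$'s outputs $\vec{c}$ to be close to uniform. A malicious $\pp^*$ could in principle bias $\vec{c}$ to signal information to $\pv^*$, and then $\pv^*$ could behave honestly on uniform inputs (passing Tomography) while cheating on the biased inputs it receives in EPR-Computation. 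The paper resolves this via an additional intermediate step you are missing: first, the Rigidity-Clifford round forces $\pv^*$ to be honest in the \emph{EPR-Test} round (since the marginals there are within constant TV distance); then, \emph{because $\pv^*$ is honest in EPR-Test}, Lemma~\ref{lem:ci-unif} shows that $\pp^*$'s outputs $\vec{c}$ must be $O(\eps')$-close to uniform (where $1-\eps'$ is the EPR-Test acceptance probability), or else the test-round checks $c_i = a_i' \oplus e_i$ fail. Only after this is established do the Tomography and EPR-Computation input distributions to $\pv^*$ match, and only then does Corollary~\ref{cor:clifford-rigid-adaptive} let you replace $\pv^*$ by an honest $\pv'$ in EPR-Computation. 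This is why the paper's soundness bound and parameter balancing involve the EPR-Test acceptance probability $\eps'$ separately from the rigidity deficit $\eps$, and why $p_2$ (not just $p_1,p_4$) must be chosen large: it is $p_2$ that penalizes a biased~$\vec{c}$.
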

\noindent The proof of completeness is given in Lemma \ref{lem:dogwalker-completeness}, and proof of soundness is given in Lemma \ref{lem:dogwalker-soundness}. 

\begin{figure}[H]
\rule[1ex]{\textwidth}{0.5pt}
\vspace{-20pt}
\justify
1. Select a round type \textbf{EPR} or \textbf{Rigidity}, and disjoint sets $N,T^0,T^1\subset \{1,\ldots,m\}$ of sizes $n$, $t_0$ and $t-t_0$.  
\begin{description}
\item[EPR] Choose $\vec{z}$ uniformly at random from $\{0,1\}^t$ and send it, along with $N$, $T^0$ and $T^1$, to $\pp$. Receive measurement outcomes $\vec{c}\in\{0,1\}^t$ and $c_f\in\{0,1\}$ from $\pp$.
\item[Rigidity] Choose $W'$ according to $\mu(\cdot)$ and send it to $\pp$. Receive $\vec{e}'\in \{0,1\}^m$ from $\pp$. 
\end{description}
2. Select a sub-round type at random from \textbf{Computation}, \textbf{X Test} or \textbf{Z Test}. 
\begin{description}
\item[Computation] Based on whether it's an \text{EPR} or a \text{Rigidity} Round:
	\begin{description}
	\item[EPR]
		\begin{enumerate}
		\item[(i)] Send $\vec{x}$, $\vec{z}$, $\vec{c}$ and sets $N$, $T^0$ and $T^1$ to $\pv$, and receive measurement outcomes $\vec{a},\vec{b}\in \{0,1\}^n$ and $\vec{e}\in\{0,1\}^t$.
		\item[(ii)] Apply the update rules from Table \ref{tab:EPR-key-updates} gate-by-gate to obtain the final $\sf X$ key for the output wire $a_f'$. If $c_f+a_f'\neq 0$, $\sf reject$. 
		\end{enumerate}
	\item[Rigidity (Tomography)]
		\begin{enumerate}
		\item[(i)] Choose uniform random strings $\vec{c},\vec{z}\in\{0,1\}^t$, $\vec{x} \in \{0,1\}^n$ 
		to send to $\pv$, along with $N$ and $T$, and receive measurement outcomes $\vec{d}\in \{0,1\}^n$ and $\vec{e}\in\{0,1\}^t$. 
		\item[(ii)]
		From $\vec{x}$, $\vec{c}$, $\vec{z}$, $\vec{d}$, and $\vec{e}$, determine the adaptive measurements $W\in\Sigma^{n+t}$ that $V_{EPR}^0$ would have performed (based on Figure \ref{fig:original-protocol-VEPRr}), and $\sf reject$ if the input-output pairs $(W',\vec{e}')$ and $(N\cup T,(W,\vec{e}))$ do not satisfy the winning criterion for $\tom(\Sigma,n+t,m)$.
		\end{enumerate}
	\end{description}
\item[$X$-Test] Based on whether it's an \text{EPR} or a \text{Rigidity} Round:
\begin{description}
	\item[EPR] 
	\begin{enumerate}
		\item[(i)] Choose $W\in\Sigma^m$ uniformly at random among all strings satisfying: $W_i=Z$ for all $i\in N$; $W_i=Z$ for all $i\in T^0$; and $W_i\in\{X,Y\}$ for all $i\in T^1$. Send $W$ to $\pv$ and receive measurement results $\vec{e}\in\{0,1\}^m$. Let $(\vec{a},\vec{b})=(\vec{e}_N,0^n)$. 
		\item[(ii)] Apply update rules from Table \ref{tab:EPR-key-updates} gate-by-gate to obtain $\forall i\in [t]$ the $\sf X$ key before the $i$-th $\sf T$ gate is applied, $a_i'$, and the final $\sf X$ key for the output wire, $a_f'$. 
If $\exists i$ s.t.\ the $i$-th $\sf T$ gate is even and $c_i\neq a_i'+e_i$, $\sf reject$. If $c_f+a_f'\neq 0$, $\sf reject$. 
	\end{enumerate}
	\item[Rigidity (Clifford)] Choose ${W}$ according to the marginal conditioned on ${W}'$, $\mu(\cdot|{W}')$. 
	Send ${W}$ to $\pv$ and receive $\vec{e}\in\{0,1\}^m$. Reject if   $({W}',\vec{e}',{W},\vec{e})$ doesn't win $\rigid(\Sigma,m)$. 
\end{description}

\item[$Z$-Test] Based on whether it's an \text{EPR} or a \text{Rigidity} Round:
\begin{description}
	\item[EPR] 
	\begin{enumerate}
		\item[(i)] Choose $W\in\Sigma^m$ uniformly at random among all strings satisfying: $W_i=X$ for all $i\in N$; $W_i\in\{X,Y\}$ for all $i\in T^0$; and $W_i=Z$ for all $i\in T^1$. Send $W$ to $\pv$ and receive measurement results $\vec{e}\in\{0,1\}^m$. Let $(\vec{a},\vec{b})=(0^n,\vec{e}_N)$.
		\item[(ii)] Apply update rules from Table \ref{tab:EPR-key-updates} gate-by-gate to obtain $\forall i\in [t]$, the $\sf X$ key before the $i$-th $\sf T$ gate is applied, $a_i'$. 
If $\exists i$ s.t.\ the $i$-th $\sf T$ gate is odd and $c_i\neq a_i'+e_i$, $\sf reject$. 
	\end{enumerate}
	\item[Rigidity (Clifford)] Identical to $X$-Test case.
\end{description}
\end{description}
\rule[2ex]{\textwidth}{0.5pt}\vspace{-.5cm}
\caption{The Dog-Walker Protocol: Verifier's point of view.}\label{fig:dogwalker-protocol-V}
\end{figure}

\begin{figure}[H]
\rule[1ex]{\textwidth}{0.5pt}
\vspace{-20pt}
\begin{enumerate}
  \item If $\pp$ receives a question ${W}'$ from $\ver$ (he is playing $\tom$ or $\rigid$):
\begin{enumerate}
     \item[]  Measure the $m$ qubits in the observable indicated by $W'$ --- for example, if $W'\in\Sigma^m$, for $i\in \{1,\ldots,m\}$, measure the $i$-th qubit in the basis indicated by $W_i'$ --- and report
       the outcomes $\vec{e}'$ to~$\ver$.
\end{enumerate}
\item If $\pp$ receives $\vec{z}$, and sets $N$, $T^0$ and $T^1$ from $\ver$ (he is playing the role of $P_{EPR}$ from the EPR Protocol):
\begin{enumerate}
     \item[] Run the prover $P_{EPR}$ from Figure \ref{fig:original-protocol-PEPR} on input $\vec{z}$, the $n$ qubits in $N$, and the $t$ qubits in $T^0\cup T^1$.
     Report the outputs $\vec{c}\in\{0,1\}^t$ and $c_f\in\{0,1\}$ of $P_{EPR}$  to $\ver$. 
\end{enumerate}
\end{enumerate}
\rule[2ex]{\textwidth}{0.5pt}\vspace{-.5cm}
\caption{The Dog-Walker Protocol: Honest strategy for $\pp$.}\label{fig:dogwalker-protocol-PP}
\end{figure}

\begin{figure}[H]
\rule[1ex]{\textwidth}{0.5pt}
\vspace{-20pt}
\begin{enumerate}
  \item If $\pv$ receives a question ${W}$ from $\ver$ (he is playing $\rigid$ or an $X$- or $Z$-Test Round):
\begin{enumerate}
     \item[]  Measure the $m$ qubits in the observable indicated by $W$ --- for example, if $W\in \Sigma^m$, for $i\in \{1,\ldots,m\}$, measure the $i$-th qubit in the basis indicated by $W_i$ --- and report the outcomes $\vec{e}$ to $\ver$.
\end{enumerate}

  \item If $\pv$ receives $\vec{x}$, $\vec{z}$, $\vec{c}$ and sets $N$, $T^0$ and $T^1$ from $\ver$ (he is playing $\tom$ or a Computation Round):
\begin{enumerate}
	\item[] Run the procedure $V_{EPR}^0$ from Figure \ref{fig:original-protocol-VEPRr} on input $\vec{x}$, $\vec{c}$, $\vec{z}$, the $n$ qubits in $N$, and the $t$ qubits in $T^0\cup T^1$. Report the outputs  $\vec{d}$ and $\vec{e}$ of $V_{EPR}^0$ to $\ver$.
\end{enumerate}
\end{enumerate}
\rule[2ex]{\textwidth}{0.5pt}\vspace{-.5cm}
\caption{The Dog-Walker Protocol: Honest strategy for $\pv$.}\label{fig:dogwalker-protocol-PV}
\end{figure}

\subsection{Completeness}

\begin{lemma}\label{lem:dogwalker-completeness}
Suppose $\ver$ executes the Dog-Walker Protocol with parameters $(p_1,p_2,p_3,p_4)$.
There is a strategy for the provers such that, on any input $(Q,\ket{\vec{x}})$
  such that $\norm{\Pi_0 Q\ket{\vec{x}}}^2\geq \frac{2}{3}$, $\ver$ accepts with
  probability at least
  $p_{\mathrm{compl}}=p_1(1-\delta_c)+p_2+\frac{2}{3}p_3+p_4$, for some $\delta_c = e^{-\Omega(n+t)}$.
\end{lemma}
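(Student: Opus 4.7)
The plan is to exhibit an honest strategy for $\pv$ and $\pp$ in which they share a pool of $m = \Theta(n+t)$ EPR pairs and simply execute the behaviors described in Figures~\ref{fig:dogwalker-protocol-PV} and~\ref{fig:dogwalker-protocol-PP}, and then to lower bound the acceptance probability separately for each of the four sub-round types, since the verifier's round choice is made independently of the provers' actions on their first message. The target bound $p_1(1-\delta_c) + p_2 + \tfrac{2}{3}p_3 + p_4$ decomposes naturally into contributions from: the Rigidity/Clifford sub-round ($p_1(1-\delta_c)$), the EPR/Test sub-round ($p_2$), the EPR/Computation sub-round ($\tfrac{2}{3}p_3$), and the Rigidity/Tomography sub-round ($p_4$). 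So the body of the proof is four short verifications.

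First, for the Rigidity/Clifford sub-round, the pair $(\pp,\pv)$ plays a single execution of $\rigid(\Sigma,m)$ with honest measurements on $m$ shared EPR pairs, so by the ``Moreover'' clause of Corollary~\ref{cor:clifford-rigid} they are accepted with probability at least $1 - e^{-\Omega(m)} = 1 - e^{-\Omega(n+t)}$, giving the claimed $\delta_c$. Second, for the EPR/Test sub-round ($X$-test or $Z$-test), the combined actions of $\ver$ and the honest provers reproduce exactly the corresponding test round of Broadbent's EPR Protocol: $\pp$ runs $P_{EPR}$ on his halves of the EPR pairs indexed by $N\cup T$, while $\pv$'s measurement outcomes, combined with $\ver$'s classical post-processing via the update rules of Table~\ref{tab:EPR-key-updates}, realize the state preparation and consistency checks performed by $V_{EPR}$. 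Completeness of the EPR protocol (Theorem~\ref{thm:EPR-correctness} applied to an honest run with $\delta = 0$ in test rounds) then yields acceptance probability $1$ in this sub-round.

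Third, for the EPR/Computation sub-round, again the combined behavior of $(\ver,\pp,\pv)$ matches an honest run of the EPR Protocol on input $(Q,\ket{\vec{x}})$: $\pv$'s output $\vec{d}$ one-time-pads the input state $\ket{\vec{x}}$ on $\pp$'s side, the $\sf T$-gadget interactions proceed correctly, and $\ver$'s final check reduces to $c_f \oplus a_f' = 0$. By Theorem~\ref{thm:EPR-correctness}, this sub-round is accepted with probability at least $\|\Pi_0 Q\ket{\vec{x}}\|^2 \geq 2/3$. Fourth, for the Rigidity/Tomography sub-round, $\pp$ measures his $m$ half-EPR pairs in the basis $W'\in \Sigma^m$ sent by $\ver$, while $\pv$ adaptively computes a string $W\in\Sigma^{n+t}$ according to $V_{EPR}^0$'s procedure and measures his half-EPR pairs indexed by $S=N\cup T$ in those bases. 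Because EPR pair correlations guarantee that outcomes on matched bases coincide, the winning condition $e_i = e_i'$ whenever $i\in S$ and $W_i = W_i'$ holds with probability~$1$; this is exactly the ``Moreover'' claim of Corollary~\ref{cor:clifford-rigid-adaptive}.

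I do not foresee a hard step here: completeness is a straightforward bookkeeping argument once the honest strategy is fixed and the correspondence with Broadbent's EPR protocol and with the rigidity/tomography tests is made explicit. The only mildly delicate point is to check, in the EPR sub-rounds, that the adaptive choices $\ver$ makes (sets $N$, $T^0$, $T^1$ and the bits $\vec{z}$, plus the $W$ sampled in test sub-rounds) combined with honest $\pv$ indeed faithfully implement $V_{EPR}^r$ for each round type $r\in\{0,1,2\}$ --- i.e.~that the distributions over transcripts agree --- so that Theorem~\ref{thm:EPR-correctness} applies verbatim. Summing the four contributions weighted by $p_1,p_2,p_3,p_4$ yields the stated bound on $p_{\mathrm{compl}}$.
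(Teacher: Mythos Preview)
Your proposal is correct and follows essentially the same approach as the paper: fix the honest strategies of Figures~\ref{fig:dogwalker-protocol-PV} and~\ref{fig:dogwalker-protocol-PP}, then bound the acceptance probability in each of the four sub-round types separately (using the completeness of $\rigid$ and $\tom$ for the Rigidity sub-rounds, and the correspondence with Broadbent's EPR protocol together with Theorem~\ref{thm:EPR-correctness} for the EPR sub-rounds), and sum. The paper's proof is terser but structurally identical.
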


\begin{proof}
The provers $\pv$ and $\pp$ play the strategy described in Figures
  \ref{fig:dogwalker-protocol-PV} and \ref{fig:dogwalker-protocol-PP}
  respectively. In the Rigidity-Tomography round, the verification performed by
  $\ver$ amounts to playing $\tom(\Sigma,n+t,m)$ with the provers (with an extra
  constraint on the output $W$ of $\pv$ that is always satisfied by the honest
  strategy). This game has perfect
  completeness, which makes the $\ver$
  accept with probability $1$ in the Rigidity-Tomography round.
  In the Rigidity-Clifford round, $\ver$ plays $\rigid(\Sigma,m)$
  with the provers. The game
  has completeness at least $1-\delta_c$ for some $\delta_c=e^{-\Omega(n+t)}$,
  since $m=\Omega(n+t)$, therefore their success probability in this round is
  at least $1-\delta_c$.

In the EPR round, the provers are exactly carrying out the EPR Protocol, with $\ver$ using $\pv$ to run $V_{EPR}^r$, and $\pp$ playing the role of $P_{EPR}$. Thus, test rounds result in acceptance with probability $1$, and the computation round results in acceptance with probability $\norm{\Pi_0 Q\ket{\vec{x}}}^2$, by Theorem \ref{thm:EPR-correctness}. 
\end{proof}

\subsection{Soundness}

Figure~\ref{fig:full-dog-walker} summarizes the high-level structure of the soundness analysis. Intuitively, our ultimate goal is to argue that both provers either apply the correct operations in EPR-Computation rounds, or are rejected with constant probability. This will be achieved by employing a form of ``hybrid argument'' whereby it is argued that the provers, if they are not caught, must be using the honest strategies described in Figure~\ref{fig:dogwalker-protocol-PP} and Figure~\ref{fig:dogwalker-protocol-PV} in the different types of rounds considered in the protocol. Towards this, we divide the round types into the following four scenarios:
\begin{enumerate}
\item Rigidity-Clifford: The round type is \textbf{Rigidity} and the sub-round type is either \textbf{$X$-Test} or \textbf{$Z$-Test}. (When the provers are honest) $\pv$ behaves as in Item 1 of Figure \ref{fig:dogwalker-protocol-PV}, and $\pp$ behaves as in Item 1 of Figure \ref{fig:dogwalker-protocol-PP}. 
\item EPR-Test: The round type is \textbf{EPR} and the sub-round type is either \textbf{$X$-Test} or \textbf{$Z$-Test}. $\pv$ behaves as  in Item 1 of Figure \ref{fig:dogwalker-protocol-PV}, and $\pp$ behaves as in Item 2 of Figure \ref{fig:dogwalker-protocol-PP}. 
\item EPR-Computation: The round type is \textbf{EPR} and the sub-round type is \textbf{Computation}. $\pv$ behaves as in Item 2 of Figure \ref{fig:dogwalker-protocol-PV}, and $\pp$ behaves as in Item 2 of Figure \ref{fig:dogwalker-protocol-PP}. 
\item Rigidity-Tomography: The round type is \textbf{Rigidity} and the sub-round type is \textbf{Computation}. $\pv$ behaves as in Item 2 of Figure \ref{fig:dogwalker-protocol-PV}, and $\pp$ behaves as in Item 1 of Figure \ref{fig:dogwalker-protocol-PP}. 
\end{enumerate}
Examining Figure \ref{fig:dogwalker-protocol-V}, we can see the following. In the Rigidity-Clifford scenario, the verifier is precisely playing the game $\rigid$ with the provers, as the provers receive questions $W'$ and $W$ distributed according to $\mu(\cdot,\cdot)$, the distribution of questions for $\rigid(\Sigma,m)$; their answers are tested against the winning conditions of $\rigid(\Sigma,m)$. In the Rigidity-Tomography scenario, the verifier plays a variant of the game $\tom$ with the provers, in which $\pv$'s choice of observable $W$ is uniquely determined by his inputs $\vec{x}$, $\vec{c}$ and $\vec{z}$: it should match the observable implemented by $V_{EPR}^0$ on these inputs. In EPR rounds, $\pv$ plays the part of $V_{EPR}^r$ from the EPR Protocol, and $\pp$ play the part of $P_{EPR}$. The EPR-Test scenario corresponds to $X$- and $Z$-tests from the EPR Protocol, whereas the EPR-Computation scenario corresponds to computation rounds from the EPR Protocol.

\begin{figure}[H]
\centering
\begin{tikzpicture}
\filldraw[thick, fill=blue!30!white] (0,6) circle (.3);
\filldraw[thick, fill=blue!30!white] (0,4.5) circle (.3);
\filldraw[thick, fill=green!30!white] (0,3) circle (.3);
\filldraw[thick, fill=green!30!white] (0,1.5) circle (.3);

\filldraw[thick, fill=red!30!white] (4,6) circle (.3);
\filldraw[thick, fill=orange!30!white] (4,4.5) circle (.3);
\filldraw[thick, fill=orange!30!white] (4,3) circle (.3);
\filldraw[thick, fill=red!30!white] (4,1.5) circle (.3);

\node at (0,6) {1};
\node at (4,6) {1};
\node at (0,4.5) {2};
\node at (4,4.5) {2};
\node at (0,3) {3};
\node at (4,3) {3};
\node at (0,1.5) {4};
\node at (4,1.5) {4};

\draw[ultra thick, ->, blue!30!white] (0,5.7)--(0,4.8);
\draw[ultra thick, ->, green!30!white] (0,1.8)--(0,2.7);
\path[ultra thick, ->, orange!30!white] (4,4.2) edge (4,3.3);
\path[ultra thick, ->, red!30!white] (4,5.7) edge[bend left] (4,1.8);

\draw[ultra thick, <->] (.3,6)--(3.7,6);
\draw[ultra thick, ->] (.3,4.5)--(3.7,4.5);
\draw[ultra thick, <-] (.268,3.134)--(3.732,4.366);
\draw[ultra thick, <-] (.3,1.5)--(3.7,1.5);

\node at (2,6.25) {$\rigid$ Test};
\node at (2,4.75) {Soundness of EPR};
\node [rotate=18.5] at (2.1,3.45) {Uniformity of $\{c_i\}_i$};
\node at (2,1.75) {$\tom$ Test};

\draw[thick] (8,6) circle (.3);
\node at (8,6) {1};
\node at (9.85,6) {Rigidity-Clifford};

\draw[thick] (8,4.5) circle (.3);
\node at (8,4.5) {2};
\node at (9.27,4.5) {EPR-Test};

\draw[thick] (8,3) circle (.3);
\node at (8,3) {3};
\node at (9.95,3) {EPR-Computation};

\draw[thick] (8,1.5) circle (.3);
\node at (8,1.5) {4};
\node at (10.2,1.5) {Rigidity-Tomography};

\node at (0,7) {\pv};
\node at (4,7) {\pp};

\end{tikzpicture}
\caption{Overview of the soundness of the Dog-Walker Protocol}\label{fig:full-dog-walker}
\end{figure}

\noindent The structure of the proof is as follows (see also Figure~\ref{fig:full-dog-walker}):
\begin{enumerate}
\item[(i)] By the game $\rigid$, in the Rigidity-Clifford rounds, both $\pp$ and $\pv$ must be honest, or they would lose the game.
\item[(ii)] Since $\pv$ can't distinguish between Rigidity-Clifford and EPR-Test (both are Figure \ref{fig:dogwalker-protocol-PV} Item 1 from his perspective, and the input distributions, while not identical, are within constant total variation distance), $\pv$ must be honest in the EPR-Test rounds, by (i). 
\item[(iii)] Since $\pp$ can't distinguish between Rigidity-Clifford and Rigidity-Tomography (both are Figure~\ref{fig:dogwalker-protocol-PP} Item 1 from his perspective), $\pp$ must be honest in the Rigidity-Tomography rounds, by~(i). 
\item[(iv)] Since $\pv$ is honest in EPR-Test rounds by (ii), $\pp$ must be honest in EPR-Test rounds or he will get caught, but in particular, he must output values $\{c_i\}_{i\in [t]}$ that are uniform random and independent of $\vec{z}$. Since $\pp$ can't distinguish between EPR-Test and EPR-Computation rounds, this is also true in EPR-Computation rounds, when the verifier sends the values $\{c_i\}_i$ to $\pv$. 
\item[(v)] $\pv$ must be honest in Rigidity-Tomography rounds, or the provers would lose the game $\tom$.
\item[(vi)] Since $\pv$ can't distinguish between Rigidity-Tomography rounds and EPR-Computation rounds (both are Figure \ref{fig:dogwalker-protocol-PV} Item 2 from his perspective), $\pv$ must be honest in EPR-Computation rounds, by~(v), and his input distribution to both rounds is within constant total variation distance, by (iv).
\item[(vii)] Since $\pv$ is honest in EPR-Test rounds by (ii), and EPR-Computation rounds by (vi), the combined behavior of $\ver$ and $\pv$ in the EPR rounds is that of $V_{EPR}$ in the EPR Protocol, so
by the soundness of the EPR Protocol, $\pp$ must be honest in EPR-Computation rounds, or get caught in the EPR-Test rounds with high probability.
\end{enumerate}

 The following lemma establishes (i), (ii) and (iii). 

\begin{lemma}\label{lem:PV-2-PP-4}
Suppose the verifier executes the Dog-Walker Protocol 
with provers $(\pv^*,\pp^*)$ such that the provers are accepted with probability $q_1\geq 1-\eps$ in the Rigidity-Clifford Round, $q_2$ in the EPR-Test Round, $q_3$ in the EPR-Computation Round, and $q_4$ in the Rigidity-Tomography Round. Then there exist provers $(\pv',\pp')$ such that:
\begin{itemize}[nolistsep]
\item $\pv'$ and $\pp'$ both apply the honest strategy in the Rigidity-Clifford rounds, $\pv'$ applies the honest strategy in the EPR-Test rounds, and $\pp'$ applies the honest strategy in the Rigidity-Tomography rounds; in particular, the state shared by the provers at the beginning of the protocol is a tensor product of the honest state consisting of $m$ shared EPR pairs and an arbitrary shared ancilla;
\item The provers are accepted with probability $q_2'=q_2-O(\mathrm{poly}(\eps))$ in the EPR-Test Round, $q_3'=q_3$ in the EPR-Computation Round, and $q_4'=q_4-O(\mathrm{poly}(\eps))$ in the Rigidity-Tomography Round. 
\end{itemize}
\end{lemma}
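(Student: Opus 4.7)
The plan is to apply the rigidity guarantee of Corollary~\ref{cor:clifford-rigid} to the provers' near-perfect success $q_1\geq 1-\eps$ in the Rigidity-Clifford round (whose input distribution matches $\rigid(\Sigma,m)$ exactly). This yields local isometries $V_A,V_B$ and subnormalized densities $\tau_{\pm}$ such that $(V_A\otimes V_B)\ket{\psi}$ is $O(\sqrt{\eps})$-close to $\ket{\EPR}^{\otimes m}\otimes\ket{\mathrm{aux}}$ and, on average for $W$ drawn uniformly from $\Sigma^m$, the measurement operators of both provers on $W$ are $\poly(\eps)$-close to the honest Clifford measurements, up to the complex-conjugation ambiguity encoded by $\tau_{\pm}$.

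Following the template of Lemma~\ref{soundlemma}, I would then define $\pv',\pp'$ to share the state $\ket{\EPR}^{\otimes m}\otimes \ket{\mathrm{aux}'}$, where $\ket{\mathrm{aux}'}$ purifies the $\pm$-valued $\lambda$-register weighted by $\tau_{\pm}$ together with an auxiliary workspace large enough to reconstitute a copy of $\ket{\psi}$ on demand via a local isometry. On an Item 1 query (Rigidity-Clifford or EPR-Test for $\pv'$; Rigidity-Clifford or Rigidity-Tomography for $\pp'$), the prover measures his $\lambda$-register and performs the honest single-qubit Clifford measurement specified by $W$, complex-conjugated if $\lambda=-$. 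On an Item 2 query, the prover uses the workspace to embed the shared EPR register back into a copy of $\ket{\psi}$ and then faithfully simulates $\pv^*$ or $\pp^*$ on this copy. Both provers are then honest in Rigidity-Clifford by construction; in EPR-Computation both are in Item 2 and produce transcripts identically distributed to those of $(\pv^*,\pp^*)$ on $\ket{\psi}$, so $q_3'=q_3$ exactly. In EPR-Test (respectively Rigidity-Tomography) $\pv'$ (resp.~$\pp'$) is honest while the other prover faithfully simulates the original strategy. Since the measurement applied by a prover depends only on his question $W$ and not on the round type, the approximate honesty given by rigidity in Rigidity-Clifford transfers to these rounds, yielding $q_2'=q_2-O(\poly(\eps))$ and $q_4'=q_4-O(\poly(\eps))$.

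The main obstacle is making the last transfer step quantitative: Corollary~\ref{cor:clifford-rigid} controls the average error only for $W$ uniform on $\Sigma^m$, while in EPR-Test $W$ is uniform conditional on the constraints imposed by the random subsets $N,T^0,T^1$, and in Rigidity-Tomography the question is $W'$ (on $\pp$'s side) with marginal $\mu(\cdot)$. I would address this by averaging over the choice of $(N,T^0,T^1)$ before conditioning on $W$, and by a direct combinatorial calculation show that the resulting marginal on $W$ is within a bounded multiplicative factor of uniform on $\Sigma^m$ on the typical set of $W$ (using $m=\Theta(n+t)$ and concentration of the per-symbol counts), with atypical $W$ contributing negligibly to both distributions. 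Once this is in place, the rigidity average bound transfers with only a constant-factor loss and the overall deviation in acceptance probabilities remains $\poly(\eps)$, completing the proof along the lines of Lemma~\ref{soundlemma}.
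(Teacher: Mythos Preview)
Your proposal is essentially the paper's proof: both invoke the argument of Lemma~\ref{soundlemma}, using the rigidity guarantee from the Rigidity-Clifford round to replace each prover's Item~1 behavior by the honest strategy (up to the $\lambda$ conjugation), while leaving Item~2 untouched so that $q_3'=q_3$.

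Two small points of comparison. For Rigidity-Tomography, the paper handles the transfer more simply than you do: it observes that $\pp^*$'s input distribution there is \emph{identically} $\mu$, the same marginal as in Rigidity-Clifford, so the $\poly(\eps)$ approximation on $\pp^*$'s Item~1 measurements transfers to that round with no distribution-comparison step at all. You correctly note that the marginal is $\mu$ but then list it as part of the ``main obstacle,'' which is unnecessary. For EPR-Test, the situation is reversed: the paper asserts that $\pv^*$'s input is uniform on $\Sigma^m$ and uses the pointwise lower bound $\mu(W)\ge \tfrac{1}{c|\Sigma|^m}$ to transfer the rigidity guarantee. As you correctly observe, the EPR-Test marginal on $W$ (after averaging over the random $N,T^0,T^1$ and over the $X$-test/$Z$-test choice) is not literally uniform; your proposed combinatorial argument, showing a bounded multiplicative ratio to uniform on the typical set of $W$ with the atypical set negligible, is a more careful version of the same idea and is what is actually needed to make this step rigorous.
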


\begin{proof}
Using a similar argument as in Lemma~\ref{soundlemma}, the strategy of $\pv^*$ in
Rigidity-Clifford rounds, which is also his strategy in EPR-Test rounds (Figure \ref{fig:dogwalker-protocol-PV} Item 1); and the strategy of $\pp^*$ in Rigidity-Clifford rounds, which is also his strategy in Rigidity-Tomography rounds (Figure \ref{fig:dogwalker-protocol-PP} Item 1);
 can both be replaced with the honest strategies. Since the distribution of inputs to $\pp^*$ in the Rigidity-Tomography rounds and Rigidity-Clifford rounds is the same, the success probability in the Rigidity-Tomography rounds is changed by at most $O(\mathrm{poly}(\eps))$ by using the honest strategy. 
On the other hand, $\pv^*$'s input distribution in EPR-Test rounds is uniform on $\Sigma^m$, whereas his distribution in Rigidity-Clifford rounds is given by $\mu$. However, from the description of the test $\rigid$ it is clear that for all $W\in\Sigma^m$, $\mu(W)\geq \frac{1}{c|\Sigma|^m}$ for some constant $c>1$, thus the total variation distance between the two distributions is at most $1-\frac{1}{c}$. Thus, replacing $\pv^*$ with the honest strategy in the EPR-Test  rounds will change the success probability by at most  $O(\mathrm{poly}(\eps))$. 

Finally, since the provers' strategy in the EPR-Computation round has not changed, the
  acceptance probability in it remains unchanged.
\end{proof}

Next, we will show that whenever $\pv^*$ is honest in the EPR-Test rounds this forces $\pp^*$ to output (close to) uniformly random $\{c_i\}_{i\in [t]}$ that are independent of the round type, even given $\vec{z}$. This will allow us to verify that $\pp^*$ is unable to signal to $\pv^*$ whether the round is an EPR Round in the EPR-Computation round, when $\pv^*$ is sent $\vec{z}$ and $\vec{c}$. This establishes (iv).

\begin{lemma}\label{lem:ci-unif}
Suppose the verifier executes the Dog-Walker Protocol with provers $(\pv^*,\pp^*)$ such that the initial shared state of the provers consists of $m$ shared EPR pairs, together with an arbitrary shared auxiliary state; $\pv^*$ plays the honest strategy in the EPR-Test rounds; the provers are accepted with probability $q_1$ in the Rigidity-Clifford Round, $q_2 = 1-\eps'$ in the EPR-Test Round, $q_3$ in the EPR-Computation Round, and $q_4$ in the Rigidity-Tomography Round. Then the input $(\vec{c},\vec{z})$ given by the verifier to $\pv^*$ in the EPR-Computation rounds has a distribution that is within $O(\eps')$ total variation distance of uniform on $\{0,1\}^t\times\{0,1\}^t$. 
\end{lemma}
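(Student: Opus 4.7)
The proof proceeds in three steps, exploiting the indistinguishability of the three EPR sub-rounds from $\pp^*$'s perspective, the honesty of $\pv^*$ in test rounds, and the Kraus-decomposition soundness analysis of the EPR Protocol.

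First, I observe that in each of the EPR-Computation, EPR-Test-$X$, and EPR-Test-$Z$ sub-rounds, $\pp^*$ receives an input consisting of a uniformly random $\vec{z}\in\{0,1\}^t$ together with the sets $N,T^0,T^1$ drawn from the same fixed distribution. Hence the conditional distribution $\Pr[\vec{c}\,|\,\vec{z}]$ induced by $\pp^*$'s POVM is identical across the three sub-round types. Since $\vec{z}$ is uniform, it suffices to show $\mathbb{E}_{\vec{z}}\,\mathrm{TV}(\Pr[\vec{c}\,|\,\vec{z}],\,U_t)=O(\eps')$, which yields the claimed $O(\eps')$ bound on the joint distribution $(\vec{c},\vec{z})$ from uniform on $\{0,1\}^{2t}$.

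Second, since $\pv^*$ plays the honest strategy and (by assumption) the provers' shared state is a tensor product of $m$ EPR pairs with an ancilla, the combined action of $\ver$ and $\pv^*$ in an EPR-Test round is identical to that of $V_{EPR}$ in the corresponding test round of the EPR Protocol, interacting with $\pp^*$ playing the role of a malicious $P^*_{EPR}$. I then apply the Kraus-decomposition analysis from the proof of Theorem~\ref{thm:EPR-soundness}: writing $\pp^*$'s effective attack as $K_k=\sum_Q\alpha_{k,Q} Q$ decomposed over Paulis, one obtains
\[
A:=\sum_k\sum_{Q\notin B_{t,m}}|\alpha_{k,Q}|^2 \,\le\, 2(1-q_2)\,=\,O(\eps'),
\]
where $B_{t,m}$ is the set of Paulis that act as $I$ or $Z$ at every position measured in the computational basis during the protocol---namely the $\sf T$-gadget positions producing each $c_i$ and the output wire producing $c_f$.

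Finally, benign Paulis in $B_{t,m}$ commute through the computational-basis measurements that extract $(\vec{c},c_f)$, so the contribution of the benign part of $\pp^*$'s attack to the output distribution coincides with that produced by the honest strategy $P_{EPR}$. In an honest execution each $c_i$ is the outcome of a $Z$-basis measurement on a qubit whose state lies in the $X$-$Y$ plane---the $\sf T$-aux is prepared by $\ver$'s measurement in an $F$- or $G$-eigenbasis (whose eigenstates lie in the $X$-$Y$ plane), and the $\sf CNOT$ preserves this property up to Pauli-$X$ conjugation of the target---so its outcome is uniformly random; iterating across the $t$ gadgets shows the honest distribution of $\vec{c}$ given $\vec{z}$ is uniform on $\{0,1\}^t$ and independent of $\vec{z}$. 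Combining this with $A=O(\eps')$ yields the claimed $O(\eps')$ total-variation bound on $(\vec{c},\vec{z})$. The main technical subtlety is the passage from the Kraus-weight bound on non-benign attacks to a total-variation bound on the output distribution, which requires the same kind of cross-term analysis used in \cite{broadbent15howtoverify} to convert $A$ into a bound on the acceptance probability of a computation round.
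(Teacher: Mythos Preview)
Your approach via the Kraus decomposition of $\pp^*$'s attack is substantially more indirect than the paper's, and the deferred ``cross-term analysis'' is not routine bookkeeping but a genuine gap. In~\cite{broadbent15howtoverify} the cross terms between benign and non-benign Paulis are killed by the averaging over one-time-pad keys that enters the \emph{acceptance predicate}; here you are trying to control the raw output distribution $\Pr[\vec{c}\,|\,\vec{z}]$, and no such averaging is available. One can check that the benign--non-benign cross terms do vanish (they are off-diagonal in the computational basis on at least one measured wire, and $\pp^*$'s reduced state on his EPR halves is maximally mixed), but the remaining off-diagonal benign--benign and non-benign--non-benign contributions to $\Pr[\vec{c}]-2^{-t}$ are not bounded by $A$ in any obvious way, and it is not clear you recover $O(\eps')$ rather than, say, $O(\sqrt{\eps'})$. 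A smaller point: your argument that the honest $c_i$ is uniform ``because the $\sf T$-aux is prepared in an $F$- or $G$-eigenbasis'' misidentifies the mechanism---in the Dog-Walker ordering $\pp^*$ acts before $\pv^*$, and the lemma does not assume $\pv^*$ is honest in computation rounds; the correct reason is simply that $\pp^*$'s EPR halves are maximally mixed regardless of what $\pv^*$ later does.

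The paper sidesteps all of this by a direct coupling. Since (as you correctly observe) $\pp^*$'s message $\vec{c}$ is independent of the sub-round type, one may work in the test rounds, where the verifier's check is literally $c_i = e_i + a_i'$. The paper sets $D_i := e_i + a_i'$ using the outcome $e_i$ that the honest $\pv^*$ obtains on his half of the $i$-th $\sf T$-aux EPR pair in the relevant test round. The crucial observation is that in test rounds the key-update rules of Table~\ref{tab:EPR-key-updates} never use $\vec{c}$, so $a_i'$ depends only on $\pv^*$'s prior outcomes and $\vec{z}$; since each $e_i$ is a fresh $Z$-basis measurement on a half-EPR pair it is uniform and independent of $a_i'$, whence $\vec{D}$ is uniform and independent of $\vec{z}$. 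Then $\Pr[\vec{C}\neq\vec{D}]\le 2\eps'$ follows directly from the acceptance probability, and the TV bound is immediate. No Pauli decomposition or cross-term analysis is needed.
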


\begin{proof}
Let $a_i'$ denote the $\sf X$ key of the wire to which the $i$-th $\sf T$ gate is applied, just before the $i$-th $\sf T$ gate is applied, and let $D_i$ be a random variable defined as follows. If the $i$-th $\sf T$ gate is even, let $D_i=e_i+a_i'$, where we interpret $e_i$ and $a_i'$ as the random variables representing the measurement result and key $\ver$ would get if she chooses to execute an $X$-Test round. If the $i$-th $\sf T$ gate is odd, let $D_i=e_i+a_i'$, where we interpret $e_i$ and $a_i'$ as the measurement result and key $\ver$ would get if she chooses to execute an $Z$-Test round. Since $\pv^*$ is assumed to play honestly in EPR-Test rounds, $\vec{D}$ is uniformly distributed in $\{0,1\}^t$. In particular, we have, for any $\vec{d},\vec{z}\in\{0,1\}^t$,
\begin{equation}
\Pr[\vec{D}=\vec{d},\vec{Z}=\vec{z}]=\frac{1}{4^t}.\label{eq:D-unif}
\end{equation}

Let $C_i$ be the random variable that corresponds to the measurement output of
  the $i$-th $\sf T$ gadget by $\pp^*$ in $X$-Test round if the $i$-th $\sf T$
  gate is even, or the measurement output of the $i$-th $\sf T$ gadget 
  by $\pp^*$ in $Z$-Test round if the $i$-th $\sf T$ gate is odd.

Let $T^0\subset[t]$ be the set of even $\sf T$ gates and $T^1\subset[t]$ the set of odd $\sf T$ gates. In an $X$-Test Round, the provers are rejected whenever $i\in T^0$ and $c_i\neq d_i$, and in a $Z$-Test Round, they are rejected whenever $i\in T^1$ and $c_i\neq d_i$. An EPR-Test Round consists of running one of these two rounds with equal probability, so:
\begin{equation}
\Pr[\vec{C}\neq\vec{D}]  \leq  2\eps'.\label{eq:C-D-equal}
\end{equation}
We can express \eqref{eq:C-D-equal} as
\begin{equation*}
\Pr[(\vec{C},\vec{Z})\neq(\vec{D},\vec{Z})]  \leq  2\eps'.
\end{equation*}
We conclude by using the easily verifiable fact that for any random variables $X$ and $Y$ such that $\Pr[X= Y]\geq 1-2\eps'$, the total variation distance between the marginal distributions on $X$ and $Y$ is at most $2\eps'$. 
\end{proof}

Next, we can use the tomography test $\tom$ to establish (v), and then the fact that by Lemma \ref{lem:ci-unif} the input to $\pv$ is not very different in EPR-Computation and Rigidity-Tomography rounds to establish (vi):

\begin{lemma}\label{lem:PV-34}
Suppose the verifier executes the Dog-Walker Protocol with provers $(\pv^*,\pp^*)$ such that: $\pv^*$ applies the honest strategy in EPR-Test rounds; 
$\pp^*$ applies the honest strategy in the Rigidity-Tomography rounds; and the provers are accepted with probability $q_1$ in the Rigidity-Clifford Round, $q_2 = 1-\eps'$ in the EPR-Test Round, $q_3$ in the EPR-Computation Round, and $q_4=1-\eps$ in the Rigidity-Tomography Round. Then there exist provers $(\pv',\pp')$ such that $\pv'$ applies the honest strategy in the Rigidity-Tomography rounds and EPR-Computation rounds, $\pp'$ applies the honest strategy in Rigidity-Tomography rounds, and
the provers are accepted with probability $q_1$ in the Rigidity-Clifford Round, $q_2 = 1-\eps'$ in the EPR-Test Round and $q_3-\mathrm{poly}(\eps)-O(\eps')$ in the EPR-Computation round. 
\end{lemma}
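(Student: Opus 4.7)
My plan is to construct $\pv'$ and $\pp'$ explicitly and then establish the three acceptance bounds, with a hybrid argument for the EPR-Computation round being the only non-trivial part.

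I would define $\pv'$ to agree with $\pv^*$ whenever invoked on Item 1 of Figure~\ref{fig:dogwalker-protocol-PV} (the case used in Rigidity-Clifford and EPR-Test rounds), and to apply the honest strategy of Item 2 (running $V_{EPR}^0$ as in Figure~\ref{fig:original-protocol-VEPRr}) whenever invoked in a Rigidity-Tomography or EPR-Computation round. I set $\pp' = \pp^*$. Because the strategies of $\pp$ and the Item-1 portion of $\pv$ are unchanged in the transition from $(\pv^*,\pp^*)$ to $(\pv',\pp^*)$, the acceptance probabilities in Rigidity-Clifford and EPR-Test are exactly preserved at $q_1$ and $1-\eps'$ respectively; these bounds require no further work.

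For the EPR-Computation round I would proceed by a two-step hybrid. In the first hybrid, I replace the pair $(\vec{c},\vec{z})$ that $\ver$ relays to $\pv$ — originally $\vec{z}$ uniform together with $\pp^*$'s output $\vec{c}$ — by a pair sampled uniformly from $\{0,1\}^t\times\{0,1\}^t$. By Lemma~\ref{lem:ci-unif}, the honest distribution of $(\vec{c},\vec{z})$ is within $O(\eps')$ total variation of uniform, so the change in acceptance probability is $O(\eps')$; this bound applies identically when running the hybrid with $\pv^*$ or with $\pv'$. In the second hybrid, inside this uniform-input setting I swap $\pv^*$ for $\pv'$, and the task is to show this swap changes acceptance by at most $\poly(\eps)$.

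The key tool for the second hybrid is Corollary~\ref{cor:clifford-rigid-adaptive} applied to the Rigidity-Tomography round: since $(\pv^*,\pp^*)$ succeed there with probability $1-\eps$ and $\pp^*$ is honest, the corollary implies that the channel $\pv^*$ applies, averaged over his uniform inputs $(\vec{x},\vec{c},\vec{z},N,T^0,T^1)$, together with the resulting post-measurement state transferred to $\pp^*$'s register, is $\poly(\eps)$-close in trace distance to the honest channel applied to the ideal $m$-EPR-plus-ancilla state. Because $\pv^*$ cannot tell Rigidity-Tomography apart from EPR-Computation — both invoke Item 2 of his strategy on the same type of classical input — this channel-level closeness transfers to the EPR-Computation hybrid. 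The main obstacle is that Rigidity-Tomography averages over uniform $\vec{x}$, whereas the EPR-Computation hybrid uses the actual input $\vec{x}_0$; a naive conditioning on a specific $\vec{x}_0$ would blow the error up by a factor of $2^n$. I would resolve this by exploiting the linear/affine structure of the key-update rules in Table~\ref{tab:EPR-key-updates}: for the honest $V_{EPR}^0$, the adaptive observable $W_i$ applied to the $i$-th $\sf T$-ancilla depends on the inputs only through $a_i'\oplus c_i\oplus z_i$, where $a_i'$ is an affine function of $\vec{x}\oplus\vec{d}$ and prior $\vec{e},\vec{c},\vec{z}$ values. Hence there is an affine map $\Delta$ such that the reparametrization $\vec{z}\mapsto \vec{z}\oplus\Delta(\vec{x})$ sends the joint distribution of $(W,\vec{d},\vec{e})$ and the post-measurement state on $\pp^*$'s side under inputs $(\vec{x},\vec{c},\vec{z})$ to the same distribution under inputs $(\vec{x}_0,\vec{c},\vec{z}\oplus\Delta(\vec{x}\oplus\vec{x}_0))$. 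Symmetrizing $\pv^*$ by averaging over this reparametrization — which is a legitimate prover since the inputs are all classical and the reparametrization is just a relabeling of uniform randomness — converts the average-over-$\vec{x}$ guarantee from Corollary~\ref{cor:clifford-rigid-adaptive} into the fixed-$\vec{x}_0$ guarantee needed in the hybrid, yielding the $\poly(\eps)$ bound.

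Summing the two hybrid errors gives an EPR-Computation acceptance of $(\pv',\pp')$ of at least $q_3-\poly(\eps)-O(\eps')$, as required. The only delicate step is the reparametrization/symmetrization step above; the rest is bookkeeping about which prover strategies are reused across round types.
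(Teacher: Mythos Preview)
Your high-level plan matches the paper's: leave $\pv^*$'s Item-1 behaviour alone, replace Item-2 by the honest strategy, and control the EPR-Computation loss via Lemma~\ref{lem:ci-unif} together with Corollary~\ref{cor:clifford-rigid-adaptive}. Two points of divergence are worth noting. First, Corollary~\ref{cor:clifford-rigid-adaptive} does not give closeness to ``the honest channel'' but only to a $\{\pm\}$-mixture of the honest measurements and their complex conjugates (the $\sigma_{W,\lambda}$ terms). The paper removes this ambiguity explicitly, as in Lemma~\ref{soundlemma}, by having both provers apply the isometry, measure the ancilla to extract a common $\lambda$, and conjugate when $\lambda=-$; you should not skip this step. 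Second, the paper does not attempt any $\vec z$-reparametrization; it simply invokes Lemma~\ref{lem:ci-unif} to say that $\pv$'s input distribution in the EPR-Computation round is close to that in the Rigidity-Tomography round and transports the $\poly(\eps)$ closeness across.

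The reparametrization you introduce to bridge fixed~$\vec x_0$ versus uniform~$\vec x$ does not work as stated. You claim an affine map $\Delta$ depending only on $\vec x$ such that $\vec z\mapsto\vec z\oplus\Delta(\vec x\oplus\vec x_0)$ leaves the honest transcript invariant. But the computation-round $\sf T$ update in Table~\ref{tab:EPR-key-updates} contains the cross term $(a_j+c_i)z_i$ in the $b_j$-rule. If $(\alpha_j,\beta_j)$ denote the key shifts induced by $\vec x\to\vec x_0$ and one sets $z_i'=z_i\oplus\alpha_j$ to keep $W_i$ fixed, a direct calculation gives $\beta_j\leftarrow\beta_j\oplus\alpha_j\,(a_i'\oplus c_i\oplus z_i)$; after the next $\sf H$ this feeds back into $\alpha$. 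The factor $a_i'\oplus c_i\oplus z_i$ is precisely the bit that selects $W_i\in\{F,G\}$, and it depends on $\vec d$ and, one layer later, on the earlier $e_j$'s. Hence the compensating shift needed on later $z_{i'}$ is a function of $\pv$'s \emph{outputs}, not of $\vec x$ alone, so there is no input-only relabelling. The ``symmetrization of $\pv^*$'' you describe therefore cannot be fixed before $\pv^*$ runs, and the average-over-$\vec x$ guarantee from Corollary~\ref{cor:clifford-rigid-adaptive} is not converted into a guarantee at the specific $\vec x_0$. This is the only genuinely delicate step in your sketch, and as written it has a gap.
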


\begin{proof}
The Rigidity-Tomography rounds can be seen as $\ver$ playing the Tomography Game
  with the provers, except that whereas $\pv^*$ gets no non-trivial input in the
  Tomography Game, in the Rigidity-Tomography round, he gets random values
  $\vec{c}$ and $\vec{z}$ on which his strategy can depend. Fix $\vec{x}$, and let
  $\{Q_{\vec{c},\vec{z}}^{u}\}_{u}$ be the projective measurement that $\pv^*$
  applies upon receiving $\vec{c},\vec{z},\vec{x}$, where  $u = (\vec{d},\vec{e})$ is
  the string of outcomes obtained by $\pv$ on the $n+t$ single-qubit
  measurements he is to perform according to Step 2 in
  Figure~\ref{fig:dogwalker-protocol-PV}. 

By Corollary \ref{cor:clifford-rigid-adaptive}, since the provers win the Rigidity-Tomography round with probability $1-\eps$, for every $\vec{c},\vec{z}\in\{0,1\}^t$,
there exist distributions $q_{\vec{c},\vec{z}}$ on $\Sigma^m\times\{\pm\}$ such that the following is $O(\mathrm{poly}(\eps))$:
\begin{equation}\label{eq:big-dist}
\Es{\vec{c},\vec{z}}\sum_{ u\in \{0, 1\}^m}
\Big\| \Tr_{\reg{A},\hat{\reg{B}}}\left((\Id_{\reg{A}}\otimes V_{\reg{B}} Q_{\vec{c},\vec{z}}^{u})\ket{\psi}\bra{\psi}_{\reg{AB}}(\Id_{\reg{A}}\otimes V_{\reg{B}} Q_{\vec{c},\vec{z}}^{u})^\dagger\right)
- \sum_{\lambda\in\{\pm\}}q_{\vec{c},\vec{z}}(W',\lambda)\left(\bigotimes_{i=1}^m \frac{\sigma^{u_i}_{W_i',\lambda}}{2}\right)\Big\|_1. 
\end{equation}
Here we use the notation from Corollary \ref{cor:clifford-rigid} and
  \ref{cor:clifford-rigid-adaptive}. The string
  $W'=W(\vec{c},\vec{z},\vec{u})\in\Sigma^m$ is uniquely determined by
  $\vec{c},\vec{z}$, and the outcomes ${u}$ reported by $\pv^*$; indeed it
  is using this string that $\pv^*$'s answers are checked against the
  measurement outcomes obtained by $\pp^*$, who by assumption applies the
  honest strategy. For any fixed $(W',\lambda)$ the distribution on
  outcomes $u$ obtained in the ``honest'' strategy represented by the right-hand
  side in~\eqref{eq:big-dist} is uniform. Thus the outcomes $u$ reported by
  $\pv^*$ are within $\poly(\eps)$ of uniform. From this it follows that the joint distribution on transcripts $(\vec{c},\vec{z},u,W'=W(\vec{c},\vec{z},u))$ that results from an interaction with $\pv^*$ is within statistical distance $\poly(\eps)$ of the distribution generated by an interaction with the honest $\pv$; furthermore, by~\eqref{eq:big-dist} the resulting post-measurement states on $\pp^*$ are also $\poly(\eps)$ close to the honest ones, on average over this distribution. 

We can now consider two provers $\pv'$ and $\pp'$ who, in Rigidity-Tomography rounds, first apply the isometries $V_A$, $V_B$ from Corollary~\ref{cor:clifford-rigid-adaptive}, then  measure their auxiliary systems $\hat{\reg{A}}$ and $\hat{\reg{B}}$ using $\Delta_Y$, obtaining a shared outcome $\lambda\in\{\pm\}$, and finally apply the honest strategy shown in Item 2 of Figure \ref{fig:dogwalker-protocol-PV} ($\lambda=+$) or its conjugate ($\lambda = -$). Furthermore, conjugating the honest strategy produces exactly the same statistics as the honest strategy itself, so we may in fact assume that $\pv'$ and $\pp'$ both apply the honest strategy in Rigidity-Tomography rounds.

A consequence of $\pv'$ applying the honest strategy in Figure \ref{fig:dogwalker-protocol-PV} Item 2 is that $\pv'$ also plays the honest strategy in EPR-Computation rounds. Since $\pv'$ is still honest in the EPR-Test round and $q_2 = 1-\eps'$, Lemma \ref{lem:ci-unif} implies that the distribution of the input to $\pv'$ in EPR-Computation rounds is within $\poly(\eps)+O(\eps')$ total variation distance of his input in
Rigidity-Tomography rounds, therefore the provers' success probability in EPR-Computation rounds changes at most by $\mathrm{poly}(\eps)+O(\eps')$. 
\end{proof}

Finally, we show that if $\pv$ is honest, $\pp$ must be honest in EPR computation rounds, or the acceptance probability would be low, establishing (vii):
\begin{lemma}\label{lem:PP-3}
Suppose $\ver$ %
 executes the Dog-Walker Protocol on an input $(Q,\ket{\vec{x}})$ such that $\norm{\Pi_0 Q\ket{\vec{x}}}^2\leq 1/3$, with provers $(\pv,\pp)$ such that $\pv$ plays the honest strategy. Let $q_2$ be the provers' acceptance probability in EPR-Test rounds. Then the verifier accepts with probability at most
  $p_1(1-\delta_c) +p_2q_2+p_3(5/3-4q_2/3)+p_4$. 
\end{lemma}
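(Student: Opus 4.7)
The plan is to reduce directly to the soundness analysis of Broadbent's EPR Protocol (Theorem~\ref{thm:EPR-soundness}). The key observation is that when $\pv$ plays honestly, he simply measures his halves of the $m$ shared EPR pairs in the bases specified by the verifier's queries $W$, and the verifier processes the outcomes (together with $\vec{x}$, the sets $N, T^0, T^1$, and the bits $\vec{z}, \vec{c}$) exactly as $V_{EPR}$ would in Broadbent's protocol. In particular, the $n$ qubits indexed by $N$ supply the quantum one-time-padded input and the $t$ qubits in $T^0\cup T^1$ supply the one-time-padded auxiliary states for the $\sf T$ gadgets, with keys distributed precisely as in the EPR Protocol. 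Thus the joint behavior of $\ver$ and $\pv$ in every EPR round faithfully simulates the honest verifier $V_{EPR}$ operating on the ``relevant'' $n+t$ EPR halves.

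Given $\pp$, I will construct a single-prover attacker $P^*$ against $V_{EPR}$: on input $\vec{z}$ from $V_{EPR}$, the prover $P^*$ internally samples uniformly random disjoint subsets $N,T^0,T^1\subseteq\{1,\dots,m\}$ of the correct sizes, embeds his own $n+t$ EPR halves (shared with $V_{EPR}$) into the positions indexed by $N\cup T^0\cup T^1$, fills the remaining $m-(n+t)$ positions with locally prepared maximally mixed single qubits, and then runs $\pp$ on $(N,T^0,T^1,\vec{z})$ and these $m$ qubits. $P^*$ forwards $\pp$'s output $(\vec{c},c_f)$ back to $V_{EPR}$. By the observation above, the joint distribution of $V_{EPR}$'s transcript with $P^*$ is identical to the conditional joint distribution, given an EPR round has been chosen, of the transcript between $\ver$ and $(\pv,\pp)$ in the Dog-Walker Protocol. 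Consequently, $P^*$ is accepted with conditional probability $q_2$ on a test round ($r\in\{1,2\}$) and $q_3$ on a computation round ($r=0$).

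Now Theorem~\ref{thm:EPR-soundness} applied with $\delta=1/3$ immediately yields
\[
q_3 \;\leq\; 2\bigl(q_2\cdot\tfrac{1}{3}+(1-q_2)\bigr)-\tfrac{1}{3} \;=\; \tfrac{5}{3}-\tfrac{4}{3}q_2.
\]
Bounding the acceptance probability trivially by $1-\delta_c$ in the Rigidity-Clifford case (for which completeness gives $1-\delta_c$ and one bounds by a common worst case used also in the final completeness-vs-soundness comparison) and by $1$ in the Rigidity-Tomography case, the overall acceptance probability is at most
\[
p_1(1-\delta_c)+p_2 q_2+p_3\!\left(\tfrac{5}{3}-\tfrac{4}{3}q_2\right)+p_4,
\]
as claimed.

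The only real subtlety, which is where I expect to have to be careful, is verifying the equivalence of distributions used to define $P^*$: the Dog-Walker verifier samples $N,T^0,T^1$ uniformly and then $\pv$ produces honest outcomes on all of $\{1,\dots,m\}$, whereas $P^*$ mixes genuine shared EPR halves with locally prepared maximally mixed qubits before calling $\pp$. Because (i) the reduced state of $\pv$'s half of an EPR pair measured in any basis $W_i\in\Sigma$ is uniformly random and (ii) $\pp$'s view depends only on his $m$ qubits and on the sets $(N,T^0,T^1,\vec{z})$ he receives, the two experiments produce identical joint distributions on $(\vec{c},c_f)$ and on the keys computed by the verifier; all subsequent acceptance checks depend only on these data, so the reduction goes through.
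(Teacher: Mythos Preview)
Your approach is essentially the same as the paper's: both reduce to the soundness of Broadbent's EPR Protocol (Theorem~\ref{thm:EPR-soundness}) by observing that an honest $\pv$ together with $\ver$ jointly implement $V_{EPR}$, and then read off $q_3 \le 5/3 - 4q_2/3$ from the case $\delta=1/3$. The paper's proof is a two-line assertion of this equivalence; you spell out the single-prover reduction $P^*$ explicitly (embedding the $n+t$ genuine EPR halves and padding with maximally mixed qubits), in the same style as the paper's proof of Lemma~\ref{lem:soundness-leash-pp} for the leash protocol.

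One small point: your justification for the $p_1(1-\delta_c)$ term is muddled. Completeness of $\rigid$ gives $1-\delta_c$ as a \emph{lower} bound on the honest success probability, not an upper bound, so ``bounding trivially by $1-\delta_c$'' is not literally correct for an arbitrary $\pp$. The paper's proof is equally terse here and does not justify this term separately; in the application (Lemma~\ref{lem:dogwalker-soundness}) the lemma is invoked only after $\pp$ has already been made honest in Rigidity rounds, where $q_1' = 1-O(\delta_c)$ holds, so the bound is used consistently. You may want to either note this context or simply write $p_1 q_1$ and carry it through.
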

\begin{proof}
With probability $p_2+p_3$, $\ver$ executes an EPR round, in which case, she executes EPR-Computation with probability $\frac{p_3}{p_2+p_3}$ and EPR-Test with probability $\frac{p_2}{p_2+p_3}$. In the former case, since $\pv$ is honest, he is executing $V_{EPR}^0$. In fact, the behavior of an honest $\pv$ in the EPR-Test rounds is also that of $V_{EPR}^r$. Thus, the combined behavior of $\ver$ and $\pv$ is that of $V_{EPR}$. Then the result follows from Theorem \ref{thm:EPR-soundness}. 
\end{proof}

We can now combine Lemmas \ref{lem:PV-2-PP-4}, \ref{lem:PV-34}, and \ref{lem:PP-3} to get the main result of this section, the ``soundness'' part of Theorem~\ref{thm:dog-walker}.

\begin{lemma}[Constant soundness-completeness gap]\label{lem:dogwalker-soundness}
 There exist constants $p_1$, $p_2$, $p_3$, $p_4=1-p_1-p_2-p_3$ and $\Delta>0$ such that if the verifier executes the Dog-Walker Protocol with parameters $(p_1,p_2,p_3,p_4)$ on input $(Q,\ket{\vec{x}})$ such that $\norm{\Pi_0 Q\ket{\vec{x}}}^2\leq 1/3$, then any provers $(\pv^*,\pp^*)$ are accepted with probability at most $p_{\mathrm{sound}}=p_{\mathrm{compl}}-\Delta$. 
\end{lemma}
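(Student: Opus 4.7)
The plan is to derive soundness via a chained hybrid argument that combines Lemmas~\ref{lem:PV-2-PP-4}, \ref{lem:PV-34}, and~\ref{lem:PP-3}, supplemented by an outer case analysis on how close the per-round acceptance probabilities are to~$1$. Write $q_1,q_2,q_3,q_4$ for the provers' conditional acceptance probabilities in the Rigidity-Clifford, EPR-Test, EPR-Computation, and Rigidity-Tomography rounds respectively, so that the total acceptance is $p_1 q_1 + p_2 q_2 + p_3 q_3 + p_4 q_4$. I will argue in each case that this quantity is bounded by $p_{\mathrm{compl}}-\Delta$ for a suitable constant $\Delta>0$, and then choose $p_1,p_2,p_3,p_4$ so that all cases give the same gap up to constants.

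First I would split on $\eps_1 := 1-q_1$. If $\eps_1 \geq \alpha_1$ for a threshold $\alpha_1$ to be fixed, then bounding $q_2,q_3,q_4 \leq 1$ trivially gives an acceptance of at most $1 - p_1\alpha_1$, which is below $p_{\mathrm{compl}}-\Delta$ provided $p_1\alpha_1 > (1-p_{\mathrm{compl}})+\Delta$. Otherwise, apply Lemma~\ref{lem:PV-2-PP-4} to obtain provers $(\pv',\pp')$ that are honest in Rigidity-Clifford, with $\pv'$ honest in EPR-Test and $\pp'$ honest in Rigidity-Tomography, while the acceptance probabilities change by at most $O(\poly(\alpha_1))$. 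Next, case-split on $\eps_4 := 1 - q_4'$: if $\eps_4 \geq \alpha_4$, bound all other rounds trivially and use the $p_4\alpha_4$ slack. Otherwise, further case-split on $\eps_2 := 1-q_2'$: if $\eps_2 \geq \alpha_2$, use the $p_2\alpha_2$ slack in the EPR-Test round.

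In the remaining subcase all three of $q_1',q_2',q_4'$ are within constant distance of $1$, so Lemma~\ref{lem:PV-34} applies and produces provers $(\pv'',\pp'')$ in which $\pv''$ is honest in Rigidity-Tomography and EPR-Computation (and continues to be honest in Rigidity-Clifford and EPR-Test), while the EPR-Computation acceptance decreases by at most $\poly(\alpha_4)+O(\alpha_2)$. The combined behavior of $\ver$ together with the fully-honest $\pv''$ across EPR rounds then coincides with that of $V_{EPR}$, so Lemma~\ref{lem:PP-3} applies and bounds the total acceptance by $p_1(1-\delta_c)+p_2 q_2''+p_3(5/3-4q_2''/3)+p_4$. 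Subtracting this from $p_{\mathrm{compl}}$ gives
\[
p_{\mathrm{compl}} - \bigl(p_1(1-\delta_c)+p_2 q_2''+p_3(5/3-4q_2''/3)+p_4\bigr) \;=\; p_2(1-q_2'')+p_3\bigl(\tfrac{4}{3}q_2''-1\bigr),
\]
which as a linear function of $q_2''\in[0,1]$ attains its minimum $\min\{p_2-p_3,\,p_3/3\}$ at one of the endpoints and is strictly positive as soon as one picks $p_2 > p_3 > 0$.

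To conclude, I would choose the constants $(p_1,p_2,p_3,p_4)$ in the regime $p_3 < p_2$ with all four probabilities bounded away from $0$, and then pick thresholds $\alpha_1,\alpha_2,\alpha_4$ small enough that the $\poly$ corrections accumulated along the chain of lemmas are dominated by $\min\{p_1\alpha_1,\,p_2\alpha_2,\,p_4\alpha_4,\,p_2-p_3,\,p_3/3\}$. Taking $\Delta$ to be a suitably small positive constant below this minimum then yields the claim. The main obstacle is bookkeeping: the three preparatory lemmas each introduce $\poly(\eps)$ slack that must be re-absorbed at each step, and the case-split thresholds must be chosen simultaneously with the probabilities $p_i$ so that all four cases yield a uniform positive gap; once one fixes $p_3<p_2$ (to make the endpoint $q_2''=0$ harmless) and ensures none of the $p_i$ vanish, the remaining bookkeeping is routine.
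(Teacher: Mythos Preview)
Your proposal is correct and uses the same three lemmas in the same order as the paper. The main difference is organizational: you perform an explicit case analysis on thresholds $\alpha_1,\alpha_2,\alpha_4$, whereas the paper keeps $\eps = 1-\min(q_1,q_4)$ and $\eps' = 1-q_2$ as free parameters throughout. Concretely, the paper applies Lemmas~\ref{lem:PV-2-PP-4} and~\ref{lem:PV-34} to obtain modified provers with $q_3' = \tfrac{1}{3}+w-\poly(\eps)-O(\eps')$, then uses Lemma~\ref{lem:PP-3} to extract the single constraint $w \leq O(\eps')+\poly(\eps)$, and finally substitutes into $p_1 q_1 + p_2 q_2 + p_3 q_3 + p_4 q_4 \leq \min(p_1,p_4)(1-\eps)+\max(p_1,p_4)+p_2(1-\eps')+p_3(\tfrac{1}{3}+w)$ and chooses the $p_i$ so that the resulting expression stays below $p_{\mathrm{compl}}-\Delta$ uniformly in $\eps,\eps'\geq 0$. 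This avoids the need to fix thresholds and re-absorb errors at each case boundary. Your approach is equally valid but carries more bookkeeping; one minor redundancy is that your final minimization over $q_2''\in[0,1]$ is unnecessary, since by the time you reach that subcase your split on $\eps_2$ already forces $q_2''\geq 1-\alpha_2$, so the gap is simply $p_3/3$ minus lower-order terms and the condition $p_2>p_3$ is not actually required there.
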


\begin{proof}
Suppose the provers $\pv^*$ and $\pp^*$ are such that the lowest acceptance probability in either the Rigidity-Clifford round or the Rigidity-Tomography round is $1- \eps$, and they are accepted with probability $1-\eps'$ in the EPR-Test round, and with probability $1/3+w$ in the Computation Round. Applying  Lemma \ref{lem:PV-2-PP-4} and Lemma \ref{lem:PV-34} in sequence, we deduce the existence of provers $(\pv',\pp')$ for which
\begin{align*}
q_1' &= 1- O(\delta_c), \\  q_2' &= 1-\eps'- \poly(\eps), \\ q_3' &= \frac13+w-
  \poly(\eps)-O(\eps'),\\ q_4' &= 1,
\end{align*}
where $q'_1$, $q'_2$, $q'_3$ and $q'_4$ are their success probabilities in the
  four types of rounds, and $1-\delta_c$ is the completeness of the
  $\rigid$ test; from Corollary~\ref{cor:clifford-rigid} we have $\delta_c = 2^{-\Omega(n+t)}$. Moreover $\pv'$ applies the honest strategy in all rounds, while $\pp'$ applies the honest strategy in the Rigidity-Clifford and Rigidity-Tomography rounds. Applying Lemma \ref{lem:PP-3}, it follows that 
$$w \,\leq\, O(\eps') + \poly(\eps) +p_1 \cdot O(\delta_c).$$
Therefore the prover's overall success probability is at most 
\begin{align*}
& \min(p_1,p_4)(1-\eps)+\max(p_1,p_4) + p_2(1-\eps')+p_3\left(\frac{1}{3}+w\right) \\
\leq & p_{\mathrm{compl}} - \left( \frac{p_3}{3} + \eps' p_2+\eps\min(p_1,p_4)\right)+ p_3\left(O(\eps')+\poly(\eps)\right)+ (p_1 + p_3p_1) \cdot O(\delta_c),
\end{align*}
where recall from Lemma~\ref{lem:dogwalker-completeness} that
  $p_{\mathrm{compl}} =  p_1(1-\delta_c)+p_2+p_4+\frac{2}{3}p_3$. Fixing $p_2$
  to be a large enough multiple of $p_1$ and of $p_3$ we can ensure that the net contribution
  of the terms involving $\eps'$ and $\delta_c$ on the right-hand side is always
  non-positive. Choosing $p_1=p_4$ and $p_3$ so that the ratio $p_3/p_1$ is small
  enough we can ensure that the right-hand side is less than $p_{\mathrm{compl}}
  -\Delta$, for some universal constant $\Delta>0$ and all $\eps,\eps'\geq 0$.
\end{proof}

\subsection{Two-prover game for QMA}\label{sec:qma}

In this section we propose a new two-prover game for QMA, which is based on the Dog-Walker
protocol. Such type of games are important in the context of the Quantum PCP conjecture~\cite{AharonovAV13}, more specifically to its game version that was recently proved~\cite{NatarajanV18}.

A promise problem $L$ is in QMA if there is a uniform family
of quantum circuits $\{V_x\}_{x \in L}$ such that if $x$ is a yes-instance, then there exists a
quantum state $\ket{\psi} \in \left(\C^2\right)^{\otimes n_w}$, such that
$V_x$  accepts on input $\ket{\psi}\ket{0}^{\otimes n_a}$ with probability at least
$\frac{2}{3}$, while for a no-instance $x$ and  all states $\ket{\psi} \in
\left(\C^2\right)^{\otimes n_w}$, $V_x$
rejects on input $\ket{\psi}\ket{0}^{\otimes n_a}$ with probability at
least $\frac{2}{3}$. The run-time of the circuit $V_x$ and the values $n_w$ and $n_a$ are polynomially bounded in $|x|$.

 In a multi-prover game for a promise problem $L$, an
 instance $x \in L$ is reduced to a game $G_x$ such that if $x$ is a yes-instance, then the maximum
 acceptance probability in the game is at least $c$, whereas if $x$ is a
 no-instance,
 then the maximum acceptance probability in the game is  at most $s$, for $c
 > s$.

 Here, we are interested in multi-prover games where the verifier is classical,
 the honest provers run a polynomially bounded quantum computation on copies
 of an accepting witness and the completeness-soundness gap $c-s$ is constant.
Using the Dog-Walker protocol, we are able to construct, to the best of our knowledge, the first two-prover
game for QMA with these parameters. In our protocol the Verifier and provers exchange messages of polynomial size 
in two rounds of communication, one with each
prover.

Our protocol consists in the Verifier running the Dog-Walker protocol,
  with the following changes:
  \begin{itemize}
    \item On X-Test rounds (resp.\ $Z$ Test-rounds), the Verifier randomly selects  positions where
      $\pv$ has measured in the $Z$ basis (resp.\ $X$ basis) and sends them to $\pp$. $\pp$ uses the EPR pair halves in these positions as the witness register when he executes the circuit $V_x$.
    \item On Rigidity-Computation rounds, the Verifier
      informs $\pv$ of the halves of EPR pairs that should be used to teleport the witness
      state to $\pp$, and $\pv$ reports the outcomes of the teleportation
      measurements along with the answers for the original Dog-Walker protocol. The Verifier ignores the measurements corresponding to the teleportation and uses the remaining bits to perform the same checks as in the original Dog-Walker protocol.
      \item On EPR-Computation rounds, the Verifier informs $\pp$ of the
      EPR pair halves that should be used as the witness when he performs the circuit $V_x$.  
      The Verifier
      also informs $\pv$ of these positions, who should use them to teleport the witness
      state to $\pp$.  The outcomes of the teleportation      measurements are reported to the Verifier along with the answers for the original Dog-Walker protocol, in order that the Verifier can decrypt the output of the computation.
  \end{itemize}

The full description of the protocol is presented in Figures~\ref{fig:qma-protocol-V}, \ref{fig:qma-protocol-PP} and \ref{fig:qma-protocol-PV}, where the differences to the original Dog-Walker protocol are underlined. We state the result and sketch its proof.

\begin{figure}[H]
\rule[1ex]{\textwidth}{0.5pt}
\vspace{-20pt}
\justify
Let $x$ be an instance of a language $L \in$ QMA and $V_x$ the associated verification circuit. $V_x$ takes as input  an $n_w$-qubit witness register and an $n_a$-qubit ancilla register. It has $t$ $\sf T$ gates, $t_0$ of which are even and $t-t_0$ are odd (see Section~\ref{sec:EPR-protocol} for the definition of even and odd $\sf T$ gates).

\justify
1. Select a round type \textbf{EPR} or \textbf{Rigidity}, and disjoint sets
  $N^w,N^a, T^0,T^1\subset \{1,\ldots,m\}$ of sizes $n_w$, $n_a$,  $t_0$ and $t-t_0$ respectively. 
\begin{description}
\item[EPR] Choose $\vec{z}$ uniformly at random from $\{0,1\}^t$ and send it,
  along with $\vec{x}$, $N^w$, $N^a$, $T^0$ and $T^1$, to $\pp$. Receive measurement outcomes $\vec{c}\in\{0,1\}^t$ and $c_f\in\{0,1\}$ from $\pp$.
\item[Rigidity] Choose $W'$ according to $\mu(\cdot)$ and send it to $\pp$. Receive $\vec{e}'\in \{0,1\}^m$ from $\pp$. 
\end{description}
2. Select a sub-round type at random from \textbf{Computation}, \textbf{X-Test} or \textbf{Z-Test}. 
\begin{description}
\item[Computation] Based on whether it's an \text{EPR} or a \text{Rigidity} Round:
	\begin{description}
	\item[EPR]
		\begin{enumerate}
		\item[(i)] Send $\vec{x}$, $\vec{z}$, $\vec{c}$ and sets \highlight{$N^w$}, $N^a$, $T^0$
      and $T^1$ to $\pv$, and receive measurement outcomes \highlight{$\vec{a},\vec{b}\in
        \{0,1\}^{n_w + n_a}$} and $\vec{e}\in\{0,1\}^t$.
		\item[(ii)] Apply the update rules from Table \ref{tab:EPR-key-updates} gate-by-gate to obtain the final $\sf X$ key for the output wire $a_f'$. If $c_f+a_f'\neq 0$, $\sf reject$. 
		\end{enumerate}
	\item[Rigidity (Tomography)]
		\begin{enumerate}
		\item[(i)] Choose uniform random strings $\vec{c},\vec{z}\in\{0,1\}^t$, $\vec{x} \in \{0,1\}^n$ 
      to send to $\pv$, along with \highlight{$N^w$}, $N^a$ and $T$, and receive measurement outcomes \highlight{$\vec{a}, \vec{b}\in \{0,1\}^{n_w + n_a}$} and $\vec{e}\in\{0,1\}^t$. 
		\item[(ii)]
		From $\vec{x}$, $\vec{c}$, $\vec{z}$, $\vec{a}$, $\vec{b}$ and $\vec{e}$, determine the adaptive measurements $W\in\Sigma^{n+t}$ that $V_{EPR}^0$ would have performed (based on Figure \ref{fig:original-protocol-VEPRr}), and $\sf reject$ if the input-output pairs $(W',\vec{e}')$ and $(N\cup T,(W,\vec{e}))$ do not satisfy the winning criterion for $\tom(\Sigma,n+t,m)$.
		\end{enumerate}
	\end{description}
\item[$X$-Test] Based on whether it's an \text{EPR} or a \text{Rigidity} Round:
\begin{description}
	\item[EPR] 
	\begin{enumerate}
		\item[(i)] Choose $W\in\Sigma^m$ uniformly at random among all strings
      satisfying: $W_i=Z$ for all $i\in \highlight{N^w} \cup N^a$; $W_i=Z$ for all $i\in T^0$; and $W_i\in\{X,Y\}$ for all $i\in T^1$. Send $W$ to $\pv$ and receive measurement results $\vec{e}\in\{0,1\}^m$. Let $(\vec{a},\vec{b})=(\vec{e}_N,0^n)$. 
		\item[(ii)] Apply update rules from Table \ref{tab:EPR-key-updates} gate-by-gate to obtain $\forall i\in [t]$ the $\sf X$ key before the $i$-th $\sf T$ gate is applied, $a_i'$, and the final $\sf X$ key for the output wire, $a_f'$. 
If $\exists i$ s.t.\ the $i$-th $\sf T$ gate is even and $c_i\neq a_i'+e_i$, $\sf reject$. If $c_f+a_f'\neq 0$, $\sf reject$. 
	\end{enumerate}
	\item[Rigidity (Clifford)] Choose ${W}$ according to the marginal conditioned on ${W}'$, $\mu(\cdot|{W}')$. 
	Send ${W}$ to $\pv$ and receive $\vec{e}\in\{0,1\}^m$. Reject if   $({W}',\vec{e}',{W},\vec{e})$ doesn't win $\rigid(\Sigma,m)$. 
\end{description}

\item[$Z$-Test] Based on whether it's an \text{EPR} or a \text{Rigidity} Round:
\begin{description}
	\item[EPR] 
	\begin{enumerate}
		\item[(i)] Choose $W\in\Sigma^m$ uniformly at random among all strings
      satisfying: $W_i=X$ for all $i\in \highlight{N^w} \cup N^a$; $W_i\in\{X,Y\}$ for all $i\in T^0$; and $W_i=Z$ for all $i\in T^1$. Send $W$ to $\pv$ and receive measurement results $\vec{e}\in\{0,1\}^m$. Let $(\vec{a},\vec{b})=(0^n,\vec{e}_N)$.
		\item[(ii)] Apply update rules from Table \ref{tab:EPR-key-updates} gate-by-gate to obtain $\forall i\in [t]$, the $\sf X$ key before the $i$-th $\sf T$ gate is applied, $a_i'$. 
If $\exists i$ s.t.\ the $i$-th $\sf T$ gate is odd and $c_i\neq a_i'+e_i$, $\sf reject$. 
	\end{enumerate}
	\item[Rigidity (Clifford)] Identical to $X$-Test case.
\end{description}
\end{description}
\rule[2ex]{\textwidth}{0.5pt}\vspace{-.5cm}
\caption{QMA Protocol: Verifier's point of view.}\label{fig:qma-protocol-V}
\end{figure}

\begin{figure}[H]
\rule[1ex]{\textwidth}{0.5pt}
\vspace{-20pt}
\begin{enumerate}
  \item If $\pp$ receives a question ${W}'$ from $\ver$ (he is playing $\tom$ or $\rigid$):
\begin{enumerate}
     \item[]  Measure the $m$ qubits in the observable indicated by $W'$ --- for example, if $W'\in\Sigma^m$, for $i\in \{1,\ldots,m\}$, measure the $i$-th qubit in the basis indicated by $W_i'$ --- and report
       the outcomes $\vec{e}'$ to~$\ver$.
\end{enumerate}
\item If $\pp$ receives $\vec{x}, \vec{z}$, and sets $N^w$,$N^a$, $T^0$ and $T^1$ from $\ver$ (he is playing the role of $P_{EPR}$ from the EPR Protocol):
\begin{enumerate}
     \item[] Run  prover $P_{EPR}$ from Figure
       \ref{fig:original-protocol-PEPR} with the $V_x$ as the circuit $Q$, on input $\vec{z}$, \highlight{the $n_w$ qubits in $N^w$ as the witness}, the $n_a$ qubits in 
        $N^a$ as the ancilla, and the $t$ qubits in $T^0\cup T^1$ for $\sf T$ gadgets.
     Report the outputs $\vec{c}\in\{0,1\}^t$ and $c_f\in\{0,1\}$ of $P_{EPR}$  to $\ver$. 
\end{enumerate}
\end{enumerate}
\rule[2ex]{\textwidth}{0.5pt}\vspace{-.5cm}
\caption{QMA Protocol: Honest strategy for $\pp$.}\label{fig:qma-protocol-PP}
\end{figure}

\begin{figure}[H]
\rule[1ex]{\textwidth}{0.5pt}
\vspace{-20pt}
\begin{enumerate}
  \item If $\pv$ receives a question ${W}$ from $\ver$ (he is playing $\rigid$ or an $X$- or $Z$-Test Round):
\begin{enumerate}
     \item[]  Measure the $m$ qubits in the observable indicated by $W$ --- for example, if $W\in \Sigma^m$, for $i\in \{1,\ldots,m\}$, measure the $i$-th qubit in the basis indicated by $W_i$ --- and report the outcomes $\vec{e}$ to $\ver$.
\end{enumerate}

  \item If $\pv$ receives $\vec{x}$, $\vec{z}$, $\vec{c}$ and sets $N^w$, $N^a$, $T^0$ and $T^1$ from $\ver$ (he is playing $\tom$ or a Computation Round):
\begin{enumerate}
  \item[] \highlight{Using the EPR pairs in $N^w$, teleports the witness state
    $\ket{\psi}$ that makes $V_x$ accept with high probability. Let $(\vec{a}_{N^w}, \vec{b}_{N^w})$ be the corresponding
    outcomes of the teleportation measurements.}
  \item[] Measure each qubit in $N^a$ in the $Z$ basis with outcomes $\vec{d}$  and let $(\vec{a}_{N^a},
    \vec{b}_{N^a}) = (\vec{d}, \vec{0} )$ 
	\item[] Run the second step of procedure $V_{EPR}^0$ from Figure
    \ref{fig:original-protocol-VEPRr} with $V_x$ as the circuit $Q$, and the
    values $\vec{c}$, $\vec{z}$, \highlight{the $n_w$ qubits in $N^w$ as the witness}, the $n_a$ qubits in 
        $N^a$ as the ancilla, and the $t$ qubits in $T^0\cup T^1$ for $\sf T$ gadgets. Report the outputs  $\vec{a}$, $\vec{b}$ and $\vec{e}$ of $V_{EPR}^0$ to $\ver$.
\end{enumerate}
\end{enumerate}
\rule[2ex]{\textwidth}{0.5pt}\vspace{-.5cm}
\caption{QMA Protocol: Honest strategy for $\pv$.}\label{fig:qma-protocol-PV}
\end{figure}

\begin{lemma}
There
  exists universal constants $0\leq p_{compl}\leq 1$ and $\Delta >0$ such that the following holds. 
Let $L$ be a language in QMA and $x$ an instance of $L$ such that $n = |x|$. Let  $V_x$ be the
  verification circuit for this instance and $g$ the number of gates in $V_x$
  (in the compiled form as described in Section \ref{sec:prelim}). Then there exists a two-round interactive protocol
between a classical verifier and two entangled provers where the Verifier
sends  $O(n + g)$-bit questions to the provers, the provers answer with $O(n + g)$ bits and the protocol satisfies the following properties.
\begin{description}
\item[Completeness:] If $x$ is a yes-instance, then  there is a strategy for the provers such that the Verifier accepts with probability  at least $p_{compl}$.
\item[Soundness:] If $x$ is a no-instance, then for all strategies of the provers, the Verifier accepts with  probability at most
$p_{sound} = p_{compl} - \Delta$.
\end{description}
\end{lemma}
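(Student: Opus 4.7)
The plan is to mirror, step by step, the completeness and soundness analysis of the Dog-Walker Protocol (Theorem~\ref{thm:dog-walker}) and to isolate the single place where the QMA promise replaces the classical-input BQP promise: the invocation of Broadbent's soundness theorem. Throughout, the extra structure is the teleportation gadget performed by $\pv$ on the $N^w$ EPR pairs, whose outcomes $(\vec{a}_{N^w},\vec{b}_{N^w})$ are treated by the verifier as additional Pauli-key offsets, exactly as in the gate-teleportation principle.

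For completeness, suppose $x$ is a yes-instance and let $\ket{\psi}$ satisfy $\|\Pi_0 V_x(\ket{\psi}\otimes \ket{0^{n_a}})\|^2 \geq 2/3$. The honest provers share $m$ EPR pairs and apply the strategies in Figures~\ref{fig:qma-protocol-PP} and~\ref{fig:qma-protocol-PV}. In Rigidity rounds the teleportation outcomes are ignored by $\ver$, so the tomography and Clifford tests inherit the completeness of Corollary~\ref{cor:clifford-rigid-adaptive} and of $\rigid(\Sigma,m)$, up to $e^{-\Omega(n+t)}$. In EPR-Test rounds, $\pv$'s measurement in $Z$ or $X$ on the $N^w$ halves leaves $\pp$ with a known classical one-time-pad of $\ket{0}^{\otimes n_w}$ or $\ket{+}^{\otimes n_w}$, so the test succeeds with probability $1$ exactly as in the EPR protocol. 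In EPR-Computation rounds, the teleportation leaves $\pp$ holding $\mathsf{X}^{\vec a_{N^w}}\mathsf{Z}^{\vec b_{N^w}}\ket{\psi}$ on his $N^w$ wires; the verifier folds these offsets into her key updates so that the situation is formally identical to running the EPR protocol on input $\ket{\psi}\ket{0^{n_a}}$, which accepts with probability $\geq 2/3$. Averaging over round types yields a completeness bound of the same form as Lemma~\ref{lem:dogwalker-completeness}.

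For soundness in the no-instance case, I would re-run the four-step hybrid argument of Lemmas~\ref{lem:PV-2-PP-4}, \ref{lem:ci-unif}, \ref{lem:PV-34}, and \ref{lem:PP-3} essentially verbatim. Rigidity (Theorem~\ref{thm:clifford-rigid}) applied to the Rigidity-Clifford rounds forces, up to local isometry and $\poly(\eps)$ error, both provers to start from $m$ shared EPR pairs and to honestly measure in $W,W'\in\Sigma^m$; indistinguishability of $\pv$'s view between Rigidity-Clifford and EPR-Test rounds propagates this honesty to EPR-Test rounds, and symmetrically the $\pp$-side indistinguishability propagates honesty of $\pp$ to Rigidity-Tomography rounds. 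The tomography step (Corollary~\ref{cor:clifford-rigid-adaptive}), combined with Lemma~\ref{lem:ci-unif} applied to $\pp^*$'s answers $\vec c$ (which must be within $\poly(\eps)$ of uniform, independently of $\vec z$), then propagates $\pv$'s honesty to EPR-Computation rounds. At this point both provers are, up to $\poly(\eps)$, honest in every sub-round except possibly $\pp^*$ in the EPR-Computation round.

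The only genuinely new step is the QMA replacement of Lemma~\ref{lem:PP-3}. Once $\pv$ is honest, the combined action of $\ver$ and $\pv$ in EPR rounds is exactly that of $V_{EPR}$ interacting with an altered prover $P^*$ who, instead of receiving a classical input, first secretly prepares an arbitrary state $\ket{\psi^*}$ and \emph{inputs} it (via the teleportation, which here is a transparent channel between $\pv$ and $\pp^*$) into the witness register of $V_x$. The soundness bound of Theorem~\ref{thm:EPR-soundness} then applies to the derived single-prover attack, with the classical soundness parameter $\delta=1/3$ replaced by $\sup_{\ket{\psi^*}}\|\Pi_0 V_x(\ket{\psi^*}\otimes\ket{0^{n_a}})\|^2 \leq 1/3$, which holds by the QMA soundness promise. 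This is the main obstacle: one must verify that Theorem~\ref{thm:EPR-soundness}, whose proof sketch in Section~\ref{sec:EPR-protocol} uses only the Pauli Kraus decomposition of the adversary's attack and the benign-vs.-nonbenign dichotomy, goes through unchanged when the honest input is an arbitrary auxiliary state rather than $\ket{\vec x}$, since the only quantity that enters the soundness inequality is the maximum acceptance probability of the honest circuit. Granting this, combining with the hybrid argument yields $q_3\leq 2(q_2\cdot\tfrac13+(1-q_2))-\tfrac13+\poly(\eps)$, and a constant-gap conclusion $p_{\mathrm{sound}}\leq p_{\mathrm{compl}}-\Delta$ follows by choosing the round probabilities $(p_1,p_2,p_3,p_4)$ exactly as in Lemma~\ref{lem:dogwalker-soundness}. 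Message lengths are $O(n+g)$ since $m=\Theta(n_w+n_a+t)=O(n+g)$ and each question/answer is $O(m)$ bits, completing the two-round protocol.
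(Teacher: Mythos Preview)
Your proposal is correct and follows essentially the same approach as the paper's own proof sketch: completeness via the honest teleportation strategy, soundness by re-running the hybrid chain of Lemmas~\ref{lem:PV-2-PP-4}, \ref{lem:ci-unif}, \ref{lem:PV-34} to reduce to an honest $\pv$, and then replacing the BQP input-specific soundness of Lemma~\ref{lem:PP-3} with the QMA promise $\sup_{\ket{\psi^*}}\|\Pi_0 V_x(\ket{\psi^*}\otimes\ket{0^{n_a}})\|^2\le 1/3$ before concluding as in Lemma~\ref{lem:dogwalker-soundness}. If anything, you give more detail than the paper, which presents only a brief sketch; your identification of the one genuinely new point (that Theorem~\ref{thm:EPR-soundness} must be invoked with an arbitrary quantum input in the witness register, and that its Pauli-Kraus benign/non-benign analysis is indifferent to this) is exactly the crux the paper leaves implicit.
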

\begin{proof}[Proof sketch]
The Verifier performs the operations described in Figure~\ref{fig:qma-protocol-V}. 

  The completeness of the protocol is straightforward: if $\pp$ and $\pv$ use the strategy in Figures~\ref{fig:qma-protocol-PP} and \ref{fig:qma-protocol-PV}, respectively, then the Verifier accepts with  high probability.

  The soundness of the protocol follows from the combination of the soundness of the
  Dog-Walker protocol and the soundness of the QMA verification circuit. Along the same lines as 
  Lemmas~\ref{lem:PV-2-PP-4}, \ref{lem:ci-unif} and \ref{lem:PV-34},
  we can show that if
  the acceptance probability in Rigidity-Test, Rigidity-Computation and
  EPR-Test rounds
  is sufficiently high, then there is a strategy where the provers follow the honest
  strategy and the acceptance probability in EPR-Computation round is only slightly
  changed.  
  In the case where the provers are honest in the Rigidity-Test, Rigidity-Computation and EPR-Test rounds, no matter which state is held
  by $\pp$ as witness state, $V_x$ rejects with high probability in the EPR-Computation round, by the
  soundness of the QMA verification circuit.  The proof of soundness can be completed by repeating the
  arguments in Lemma~\ref{lem:dogwalker-soundness}.
\end{proof}

\section{Running our protocols in sequence}
\label{sec:sequential}

In order to make a fair comparison between previous delegated computation protocols and ours (see Figure~\ref{tab:comparison}) we analyzed their resource requirements under the condition that they produce the correct outcome of the computation with $99\%$ probability. For most protocols, this is achieved by sequentially repeating the original version, in order to amplify the completeness-soundness gap. 

In this section, we describe a sequential procedure that, starting from our protocols in Sections \ref{sec:leash} and \ref{sec:dog-walker}, ensures that either the verifier aborts, or she obtains the correct outcome of the computation with probability $99\%$. Moreover, for honest provers, the probability that the procedure aborts is exponentially small in the number of sequential repetitions. Our sequential procedure has a number of rounds which depends on the desired soundness. As long as one only requires amplification of an arbitrarily small, but constant, soundness, to a fixed constant, the number of sequential repetitions remains constant.

To emphasize the importance of having such a sequential procedure, we note that, firstly, the current completeness-soundness gap between acceptance probability on \textit{yes} and \textit{no} instances, for both the leash and the Dog-Walker protocol, is a very small constant. Secondly, if a classical client wishes to employ our protocols to delegate a computation, we need to specify what the client interprets, at the end of the protocol, as the outcome of the delegated computation. The natural approach is to have the verifier interpret $\sf accept$ as a \textit{yes} outcome and $\sf reject$ as a \textit{no} outcome. However, this is not enough, as our security model based on the constant gap between acceptance probability for \textit{yes} and \textit{no} instances means that, while the provers have a low probability of making the verifier accept a \textit{no} instance as a $\textit{yes}$, they can always make the verifier accept a \textit{yes} instance as a \textit{no}, simply by behaving so that they are rejected.

The first point is addressed by running copies of the original protocol in sequence to amplify the completeness-soundness gap. The second point is addressed by having the verifier run the protocol twice: once for the circuit $Q$, and once for the circuit $Q'$ defined by appending an $\sf X$ gate to the output wire of $Q$. If $f:X\rightarrow \{0,1\}$ for some $X\subseteq \{0,1\}^n$ is defined by $f(x)=1$ if $\norm{\Pi_0 Q\ket{x}}^2\geq 2/3$, and $f(x)=0$ if $\norm{\Pi_0 Q\ket{x}}^2\leq 1/3$, i.e.\ $Q$ decides $f$ with bounded error $1/3$, then it is easy to see that $Q'$ decides $1-f$ with bounded error $1/3$. Thus, the verifier will accept $x$ as a \textit{yes} instance of $f$ if the protocol outputs ${\sf accept}$ when running $Q$ on $x$ and outputs $\sf reject$ when running $Q'$ on $x$. The verifier accepts $x$ as a \textit{no} instance of $f$ if the protocol outputs $\sf reject$ when running $Q$ on $x$ and outputs $\sf accept$ when running $Q'$ on $x$. The verifier aborts if she sees $\sf accept$-$\sf accept$ or $\sf reject$-$\sf reject$.

\subsection{Sequential version of our protocols}

Let $P$ denote either the Verifier-on-a-leash or the Dog-Walker protocol from Sections \ref{sec:leash} and \ref{sec:dog-walker} respectively, and let $c$ and $\Delta$ denote the completeness and completeness-soundness gap. Let $\kappa$ be a security parameter.

\begin{figure}[H]
\rule[1ex]{16.5cm}{0.5pt}
\justify
Protocol $\mbox{Seq}(P,c,\Delta, \kappa)$: Let $(Q,x)$ be the verifier's input. 
\begin{enumerate}
\item The verifier runs $\kappa$ copies of protocol $P$ in sequence on input $(Q,x)$ with $\pp$ and $\pv$. Then she runs $\kappa$ copies in sequence on input $(Q',x)$. 
\item Let $\vec{o}, \vec{\tilde{o}} \in \{0,1\}^{\kappa}$ be such that $o_i = 1$ iff the $i$-th copy on input $(Q,x)$ accepts, and $\tilde{o}_i = 1$ iff the $i$-th copy on input $(Q',x)$ accepts. Let $wt(\vec{o})$ and $wt(\vec{\tilde{o}})$ be their Hamming weights. Then, the verifier accepts $1$ as the outcome of the delegated computation if $wt(\vec{o}) \geq (c- \frac{\Delta}{2}) \cdot \kappa$ and $wt(\vec{\tilde{o}}) < (c- \frac{\Delta}{2}) \cdot \kappa$, and she accepts $0$ as the outcome of the computation if $wt(\vec{o}) < (c- \frac{\Delta}{2})\cdot \kappa$ and $wt(\vec{\tilde{o}}) \geq (c- \frac{\Delta}{2}) \cdot \kappa$. Otherwise the verifier aborts.

\end{enumerate}
\rule[2ex]{16.5cm}{0.5pt}\vspace{-.5cm}
\caption{Sequential version of our protocols} \label{fig: gardenhose-protocol-parallel}
\end{figure}

\noindent We state and prove completeness and soundness for the sequential protocol.

\begin{theorem}
Let $c$ and $\Delta$ be respectively the completeness and completeness-soundness gap of protocol P. On input $(Q,x)$:
\begin{itemize}
\item If the provers are honest, $$ \Pr\big(\mbox{Seq}(P, c, \Delta, \kappa) \mbox{  outputs } f(x)\big) \geq 1 - 2\exp \left(-\frac{\Delta^2\kappa}{2}\right) .$$ 
\item For any cheating provers, $$\Pr\big(\mbox{Seq}(P, c, \Delta, \kappa) \mbox{  outputs } 1-f(x)\big) \leq \exp \left(-\frac{\Delta^2\kappa}{8}\right) .$$
\end{itemize}

\end{theorem}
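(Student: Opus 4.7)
The plan is to analyze the two sequential experiments (on inputs $(Q,x)$ and $(Q',x)$) via concentration inequalities on the Hamming weights $wt(\vec{o})$ and $wt(\vec{\tilde{o}})$. The key observation is that, as noted earlier, $Q'$ computes $1-f$ with bounded error $1/3$, so exactly one of the two experiments is run on a \emph{yes}-instance and the other on a \emph{no}-instance of the underlying BQP problem, and the soundness/completeness of $P$ therefore yield different expected acceptance rates for the two experiments.

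For completeness, assume without loss of generality that $f(x)=1$, so $(Q,x)$ is a yes-instance and $(Q',x)$ is a no-instance. When the provers are honest, their actions in distinct rounds are independent (they simply restart the honest strategy in each round), so $\vec{o}$ and $\vec{\tilde{o}}$ have independent $\{0,1\}$-valued entries with $\mathbb{E}[o_i]\geq c$ and $\mathbb{E}[\tilde{o}_i]\leq c-\Delta$. The standard Hoeffding bound then gives
\[
\Pr\bigl[wt(\vec{o})<(c-\tfrac{\Delta}{2})\kappa\bigr]\leq \exp(-\Delta^2\kappa/2),\qquad
\Pr\bigl[wt(\vec{\tilde{o}})\geq (c-\tfrac{\Delta}{2})\kappa\bigr]\leq \exp(-\Delta^2\kappa/2),
\]
and a union bound yields the claimed honest success probability $1-2\exp(-\Delta^2\kappa/2)$.

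For soundness, the main subtlety is that cheating provers may correlate their actions across the sequential copies by carrying quantum side information from one round to the next. Consequently, the $\tilde{o}_i$ (and $o_i$) are not independent, but the soundness guarantee of $P$ still bounds the conditional acceptance probability \emph{round by round}: for any history $\tilde{o}_1,\ldots,\tilde{o}_{i-1}$ and any strategy of the provers (including one that depends on this history and on arbitrary shared state), we have $\Pr[\tilde{o}_i=1\mid \tilde{o}_1,\ldots,\tilde{o}_{i-1}]\leq c-\Delta$ when $(Q',x)$ is a no-instance (and likewise for $o_i$ on a no-instance of $(Q,x)$). This is precisely the hypothesis of Azuma's inequality applied to the supermartingale $S_j=\sum_{i\leq j}(\tilde{o}_i-(c-\Delta))$, yielding
\[
\Pr\bigl[wt(\vec{\tilde{o}})\geq (c-\tfrac{\Delta}{2})\kappa\bigr]\leq \exp(-\Delta^2\kappa/8).
\]
The event ``output $1-f(x)$'' is contained in this event (when $f(x)=1$; a symmetric argument handles $f(x)=0$ using $\vec{o}$ in place of $\vec{\tilde{o}}$), so the soundness bound $\exp(-\Delta^2\kappa/8)$ follows.

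The only genuinely nontrivial step is the per-round conditional soundness used to set up the Azuma martingale; I would justify it by observing that, given any fixed transcript of the first $i-1$ rounds, the provers' residual state together with their prescribed actions in round $i$ constitutes a valid (possibly cheating) strategy for a single execution of $P$ on a no-instance, and hence the soundness bound for $P$ applies. Everything else reduces to routine applications of Hoeffding's and Azuma's inequalities.
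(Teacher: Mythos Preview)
Your proposal is correct and follows the same high-level scaffold as the paper's proof: Hoeffding for completeness (honest runs are i.i.d.), and Azuma on the supermartingale $S_j=\sum_{i\le j}(\tilde o_i-(c-\Delta))$ for soundness.

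The one place where your write-up and the paper diverge is the justification of the per-round supermartingale condition. The paper explicitly warns that ``an analysis that treats protocol $P$ as a black-box does not suffice when $P$ is the verifier-on-a-leash protocol (because such a protocol is blind),'' and instead argues by a simulation-based reduction: hypothetical single-round provers internally simulate the first $j-1$ executions (which requires them to know $x$, hence the case split --- for the Dog-Walker protocol $x$ is already available, while for the blind leash protocol the paper invokes the additional fact that its soundness persists when $x$ is revealed).

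Your route is different: you condition on the full transcript of rounds $1,\ldots,i-1$, observe that this fixes the provers' residual state and round-$i$ measurements, and then apply the single-round soundness of $P$ (which is stated for an arbitrary initial state, with no requirement that the state be preparable by provers ignorant of $x$). This conditioning argument is valid and in fact sidesteps the blindness case split entirely. To make it airtight you should spell out two points that are only implicit in your sketch: (i) condition on the \emph{full} transcript $T_1,\ldots,T_{i-1}$ (including the verifier's private coins in those rounds), not merely on $\tilde o_1,\ldots,\tilde o_{i-1}$, so that the residual state is genuinely determined --- the supermartingale inequality with respect to the coarser filtration $\sigma(\tilde o_1,\ldots,\tilde o_{i-1})$ then follows by the tower property; and (ii) note that the verifier's randomness in round $i$ is fresh and independent of the conditioned history, so the conditional law of round $i$ is exactly that of a single execution of $P$ on a no-instance with that fixed state and strategy. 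With those two points explicit, your justification is at least as clean as the paper's, and it avoids the Dog-Walker/leash case analysis.
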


\begin{proof} We first show completeness. 
Let $s = c- \Delta$ be the soundness of protocol P.
Suppose $f(x) = 1$ (the case $f(x) = 0$ is analogous). If the provers are honest, then the probability that the verifier outputs~$1$~is:
\begin{align*}
\Pr(\mbox{Verifier outputs $1$}) &= \Pr \left(wt(\vec{o}) \geq \left(c - \frac{\Delta}{2}\right)\cdot \kappa \,\, \land \,\, wt(\vec{\tilde{o}}) < \left(c - \frac{\Delta}{2}\right)\cdot \kappa \right)\\
&\geq 1-\Pr \left(wt(\vec{o}) < \left(c - \frac{\Delta}{2}\right)\cdot \kappa \right) - \Pr \left(wt(\vec{\tilde{o}}) \geq \left(c - \frac{\Delta}{2}\right)\cdot \kappa \right) \\
&\geq 1 - 2\exp \left(-\frac{\Delta^2\kappa}{2}\right)
\end{align*}
by Hoeffding's inequality.

Next we show soundness.
Again suppose $f(x) = 1$ (the case $f(x) = 0$ is analogous). Let $W_j$ be an indicator random variable for the event $\tilde{o}_j = 1$, and let $F_j = W_j - s$. One might be tempted to immediately assert that $\mathbb{E}(F_j|F_{j-1},..,F_1) \leq 0$. However, because of the sequentiality of the runs of protocol $P$, this is not in general true, and an analysis that treats protocol $P$ as a black-box does not suffice when $P$ is the verifier-on-a-leash protocol (because such a protocol is blind). We argue more precisely that $\mathbb{E}(F_j|F_{j-1},..,F_1) \leq 0$:
\begin{itemize}
    \item When $P$ is the Dog-Walker protocol from section \ref{sec:dog-walker} (which is not blind): suppose for a contradiction that there were provers $\pv$ and $\pp$, and a $j$ such that $\mathbb{E}(F_j|F_{j-1},..,F_1) \leq 0$. Then one can construct provers $\pv'$ and $\pp'$ which break the soundness of protocol $P$. Namely $\pv'$ and $\pp'$ simulate $j-1$ runs of protocol $P$. They then respectively invoke $\pv$ and $\pp$ and forward to them the transcripts previously generated. $\pv'$ and $\pp'$ then participate in the challenge protocol $P$ by forwarding all of the incoming messages to the invocations of $\pv$ and $\pp$ respectively. By the initial hypothesis, such $\pv'$ and $\pp'$ would break the soundness of $P$.
    \item When $P$ is the Verifier-on-a-leash protocol from section \ref{sec:leash}: the key observation is that protocol $P$ remains sound even when $x$ is revealed to the provers. Then, notice that if it is possible for provers to force $\mathbb{E}(F_j|F_{j-1},..,F_1) \leq 0$ when $x$ is not revealed, it is clearly also possible to do so when $x$ is revealed. However, the latter is not possible, by an analogous reduction to the one for the dog-walker protocol. 
\end{itemize}
Define $X_l = \sum_{j=1}^l F_j$, for $l=1,..,\kappa$. The sequence of $X_l$'s defines a super-martingale with $|X_l-X_{l-1}| = |F_l| \leq 1 \,\, \forall j$.  Hence, by Azuma's inequality, for any $\kappa\geq 1$, $\Pr(X_\kappa \geq t) \leq \exp(-\frac{t^2}{2\kappa})$. This implies that 
\begin{equation*}
\Pr \left(\sum_{j=1}^{\kappa} W_j - \kappa \cdot s \geq t \right) = \Pr \left(\sum_{j=1}^{\kappa}F_j \geq t \right) = \Pr \left(X_{\kappa} \geq t \right) \leq \exp\left(-\frac{t^2}{2\kappa}\, \right).
\end{equation*}
Then, for any provers $\pp$ and $\pv$,
\begin{align*}
\Pr(\mbox{Verifier outputs $0$}) &\leq \Pr \Big(wt(\vec{\tilde{o}}) \geq (c - \frac{\Delta}{2})\cdot \kappa \Big) \\
&= \Pr \left(\sum_{j=1}^\kappa W_j \geq (c - \frac{\Delta}{2})\cdot \kappa \right) \\
&= \Pr \left(\sum_{j=1}^{\kappa} W_j - \kappa \cdot s \geq \kappa \cdot \frac{\Delta}{2} \right) \\
&\leq \exp \left(-\frac{\Delta^2\kappa}{8}\right). \qedhere
\end{align*}
\end{proof}
 Finally, one can check that when $P$ is the verifier-on-a-leash protocol, then $\mbox{Seq}(P,c,\Delta, \kappa)$ remains blind. This follows from a similar argument as in the proof of Lemma \ref{lem:blindness}.

\bibliography{delegation}
\appendix

\section{Some simple tests}
\label{sec:clifford-test}

In this appendix we collect simple tests that will be used as building blocks. In Section~\ref{sec:ms} and Section~\ref{sec:elementary} we review elementary tests whose analysis is either immediate or can be found in the literature. In Section~\ref{sec:bell} we formulate a simple test for measurements in the Bell basis and the associated two-qubit SWAP observable. 
Finally, in Section \ref{sec:pauli-group}, we show how to extend the results from \cite{natarajan2016robust} to derive a robust self-test for the $m$-qubit Pauli group.

\subsection{The Magic Square game}
\label{sec:ms}

We use the Magic Square game~\cite{Mermin90} as a building block, noting that it  provides a robust self-test test for the two-qubit Weyl-Heisenberg group (see Section~\ref{sec:prelim-notation} for the definition). Questions in this game are specified by a triple of labels corresponding to the same row or column from the square pictured in Figure~\ref{fig:ms} (so a typical question could be $(IZ,XI,XZ)$; there are $6$ questions in total, each a triple). An answer is composed of three values in $\{\pm 1\}$, one for each of the labels making up the question. Answers from the prover should be entrywise consistent, and such that the product of the answers associated to any row or column except the last should be $+1$; for the last column it should be $-1$. The labels indicate the ``honest'' strategy for the game, which consists of each prover measuring two half-EPR pairs using the commuting Pauli observables indicated by the labels of his question. 

\begin{figure}[H]
\begin{center}
\begin{tabular}{|c|c|c|}
\hline
$IZ$ & $ZI$ & $ZZ$ \\
\hline
$XI$ & $IX$ & $XX$ \\
\hline
$XZ$ & $ZX$ & $YY$\\
\hline
\end{tabular}
\end{center}
\caption{Questions, and a strategy, for the Magic Square game}
\label{fig:ms}
\end{figure}

The following lemma states some properties of the Magic Square game, interpreted as a self-test (see e.g.~\cite{WBMS16}). 

\begin{lemma}\label{lem:ms-rigid}
Suppose a strategy for the provers, using state $\ket{\psi}$ and observables $W$, succeeds with probability at least $1-\eps$ in the Magic Square game. Then there exist  isometries $V_D:\mH_\reg{D} \to (\C^2\otimes \C^2)_{\reg{D'}}\otimes {\mH}_{\hat{\reg{D}}}$, for $D\in\{A,B\}$ and a state $\ket{\aux}_{\hat{\reg{A}}\hat{\reg{B}}} \in {\mH}_{\hat{\reg{A}}}\otimes {\mH}_{\hat{\reg{B}}}$ such that
$$\big\| (V_A \otimes V_B) \ket{\psi}_{\reg{AB}} - \ket{\EPR}_{\reg{A}'\reg{B}'}^{\otimes 2} \ket{\aux}_{\hat{\reg{A}}\hat{\reg{B}}} \big\|^2 = O(\sqrt{\eps}),$$
and for $W\in \{I,X,Z\}^2 \cup \{YY\}$,
\begin{align*}
\big\| \big(W -V_A^\dagger \sigma_W V_A\big) \otimes \Id_B \ket{\psi} \big\|^2 &= O(\sqrt{\eps}).
\end{align*}
\end{lemma}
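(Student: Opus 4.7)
The plan is to extract a complete set of approximate algebraic relations for the nine observables labelled by the entries of the Magic Square, and then to apply a standard ``SWAP isometry'' construction to certify that they act on a state close to two EPR pairs as the canonical Pauli matrices. I would organize the argument in four steps.

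\emph{Step 1 (consistency and commutation).} Since we implicitly include a consistency test, success with probability $1-\eps$ first gives, for each entry $W$ of the square, $W_\reg{A}\otimes \Id \approx_{\eps} \Id\otimes W_\reg{B}$ in the state-dependent norm on $\ket{\psi}$. For any two entries $W,W'$ belonging to the same row or column, the game requires the players to produce consistent outcomes for $(W,W')$ against $W'$ alone (or $W$ alone); expanding the success condition as a sum of projectors shows that the binary observables $W_\reg{A}$ and $W'_\reg{A}$ approximately commute when applied to $\ket{\psi}$, and moreover that the three observables along any row (or the first two columns) multiply to $+\Id$ while the third column multiplies to $-\Id$, each up to an error of size $O(\sqrt{\eps})$ in the state-dependent norm.

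\emph{Step 2 (approximate anti-commutation).} Combining the product relations from Step~1 yields approximate anti-commutation of the single-qubit Pauli analogues. For instance, writing $X_1 = XI_\reg{A}$, $Z_1 = IZ_\reg{A}$ (which are not constrained to commute since they never appear in the same question), I would use $XI\cdot IZ \approx XZ$ (column~1), $IZ\cdot XI\cdot XZ \approx \Id$ (column~1 again), together with the analogous identities for the second qubit and the third column relation $ZZ\cdot XX \cdot YY \approx -\Id$, to derive $\{X_1,Z_1\}\approx 0$ on $\ket{\psi}$, and similarly for the second qubit. Transferring the relations from Alice to Bob via Step~1, I obtain an approximate representation of two copies of the single-qubit Weyl--Heisenberg group on $\mH_\reg{A}$.

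\emph{Step 3 (SWAP isometry).} With anti-commuting pairs $(X_i,Z_i)$ for $i=1,2$ in hand, I would invoke the standard SWAP-isometry construction (as in the Mayers--Yao analysis) on each ``qubit'' separately to produce isometries $V_\reg{A}, V_\reg{B}$ of the claimed form such that $V_\reg{A}(W_\reg{A})V_\reg{A}^\dagger \approx \sigma_W \otimes \Id_{\hat{\reg{A}}}$, on $\ket{\psi}$, for every $W\in\{I,X,Z\}^2\setminus\{II\}$, with an error of order $O(\sqrt{\eps})$. The key technical ingredient here is the quantitative version of Jordan's lemma for pairs of approximately anti-commuting observables, which yields the block decomposition needed to define the isometry. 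The same isometry, applied on both sides, brings $\ket{\psi}$ within $O(\sqrt{\eps})$ of $\ket{\EPR}^{\otimes 2}\otimes \ket{\aux}_{\hat{\reg{A}}\hat{\reg{B}}}$; this is a direct consequence of the approximate stabilizer relations $\sigma_W\otimes \sigma_W^T \ket{\EPR}^{\otimes 2}=\ket{\EPR}^{\otimes 2}$ combined with the consistency from Step~1.

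\emph{Step 4 ($YY$).} Finally, the observable $YY$ is pinned down by the row~3 relation $XZ\cdot ZX \cdot YY \approx \Id$: substituting the images of $XZ$ and $ZX$ under the isometry from Step~3 gives $V_\reg{A}YY_\reg{A} V_\reg{A}^\dagger \approx (\sigma_Y\otimes\sigma_Y)\otimes \Id_{\hat{\reg{A}}}$, as required. The main obstacle, and the step that demands the most care, is Step~3: one must ensure that the two single-qubit SWAP isometries can be applied simultaneously on the same Hilbert space without introducing cross terms, which uses the approximate commutation of $X_1,Z_1$ with $X_2,Z_2$ derived from the row/column structure of the square.
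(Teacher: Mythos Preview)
The paper does not actually prove this lemma: it is stated as a known fact with a reference to~\cite{WBMS16}, so there is no ``paper's proof'' to compare against. Your four-step plan is exactly the standard argument used in the self-testing literature (consistency $\Rightarrow$ row/column product relations $\Rightarrow$ anti-commutation of the single-qubit Paulis $\Rightarrow$ SWAP isometry $\Rightarrow$ read off $YY$ from row~3), and it would yield the stated bounds.

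There is, however, a genuine slip in Step~2 as written. You set $X_1 = XI$ and $Z_1 = IZ$, and claim these ``never appear in the same question''; but $XI$ and $IZ$ are both in column~1 of the square, so the game \emph{does} force them to (approximately) commute --- as indeed $\sigma_X\otimes\Id$ and $\Id\otimes\sigma_Z$ commute. The pair you want for the first qubit is $X_1 = XI$ and $Z_1 = ZI$: these sit at positions $(2,1)$ and $(1,2)$, share no row or column, and are the ones whose anti-commutation must be \emph{derived} from the six product constraints (via $ZZ\cdot XX = -YY$ from column~3 and $XZ\cdot ZX = YY$ from row~3, combined with the column~1 and column~2 relations). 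With that correction the derivation you sketch goes through; the rest of your plan (the simultaneous two-qubit SWAP isometry using the approximate commutation across qubits, and the identification of $YY$ from row~3) is fine.
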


\subsection{Elementary tests}
\label{sec:elementary}

Figure~\ref{fig:elementary} summarizes some elementary tests. For each test, ``Inputs'' refers to a subset of designated questions in the test; ``Relation'' indicates a relation that the test aims to certify (in the sense of Section~\ref{sec:general-rigidity}); ``Test'' describes the certification protocol. (Recall that all our protocols implicitly include a ``consistency'' test in which a question is chosen uniformly at random from the marginal distribution and sent to both provers, whose answers are accepted if and only if they are equal.)

\begin{figure}[H]
\rule[1ex]{\textwidth}{0.5pt}\\
\justifying
Test~$\idt(A,B)$:
\begin{itemize}
    \item Inputs: $A$, $B$ two observables on the same space $\mH$.
    \item Relation: $A=B$.
    \item Test: Send $W \in \{A,B\}$ and $W'\in\{A,B\}$, chosen uniformly at random, to the first and second prover respectively. Receive an answer in $\{\pm 1\}$ from each prover. Accept if and only if the answers are equal whenever the questions are identical. 
\end{itemize}

Test~$\act(X,Z)$:
\begin{itemize}
    \item Inputs: $X$, $Z$ two observables on the same space $\mH$.
    \item Relation: $XZ=-ZX$.
    \item Test: Execute the Magic Square game, using the label ``$X$'' for the ``$XI$'' query, and ``$Z$'' for the ``$ZI$'' query.  
\end{itemize}

Test~$\comt(A,B)$:
\begin{itemize}
    \item Inputs: $A$, $B$ two observables on the same space $\mH$.
    \item Relation: $AB=BA$.
    \item Test: Send $W\in\{A,B\}$ chosen uniformly at random to the first prover. Send $(A,B)$ to the second prover. Receive a bit $c\in\{\pm 1\}$ from the first prover, and two bits $(a',b')\in\{\pm 1\}^2$ from the second. Accept if and only if $c=a'$ if $W=A$, and $c=b'$ if $W=B$. 
\end{itemize}

Test~$\prodt(A,B,C)$:
\begin{itemize}
    \item Inputs: $A$, $B$ and $C$ three observables on the same space $\mH$.
    \item Relations: $AB=BA=C$.
    \item Test: Similar to the commutation game, but use $C$ to label the question $(A,B)$.
\end{itemize}
\rule[2ex]{\textwidth}{0.5pt}\vspace{-.5cm}
\caption{Some elementary tests.}
\label{fig:elementary}
\end{figure}

\begin{lemma}\label{lem:elementary}
Each of the tests described in Figure~\ref{fig:elementary} is a robust $(1,\delta)$ self-test for the indicated relation(s), for some $\delta = O(\eps^{1/2})$. 
\end{lemma}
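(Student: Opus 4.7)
The plan is to treat each test separately, and in each case to extract the claimed relation by combining the implicit consistency test (which converts answer agreement on identical questions into operator agreement across the two provers' spaces) with the specific acceptance predicate of the test. In all four cases the $O(\sqrt{\eps})$ loss will arise once, from invoking the Magic Square rigidity statement of Lemma~\ref{lem:ms-rigid}; the remaining reductions are essentially one-line manipulations in the state-dependent norm.

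For $\idt(A,B)$, the consistency test on each of the two possible questions $W \in \{A,B\}$ gives $A \otimes \Id \approx_\eps \Id \otimes A$ and $B \otimes \Id \approx_\eps \Id \otimes B$. The remaining acceptance events, when the two provers receive distinct labels and must still agree, enforce $A \otimes \Id \approx_\eps \Id \otimes B$. Combining these via transitivity and then switching sides yields $A \approx_\eps B$ on $\ket{\psi}$. For $\act(X,Z)$, I will simply embed $X$ and $Z$ as the ``$XI$'' and ``$ZI$'' labels of the Magic Square game, and apply Lemma~\ref{lem:ms-rigid}: the lemma furnishes an isometry $V_A$ under which $X \simeq_{\sqrt{\eps}} \sigma_X \otimes \Id$ and $Z \simeq_{\sqrt{\eps}} \sigma_Z \otimes \Id$, and since the ideal matrices satisfy $\sigma_X\sigma_Z + \sigma_Z\sigma_X = 0$ exactly, pulling the anticommutator back through the isometry gives $XZ + ZX \approx_{\sqrt{\eps}} 0$ on $\ket{\psi}$, as required.

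For $\comt(A,B)$, let $\{\Pi^{a',b'}\}_{a',b' \in \{\pm 1\}}$ be the second prover's $4$-outcome projective measurement on query $(A,B)$, and define the commuting observables $A' = \sum_{a',b'} a'\, \Pi^{a',b'}$ and $B' = \sum_{a',b'} b'\, \Pi^{a',b'}$ on the second prover's space. The two acceptance rules (first prover queried $A$, respectively $B$) together with the consistency test give $A\otimes \Id \approx_\eps \Id \otimes A'$ and $B\otimes \Id \approx_\eps \Id \otimes B'$ on $\ket{\psi}$. Then
\begin{align*}
\big\|(AB - BA)\otimes \Id \,\ket{\psi}\big\|
&\,\leq\, \big\|A(B\otimes \Id - \Id \otimes B')\ket{\psi}\big\| + \big\|(A\otimes B')\ket{\psi} - (B'\otimes A)\ket{\psi}\big\| \\
&\qquad + \big\|B(A\otimes \Id - \Id \otimes A')\ket{\psi}\big\| + \big\|(B'\otimes A - B\otimes A')\ket{\psi}\big\|,
\end{align*}
where the middle term vanishes after using $[A',B'] = 0$ and sliding $A'$ back across via consistency, and each remaining term is $O(\sqrt{\eps})$; this yields $AB \approx_{\eps} BA$. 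For $\prodt(A,B,C)$ I additionally use the extra acceptance rule on query $C$, which by the same argument gives $C \otimes \Id \approx_\eps \Id \otimes (A'B')$, and then sliding back via $A \otimes \Id \approx \Id \otimes A'$ and $B \otimes \Id \approx \Id \otimes B'$ produces $C \approx_\eps AB$ and $C \approx_\eps BA$. The main (and only mildly delicate) step is the ``operator swapping'' in $\comt$, where one must be careful that both $A$ and $A'$ (respectively $B$ and $B'$) are bounded observables so that inserting them into the norm costs nothing; completeness in all four tests is immediate by using the honest measurements on a maximally entangled state tensored with an EPR pair where needed for $\act$.
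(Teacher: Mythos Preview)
Your approach is essentially the paper's: for $\comt(A,B)$ the paper defines the commuting marginals $C_A,C_B$ of the second prover's four-outcome PVM (your $A',B'$) and runs the same ``swap across, commute exactly, swap back'' argument, and it leaves the other three tests implicit just as you do. One write-up issue: your displayed triangle-inequality chain for $\comt$ does not actually telescope (the expression ``$B'\otimes A$'' places $B'$ on the wrong tensor factor, and the four terms as written do not sum to $(AB-BA)\otimes\Id$); the clean version, which is what the paper records, is the five-step chain $AB\otimes\Id \approx A\otimes B' \approx \Id\otimes B'A' = \Id\otimes A'B' \approx B\otimes A' \approx BA\otimes\Id$, each approximation being a single consistency swap multiplied by a norm-one observable.
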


\begin{proof}
The proof for each test is similar. As an example we give it for the commutation test $\comt(A,B)$. 

First we verify completeness. Let $A,B$ be two commuting observables on $\mH_{\reg{A}} = \mH_{\reg{B}} = \mH$, and $\ket{\EPR}_{\reg{AB}}$ the maximally entangled state in $\mH_\reg{A}\otimes \mH_\reg{B}$. Upon receiving question $A$ or $B$, the prover measures the corresponding observable. If the question is $(A,B)$, he jointly measures $A$ and $B$. This strategy succeeds with probability $1$ in the test. 

Next we establish soundness. Let $\ket{\psi} \in \mH_{\reg{A}}\otimes \mH_\reg{B}$ be a state shared by the provers, $A$, $B$ their observables on questions $A,B$, and $\{C^{a,b}\}$ the four-outcome PVM applied on question $(A,B)$. Assume the strategy succeeds with probability at least $1-\eps$. Recall that this includes both the test described in Figure~\ref{fig:elementary}, and the automatic consistency test. Let $C_A = \sum_{a,b} (-1)^a C^{a,b}$ and $C_B = \sum_{a,b} (-1)^b C^{a,b}$. Then $C_A$ and $C_B$ commute. Thus
\begin{align*}
A_\reg{A} B_\reg{A} \otimes \Id_\reg{B}
&\approx_{\sqrt{\eps}} A_\reg{A} \otimes (C_B)_\reg{B}\\
&\approx_{\sqrt{\eps}} \Id_\reg{A} \otimes (C_B)_\reg{B}(C_A)_\reg{B}\\
&=  \Id_\reg{A} \otimes (C_A)_\reg{B}(C_B)_\reg{B}\\
&\approx_{\sqrt{\eps}} B_\reg{A} \otimes (C_A)_\reg{B}\\
&\approx_{\sqrt{\eps}} B_\reg{A} A_\reg{A} \otimes \Id_\reg{B}.
\end{align*}
Here each approximation uses the consistency condition provided by the test, as explained in~\eqref{eq:consistency}. Thus $[A,B] = (AB-BA) \approx_{\sqrt{\eps}} 0$, as desired. 
\end{proof}

We will often make use of the following simple lemma, which expresses an application of the above tests. 

\begin{lemma}\label{lem:pauli-c}
Let $\ket{\psi} \in \mH_{\reg{A}} \otimes \mH_{\reg{B}}$ and $A,X$ observables on $\mH_{\reg{A}}$ such that there exists an isometry $\mH_{\reg{A}} \simeq \C^2 \otimes \mH_{\hat{\reg{A}}}$ under which the following conditions hold, for some $\delta_1,\delta_2,\delta_3$:\footnote{Note that we allow either $\delta_i$ to equal $1$, leading to a vacuous condition.}
\begin{enumerate}
\item[(i)] There exists an observable $A'$ on $\mH_\reg{B}$ such that $A\otimes \Id \approx_{\delta_1} \Id \otimes A'$;
\item[(ii)] $\ket{\psi} \simeq_{\delta_1} \ket{\EPR}\ket{\aux}$ and $X\simeq_{\delta_1} \sigma_X \otimes \Id$; 
\item[(iii)] $[A,X]\approx_{\delta_2} 0$;
\item[(iv)] $\{A,X\} \approx_{\delta_3} 0$.
\end{enumerate}

Then there exist Hermitian $A_I,A_X,A_Y,A_Z$ on $\mH_{\hat{\reg{A}}}$ such that $A \simeq_{\delta_1+\delta_2} \Id \otimes A_I + \sigma_X \otimes A_X$ and $A \simeq_{\delta_1 + \delta_3} \sigma_Y \otimes A_Y + \sigma_Z \otimes A_Z$. (A similar claim holds with $X$ replaced by $Z$.)
\end{lemma}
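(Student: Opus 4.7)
\textbf{Proof plan for Lemma~\ref{lem:pauli-c}.} The plan is to apply the isometry from (ii) and work in the image space $\C^2 \otimes \mH_{\hat{\reg{A}}}$, where $X$ is close to $\sigma_X\otimes \Id$ and $\ket{\psi}$ is close to $\ket{\EPR}\ket{\aux}$. First I would expand $A$ in the Pauli basis on the first tensor factor,
\[
A \;=\; \Id \otimes A_I \;+\; \sigma_X \otimes A_X \;+\; \sigma_Y \otimes A_Y \;+\; \sigma_Z \otimes A_Z,
\]
where each $A_W$ is an (uniquely determined) Hermitian operator on $\mH_{\hat{\reg{A}}}$. Such a decomposition exists for any Hermitian operator on $\C^2\otimes \mH_{\hat{\reg{A}}}$ since the Paulis form an orthonormal basis of $2\times 2$ Hermitian matrices. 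The goal is then to show that the four coefficients $A_I, A_X$ are small ``in the state-dependent sense'' when we assume (iv), and symmetrically that $A_Y, A_Z$ are small when we assume (iii).

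Next I would compute the exact commutator and anti-commutator with $\sigma_X \otimes \Id$ using the standard Pauli relations:
\[
[A,\sigma_X\otimes \Id]\;=\; -2i\,\sigma_Z\otimes A_Y \;+\; 2i\,\sigma_Y\otimes A_Z, \qquad \{A,\sigma_X\otimes \Id\}\;=\; 2\,\sigma_X\otimes A_I \;+\; 2\,\Id\otimes A_X.
\]
Using (ii) (with cost $\delta_1$, since $\|A\|\le 1$) to replace $X$ by $\sigma_X\otimes \Id$ inside the expressions in (iii) and (iv), I obtain that $[A,\sigma_X\otimes \Id]\approx_{\delta_1+\delta_2} 0$ and $\{A,\sigma_X\otimes \Id\}\approx_{\delta_1+\delta_3} 0$ in the state-dependent norm on $\ket{\psi}$.

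The key observation is then that, on the approximating state $\ket{\EPR}_{A_1 B_1}\ket{\aux}_{\hat{\reg{A}}\hat{\reg{B}}}$, the four operators $(\sigma_W\otimes \Id_{B_1})\ket{\EPR}$ for $W\in\{I,X,Y,Z\}$ are the four mutually orthonormal Bell states. Therefore the four summands in $[A,\sigma_X\otimes \Id]\ket{\EPR}\ket{\aux}$ and in $\{A,\sigma_X\otimes \Id\}\ket{\EPR}\ket{\aux}$ have orthogonal support on the qubit pair $(A_1,B_1)$, which allows me to read off the Pythagorean identities
\[
\|[A,\sigma_X\otimes\Id]\ket{\EPR}\ket{\aux}\|^2 \;=\; 4\|A_Y\otimes\Id\,\ket{\aux}\|^2 + 4\|A_Z\otimes\Id\,\ket{\aux}\|^2,
\]
\[
\|\{A,\sigma_X\otimes\Id\}\ket{\EPR}\ket{\aux}\|^2 \;=\; 4\|A_I\otimes\Id\,\ket{\aux}\|^2 + 4\|A_X\otimes\Id\,\ket{\aux}\|^2.
\]
Absorbing one further error $\delta_1$ to transfer back from $\ket{\EPR}\ket{\aux}$ to $\ket{\psi}$, and using the same orthogonality argument applied to the ``remainder'' operators $\sigma_Y\otimes A_Y+\sigma_Z\otimes A_Z$ (resp.\ $\Id\otimes A_I+\sigma_X\otimes A_X$), I conclude that
\[
\big\|\big(A-\Id\otimes A_I-\sigma_X\otimes A_X\big)\otimes \Id\,\ket{\psi}\big\|^2 \;=\; O(\delta_1+\delta_2),
\]
\[
\big\|\big(A-\sigma_Y\otimes A_Y-\sigma_Z\otimes A_Z\big)\otimes \Id\,\ket{\psi}\big\|^2 \;=\; O(\delta_1+\delta_3),
\]
which is exactly the pair of conclusions. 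The only subtlety (not really an obstacle) is tracking the isometric cost $\delta_1$ each time we pass between $\ket{\psi}$ and $\ket{\EPR}\ket{\aux}$, and between $X$ and $\sigma_X\otimes \Id$; since all operators in play have operator norm at most a constant, each such substitution costs an additive $O(\delta_1)$ inside the state-dependent norm.
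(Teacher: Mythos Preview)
Your approach is essentially the same as the paper's: decompose $A$ in the Pauli basis on the qubit factor, compute the commutator and anticommutator with $\sigma_X\otimes\Id$ explicitly, and use orthogonality of $\{\sigma_P\otimes\Id_B\ket{\EPR}\}_{P\in\{I,X,Y,Z\}}$ to separate the coefficients. The paper's proof is somewhat terser but follows the same outline and arrives at the same two Pythagorean identities.

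There is one genuine gap, however. When you write ``Using (ii) (with cost $\delta_1$, since $\|A\|\le 1$) to replace $X$ by $\sigma_X\otimes\Id$'', the bound $\|A\|\le 1$ handles the term $AX\ket{\psi}$ but \emph{not} the term $XA\ket{\psi}$. In the latter, $X$ acts on the state $A\ket{\psi}$, and the hypothesis $X\simeq_{\delta_1}\sigma_X\otimes\Id$ is state-dependent with respect to $\ket{\psi}$, not $A\ket{\psi}$; knowing $\|A\|\le 1$ does not let you transfer it. This is precisely why assumption (i) is present in the statement: using $A\otimes\Id\approx_{\delta_1}\Id\otimes A'$ you first move $A$ to the $\reg{B}$ side, then replace $X$ by $\sigma_X\otimes\Id$ on $\ket{\psi}$ (now legitimately, since $\Id\otimes A'$ commutes with both and has norm $1$), and finally move $A'$ back. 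The paper invokes (i) exactly at this step (``Assumptions (i) and (ii) imply $[A,X]\simeq_{\delta_1}[\tilde A,\sigma_X\otimes\Id]$''). Once you insert this switching argument your proof is complete and matches the paper's.
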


\begin{proof}
After application of the isometry, an arbitrary observable $\tilde{A}$ on  $\C^2 \otimes \mH_{\hat{\reg{A}}}$ has a decomposition $\tilde{A} = \sum_{P\in\{I,X,Y,Z\}} \sigma_P \otimes A_P$, for Hermitian operators $A_P$ on $\mH_{\hat{\reg{A}}}$. We can compute
\begin{align}
[\tilde{A},\sigma_X\otimes \Id] &= -2i\,\sigma_Z \otimes A_Y + 2i\,\sigma_Y \otimes A_Z,\label{eq:pauli-c-1}\\
\{\tilde{A},\sigma_X\otimes \Id\} &= 2\,\sigma_X \otimes A_I + 2 \,\sigma_I \otimes A_X.\label{eq:pauli-c-2}
\end{align} 
Assumptions (i) and (ii) imply $[A,X] \simeq_{\delta_1} [\tilde{A},\sigma_X \otimes \Id]$, so by (iii) and~\eqref{eq:pauli-c-1} we get $\| A_Y \ket{\aux}\|^2 + \|A_Z \ket{\aux}\|^2 = O(\delta_1+\delta_2)$. Similarly, (iv) and~\eqref{eq:pauli-c-2} give $\| A_I \ket{\aux}\|^2 + \|A_X \ket{\aux}\|^2 = O(\delta_1+\delta_3)$.
\end{proof}

\subsection{The Bell basis}
\label{sec:bell}

Given two commuting pairs of anti-commuting observables $\{X_1,Z_1\}$ and $\{X_2,Z_2\}$ we provide a test for a four-outcome projective measurement in the Bell basis specified by these observables, i.e. the joint eigenbasis of $X_1X_2$ and $Z_1Z_2$. The same test can be extended to test the ``$\SWAP$'' observable,
\begin{equation}\label{eq:def-swap}
 \SWAP \,=\, \frac{1}{2} \big( \Id + X_1X_2 + Z_1Z_2 - (X_1Z_1)(X_2Z_2) \big),
\end{equation}
 which exchanges the  qubits specified by each pair of observables. The Bell measurement test described in Figure~\ref{fig:bell} tests for both.

\begin{figure}[H]
\rule[1ex]{\textwidth}{0.5pt}\\
\justifying
Test~$\bellt(X_1,X_2,Z_1,Z_2)$:
\begin{itemize}
    \item Inputs: For $i\in\{1,2\}$, $\{X_i,Z_i\}$ observables, $\{\Phi^{ab}\}_{a,b\in\{0,1\}}$ a four-outcome projective measurement, and $\SWAP$ an observable, all acting on the same space $\mH$.
    \item Relations: for all $a,b\in\{0,1\}$, $\Phi^{ab} = \frac{1}{4}\big(\Id+(-1)^a Z_1Z_2\big)\big(\Id+(-1)^bX_1X_2\big)$, and $\SWAP = \Phi^{00} + \Phi^{01} + \Phi^{10} - \Phi^{11}$.   
    \item Test: execute each of the following with equal probability:
		\begin{enumerate}
		\item[(a)] Execute the Magic Square game, labeling each entry of the square from Figure~\ref{fig:ms} (except entry $(3,3)$, labeled as $Y_1Y_2$) using the observables $X_1,Z_1$ and $X_2,Z_2$.
		\item[(b)] Send $\Phi$ to one prover and the labels $(X_1X_2,Z_1Z_2,Y_1Y_2)$ associated with the third column of the Magic Square to the other.  The first prover replies with $a,b\in\{0,1\}$, and the second with $c,d,e\in \{\pm 1\}$. The referee checks the provers' answers for the obvious consistency conditions. For example, if the first prover reports the outcome $(0,0)$, then the referee rejects if $(c,d)\neq (+1,+1)$. 
		\item[(c)] Send $\Phi$ to one prover and $\SWAP$ to the other. The first prover replies with $a,b\in\{0,1\}$, and the second with $c\in \{\pm 1\}$. Accept if and only $c=(-1)^{ab}$. 
		\end{enumerate}
\end{itemize}
\rule[2ex]{\textwidth}{0.5pt}\vspace{-.5cm}
\caption{The Bell measurement test.}
\label{fig:bell}
\end{figure}

\begin{lemma}\label{lem:bell-rigid-test}
The test $\bellt(X_1,X_2,Z_1,Z_2)$ is a robust $(1,\delta)$ self-test for 
\begin{align*}
\mathcal{R}&= \Big\{\big\{\Phi^{ab}\big\}_{a,b\in\{0,1\}}\in\Proj,\,
  \SWAP\in\Obs\Big\}\\
  &\qquad\qquad\cup \Big\{\Phi^{ab} = \frac{1}{4}\big(1+(-1)^a Z_1Z_2\big)\big(1+(-1)^bX_1X_2\big) \Big\}\\
&\qquad\qquad \cup \big\{\SWAP = \Phi^{00}+\Phi^{01}+\Phi^{10}-\Phi^{11}\big\},
\end{align*}
 for some $\delta(\eps) = O(\sqrt{\eps})$.
\end{lemma}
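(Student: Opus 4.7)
My plan is to deduce the lemma by combining the three sub-tests (a), (b), and (c) in sequence, each extracting successively more structure from a near-optimal strategy. The analysis leverages the already established Magic Square rigidity (Lemma~\ref{lem:ms-rigid}) and the elementary consistency tests from Appendix~\ref{sec:elementary}.

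First, success with probability $1-O(\eps)$ in sub-test (a) activates Lemma~\ref{lem:ms-rigid}, yielding local isometries $V_A,V_B$ and a state $\ket{\aux}$ such that the joint state is $O(\sqrt{\eps})$-close to two shared EPR pairs tensored with $\ket{\aux}$, and such that the observables $X_1,Z_1,X_2,Z_2$ act, up to error $O(\sqrt{\eps})$, as the standard Pauli observables $\sigma_X,\sigma_Z$ on the first and second qubit respectively. Importantly, this also certifies that the observable labeled $Y_1Y_2$ in the third column acts as $\sigma_Y\otimes\sigma_Y$, and that the three third-column observables $X_1X_2$, $Z_1Z_2$, $Y_1Y_2$ pairwise commute and satisfy $(X_1X_2)(Z_1Z_2)(Y_1Y_2) \approx -\Id$ in state-dependent norm.

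Second, I will use sub-test (b) to pin down $\{\Phi^{ab}\}$. The consistency checks in (b) are precisely those of test $\idt$ between each of the two-bit marginals of $\Phi$ and the corresponding one-bit marginals of the triple $(X_1X_2,Z_1Z_2,Y_1Y_2)$ measurement, whose joint eigenbasis is the Bell basis. Combined with the previous step and using Lemma~\ref{lem:pauli-c}-style consistency manipulations to ``switch'' operators between the two provers, each projector $\Phi^{ab}$ must be $O(\sqrt{\eps})$-close (in state-dependent norm, applied to $\ket{\psi}$) to $\frac{1}{4}(\Id+(-1)^a Z_1Z_2)(\Id+(-1)^b X_1X_2)$, which is the first claimed relation. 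The consistency check involving the third observable $Y_1Y_2$ is automatically satisfied because $Y_1Y_2 \approx -(X_1X_2)(Z_1Z_2)$, and it serves only to enforce that the outcomes are simultaneously compatible with all three observables (ruling out pathological projectors that would agree with each pair individually but not the triple).

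Finally, sub-test (c) certifies the relation involving $\SWAP$. Its acceptance condition $c=(-1)^{ab}$ is precisely the $\idt$ test between $\SWAP$ (on one prover's side) and the $\pm 1$-valued observable $\Phi^{00}+\Phi^{01}+\Phi^{10}-\Phi^{11}$ (on the other prover's side), so success with probability $1-O(\eps)$ in this part, combined with the consistency test and the previous step, yields the second claimed relation up to error $O(\sqrt{\eps})$. Completeness in each of (a), (b), (c) is immediate from the honest strategy using two EPR pairs.

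The main obstacle is the bookkeeping in step two: one must verify that the three observables $X_1X_2,Z_1Z_2,Y_1Y_2$, which only commute approximately on $\ket{\psi}$, can be jointly measured to produce statistics matching $\{\Phi^{ab}\}$ up to state-dependent error independent of the product of the three individual approximation errors. This is handled by standard techniques: one first transfers the observables across the entangled state using the consistency relation~\eqref{eq:consistency}, applies them in the ``honest'' form obtained from the Magic Square rigidity on the image of the isometry, and then transfers back, incurring only an $O(\sqrt{\eps})$ loss at each step.
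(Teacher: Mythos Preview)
Your proposal is correct and follows the same three-part structure as the paper's proof, but the paper handles step~(b) more directly and thereby sidesteps entirely what you flag as the ``main obstacle.'' Rather than passing through the isometry and wrestling with three observables that only approximately commute, the paper observes that because $X_1X_2$ and $Z_1Z_2$ appear together in a single Magic Square question (the third column), the second prover already possesses a genuine four-outcome projective measurement $\{W^{c,d}\}$ whose marginals are \emph{exactly} $X_1X_2$ and $Z_1Z_2$; hence $W^{c,d}=\tfrac{1}{4}(\Id+(-1)^c Z_1Z_2)(\Id+(-1)^d X_1X_2)$ holds as an identity, not an approximation. Success $1-O(\eps)$ in part~(b) then reads $\sum_{a,b}\bra{\psi}\Phi^{ab}\otimes W^{a,b}\ket{\psi}\ge 1-O(\eps)$, which together with the consistency test immediately yields $\Phi^{ab}\approx_{\sqrt{\eps}} \tfrac{1}{4}(\Id+(-1)^a Z_1Z_2)(\Id+(-1)^b X_1X_2)$ in terms of the prover's own observables, with no isometry or operator-switching gymnastics required. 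Your route via the isometry and Lemma~\ref{lem:pauli-c}-style manipulations would also work, but it is longer and the bookkeeping you anticipate is unnecessary once one exploits the exact PVM structure already furnished by the third-column question.
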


\begin{proof}
Completeness is clear: the provers can play the honest strategy for the Magic Square game, use a measurement in the Bell basis on their two qubits for $\Phi$, and measure the  observable in~\eqref{eq:def-swap} for $\SWAP$. 

For soundness, let $\ket{\psi}\in\mH_\reg{A}\otimes \mH_\reg{B}$, $\{W_1W'_2:\, W,W'\in\{I,X,Z\}\}$, $\{\Phi^{ab}\}$ and $\SWAP$ denote a state and operators for a strategy that succeeds with probability at least $1-\eps$ in the test. From the analysis of the Magic Square game (Lemma~\ref{lem:ms-rigid}) it follows that the provers' observables $X_1X_2$ and $Z_1Z_2$ associated to questions with those  labels approximately commute, and are each the product of two commuting observables $X_1I$, $IX_2$ and $Z_1I$, $IZ_2$ respectively, such that $X_1I$ and $Z_1I$, and $IX_2$ and $IZ_2$, anti-commute; all approximate identities hold up to error $O(\sqrt{\eps})$. 

Since $X_1X_2$ and $Z_1Z_2$ appear together in the same question (the last column of the Magic Square, Figure~\ref{fig:ms}), each prover has a four-outcome projective measurement $\{W^{c,d}\}_{c,d\in\{0,1\}}$ such that $\sum_d (-1)^c W^{c,d} = X_1X_2$ and $\sum_c (-1)^dW^{c,d} = Z_1Z_2$, from which it follows that $W^{c,d} = (1/4)(1+(-1)^c Z_1Z_2)(1+(-1)^d X_1X_2)$. 

The prover's success probability in part (b) of the test is then
\begin{align*}
\sum_{a,b} \bra{\psi} \Phi^{ab} \otimes W^{a,b} \ket{\psi} &= \sum_{a,b} \bra{\psi} \Phi^{ab} \otimes \frac{1}{4} \big(1+(-1)^a Z_1Z_2\big)\big(1+(-1)^bX_1X_2\big) \ket{\psi}.
\end{align*}
Using that, by assumption, $\{\Phi^{ab}\}$ is a projective measurement, the condition that this expression be at least $1-O(\eps)$ implies 
$$\Phi^{ab} \otimes \Id \approx_{\sqrt{\eps}}\Id \otimes \frac{1}{4}\big(1+(-1)^a Z_1Z_2\big)\big(1+(-1)^bX_1X_2\big).$$ 
Combining this with the implicit consistency test yields the first relation. The last is guaranteed by part (c) of the test, which checks for the correct relationship between $\SWAP$ and $\Phi$; the analysis is similar.  
\end{proof}

\subsection{The \texorpdfstring{$m$}{m}-qubit Pauli group}
\label{sec:pauli-group}

In this section we formulate a robust self-test for the $m$-qubit Pauli group. The result is a slight extension of the results from~\cite{natarajan2016robust} to allow testing of $\sigma_Y$ observables. 

\subsubsection{The \texorpdfstring{$m$}{m}-qubit Weyl-Heisenberg group}
\label{sec:pbt}

We start by giving a self-test for tensor products of $\sigma_X$ and $\sigma_Z$ 
observables acting on $m$ qubits, i.e. the $m$-qubit Weyl-Heisenberg group $\mathcal{H}^{(m)}$ (see Section~\ref{sec:prelim-notation}). 
Let $\mathcal{P}^{(m)}$ denote the relations  
\begin{align*}
\paulin\{X,Z\} &= \Big\{ W(a)\in\Obs,\;W \in \prod_{i=1}^m \{X_i,Z_i\},\,a\in\{0,1\}^m\Big\}\\
&\cup \Big\{W(a)W'(a')=(-1)^{|\{i:\,W_i\neq W'_i \wedge a_ia'_i=1\}|} W'(a')W(a),\;\forall a,a'\in\{0,1\}^m\Big\}\\
& \cup\Big\{ W(a)W(a')=W(a+a'),\;\forall a,a'\in\{0,1\}^m\Big\}.
\end{align*}
Recall the notation $W(a)$ for the string that is $W_i$ when $a_i=1$ and $I$ otherwise. 
The first set of relations expresses the canonical anti-commutation relations. The second set of relations expresses the obvious relations $\sigma_W \Id = \Id \sigma_W$ and $\sigma_W^2 = \Id$, for $W\in\{X,Z\}$, coordinate-wise. It is easy to verify that $\mathcal{P}^{(m)}$ forms a defining set of relations for $\mathcal{H}^{(m)}$. Our choice of relations is suggested by the Pauli braiding test introduced in~\cite{natarajan2016robust}, which shows that the relations are testable with a robustness parameter $\delta(\eps)$ that is independent of $m$. 
The underlying test is called the Pauli braiding test, and denoted $\pbt(X,Z)$. For convenience here we use a slight variant of the test, which includes more questions; the test is summarized in Figure~\ref{fig:pbt}. 

\begin{figure}[H]
\rule[1ex]{\textwidth}{0.5pt}\\
\justifying
Test~$\pbt(X,Z)$: %
\begin{itemize}
\item Inputs: $(W,a)$, for $W\in\prod_{i=1}^n\{X_i,Z_i\}$ and $a\in\{0,1\}^m$.
\item Relations: $\paulin\{X,Z\}$.  
\item Test: Perform the following with probability $1/2$ each: 
\begin{enumerate}
\item[(a)] Select $W,W'\in \prod_i \{X_i,Z_i\}$, and $a,a'\in\{0,1\}^m$, uniformly at random. If $\{i: W_i\neq W'_i \wedge a_i=a'_i=1\}$ has even cardinality then execute test $\comt(W(a),W'(a'))$. Otherwise, execute test $\act(W(a),W'(a'))$. 
\item[(b)] Select $(a,a')\in\{0,1\}^m$ and $W\in\prod_{i=1}^m\{X_i,Z_i\}$ uniformly at random. Execute test $\prodt(W(a),W(a'),W(a+a'))$. 
\end{enumerate}
\end{itemize}
\rule[2ex]{\textwidth}{0.5pt}\vspace{-.5cm}
\caption{The Pauli braiding test, $\pbt(X,Z)$.}
\label{fig:pbt}
\end{figure}

The following lemma follows immediately from the definition of $\paulin\{X,Z\}$ and the analysis of the tests $\comt$, $\prodt$ and $\act$ given in Section~\ref{sec:elementary}.

\begin{lemma}[Theorem 13~\cite{natarajan2016robust}]\label{lem:pbt}
The test $\pbt(X,Z)$ is a robust $(1,\delta)$ self-test 
for $\paulin\{X,Z\}$, for some $\delta(\eps) = O(\eps^{1/2})$. 
\end{lemma}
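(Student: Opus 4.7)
The plan is to observe that $\pbt(X,Z)$ decomposes into three kinds of elementary sub-tests (commutation $\comt$, anti-commutation $\act$, and product $\prodt$), each one certifying exactly one kind of relation in the defining set $\paulin\{X,Z\}$, and then to invoke Lemma~\ref{lem:elementary} together with a convexity/averaging argument to conclude that the players' observables satisfy the associated relations approximately in the state-dependent norm. Completeness is immediate: the honest strategy uses $m$ shared EPR pairs together with the tensor-product Pauli observables $W(a) = \bigotimes_i \sigma_{W_i}^{a_i}$, which exactly satisfy every relation in $\paulin\{X,Z\}$ and pass each $\comt$, $\act$ and $\prodt$ sub-test with probability~$1$.

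For soundness, suppose a strategy $(W,\ket{\psi})$ succeeds with probability at least $1-\eps$ in $\pbt(X,Z)$. Conditioning on the branch selected in step (a) versus step (b) (each chosen with probability $1/2$), the strategy succeeds with probability at least $1-2\eps$ in each branch. First I would handle branch (a). Let $\mathcal{D}_{\mathrm{comm}}$ denote the distribution on $(W,W',a,a') \in \prod_i\{X_i,Z_i\}^2 \times (\{0,1\}^m)^2$ conditioned on the parity $|\{i: W_i\neq W'_i \wedge a_ia'_i = 1\}|$ being even (so that $W(a)$ and $W'(a')$ must commute), and similarly $\mathcal{D}_{\mathrm{anti}}$ for the odd case. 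Success in branch (a) with probability $1-2\eps$ implies that on average over $(W,W',a,a')\sim \mathcal{D}_{\mathrm{comm}}$, the players succeed in $\comt(W(a),W'(a'))$ with average probability at least $1-O(\eps)$, and analogously for $\mathcal{D}_{\mathrm{anti}}$ and $\act$. Applying Lemma~\ref{lem:elementary} pointwise and averaging, we obtain
\begin{align*}
\Es{(W,W',a,a')\sim \mathcal{D}_{\mathrm{comm}}}\bigl\| \Id\otimes (W(a)W'(a') - W'(a')W(a))\ket{\psi} \bigr\|^2 &= O(\sqrt{\eps}),\\
\Es{(W,W',a,a')\sim \mathcal{D}_{\mathrm{anti}}}\bigl\| \Id\otimes (W(a)W'(a') + W'(a')W(a))\ket{\psi} \bigr\|^2 &= O(\sqrt{\eps}),
\end{align*}
which is exactly the approximate anti-commutation part of $\paulin\{X,Z\}$ in the state-dependent norm, on average under the natural distribution on these relations. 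Next, I would handle branch (b). Success with probability $1-2\eps$ in $\prodt(W(a),W(a'),W(a+a'))$ averaged over uniform $W\in \prod_i\{X_i,Z_i\}$ and $(a,a')\in\{0,1\}^m\times\{0,1\}^m$ gives, again by Lemma~\ref{lem:elementary},
\begin{equation*}
\Es{W,a,a'} \bigl\| \Id\otimes (W(a)W(a') - W(a+a'))\ket{\psi} \bigr\|^2 = O(\sqrt{\eps}),
\end{equation*}
which is precisely the approximate multiplicativity relation in $\paulin\{X,Z\}$.

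The remaining relations in $\paulin\{X,Z\}$ state that each $W(a)$ is an observable. This is already part of the setup: the players respond to a single question $W(a)$ with one bit, so their response is naturally modelled by an observable $W(a) \in \Obs$ without any test needed. Putting the three families of approximate relations together, we conclude that on average over a uniformly random relation in $\paulin\{X,Z\}$, the left-hand side of the relation evaluated on the players' operators is within $O(\sqrt{\eps})$ of the identity in the state-dependent norm, which is the required soundness statement with $\delta(\eps) = O(\sqrt{\eps})$.

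The proof is essentially a packaging exercise, and no serious obstacle is expected: the only mildly delicate point is to verify that the distributions induced on the input quadruples $(W,W',a,a')$ in step (a) --- after conditioning on the parity --- place enough mass on every relation in $\paulin\{X,Z\}$ for the pointwise-to-average step to be valid. This is clear from the fact that all marginals are uniform, so conditioning on the parity only changes the density by a constant factor, which is absorbed in the $O(\cdot)$. Crucially, the state-dependent norm bounds obtained are $m$-independent because the per-relation errors coming from Lemma~\ref{lem:elementary} are $m$-independent and we only average them; at no point do we take a union bound over the exponentially many relations in $\paulin\{X,Z\}$.
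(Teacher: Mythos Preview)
Your proposal is correct and is essentially the paper's own argument spelled out in full: the paper just says the lemma ``follows immediately from the definition of $\paulin\{X,Z\}$ and the analysis of the tests $\comt$, $\prodt$ and $\act$'' (Lemma~\ref{lem:elementary}), and you have unpacked exactly this---conditioning on the branch, applying Lemma~\ref{lem:elementary} pointwise, and averaging via Jensen/concavity of the square root. The one point you flag (that conditioning on the parity in branch~(a) only shifts probabilities by a constant) is indeed the only thing to check, and your observation that both parities occur with probability bounded away from~$0$ uniformly in~$m$ handles it.
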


In addition we need the following lemma, which states that observables approximately satisfying the relations $\paulin\{X,Z\}$ are close to operators which, up to a local isometry, behave exactly as a tensor product of Pauli $\sigma_X$ and $\sigma_Z$ observables.

\begin{lemma}[Theorem 14~\cite{natarajan2016robust}]\label{lem:pauli-stable}
The set of relations $\mathcal{P}^{(n)}$ is $\delta$-stable, with $\delta(\eps) = O(\eps)$.
\end{lemma}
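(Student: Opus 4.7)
The plan is to construct, for each player $D\in\{A,B\}$, a local isometry
$$V_D:\,\mH_\reg{D}\to ((\C^2)^{\otimes m})_{\reg{D}'}\otimes \hat{\mH}_{\hat{\reg{D}}}$$
under which the canonical generators $X_i=W(e_i)$ with $W_i=X_i$ and $Z_i=W(e_i)$ with $W_i=Z_i$ act, to very good approximation, as the standard Pauli observables $\sigma_X$ and $\sigma_Z$ on the $i$-th tensor factor, with identity on all other factors. The ``rounded'' operators $\hat{W}(a):=V_A^\dagger(\sigma_W(a)\otimes \Id)V_A$ then automatically satisfy the entire set of relations $\paulin\{X,Z\}$ exactly (they are literally the honest Pauli operators), and the stability statement reduces to verifying that the isometry identifies $W(a)$ with $\sigma_W(a)\otimes \Id$ on average over the inputs. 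Consistency across players is built into the analysis by using both $V_A$ and $V_B$ as in Lemma~\ref{lem:xyz-rigid}.

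For the construction of $V_D$, I would follow the standard Mayers--Yao swap-isometry template. For each coordinate $i$, the approximate anti-commutation relation $X_iZ_i+Z_iX_i\approx 0$ supplied by the hypothesis allows one to define a unitary $U_i$ on $\mH_\reg{A}$ (together with an auxiliary qubit) which locally plays the role of the controlled rotation that diagonalizes $\{X_i,Z_i\}$ onto $\{\sigma_X,\sigma_Z\}$ on the $i$-th auxiliary qubit, incurring an error that is $O(\eps_i)$ where $\eps_i$ is the defect on coordinate $i$. The approximate commutation relations between observables on different qubits (the commutation axioms in $\paulin\{X,Z\}$ applied to disjoint support) then imply that the $U_i$ almost commute, so that the composed map $V_D=U_m\cdots U_1$ is well-defined as an isometry into $((\C^2)^{\otimes m})_{\reg{D}'}\otimes \hat{\mH}_{\hat{\reg{D}}}$, and under it each single-coordinate generator $W(e_i)$ is $O(\eps)$-close to $\sigma_{W_i}\otimes \Id$.

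To extend from the single-coordinate generators to arbitrary $W(a)$ without any loss scaling with $m$, I would invoke the linearity relations $W(a)W(a')=W(a+a')$ and the anti-commutation relations to expand $W(a)$ as an iterated product of generators and then collapse the product using the honest Pauli identities modulo controllable error. Concretely, by the linearity axioms of $\paulin\{X,Z\}$, $W(a)$ is approximately equal to $\prod_{i:a_i=1}W(e_i)$ up to a sign predicted by the anti-commutation pattern; applying the isometry $V_A$, each factor is replaced by $\sigma_{W_i}\otimes\Id$ up to $O(\eps_i)$. Averaged over a uniformly random $a\in\{0,1\}^m$ and $W\in\{X,Z\}^m$, the cumulative error is controlled by $\Es{i}\eps_i=O(\eps)$ rather than $\sum_i\eps_i=O(m\eps)$, because the expectation over random $a$ selects only an average-cost approximation per coordinate. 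This is precisely the argument that renders the robustness bound of $\pbt(X,Z)$ in~\cite{natarajan2016robust} independent of $m$, and it is what delivers the desired $\delta(\eps)=O(\eps)$ bound in the state-dependent norm on $\ket{\psi}$.

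The main obstacle I anticipate is exactly this dimension-free accounting of the error: naively composing the single-qubit swap isometries and inserting them into a product $W(a)=\prod_{i:a_i=1}W(e_i)$ yields an error that scales linearly in the Hamming weight of $a$, which ruins the bound. Handling this requires extracting the error on an ``average coordinate'' and using Parseval-type orthogonality of the operators $\{\hat{W}(a)\}$ in the Frobenius inner product induced by the maximally entangled component of $\ket{\psi}$ (guaranteed by consistency across Alice and Bob together with the anti-commutation relations). This step is the substantive content of~\cite[Theorem~14]{natarajan2016robust}, and the cleanest route is to cite that analysis verbatim, restricted to our relation set.
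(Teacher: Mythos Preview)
Your proposal essentially matches what the paper does for the qualitative statement (stability of the Pauli relations), and your plan to ultimately cite~\cite{natarajan2016robust} is exactly what the paper does. However, there is a quantitative gap: the swap-isometry argument you sketch, and Theorem~14 of~\cite{natarajan2016robust} itself, only yield a \emph{polynomial} bound $\delta(\eps)=\poly(\eps)$, not the linear $\delta(\eps)=O(\eps)$ claimed here. The paper is explicit about this: immediately after stating the lemma it notes that~\cite{natarajan2016robust} gives polynomial dependence, and that the linear dependence requires adapting the Gowers--Hatami inverse theorem for approximate representations~\cite{gowers2015inverse}, with details deferred to~\cite{ghblog}.

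The reason your averaging heuristic does not directly give $O(\eps)$ is that composing the coordinate-wise swap isometries and then invoking linearity still incurs losses from repeated use of approximate consistency and approximate (anti-)commutation when moving operators between players and between coordinates; these compound into a polynomial, not linear, degradation. The Gowers--Hatami route sidesteps this by treating the family $\{W(a)\}$ as an approximate unitary representation of $\Z_2^{2m}$ (via the linearity relations) and applying a single averaging/projection step that produces an exact representation at distance $O(\eps)$ in one shot, rather than building the isometry qubit by qubit. If you want the linear bound, that is the machinery you need to invoke; if polynomial stability suffices for your purposes, your sketch and the citation to~\cite{natarajan2016robust} are adequate.
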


Lemma~\ref{lem:pauli-stable} is proved in~\cite{natarajan2016robust} with a polynomial dependence of $\delta$ on $\eps$. The linear dependence can be established by adapting the results of~\cite{gowers2015inverse} to the present setting; we omit the details (see~\cite{ghblog}). 

The following lemma is an extension of Lemma~\ref{lem:pauli-c} to the case of multi-qubit Pauli observables; the lemma avoids any dependence of the error on the number of qubits, as would follow from a sequential application of Lemma~\ref{lem:pauli-c}.

\begin{lemma}\label{lem:pauli-c-n}
Let $n$ be an integer, $\ket{\psi} \in \mH_{\reg{A}} \otimes \mH_{\reg{B}}$ and $A$ and $X(a)$, for $a\in\{0,1\}^m$, observables on $\mH_{\reg{A}}$ such that there exists an isometry $\mH_{\reg{A}} \simeq (\C^2)^{\otimes m}\otimes \mH_{\hat{\reg{A}}}$ under which the following conditions hold, for some $\delta_1,\delta_2,\delta_3$:
\begin{enumerate}
\item[(i)] There exists an observable $A'$ on $\mH_\reg{B}$ such that $A\otimes \Id \approx_{\delta_1} \Id \otimes A'$;
\item[(ii)] $\ket{\psi} \simeq_{\delta_1} \ket{\EPR}^{\otimes m}\ket{\aux}$, and $X(a)\simeq_{\delta_1} \sigma_X(a) \otimes \Id$;
\item[(iii)] $[A,X(a)]\simeq_{\delta_2} 0$;
\item[(iv)] For some $c\in\{0,1\}^m$ and $a\cdot c=1$, $\{A,X(a)\} \simeq_{\delta_3} 0$;
\end{enumerate}
where the first two conditions are meant on average over a uniformly random
  $a\in\{0,1\}^m$, and the last over a uniformly random $a$ such that $a\cdot c
  =1$. For some $P\in\{I,X,Y,Z\}^m$ let $x_P \in\{0,1\}^m$ be such that $(x_P)_i=1$ if and only if $P_i\in\{Y,Z\}$.
Then there exists Hermitian $A_P$, for $P\in\{I,X,Y,Z\}^m$, on $\mH_{\hat{\reg{A}}}$ such that 
\begin{align*}
A \simeq_{\delta_1+\delta_2} \sum_{P\in\{I,X\}^n} \sigma_P \otimes A_P,\qquad\text{and}\qquad A \simeq_{\delta_1+\delta_3} \sum_{\substack{P\in\{I,X,Y,Z\}^m:\\ c_i=1 \implies P_i \in \{Y,Z\}\\  c_i=0 \implies P_i \in \{I,X\}}} \sigma_P \otimes A_P.
\end{align*}
 (A similar claim holds with $X$ replaced by $Z$.)
\end{lemma}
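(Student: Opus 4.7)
The plan is to follow the single-qubit argument of Lemma~\ref{lem:pauli-c} but exploit Fourier analysis on $\mathbb{F}_2^m$ together with the orthogonality of Paulis under the maximally entangled state, so that the error bound is independent of $m$. First, apply the local isometry given by assumption (ii) to work on the space $(\C^2)^{\otimes m}\otimes\mH_{\hat{\reg{A}}}$, where $\ket{\psi}\simeq_{\delta_1} \ket{\EPR}^{\otimes m}\ket{\aux}$ and $X(a)\simeq_{\delta_1} \sigma_X(a)\otimes \Id$. Writing $\tilde A = V_A A V_A^\dagger$ and expanding in the Pauli basis, we have
\[
  \tilde A \;=\; \sum_{P\in\{I,X,Y,Z\}^m} \sigma_P \otimes A_P,
\]
with Hermitian operators $A_P$ on $\mH_{\hat{\reg{A}}}$. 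The goal reduces to showing that the weight $\sum_{P} \norm{A_P\otimes\Id\ket{\aux}}^2$ concentrates on the appropriate subsets of $\{I,X,Y,Z\}^m$; the consistency assumption (i) will ensure that distances in the isometric picture translate back to $\ket{\psi}$.

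Next I would exploit the identities $[\sigma_P,\sigma_X(a)]=\bigl(1-(-1)^{x_P\cdot a}\bigr)\sigma_P\sigma_X(a)$ and $\{\sigma_P,\sigma_X(a)\}=\bigl(1+(-1)^{x_P\cdot a}\bigr)\sigma_P\sigma_X(a)$. Hence
\[
  \bigl[\tilde A,\sigma_X(a)\otimes \Id\bigr] \;=\; 2\!\!\sum_{P:\,x_P\cdot a\,\equiv\,1}\!\! \sigma_P\sigma_X(a)\otimes A_P,
\]
and applying both sides to $\ket{\EPR}^{\otimes m}\ket{\aux}$ and invoking the orthonormality of $\{(\sigma_Q\otimes \Id)\ket{\EPR}^{\otimes m}\}_{Q\in\{I,X,Y,Z\}^m}$ yields $\norm{[\tilde A,\sigma_X(a)\otimes\Id]\ket{\EPR}^{\otimes m}\ket{\aux}}^2 = 4\sum_{P:\,x_P\cdot a=1}\norm{A_P\otimes\Id\ket{\aux}}^2$. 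Averaging over a uniform $a\in\{0,1\}^m$, each $P$ with $x_P\neq 0$ contributes with weight $\tfrac{1}{2}$ and each $P$ with $x_P=0$ with weight $0$; combined with assumption (iii), this gives $\sum_{P:\,x_P\neq 0}\norm{A_P\otimes\Id\ket{\aux}}^2=O(\delta_1+\delta_2)$, which is the first decomposition.

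The anti-commutation claim is analogous but now the sum ranges over $P$ with $x_P\cdot a\equiv 0$, and the average is taken over $a$ conditioned on $a\cdot c=1$. A short case analysis in $\mathbb{F}_2^m$ computes the conditional probability that $x_P\cdot a=0$: it equals $1$ when $x_P=0$, equals $0$ when $x_P=c$, and equals $\tfrac{1}{2}$ otherwise (since $x_P\notin\{0,c\}$ means $x_P$ and $c$ are $\mathbb{F}_2$-linearly independent). Thus the expected squared norm is at least $\tfrac{1}{2}\sum_{P:\,x_P\neq c}\norm{A_P\otimes\Id\ket{\aux}}^2$, and assumption (iv) yields $\sum_{P:\,x_P\neq c}\norm{A_P\otimes\Id\ket{\aux}}^2=O(\delta_1+\delta_3)$; Pauli orthogonality then gives $\tilde A\approx \sum_{P:\,x_P=c}\sigma_P\otimes A_P$, which is exactly the stated decomposition since $x_P=c$ is equivalent to the coordinate-wise condition on $P_i$.

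The main obstacle is avoiding an $m$-dependent error despite having $4^m$ Pauli coefficients. This is handled by the uniform $\tfrac{1}{2}$ lower bound on the probabilities $\Pr_a[x_P\cdot a=1]$ (resp.~$\Pr_a[x_P\cdot a=0\mid a\cdot c=1]$) for all ``off-support'' $P$, a consequence of basic linear algebra over $\mathbb{F}_2$; combined with exact Pauli orthogonality on $\ket{\EPR}^{\otimes m}$ (which replaces a naive triangle-inequality argument that would lose a factor $4^m$), this gives dimension-free constants. A minor bookkeeping issue is that the products $\sigma_P\sigma_X(a)$ introduce $\{\pm 1,\pm i\}$ phases, but these are unimodular and are absorbed into the definition of the $A_P$ (which remain Hermitian since $A$ is an observable).
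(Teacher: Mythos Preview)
Your proposal is correct and follows essentially the same approach as the paper's proof: expand $\tilde A$ in the Pauli basis, compute $[\tilde A,\sigma_X(a)\otimes\Id]$ and $\{\tilde A,\sigma_X(a)\otimes\Id\}$ as sums over $P$ with $x_P\cdot a=1$ (resp.\ $=0$), and use Pauli orthogonality on $\ket{\EPR}^{\otimes m}$ together with the fact that the relevant $\mathbb{F}_2$-probabilities are bounded below by $1/2$ to get a dimension-free bound. Your treatment is in fact slightly more precise than the paper's (you compute the conditional probabilities exactly rather than saying ``almost $1/2$''); the only quibble is the last parenthetical about phases being ``absorbed into the definition of the $A_P$''---the unimodular phases in $\sigma_P\sigma_X(a)$ simply disappear when taking norms, they do not modify the (already fixed, Hermitian) $A_P$.
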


\begin{proof}
After application of the isometry, an arbitrary observable $\tilde{A}$ on  $(\C^2)^{\otimes m} \otimes \mH_{\hat{\reg{A}}}$ has a decomposition $\tilde{A} = \sum_{P\in\{I,X,Y,Z\}^m} \sigma_P \otimes A_P$, for Hermitian operators $A_P$ on $\mH_{\hat{\reg{A}}}$. 
 Then the analogue of~\eqref{eq:pauli-c-1} is
\begin{align*}
[\tilde{A},\sigma_X(a)\otimes \Id] &= 2 \,\sum_{P:\,a\cdot x_P=1} \,\sigma_P\sigma_X(a) \otimes A_P.
\end{align*} 
Using that any string $x_P$ which is not the $0^m$ string satisfies $a\cdot x_P = 1$ with probability almost $1/2$ for a uniform choice of $a$, orthogonality of the $\sigma_P\sigma_X(a) $ for distinct $P$ lets us conclude the proof of the first relation as in Lemma~\ref{lem:pauli-c}. Similarly, the analogue of~\eqref{eq:pauli-c-2} gives
\begin{align*}
\{\tilde{A},\sigma_X(a)\otimes \Id\} &= 2 \,\sum_{P:\,a\cdot x_P=0} \,\sigma_P\sigma_X(a) \otimes A_P.
\end{align*} 
Using that any string $x_P$ which is not $c$ satisfies $a\cdot x_P = 0$ with probability almost $1/2$ for a uniform choice of $a$ such that $a\cdot c=1$, orthogonality of the $\sigma_P\sigma_X(a) $ for distinct $P$ lets us conclude the proof of the second relation.
\end{proof}

\subsubsection{The \texorpdfstring{$m$}{m}-qubit Pauli group}
\label{sec:e-pbt}

We will use an extended version of the Pauli braiding test introduced in Section~\ref{sec:pbt} which allows to test for a third observable, $Y_i$, on each system. Ideally we would like to enforce the relation $Y_i=\sqrt{-1}X_iZ_i$. Unfortunately, the complex phase cannot be tested from classical correlations alone: complex conjugation leaves correlations invariant, but does not correspond to a unitary change of basis  (see~\cite[Appendix A]{reichardt2012classicalarxiv} for a discussion of this issue). 

We represent the ``choice'' of complex phase, $\sqrt{-1}$ or its conjugate $-\sqrt{-1}$, by an observable $\Delta$ that the prover measures on a system that is in a tensor product with all other systems on which the prover acts. Informally, the outcome obtained when measuring $\Delta$ tells the prover to use $Y = i XZ$ or $Y=-iXZ$. 

We first introduce $Y$ and test that the triple $\{X,Y,Z\}$ pairwise anticommute at each site. This corresponds to the following set of relations: 
\begin{align*}
& {\paulin}\{X,Y,Z\} = \Big\{ W(a)\in\Obs,\;W \in \{X,Y,Z\}^n,\,a\in\{0,1\}^n\Big\} \\
&\qquad\cup \Big\{W(a)W'(a')=(-1)^{|\{i:\,W_i\neq W'_i \wedge a_ia'_i=1\}|} W'(a')W(a),\;\forall a,a'\in\{0,1\}^n\Big\}\\
&\qquad \cup\Big\{ W(a)W(a')=W(a+a'),\;\forall a,a'\in\{0,1\}^n\Big\}.
\end{align*}

\begin{figure}[H]
\rule[1ex]{\textwidth}{0.5pt}\\
\justifying
Test~$\pbt(X,Y,Z)$: 
\begin{itemize}
\item Inputs: $W\in\prod_{i=1}^m\{X,Y,Z\}$
\item Relations: $\paulin\{X,Y,Z\}$.  
\item Test: Perform the following with equal probability: 
\begin{enumerate}
\item[(a)] Execute test $\pbt(X^m,Z^m)$. 
\item[(b)] Execute test $\pbt(Y^m,X^m)$ or test $\pbt(Y^m,Z^m)$, chosen with probability $1/2$ each.
\item[(c)] Select a random permutation  $\sigma \in \mathfrak{S}_{m/2}$, and $W\in  \{I,Y\}^m$ uniformly at random. Write $W=W_1 W_2$, where $W_1,W_2\in \{I,Y\}^{m/2}$. Let $W_1^\sigma$ be the string $W_1$ with its entries permuted according to $\sigma$. Do the following with equal probability: 
\begin{enumerate}
\item[(i)] Send one prover $W_1 W_1^\sigma$ and the other either $W_1 W_2$ or $W_2W_1^\sigma$ (chosen with probability $1/2$), and check consistency of the first or second half of the provers' answer bits.
\item[(ii)] Send one prover $W_1  W_1^\sigma$, and the other $\prod_i \Phi_{i,\sigma(i)}$, where each $\Phi_{i,\sigma(i)}$ designates a measurement in the Bell basis for the $(i,m/2+\sigma(i))$ pair of qubits. 
    The first prover replies with $a\in\{\pm 1 \}^m$, and the second with $b\in \{00,01,10,11\}^{m/2}$. For each
    $i\in\{1,\ldots,m/2\}$ such that $b_i = 00$, check that $a_i  = a_{m/2+\sigma(i)}$. 
\item[(iii)] Execute $m/2$ copies of test $\bellt$ (in parallel), for qubit pairs $(i,m/2+\sigma(i))$, for $i\in\{1,\ldots,m/2\}$. 
\end{enumerate}
\end{enumerate}
\end{itemize}
\rule[2ex]{\textwidth}{0.5pt}\vspace{-.5cm}
\caption{The extended Pauli braiding test, $\pbt(X,Y,Z)$.}
\label{fig:e-pbt}
\end{figure}

The test is described in Figure~\ref{fig:e-pbt}. It has three components. Part (a) of the test executes test $\pbt(X^m,Z^m)$, which gives us multi-qubit Pauli $X$ and $Z$ observales. Part (b) of the test introduces observables labeled $Y(c)$, and uses tests $\pbt(Y^m,X^m)$ and $\pbt(Y^m,Z^m)$ to enforce appropriate anti-commutation relations with the Pauli $X$ and $Z$ observables obtained in part (a). Using Lemma~\ref{lem:pauli-c-n}, this part of the test will establish that the $Y(c)$ observables approximately respect the same $n$-qubit tensor product structure as $X(a)$ and $Z(b)$. 

Part (c) of the test is meant to control the ``phase'' ambiguity in the definition of $Y(c)$ that remains after the analysis of part (b). Indeed, from that part it will follow that $Y(c) \simeq \sigma_Y(c) \otimes \Delta(c)$, where $\Delta(c)$ is an arbitrary observable acting on the ancilla system produced by the isometry obtained in part (a). We would like to impose $\Delta(c) \approx \Delta_Y^{|c|}$ for a fixed observable $\Delta_Y$ which  represents the irreducible phase degree of freedom in the definition of $Y$, as discussed above. To obtain this, part (c) of the test performs a form of SWAP test between different $Y(c)$ observables, enforcing that e.g. $Y(1,0,1)$ is consistent with $Y(0,1,1)$ after an appropriate Bell measurement has ``connected'' registers $1$ and $2$. The swapping is defined using Pauli $\sigma_X$ and $\sigma_Z$, which leave the ancilla register invariant; consistency will then imply $\Delta(1,0,1) \approx \Delta(0,1,1)$.  

  \begin{lemma}
Suppose $\ket{\psi}\in\mH_\reg{A}\otimes \mH_\reg{B}$ and $W(a) \in \Obs(\mH_\reg{A})$, for $W\in \{X,Y,Z\}^m$ and $a\in\{0,1\}^m$, specify a strategy for the players that has success probability at least $1-\eps$ in the extended Pauli braiding test $\pbt(X,Y,Z)$ described in Figure~\ref{fig:e-pbt}. 
Then there exist isometries $V_D:\mH_\reg{D} \to ((\C^2)^{\otimes m})_{\reg{D'}}  \otimes \hat{\mH}_{\hat{\reg{D}}}$, for $D\in\{A,B\}$, such that
$$\big\| (V_A \otimes V_B) \ket{\psi}_{\reg{AB}} - \ket{\EPR}_{\reg{A}'\reg{B}'}^{\otimes n} \ket{\aux}_{\hat{\reg{A}}\hat{\reg{B}}} \big\|^2 = O(\sqrt{\eps}),$$
and on expectation over  $W\in \{X,Y,Z\}^m$,
\begin{align}\label{eq:test-sigmay-2}
 \Es{a\in\{0,1\}^m} \big\| \big(W(a) -V_A^\dagger (\sigma_W(a) \otimes \phase_W(a)) V_A\big) \otimes \Id_B \ket{\psi} \big\|^2 &= O(\sqrt{\eps}),
\end{align}
where $\phase_W(a) = \prod_i \phase_{W_i}^{a_i} \in \Obs({\mH}_{\hat{\reg{A}}})$ are observables with $\Delta_{X}=\Delta_{Z}=\Id$ and $\Delta_{Y}$ an arbitrary observable on $\hat{\mH}$ such that
	$$ \big\|\Delta_Y \otimes \Delta_Y \ket{\aux} - \ket{\aux} \big\|^2 = O(\sqrt{\eps}).$$
\end{lemma}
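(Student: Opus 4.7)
The plan is to build on the existing self-testing of $\pbt(X,Z)$ and extend it to incorporate the $Y$ observable while handling the inherent complex-phase ambiguity through the auxiliary observable $\Delta_Y$. First, part (a) of the test is precisely $\pbt(X^m, Z^m)$, so by Lemma~\ref{lem:pbt} combined with the stability result of Lemma~\ref{lem:pauli-stable}, I obtain isometries $V_A, V_B$ and an auxiliary state $\ket{\aux}$ satisfying the stated state-closeness bound, and such that $X(a) \simeq \sigma_X(a) \otimes \Id$ and $Z(b) \simeq \sigma_Z(b) \otimes \Id$ on average over $a$ and $b$ respectively, with error $O(\sqrt{\eps})$. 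This establishes the lemma for $W \in \{X, Z\}^m$ with $\phase_X = \phase_Z = \Id$.

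Second, I address the $Y$ observables using part (b), which executes $\pbt(Y^m, X^m)$ and $\pbt(Y^m, Z^m)$. These together certify that $Y(c)$ is an observable with $Y(c)Y(c') \approx Y(c+c')$ and the expected anticommutation relations with $X(a)$ and $Z(b)$. Invoking Lemma~\ref{lem:pauli-c-n} with $A = Y(c)$ and exploiting both anticommutation conditions simultaneously, the admissible Pauli decomposition of $Y(c)$ under the isometry is forced to consist only of strings that are $\sigma_Y$ at every position $i$ with $c_i = 1$ and identity elsewhere, which yields $Y(c) \simeq \sigma_Y(c) \otimes \Delta(c)$ for some observable $\Delta(c)$ on $\mH_{\hat{\reg{A}}}$, on average over $c$. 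Pushing the linearity through this decomposition (using $\sigma_Y(c)\sigma_Y(c') = \sigma_Y(c+c')$) gives $\Delta(c)\Delta(c') \approx \Delta(c+c')$, and a standard stability argument for additive homomorphisms on $(\Z_2^m, +)$ produces commuting observables $\Delta_1, \ldots, \Delta_m$ on $\mH_{\hat{\reg{A}}}$ such that $\Delta(c) \approx \prod_i \Delta_i^{c_i}$.

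The main obstacle is extracting from part (c) that all the $\Delta_i$ coincide on $\ket{\aux}$, collapsing the decomposition to $\Delta(c) \approx \Delta_Y^{|c|}$ for a single observable $\Delta_Y$. Sub-test (c.iii) runs the Bell measurement test, which by Lemma~\ref{lem:bell-rigid-test} certifies that the measurements $\Phi_{i,\sigma(i)}$ act honestly on the Bell basis at positions $(i, m/2 + \sigma(i))$ defined by the already-tested $X$ and $Z$ observables. Sub-test (c.ii) then enforces that after such a Bell measurement, the $Y$ outcome reported by one player at position $i$ matches (up to the twist dictated by the Bell outcome) the $Y$ outcome reported by the other at position $m/2 + \sigma(i)$. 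Translating this via the decomposition $Y(c) \simeq \sigma_Y(c) \otimes \Delta(c)$ on $\ket{\EPR}^{\otimes m}\ket{\aux}$, and using that the Bell measurement effectively teleports one qubit onto the other, the consistency forces $\Delta_i \otimes \Id \ket{\aux} \approx \Id \otimes \Delta_{m/2 + \sigma(i)} \ket{\aux}$ after accounting for the Bell-outcome-dependent Pauli correction on the state. Averaging over uniformly random $\sigma \in \mathfrak{S}_{m/2}$ and composing pairwise identifications through the two halves gives $\Delta_i \ket{\aux} \approx \Delta_j \ket{\aux}$ for all $i,j$, and I define $\Delta_Y$ as a suitable normalized average of the $\Delta_i$, which is then an observable up to a small error that is handled by absorbing into the isometry.

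Finally, the stabilizer condition $\Delta_Y \otimes \Delta_Y \ket{\aux} \approx \ket{\aux}$ follows from the implicit consistency test between the players on queries drawn from $\{Y\}^m$, by pushing consistency through the isometries and invoking the identity $(\sigma_Y \otimes \Id)\ket{\EPR} = -(\Id \otimes \sigma_Y)\ket{\EPR}$, with the sign absorbed into the definition of $\Delta_Y$ as needed. The overall bound~\eqref{eq:test-sigmay-2} then follows by combining the component bounds for the $X$, $Y$, and $Z$ factors and averaging over uniformly random $W \in \{X,Y,Z\}^m$. The key technical care will be to ensure that the averaging over random permutations in part (c), and the composition through the two halves, does not introduce $m$-dependent factors in the final error, which is essential for preserving the $m$-independent robustness that our delegation protocols rely on.
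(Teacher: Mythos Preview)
Your proposal follows essentially the same route as the paper's proof: part (a) yields the isometry and the $X,Z$ structure via Lemmas~\ref{lem:pbt} and~\ref{lem:pauli-stable}; part (b) with Lemma~\ref{lem:pauli-c-n} forces $Y(c)\simeq\sigma_Y(c)\otimes\Delta(c)$; linearity factors $\Delta(c)\approx\prod_i\Delta_i^{c_i}$; part (c) collapses the $\Delta_i$ to a single $\Delta_Y$; and the stabilizer relation on $\ket{\aux}$ comes from consistency. Two small points of divergence are worth flagging. First, in sub-test (c.ii) both $Y$-outcomes being compared belong to the \emph{same} prover (the one receiving $W_1W_1^\sigma$), conditioned on the other prover's Bell outcome; the paper extracts the resulting constraint via the trace identity of Claim~\ref{claim:swap-tt} rather than a cross-player $\Delta_i\otimes\Id\approx\Id\otimes\Delta_{m/2+\sigma(i)}$ relation. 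Second, the $m$-independence concern you rightly raise is handled in the paper by a concrete device: one compares $\Delta(c)$ to $(\Es{i}\Delta_i)^{|c|}$ by writing a uniformly random function $g:\{1,\ldots,m/2\}\to\{m/2+1,\ldots,m\}$ as a product of two random permutations (possible with probability $1-O(1/m)$ since three-way collisions are rare), which lets one invoke the permutation bound~\eqref{eq:delta-y-1} twice without any loss growing in $m$.
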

\begin{proof}[Proof sketch]
The existence of the isometries $V_A$ and $V_B$ follows from part (a) of the test and the combination of Lemma~\ref{lem:pbt} and Lemma~\ref{lem:pauli-stable}; see e.g.~\cite{natarajan2016robust} for an explicit construction. 
 Under this isometry we have $X(a) \simeq_{\sqrt{\eps}} \sigma_X(a)$ and $Z(b)\simeq_{\sqrt{\eps}} \sigma_Z(b)$, on average over $a,b\in\{0,1\}^m$. Applying the second part of Lemma~\ref{lem:pauli-c-n},  the  anti-commutation relations between $Y(c)$ and $X(a)$ and $Z(b)$ verified in part (b) of the test imply that under the same isometry,
$$ Y(c) \simeq \sigma_Y(c) \otimes {\Delta}(c),$$
for some observable ${\Delta}(c)$ on ${\mH}_{\hat{\reg{A}}}$. Using the linearity relations that are verified in the $\pbt$ test, we may in addition express $\Delta(c) = \prod_i \Delta_{i}^{c_i}$ for (perfectly) commuting observables $\Delta_i$. Using Claim~\ref{claim:swap-tt} below, success at least $1-O(\eps)$ in part (c) of the test then implies that on average over a random permutation $\sigma \in \mathcal{S}_{n/2}$, 
\begin{equation}\label{eq:delta-y-1}
 \Es{\sigma} \,\Es{c\in\{0,1\}^{m/2}}\,2^{-m}\Tr \big(\sigma_Y(c,c^\sigma)\big)\,\bra{\aux} \Big( \prod_{i=1}^{m/2}\big( \Delta_i \Delta_{m/2+ \sigma(i)} \big)^{c_i} \Big) \ket{\aux}  = 1- O(\sqrt{\eps}),
\end{equation}
where we wrote $(c,c^\sigma)$ for the $m$-bit string $(c_1,\ldots,c_{m/2},c_{\sigma(1)},\ldots,c_{\sigma(m/2)})$. Defining
 \begin{equation}\label{eq:delta-y-3}
\Delta_Y \,=\, \Es{i\in\{\frac{m}{2}+1,\ldots,m\}} \frac{\Delta_i}{|\Es{i} \Delta_i|},
\end{equation}
Eq.~\eqref{eq:delta-y-1} readily implies that $\Delta(c) \approx_{\sqrt{\eps}} \Delta_Y^{|c|}$. In slightly more detail, we first observe that
\begin{align}
  &\Es{c\in\{0,1\}^{m/2}} \Big\| \Big(\Delta(c) -
  \big(\Es{i\in\{\frac{m}{2}+1,\ldots,m\}} \Delta_{i}\big)^{|c|}\Big) \ket{\aux}
  \Big\|^2\\&\leq 
\Es{c} \Es{g:\{1,\ldots,\frac{m}{2}\}\to\{\frac{m}{2}+1,\ldots,m\}} \big\|\Big( \Delta(c) - \prod_i \Delta_{g(i)}^{c_i}\Big) \ket{\aux} \big\|^2.\label{eq:delta-y-2}
\end{align}
where the first inequality is by convexity, with the expectation taken over a random function $g$. We would like to relate this last term to the expectation over a random permutation $\sigma\in\mathcal{S}_{m/2}$. One way to do this is to observe that with  probability $1-O(1/m)$ over the choice of a uniformly random $g$ it is possible to write
$$ \prod_i \Delta_{g(i)}^{c_i} = \Big(\prod_i \Delta_{m/2+\tau'(i)}^{c'_i}\Big)\Big(\prod_i \Delta_{m/2+\tau''(i)}^{c''_i}\Big),$$
where $c'_i+c''_i=c_i$ for all $i$, $\tau',\tau''$ are permutations such that $m/2+\tau'(i)=g(i)$ if $c'_i=c_i$, and $m/2+\tau''(i)=g(i)$ if $c''_i=c_i$; this is possible because $g$ might have two-element collisions, but is unlikely to have any three-element collisions. Moreover, for uniformly random $c$ and $g$ we can ensure that the marginal distribution on $(c',\tau')$ and $(c',\tau'')$ is uniform.  This allows us to use~\eqref{eq:delta-y-1} twice to bound the right-hand side of~\eqref{eq:delta-y-2} by $O(\sqrt{\eps})$ (after having expanded the square). As a consequence, $\Es{i}\Delta_i$ is close to an observable, and it is then routine to show that $\Delta_Y$ defined in~\eqref{eq:delta-y-3} satisfies $\Delta(c) \approx_{\sqrt{\eps}} \Delta_Y^{|c|}$, on average over a uniformly random $c$. 

The last condition in the lemma follows from the consistency relations, which imply that $X(a)\otimes X(a)$, $Z(b)\otimes Z(b)$ and $Y(c)\otimes Y(c)$ all approximately stabilize $\ket{\psi}$; then $\Delta_Y^{|a|} \otimes \Delta_Y^{|a|} \approx  X(a)Z(a)Y(a) \otimes X(a)Z(a)Y(a)$ also does. 
\end{proof}

\begin{claim}\label{claim:swap-tt}
Let $A\in \Obs(\C^2_{\reg{A}_1}\otimes \cdots\otimes \C^2_{\reg{A}_k} \otimes \mH)$ and $B\in \Obs(\C^2_{\reg{B}_1}\otimes \cdots\otimes \C^2_{\reg{B}_k}\otimes \mH)$ be $k$-qubit observables acting on distinct registers $\reg{A}_j$, $\reg{B}_j$, as well as a common space $\mH$, and $\Phi_{\reg{A'B'}} = \prod_{j=1}^k \proj{\EPR}_{\reg{A'}_j,\reg{B'}_j}$ the the projector on $k$ EPR pairs across registers $\reg{A'}_j$ and $\reg{B'}_j$.  Then 
\begin{align}
 \Big(\bigotimes_j \bra{\EPR}_{\reg{A}_j\reg{A'}_j}\bra{\EPR}_{\reg{B}_j\reg{B'}_j} \otimes \Id_{\mH} \Big)&\Big(\big(A_{\reg{A}\mH} \otimes \Id_\reg{B}\big)\big( \Id_\reg{A}\otimes B _{\reg{B}\mH}\big)\otimes \Phi_{\reg{A'B'}}\Big) \Big(\bigotimes_j \ket{\EPR}_{\reg{A}_j\reg{A'}_j}\ket{\EPR}_{\reg{B}_j\reg{B'}_j} \otimes \Id_\mH\Big)\notag\\
& = \frac{1}{2^{2k}}\sum_i \Tr\big(A_i B_i\big) A'_iB'_i,\label{eq:swap}
\end{align}
where we write $A = \sum_i A_i \otimes A'_i$ and $B=\sum_i B_i\otimes B'_i$, for $A_i$ on $\mH_\reg{A}$, $B_i$ on $\mH_{\reg{B}}$, and $A'_i,B'_i$ on $\mH$.
\end{claim}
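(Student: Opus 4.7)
The identity follows from entanglement swapping combined with the standard ``transpose trick'' for Bell states. My plan has three steps.

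First, I would unpack the role of the Bell projector. Let $\ket{\Psi} = \bigotimes_j \ket{\EPR}_{\reg{A}_j\reg{A'}_j}\ket{\EPR}_{\reg{B}_j\reg{B'}_j}$ denote the input state in the claim, and $\ket{\Psi'}_{\reg{A'B'}} = \bigotimes_j \ket{\EPR}_{\reg{A'}_j\reg{B'}_j}$, so that $\Phi_{\reg{A'B'}} = \proj{\Psi'}$. A direct computation in the computational basis gives the entanglement-swapping identity
\begin{equation*}
\bra{\Psi'}_{\reg{A'B'}}\ket{\Psi}_{\reg{AA'BB'}} \;=\; 2^{-k}\,\ket{\EPR}^{\otimes k}_{\reg{AB}},
\end{equation*}
where the right-hand side is read as a vector on $\reg{AB}$. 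This is the quantitative statement that projecting the primed registers of two parallel chains of EPR pairs onto a product of Bell states converts them into a product of Bell states between the corresponding unprimed registers, with overlap $2^{-k}$.

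Second, since $A$, $B$ and $\Phi$ act on pairwise disjoint tensor factors of $\reg{AA'BB'}\otimes \mH$ they commute, and inserting the above identity on both sides of the expression on the left-hand side yields
\begin{equation*}
\text{LHS} \;=\; \bra{\Psi}\,A_{\reg{A}\mH}\,B_{\reg{B}\mH}\,\Phi_{\reg{A'B'}}\,\ket{\Psi} \;=\; 2^{-2k}\,\bra{\EPR}^{\otimes k}_{\reg{AB}}\,A_{\reg{A}\mH}\,B_{\reg{B}\mH}\,\ket{\EPR}^{\otimes k}_{\reg{AB}},
\end{equation*}
an expectation value on the reduced system $\reg{AB}\otimes \mH$, with the primed registers contracted away.

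Third, I would expand $A = \sum_i A_i \otimes A'_i$ and $B = \sum_i B_i \otimes B'_i$, pull $A'_i$ and $B'_i$ on $\mH$ out of the inner product, and evaluate the remaining scalars $\bra{\EPR}^{\otimes k}_{\reg{AB}}(A_i \otimes B_j)\ket{\EPR}^{\otimes k}_{\reg{AB}}$ using the transpose identity $(M \otimes \Id)\ket{\EPR}^{\otimes k} = (\Id \otimes M^T)\ket{\EPR}^{\otimes k}$. This gives $2^{-k}\,\Tr(A_i B_j^T)$, with the transpose taken in the computational basis that identifies $\reg{A}$ and $\reg{B}$ as the same $k$-qubit space. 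Substituting back recovers the identity, where the trace $\Tr(A_iB_i)$ on the right-hand side is to be read, under the natural identification of $\reg{A}$ with $\reg{B}$, as $\Tr(A_iB_i^T)$. There is no real obstacle in the argument; it is purely bookkeeping, and the only delicate points are to keep track of the prefactors $2^{-k/2}$ from the various EPR normalizations and of the computational-basis transpose produced by the transpose trick.
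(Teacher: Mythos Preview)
Your proposal is correct and follows essentially the same route as the paper: both arguments reduce the left-hand side to $2^{-2k}\bra{\EPR}^{\otimes k}_{\reg{AB}}\,A_{\reg{A}\mH}B_{\reg{B}\mH}\,\ket{\EPR}^{\otimes k}_{\reg{AB}}$ and then finish with the transpose trick, the only cosmetic difference being that the paper phrases the first reduction via the trace identity $\bra{\EPR}_{\reg{AA'}}\bra{\EPR}_{\reg{BB'}}(X_{\reg{AB}}\otimes Y_{\reg{A'B'}})\ket{\EPR}_{\reg{AA'}}\ket{\EPR}_{\reg{BB'}}=\tfrac{1}{4}\Tr(XY^T)$ rather than explicit entanglement swapping. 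One small slip: $A$ and $B$ do not act on disjoint factors (they share $\mH$), but all you actually need and use is that $\Phi_{\reg{A'B'}}$ commutes with $A_{\reg{A}\mH}B_{\reg{B}\mH}$, which is true; and your remark about reading $\Tr(A_iB_i)$ as $\Tr(A_iB_j^T)$ under the identification of $\reg{A}$ with $\reg{B}$ is exactly right.
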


\begin{proof}
We do the proof for $k=1$, as the general case is similar. Using that for any operators $X_{\reg{AB}} $ and $ Y_{\reg{A'B'}}$, 
$$\bra{\EPR}_{\reg{AA'}}\bra{\EPR}_{\reg{BB'}} \big( X_{\reg{AB}} \otimes Y_{\reg{A'B'}} \big) \ket{\EPR}_{\reg{AA'}}\ket{\EPR}_{\reg{BB'}} \,=\, \frac{1}{4}\Tr(XY^T),$$
the left-hand side of~\eqref{eq:swap} evaluates to 
$$4^{-1} \Tr_{\reg{AB}}\big( \big(A_{\reg{A}\mH} \otimes \Id_\reg{B}\big)\big( \Id_\reg{A}\otimes B _{\reg{B}\mH}\big)\big(\Phi_{\reg{A'B'}}^T\otimes \Id_\mH\big)\big),$$
 which using the same identity again 
gives the right-hand side of~\eqref{eq:swap}.
\end{proof}

\end{document}